\def\BibTeX{{\rm B\kern-.05em{\sc i\kern-.025em b}\kern-.08em
    T\kern-.1667em\lower.7ex\hbox{E}\kern-.125emX}}
\DeclareMathOperator*{\argmax}{arg\,max}
\newcommand{\x}{\mathbf{x}}
\newcommand{\Ep}{\mathbb{E}_{p(\mathbf{x})}}
\newcommand{\A}{\mathbf{A}}
\newcommand{\I}{\mathbf{I}}
\newcommand{\W}{\mathbf{W}}
\newcommand{\D}{\mathbf{D}}
\newcommand{\E}{\mathbf{E}}
\newcommand{\F}{\mathbf{F}}
\newcommand{\diag}{{\rm diag}}
\newcommand{\s}{\mathbf{s}}
\newcommand{\bmu}{\bm{\mu}}
\newcommand{\bkappa}{\bm{\kappa}}
\newcommand{\btheta}{\bm{\theta}}
\newcommand{\bphi}{\bm{\phi}}
\newcommand{\bSigma}{\mathbf{\Sigma}}
\newcommand{\z}{\mathbf{z}}
\newcommand{\bb}{\mathbf{b}}
\newcommand{\y}{\mathbf{y}}
\newcommand{\N}{\mathcal{N}}
\newcommand{\C}{\mathbf{C}}
\newcommand{\U}{\mathbf{U}}
\newcommand{\V}{\mathbf{V}}
\newcommand{\B}{\mathbf{B}}
\newcommand{\J}{\mathbf{J}}
\newcommand{\caphead}[1]{{\bf #1}}
\def\p@subsection{}
\def\p@subsubsection{}
\newtheorem{corollary}{Corollary}
\newtheorem{proposition}{Proposition}
\newcommand\footnoteref[1]{\protected@xdef\@thefnmark{\ref{#1}}\@footnotemark}
\newcommand{\KL}{{\rm KL}}
\newcommand{\Tr}{{\rm Tr}}   
\def\id{\mathbbm{1}}   
\newcommand{\kB}{k_\mathrm{B}}  
\newcommand{\Sites}{N}  
\newcommand{\LParen}{ \bm{(} }
\newcommand{\RParen}{ \bm{)} }
\newcommand*{\Set}[1]{\left\{  #1  \right\}}
\renewcommand\th{ {\rm th} }
\newcommand*{\bra}[1]{\langle #1\rvert}
\newcommand*{\ket}[1]{\lvert #1 \rangle}
\renewcommand{\alglinenumber}[1]{}
\newcommand{\beq}{\begin{equation}}
\newcommand{\eeq}{\end{equation}}
\def\bra#1{\langle #1|}
\def\ket#1{|#1\rangle}
\newcommand{\h}{\mathbf{h}}
\newcommand{\vis}{\mathbf{v}}
\newcommand{\bTheta}{\mathbf{\Theta}}
\newcommand\HSTeq{\stackrel{\mathclap{\normalfont\mbox{HST}}}{=}}
\def\all{all}
\all \typeout{Including all files.} \else 
\begin{document}

\title{NON-EQUILIBRIUM PHYSICS: \\ FROM SPIN GLASSES TO MACHINE AND NEURAL LEARNING}

\author{Weishun Zhong}
\department{Department of Physics}

\degree{Doctor of Philosophy in Physics}

\degreemonth{June}
\degreeyear{2023}
\thesisdate{May 19, 2023}

\supervisor{Haim Sompolinsky}{Professor of Molecular and Cellular Biology and of Physics (in Residence), \\Harvard University}
\supervisor{Mehran Kardar}{Francis Friedman Professor of Physics}

\chairman{Lindley Winslow}{Associate Department Head of Physics}

\maketitle



\cleardoublepage
\setcounter{savepage}{\thepage}
\begin{abstractpage}

Disordered many-body systems exhibit a wide range of emergent phenomena across different scales. These complex behaviors can be utilized for various information processing tasks such as error correction, learning, and optimization. Despite the empirical success of utilizing these systems for intelligent tasks, the underlying principles that govern their emergent intelligent behaviors remain largely unknown. In this thesis, we aim to characterize such emergent intelligence in disordered systems through statistical physics. We chart a roadmap for our efforts in this thesis based on two axes: learning mechanisms (long-term memory vs. working memory) and learning dynamics (artificial vs. natural). We begin our exploration from the long-term memory and artificial dynamics continent of this atlas, where we examine the structure-function relationships in feedforward neural networks, the prototypical example of neural learning. Using replica theory, information theory, and optimal transport, we study the computational consequences of imposing connectivity constraints on the network, such as distribution constraints, sign constraints, and disentangling constraints. We evaluate the performances based on metrics such as capacity, generalization, and generative ability. Next, we explore the working memory and artificial dynamics corner of the atlas and investigate the non-equilibrium driven dynamics of recurrent neural networks under external inputs. Then, we move to the working memory and natural dynamics island and study the ability of driven spin-glasses to perform discriminative tasks such as novelty detection and classification. Finally, we conclude our exploration at the long-term memory and natural dynamics kingdom and investigate the generative modeling ability in many-body localized systems. Throughout our journey, we uncover relationships between learning mechanisms and physical dynamics that could serve as guiding principles for designing intelligent systems. We hope that our investigation into the emergent intelligence of seemingly disparate learning systems can expand our current understanding of intelligence beyond neural systems and uncover a wider range of computational substrates suitable for AI applications.

\end{abstractpage}


\cleardoublepage

\clearpage
\begin{center}
    \thispagestyle{empty}
    \vspace*{\fill}
    This thesis is dedicated to my beloved mother, Qianyu Hu. 
    \vspace*{\fill}
\end{center}
\clearpage

\section*{Acknowledgments}

Six years ago, I wanted to pursue a PhD because I thought there is no better way to get a front seat in the exciting adventure of scientific discovery. 
I remember immediately clicking the acceptance button upon reading the sentence of "together we are pushing back the frontiers of human understanding of space and time and of matter and energy in 
all its forms, from the subatomic to the cosmological and from the elementary to the complex", in the opening paragraph of MIT physics's admission offer letter. Only much later, I would find out that the second paragraphs is a warning that this would normally take five to six years. 


These past six years have been extraordinarily exciting and fulfilling, exceeding my expectations. I have not only been a passive spectator but also an active participant in this epic journey of mankind. I would like to extend my profound gratitude to all those who made this incredible experience possible.

I would like to express my deepest gratitude to my thesis advisor, Prof. Haim Sompolinsky, whose invaluable guidance and expertise have molded me into the scientist I am today. I am especially grateful to Haim for demonstrating the importance of adhering to the highest standards of scientific rigor and for teaching me to relentlessly confront challenging problems with unwavering determination. Furthermore, I want to thank him for being an inspiring mentor who consistently encourages me to strive for excellence, while never ceasing to pursue it himself. I also want to thank him for being a true role model of a leader in his field, and for taking me under his wings years ago when I felt lost. 

I am immensely grateful to my thesis coadvisor, Prof. Mehran Kardar, for his unwavering support and invaluable guidance throughout my graduate school journey. I particularly appreciate Mehran's encouragement to explore my own research ideas and forge my unique path. Additionally, I want to thank him for sharing his vast knowledge of physics with me, covering almost every aspects of statistical mechanics, and for serving as an exemplary role model of a great statistical mechanician. 

I would like to express my gratitude to Prof. Jeremy England for welcoming me into his group upon my arrival at MIT. I want to thank him for sharing his vision of life-like physics with me, which have profoundly influenced both my scientific and philosophical perspectives on the world. Furthermore, I want to thank him for demonstrating how statistical physics can shed light on the living world, which played a significant role in inspiring me to pursue graduate studies.

I am greatly indebted to Prof. Susanne Yelin and Prof. Nicole Yunger Halpern for their generosity with their time and invaluable help during my postdoc applications. I also want to thank them for being outstanding collaborators and mentors.

I am also grateful to Prof. Frank Wolfs for the unwavering support, guidance and kindness throughout my academic journey. His example of bravery and living life with a full heart has been an inspiration to me.

I would like to extend a special thanks to my thesis committee members, Profs. Leonid Mirny and Marin Soljačić. Additionally, I am grateful for the interactions I had with faculty members from MIT Physics, including Profs. Nikta Fakhri, Jeff Gore, and Daniel Harlow.

I also want to thank all my collaborators, who I learned much of my knowledge from: Daniel D. Lee, Jacob M. Gold, Xun Gao, Sarah Marzen, Arvind Murugan, Khadijeh Najafi, Cengiz Pehlevan, Zhiyue Lu, Ramis Movassagh, Harshvardhan Sikka, Ben Sorscher, David J. Schwab, Oles Shtanko, Yoav Soen.

Next, I want to thank my dearest friends, without whom this journey wouldn't have been as enjoyable. I want to thank members from the Sompolinsky group for the camaraderie: Madhu Advani, Sueyeon Chung, Ouns el Harzli, Naoki Hiratani, Qianyi Li, Haozhe Shan, Nimrod Shaham, Shane Shang, Julia Steinberg, Alexander van Meegen, Zechen Zhang. Our lunchtime conversations were always the highlight of my day, and I cherish those memories. 

I want to thank my friends from the England group and the Physics of Living systems, for making me feel at home even during  trying times: Gili Bisker, Pavel Chvykov, Todd Gingrich, Jacob Gold, Jordan Horowitz, Hridesh Kedia, Jinghui Liu, Jeremy Owen. 

I want to thank my friends from Cambridge, for all the fun memories and making here my second hometown: Anqi Chen, Simon Grosse-Holz, Emil Khabiboulline, Chengfeng Mao, Daniya Seitova, Yue Wang, Ming Zheng, Zhenghao Fu, Justin Hou, Shang Liu, Tongtong Liu, Ruihao Zhu. 

I would also like to express my heartfelt gratitude to my wife, Jun Yin, for all the laughs and tears we shared, for always being there for me through the ups and downs, and for the journey we've shared as we've grown into better individuals. Lastly, I want to thank my parents Yongping Zhong and Qianyu Hu for the unconditional love and support, for indulging me to chase my dreams. and for teaching me how to embrace and nurture curiosity about the world and to love wholeheartedly.


\pagestyle{plain}
\tableofcontents

\chapter{The roadmap}

\section{Motivation}
Can a collection of atoms and molecules exhibit intelligence? Our brain serves as an example; however, not just any collection can think like the human brain. Is it possible to configure natural and engineered disordered many-body systems for intelligent tasks typically associated with nervous systems, such as learning, memory, and optimization? In this thesis, we attempt to answer these questions by initiating a statistical mechanics program called \textbf{many-body intelligence} – the study of emergent intelligence from the collective dynamics of many-body systems.

By closely examining different neural and physical systems that demonstrate intelligent behaviors, we aim to achieve the following objectives: (1) develop statistical mechanical theories for systems that exhibit distinctively intelligent functions such as learning and memory; (2) harness the power of non-equilibrium many-body systems for intelligent tasks, and create novel learning systems with near-term applications.

\begin{figure}[H]
\centering{}\includegraphics[scale=0.8]{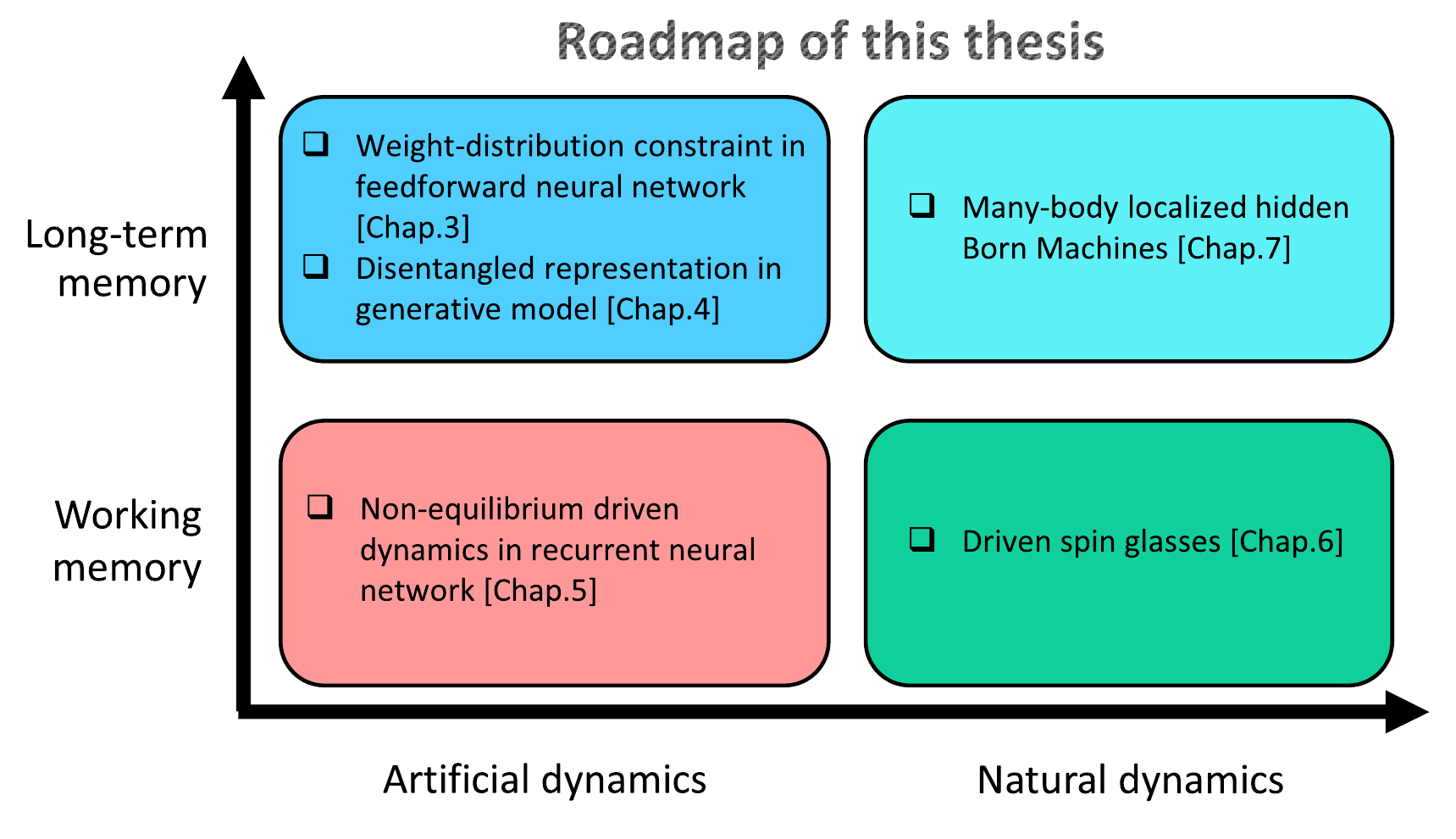}\caption{\label{fig:atlas}We categorize different intelligent many-body systems studied in this thesis based on their learning dynamics (horizontal axis) and memory mechanisms (vertical axis).}
\end{figure}

\section{Summary}
Intelligence is a multidimensional concept with diverse definitions. This thesis focuses on two aspects of intelligence: memory mechanisms and learning dynamics. The latter can be categorized as either natural or artificial, based on whether they follow natural or artificial dynamics (depicted on the horizontal axis of Fig.\ref{fig:atlas}). For instance, stochastic gradient descent in neural network training exemplifies artificial dynamics \cite{hertz2018introduction}, while Hamiltonian time-evolution in physical systems represents natural dynamics \cite{keim2019memory,stern2023learning}. On the other hand, memory mechanisms can be classified as long-term memory versus working memory (represented on the vertical axis of Fig.\ref{fig:atlas}), depending on whether the system's couplings change during the learning process \cite{cowan2008differences}. In systems that form long-term memory, the internal couplings are modified by external inputs, as in the case of standard neural networks and kernel machines \cite{engel2001statistical}. Conversely, in systems that uses working memory, only the internal state is altered by external inputs, while the couplings remain constant, as seen in reservoir computing \cite{lukovsevivcius2009reservoir,tanaka2019recent}. \\

The organization of the thesis is as follows: Chapter 1 is a roadmap similar to the current extended summary. In Chapter 2, we review the essential theoretical tools needed to investigate these topics, including the statistical mechanics of spin glasses and the replica method for feedforward neural networks.

Ergodicity breaking is essential for learning and memory in non-equilibrium many-body systems. Classical examples include spin glasses and neural networks, which is where we will begin. In Chapter 3 (see Fig.\ref{fig:chap3} for a snippet), we enter the artificial realm of the roadmap, where we use a combination of replica theory, information geometry, and optimal transport to study feedforward neural networks subject to connectivity constraints \cite{zhong2022theory}. Typically, incorporating such structural constraints into network regularization has posed challenges for the development of learning theories. We constructed an analytical theory that quantified the effect of imposing arbitrary network weight-distribution constraints. Our theory predicted that the network memory capacity was proportional to the geodesic distance between the imposed and original distributions on the Wasserstein statistical manifold, and further predicted optimal prior distributions for achieving the best generalization performance. Our theory and the accompanying algorithm unified three distinct elements: learning capability, information geometry, and optimal transport, providing a principled approach to reconstructing ground-truth biological neural circuits from connectomics data.
\begin{wrapfigure}{r}{0.5\textwidth}
\centering
    \includegraphics[width=7.5cm]{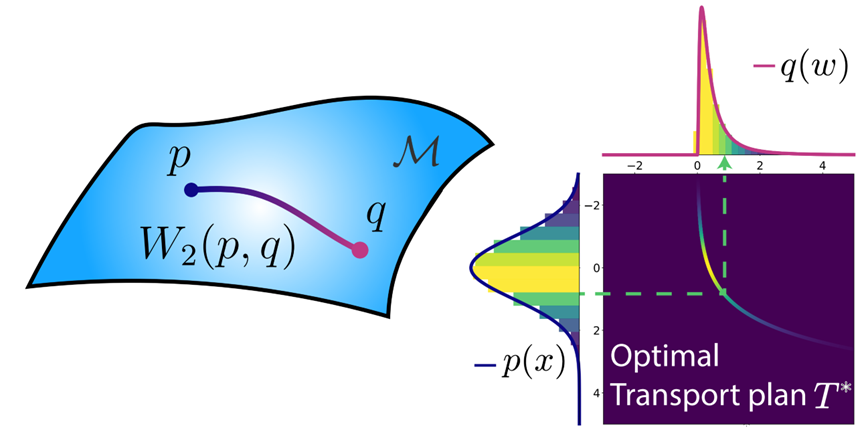}
    \caption{\textbf{Chapter 3} Learning in feedforward neural networks under constrained weight-distribution.}
    \label{fig:chap3}
\end{wrapfigure}
In the unsupervised learning setting, Chapter 4 focuses on the criteria for learning good representations in generative models \cite{sikka2019closer}. We established a trade-off between inference fidelity and disentangling ability in variational autoencoders, and proposed a solvable model in which optimal performance can be predicted analytically.

In recurrent architectures, Chapter 5 examines non-equilibrium driven dynamics in models of hippocampus spatial navigation systems \cite{zhong2020nonequilibrium}. We established a fundamental bound on how quickly recurrent dynamics can track sensory inputs and developed an analytical theory that predicts how memory retrieval depends on external inputs.

\begin{wrapfigure}{r}{0.5\textwidth}
\centering
    \includegraphics[width=7.5cm]{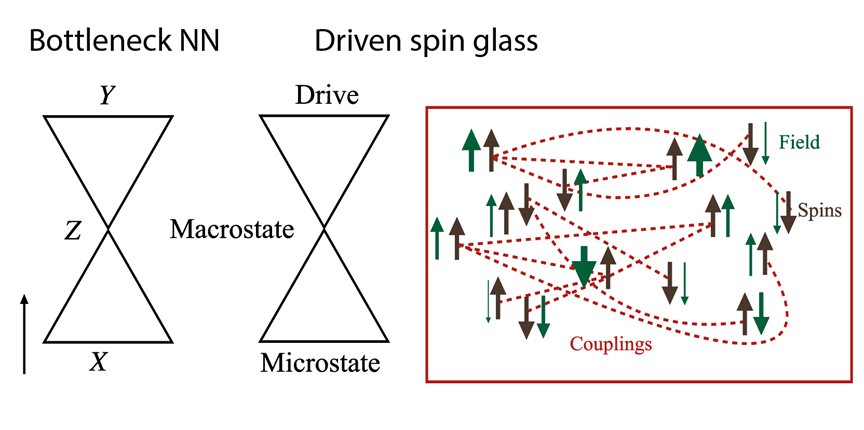}
    \caption{\textbf{Chapter 6} A driven spin glass system capable of performing discriminative learning.}
    \label{fig:chap6}
\end{wrapfigure}

In the natural domain, Chapter 6 (see Fig.\ref{fig:chap6} for a snippet) demonstrates that driven spin-glasses can perform a wide range of learning tasks typically seen only in artificial systems, such as classification, memory, and novelty detection \cite{zhong2021machine}. We further showed that traditional thermodynamic variables were no longer effective in characterizing these novel many-body learning phenomena, and that machine learning could make better predictions by using macroscopic variables constructed from nonlinear combinations of traditional ones.
\begin{wrapfigure}{r}{0.5\textwidth}
\centering
    \includegraphics[width=7.5cm]{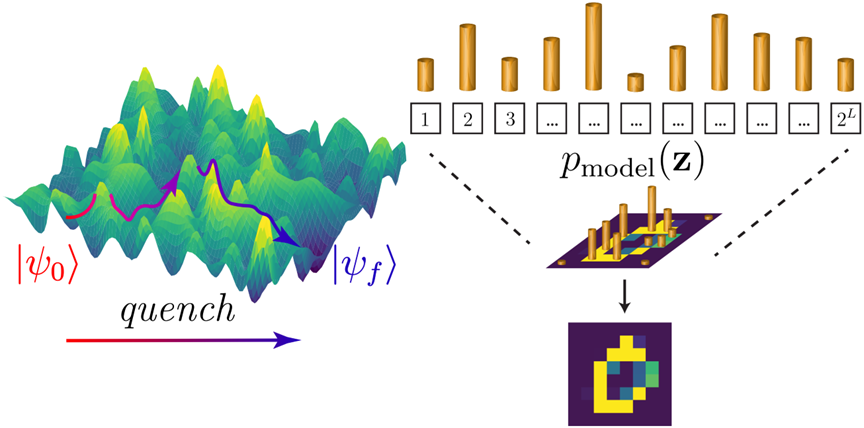}
    \caption{\textbf{Chapter 7} A system of many-body localized quantum spins capable of performing generative modeling.}
    \label{fig:chap7}
\end{wrapfigure}

Lastly, another prime example of ergodicity breaking arising in the quantum domain is Many-Body Localization (MBL). In Chapter 7 (see Fig.\ref{fig:chap7} for a snippet), to explore the potential for using MBL for learning in quantum many-body systems, we proposed a class of quantum generative models that we named "MBL hidden Born machines" \cite{zhong2022many}. We demonstrated that the trainability of basic Born machines could be significantly improved by including hidden units and that the MBL Born machine possessed greater expressive power than classical models. Our work revealed important relationships between learning and physical dynamics that could serve as guiding principles for designing quantum generative models.

Finally, in Chapter 8, we outline potential future directions.

\section{Outlook}

We envision configuring many-body systems for machine intelligence using mathematical tools of statistical mechanics and numerical tools from machine learning, and conversely, improving the understanding and practice of state-of-the-art machine learning using many-body physics. Our goal is to advance our understanding of intelligence from both directions, Ultimately, we aspire to develop many-body intelligence into a discipline that can contribute to answering scientifically meaningful, and societally impactful questions in physical sciences, AI, and beyond.
\chapter{The replica method for spin glasses and neural learning}
\label{review}

In this chapter, we develop the basic notions and techniques used throughout this thesis. First, we review the statistical mechanics of spin glasses, including the Edwards-Anderson model and the Sherrington-Kirkpatrick model. The materials presented in Section \ref{spin} closely follow the introduction of spin glasses in \cite{binder1986spin}.

In Section \ref{gardner}, we review the application of spin glass methods to feedforward neural networks, pioneered by Elizabeth Gardner in \cite{gardner1987maximum} \cite{gardner1988optimal} to study the capacity of perceptrons, and later developed by \cite{seung1992statistical} to study generalization performance. For pedagogical reviews, see \cite{engel2001statistical} and \cite{advani2013statistical}. 

\section{Spin Glasses}
\label{spin}

\subsection{Annealed vs quenched average}
Consider a system that can be characterized by statistical variables denoted by $S_i$ (where $i$ ranges from $1$ to $N$). This system exhibits randomness that can be captured by a random variable ${x}$, which fluctuates over time. An example of ${x}$ could be the location of a spin within a lattice, where the time evolution of the system corresponds to the spin diffusing through the lattice. Here, we define $\tau_{\text{dis}}$ as the typical fluctuation time and $\tau_{\exp}$ as the observation time. If $\tau_{\exp} \gg \tau_{\text{dis}}$, the random variables will eventually relax to thermal equilibrium and will be averaged over in a manner similar to statistical averages. For instance, the free energy of the system will become
\begin{equation}
\begin{split}
F &= -k_B T \text{ln}[Z\{x\}]_{\text{avg}} \\
Z\{x\} &= \Tr\exp[\mathcal{H}\{x,S_i\}/k_BT],
\end{split}
\end{equation}
where the trace is over all the spins ${S_i}$, the average referred to as an "annealed average" is not applicable in experiments that study the spin glass phase because atomic diffusion times are quite long at low temperatures. Instead, we must consider the regime where $\tau_{\text{dis}} \gg \tau_{\exp}$, which is known as a "quenched average". In this case, each random variable takes on a unique value while the statistical variables fluctuate. Therefore, we need to calculate the partition function for a specific random variable configuration, denoted as $Z{x}$. However, it will become clear later on that averaging over $Z{x}$ is inappropriate in this regime. Instead, we need to average over all replicas of the system.

In general, extensive variables can be averaged. Brout (1959) presents an intuitive argument to support this idea. He suggests considering a single, very large system that is divided into multiple macroscopic subsystems, each with a distinct set of random variables. Assuming that the coupling between subsystems is small, the value of any extensive variable for the entire system should be equivalent to the average of that quantity over all the subsystems. For large original systems, we can average over a large number of subsystems and expect that the result will only differ slightly from the complete average over all possible choices of ${x}$. For instance, the magnetization per spin $m$ should behave like
\begin{equation}
m\{x\} - [m]_{\text{avg}} \to 0 \;\; (\text{for} \; N \to \infty),
\end{equation}
for any set ${x}$ that occurs with a reasonable probability. Systems that satisfy this property are referred to as "self-averaging." This means that a single large system produces the same result for extensive quantities as a configurational average. However, for finite $N$, the Brout argument predicts that the probability distribution for the density of extensive quantities, such as the free-energy density $f$, will be Gaussian with a width of $N^{-1/2}$,
\begin{equation}
p(f) \propto \exp\left\{-\frac{N(f-[f]_{\text{avg}})^2}{2(\Delta f)^2}\right\}.
\end{equation}
Now averaging over the partition function we will get
\begin{equation}
\begin{aligned}
f_{\text{ann}} &= -\frac{k_BT}{N}\ln [Z]_{\text{avg}} \\
&= [f]_{\text{avg}} + (\Delta f)^2/k_BT.
\end{aligned}
\end{equation}
Now it is clear that $f_{\text{ann}} \geq f_{\text{avg}}$. Therefore, it is $f$, or $\ln Z$, instead of $Z$, that one should average. A correct way of performing this averaging is by using the replica method, described in the following section \ref{replica}.

When an experimental measurement is conducted over a specific period of time, the fluctuations in the system occur at a rate determined by the spectrum of relaxation times ${\tau}$. If the observation time $\tau_{\exp} \gg \tau_{\text{max}}$, which is the maximum relaxation time, then the system explores all regions of its phase space with the probability described by the Boltzmann distribution. In this scenario, the system satisfies the ergodic hypothesis of thermodynamics, and the time average calculated by the experiment corresponds to an average of all the system's states in its phase space. However, the ergodic hypothesis does not hold for spin glasses, where $\tau_{\text{max}} \gg \tau_{\exp}$, and ergodicity is violated. In this case, alternative averaging methods are necessary, and the replica method needs to be employed.
%
\subsection{The Replica Method}
\label{replica}
As discussed earlier, we should average over free energy, instead of the partition function $Z$, 
\begin{equation}
f = [f\{x\}]_{\text{avg}} = -\frac{k_BT}{N}\left[\ln Z\{x\}\right]_{\text{avg}}.
\end{equation}
However, directly computing the average is not feasible since the random variables are present within a log. In situations where the disorder is weak, it is possible to separate $\mathcal{H}{\{x\}}$ into a nonrandom component $\mathcal{H}_0$ and a random perturbation $\delta \mathcal{H}{\{x\}}$, and then perform the average term-by-term. Unfortunately, spin glasses are highly disordered systems, and the nonrandom part is much smaller than the random part, making it impossible to follow this procedure. However, utilizing the exact relationship
\begin{equation}
\label{logidentity}
\begin{split}
[\text{ln}Z\{x\}]_{\text{avg}} &= \lim\limits_{n \to 0} \frac{1}{n} ([Z^n\{x\}]_{\text{avg}}-1) 
\\
&= \lim\limits_{n \to 0} \frac{\partial}{\partial n} [Z\{x\}]_{\text{avg}}.
\end{split}
\end{equation}
for positive interger $n$, one can express $Z^n\{x\}$ in terms of $n$ identical replicas of the system,
\begin{equation}
\begin{split}
Z^n\{x\} &= \prod_{\alpha = 1}^{n} Z_\alpha \{x\}
\\
& = \prod_{\alpha = 1}^{n} \exp\left[-\frac{\mathcal{H}{\{x,S_i^{\alpha}\}}}{k_BT}\right] 
\\
&= \exp\left[-\frac{1}{k_BT}\sum_{\alpha=1}^n \mathcal{H}{\{x,S_i^{\alpha}\}}\right] ,
\end{split}
\end{equation}
where $Z_{\alpha}$ is the partition function of the $\alpha$-th replica. For positive integer $n$, it is easy to carry out the average $[\;]_{\text{avg}}$. Then we can express the above in terms of an effective Hamiltonian $H_{\text{eff}}$ that does not contain any disorder. 
\begin{equation}
Z_n \equiv [Z^n\{x\}]_{\text{avg}} \equiv \Tr \exp\left[-\frac{\mathcal{H}_{\text{eff}}(n)}{k_BT}\right]  
\end{equation} 
where the trace is over all variables ${S_i^{\alpha}}$ of all spins of all replicas. Note that before we perform the averaging, different replicas do not interact with each other. However, after averaging, we effectively introduce interactions among different replicas. 
We can take the following Hamiltonian as an example. Consider
\begin{equation}
\label{Heff}
\mathcal{H} = -\frac{1}{2}\sum_{i,j}^{N}J_{ij}S_i S_j - h\sum_{i,j}^{N}S_i^z,
\end{equation}
where the interaction term $J_{ij}$ are random variables with distribution $P(J_{ij})$. Then at $h=0$, \eqref{Heff} becomes
\begin{equation}
Z_n = \Tr \prod_{i,j}^{N} dJ_{ij}P(J_{ij})\exp\bigg(\frac{J_{ij}}{k_BT}\sum_{\alpha=1}^{n}S_i^{\alpha}S_j^{\alpha}\bigg).
\end{equation}
Taylor expanding the above equation we have
\begin{equation}
\mathcal{H}_{\text{eff}}(n)/k_BT = -\frac{1}{2}\sum_{i,j}^{N}\sum_{k=1}^{\infty}\frac{1}{k!}\frac{J_{ij}^{\text{cum}}(k)}{(k_BT)^k}\bigg(\sum_{\alpha=1}^{n}S_i^{\alpha}S_j^{\alpha}\bigg)^k,
\end{equation}
where again the trace is over all variables ${S_i^{\alpha}}$ of all spins of all replicas, , and $J_{ij}^{\text{cum}}(k)$ is the $k$th cumulant of $J_{ij}$,
\begin{equation}
\begin{aligned}
J_{ij}^{\text{cum}}(1) &= [J_{ij}]_{\text{avg}} = \bar{J}
\\
J_{ij}^{\text{cum}}(2) &= [J_{ij}^2]_{\text{avg}} - [J_{ij}]^2_{\text{avg}} \equiv (\Delta J_{ij})^2
\end{aligned}
\end{equation}
Hence, it is evident that cumulants higher than the first indicate interactions between different replicas of the disordered system. Furthermore, since $\mathcal{H}_{\text{eff}}$ is now a theory that lacks disorder and is translationally-invariant, we can employ the conventional method to solve it. For instance, we can use the mean-field approximation and substitute the $S_i^{\alpha}$'s with their respective expectation values, leading to a set of self-consistency equations for these expectation values $\langle S_i^{\alpha}\rangle $.

Note that our discussion has thus far been limited to positive integer values of $n$. For the replica method to be effective, we must be able to take the limit as $n\to 0$. Consequently, we need to analytically continue $n$ to arbitrary real numbers. It is apparent that $\mathcal{H}_{\text{eff}}$ is unaffected by relabeling of the replicas when $n$ is a positive integer as defined. However, this symmetry is not preserved when we analytically continue $n$ to arbitrary real numbers, leading to the concept of "replica symmetry breaking."

It is not sufficient to only be able to calculate the free energy, in the following we provide an example of how to use the replica method to calculate the correlation function. 

Let's consider the magnetization per spin,
\begin{equation}
\begin{split}
M &= [\langle S_i \rangle_T]_{\text{avg}}
\\
&=\left[\frac{\Tr S_i\exp (-\mathcal{H}\{x\}/k_B T)}{Z\{x\}}\right]_{\text{avg}},
\end{split}
\end{equation}
where $\langle \cdot \rangle_T$ denotes thermal average with respect to Boltzmann distribution. Multiplying both the numerator and denominator by a factor of $(Z\{x\})^{n-1}$, we have
\begin{equation}
M = \left[\frac{Z^{n-1}\Tr S_i\exp-\mathcal{H}\{x\}/k_BT}{Z^n}\right]_{\text{avg}}.
\end{equation}
Now in the limit $n\to 0$, the denominator becomes essentially unity and does not need to be averaged, so the averaging is only for the numerator. Note that the trace is over all spins of all replicas, and we can identify the averaging as just the expectation value of $S_i^{\alpha}$,
\begin{equation}
M = \langle S_i^{\alpha}\rangle ,
\end{equation}  
where the bracket denotes averaging over $\mathcal{H}_{\text{eff}}$ and $\alpha$ is any of the replicas. 
Next we consider the fluctuations,
\begin{equation}
\begin{split}
q &= [\langle S_i\rangle ^2_T]_{\text{avg}} 
\\
&= \left[\frac{[\Tr S_i\exp(-\mathcal{H}\{x\}/k_BT)][\Tr S_i\exp(-\mathcal{H}\{x\}/k_BT)]}{Z^2}\right]_{\text{avg}},
\end{split}
\end{equation}
where $q$ is the overlap, often important in spin glasses and serves as an order parameter for low-temperature phase transitions. Performing the same trick as above, we can identify $q$ as
\begin{equation}
q = \langle S_i^{\alpha}S_i^{\beta}\rangle\qquad (\alpha \neq \beta),
\end{equation}
for all replicas $\alpha$ and $\beta$. It is easy to generalize the above relation to $k$-point correlations:
\begin{equation}
[\langle S_i\rangle^k_T]_{\text{avg}} = \langle S_i^{\alpha_1}S_i^{\alpha_2}... S_i^{\alpha_k}\rangle,
\end{equation}
where all the replicas $\alpha_i$'s are distinct. The lesson from the above equality is that for every thermal average on the original theory, there is a distinct replica, and the choice of replica does not affect the above equality. 

\subsection{The Edwards-Anderson model}
\label{EA model}

In 1975, Edwards and Anderson consider a Hamiltonian of the type
\begin{equation}
\label{EA}
\mathcal{H} = -\sum_{\langle i,j\rangle}^{N}J_{ij}S_i S_j - H\sum_{i,j}^{N}S_i,
\end{equation}
where spins are on sites of a regular lattice with nearest neighbour interaction, and $J_{ij}$ is random with distribution $P(J_{ij})$. The standard choice for the distribution is Gaussian, and it is called the Gaussian Edwards-Anderson model,
\begin{equation}
P(J_{ij}) = \frac{1}{\sqrt{2\pi (\Delta J_ij)^2}} \exp\left[-\frac{(J_{ij}-\bar{J}_{ij})^2}{2(\Delta J_{ij})^2}\right].
\end{equation} 
Since cumulants higher than second order vanishes for Gaussian distribution, we have
\begin{equation}
J_{ij}^{\text{cum}}(k\geq3)\equiv 0.
\end{equation}
Therefore, the effective Hamiltonian takes a rather simple form,
\begin{equation}
\begin{split}
\mathcal{H}_{\text{eff}}(n)/k_BT = &-\frac{1}{2}\sum_{i,j}^{N} \frac{\bar{J_{ij}}}{k_BT}\sum_{\alpha=1}^{n}S_i^{\alpha}S_j^{\alpha}
\\
&-\frac{1}{4}\sum_{i,j}^{N}\bigg(\frac{\Delta J_{ij}}{k_BT}\bigg)^2\sum_{\alpha,\beta}^{n}S_i^{\alpha}S_j^{\alpha}S_i^{\beta}S_j^{\beta}.
\end{split}
\end{equation}
%

\subsection{The Sherrington-Kirkpatrick model}
\label{SK model}

The Sherrington-Kirkpatrick model's Hamiltonian takes the same form as the E-A model \eqref{EA}, but instead of the finite-range interaction, the S-K model considers infinite range interaction among the spins. Also, instead of a globally constant magnetic field $H$, a local field $H_i$ is applied to every site of the lattice,
\begin{equation}
\label{SK}
\mathcal{H} = -\frac{1}{2}\sum_{i,j}^{N}J_{ij}S_i S_j - \sum_{i,j}^{N}H_iS_i,
\end{equation}
where couplings between $i,j$ does not depend on range.The distribution of $J_{ij}$ is given by 
\begin{equation}
P(J_{ij}) = \frac{1}{J}\bigg(\frac{N}{2\pi}\bigg)\exp\left[\frac{-N(J_{ij}-J_0/N)^2}{2J^2}\right],
\end{equation}
and therefore,
\begin{equation}
\label{moments}
\begin{split}
[J_{ij}]_{\text{avg}} &= \frac{J_0}{N}
\\
[J_{ij}^2]_{\text{avg}} &- [J_{ij}]_{\text{avg}}^2 = \frac{J^2}{N}.
\end{split}
\end{equation}
The  $1/N$ factor above is to ensure that there's a sensible and nontrivial thermodynamic limit $N\to \infty$. Note that here the distribution $P(J_{ij})$ need not be Gaussian, as long as its first two moments are given by \eqref{moments}, and the higher order moments are bounded. 
We consider first the quantity (in the following we use $\langle i,j \rangle$ to denote summing over distinct $i,j$ pairs only once)
\begin{equation}
[Z^n]_{\text{avg}} = \sum_{S_i^{\alpha}}\int_{-\infty}^{\infty}\bigg(\prod_{\langle i,j\rangle}P(J_{ij})dJ_{ij}\bigg)\exp \bigg\{ \beta\sum_{\langle i,j\rangle}J_{ij}\sum_{\alpha=1}^{n}S_i^{\alpha}S_j^{\alpha}+\beta\sum_{i}H_i\sum_{\alpha=1}^{n}S_i^{\alpha}\bigg\},
\end{equation}
where $\alpha$ is the replica index. The integral can be easily evaluated for Gaussian distribution, and we have
\begin{equation}
[Z^n]_{\text{avg}} = \sum_{S_i^{\alpha}}\exp\bigg[\frac{1}{N}\sum_{\langle i,j\rangle}\bigg(\frac{1}{2}(\beta J)^2 \sum_{\alpha,\beta}S_i^{\alpha}S_j^{\alpha}S_i^{\beta}S_j^{\beta}+\beta J_0\sum_{\alpha}S_i^{\alpha}S_j^{\alpha}\bigg)+\beta \sum_i H_i \sum_{\alpha} S_i^{\alpha} \bigg].
\end{equation}
Dropping $1/N$ corrections in the exponent, and note that $(S_i^{\alpha})^2 = 1$, we have
\begin{equation}
[Z^n]_{\text{avg}} = \exp\bigg[\frac{1}{4}(\beta J)^2 n N\bigg] \sum_{S_i^{\alpha}} \exp\bigg[\frac{(\beta J)^2}{2N}\sum_{\alpha<\beta}\bigg(\sum_i S_i^{\alpha}S_i^{\beta}\bigg)^2+\frac{\beta J_0}{2N}\sum_{\alpha}\bigg(\sum_i S_i^{\alpha}\bigg)^2 + \beta \sum_i H_i \sum_{\alpha} S_i^{\alpha}\bigg].
\end{equation}
We linearize the square terms in the above expression with the Hubbard-Stratonovitch identity
\begin{equation}
\label{HS}
\exp\bigg(\frac{\lambda a^2}{2}\bigg) = \bigg(\frac{\lambda}{2 \pi}\bigg)^{1/2} \int_{-\infty}^{\infty} dx \exp\bigg(-\frac{\lambda x^2}{2}+a\lambda x\bigg),
\end{equation}
by introducing auxiliary variables $q_{\alpha \beta}$ and $m_{\alpha}$. Then we have
\begin{equation}
\begin{split}
[Z^n]_{\text{avg}} &= \exp \bigg(\frac{1}{4}(\beta J)^2n N\bigg)
\\
& \times \int_{-\infty}^{\infty}\bigg[\prod_{\alpha<\beta}\bigg(\frac{N}{2 \pi}\bigg)^{1/2}\beta J dq_{\alpha \beta}\bigg]\bigg[\prod_{\alpha}\bigg(\frac{N\beta J_0}{2\pi}\bigg)^{1/2}dm_{\alpha}\bigg]
\\
& \times \exp\bigg(-\frac{N(\beta J)^2}{2}\sum_{\alpha<\beta}q_{\alpha \beta}^2 - \frac{N\beta J_0}{2} \sum_{\alpha}m_{\alpha}^2 + N\log \Tr \exp L[q_{\alpha \beta},m_{\alpha}]\bigg),
\end{split}
\end{equation}
where 
\begin{equation}
L[q_{\alpha \beta},m_{\alpha}] = (\beta J)^2\sum_{\alpha < \beta}q_{\alpha \beta}S^{\alpha} S^{\beta} + \beta \sum_{\alpha} (J_0 m_{\alpha}+ H)S^{\alpha},
\end{equation}
and the trace is over all spins of replica $S^{\alpha}$. Note that $q_{\alpha \beta}$ with $\alpha < \beta$ has $n(n-1)/2$ independent entries, and we can define $q_{\alpha \beta}$ to be symmetric, i.e., $q_{\alpha \beta} = q_{\beta\alpha}$.
Now make use of \eqref{logidentity} we then arrive at
\begin{equation}
\label{FE}
-\beta f = \lim\limits_{n\to 0 }\bigg[\frac{(\beta J)^2}{4}\bigg(1-\frac{1}{n}\sum_{\alpha,\beta}q_{\alpha \beta}^2\bigg)+\frac{\beta J_0}{2}\frac{1}{n}\sum_{\alpha}m_{\alpha}^2+\frac{1}{n}\log\Tr\exp L\bigg],
\end{equation}
where the summation is over all distinct replica pairs. We then need to evaluate the self-consistency conditions for $q_{\alpha \beta}$ and $m_{\alpha}$,
\begin{equation}
\frac{\partial f}{\partial q_{\alpha \beta}} = \frac{\partial f}{\partial m_{\alpha}} = 0,
\end{equation}
which reads
\begin{equation}
\label{SC}
\begin{split}
q_{\alpha \beta} &= \langle S^{\alpha}S^{\beta}\rangle = \lim\limits_{n\to 0 }\frac{\Tr S^{\alpha}S^{\beta}\exp L[q_{\alpha \beta},m_{\alpha}]}{\Tr\exp L[q_{\alpha \beta},m_{\alpha}]}
\\
m_{\alpha} &= \langle S^{\alpha}\rangle = \lim\limits_{n\to 0 }\frac{\Tr S^{\alpha}\exp L[q_{\alpha \beta},m_{\alpha}]}{\Tr \exp L[q_{\alpha \beta},m_{alpha}]}
\end{split}
\end{equation}
Since relabeling the replica indices $\alpha, \beta$ is a symmetry of the solution, we can assume that for all replicas,
\begin{equation}
\begin{split}
q_{\alpha \beta} &= q
\\
m_{\alpha} &= M.
\end{split}
\end{equation}
Let's define 
\begin{equation}
\tilde{H}(z) = Jq^{1/2}z + J_0 M + H,
\end{equation}
then the free energy \eqref{FE} can be simplified into
\begin{equation}
-\beta f = \frac{(\beta J)^2}{4}(1-q)^2 - \frac{\beta J_0}{2}M^2 + \frac{1}{\sqrt{2\pi}}\int_{\infty}^{\infty}dz e^{-z^2/2} \text{log}[2\text{cosh}\beta \tilde{H}(z)]dz,
\end{equation}
and the self consistency conditions \eqref{SC} becomes
\begin{equation}
\begin{split}
q &= \frac{1}{\sqrt{2\pi}}\int_{-\infty}^{\infty} e^{-z^2/2}\text{tanh}^2[\beta \tilde{H}(z)]dz
\\
M &= \frac{1}{\sqrt{2\pi}}\int_{-\infty}^{\infty} e^{-z^2/2}\text{tanh}[\beta \tilde{H}(z)]dz.
\end{split}
\end{equation}
As in the case for mean-field theory in Ising model, solving these two equations analytically are difficult, but we can solve them numerically. It turns out that $q$ plays the role of order parameter for the spin-glass phase/ferromagnetic phase transition. For $H=0$, we plot the phase diagram of the Sherrington-Kirkpatrick model in Fig. \ref{phase} \cite{binder1986spin}. Note that we have spin glass phase when $M=0, q \neq 0$, paramagnetic phase when $M=q=0$, and ferromagnetic phase when $M\neq0, q\neq0$.
%
\begin{figure}[t]
	\begin{center}
		\includegraphics[scale=0.4]{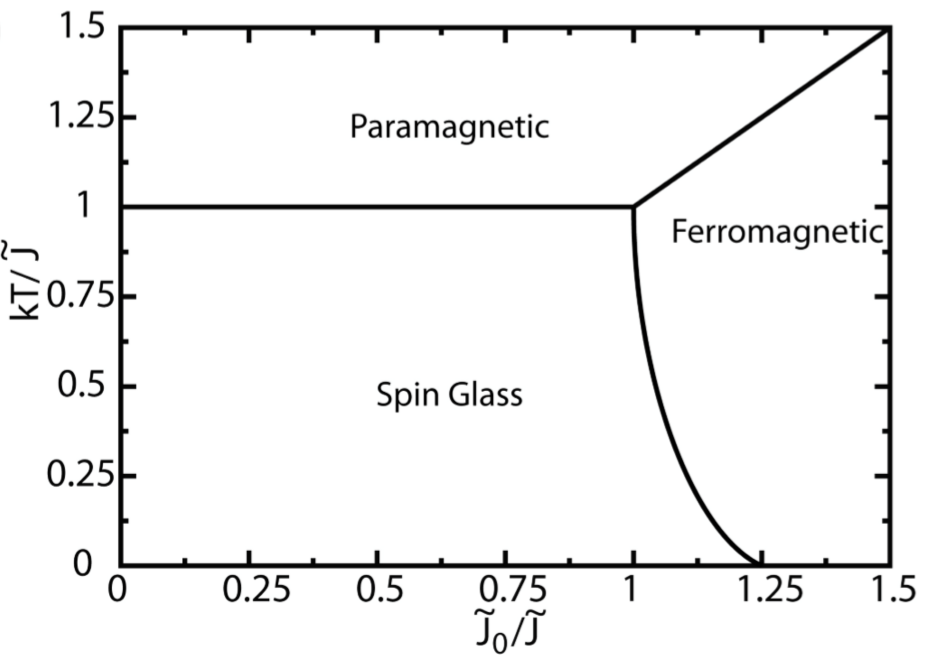}
	\end{center}
	\caption{\label{phase} Phase diagram of the Sherrington-Kirkpatrick model.(Note: $\tilde{J_0}=J_0$, and $\tilde{J} = J$ in our above derivation.)}
\end{figure}
%

\section{Gardner capacity}

In this section, we review the statistical mechanical formulation
for perceptron capacity, also known as the Gardner calculation \cite{gardner1988optimal,gardner1988space}.
Calculations in this section is not exactly the same as the original
papers but follow roughly the same idea. 

\subsubsection*{Preliminaries}

Throughout this chapter, we make frequent use of Gaussian integrals.
We introduce short-hand notations $\int Dt\equiv\int\frac{dt}{\sqrt{2\pi}}e^{-t^{2}/2}$
and $H(x)\equiv\int_{x}^{\infty}Dt$. Also, when we do not specify
the integration range it is understood that we are integrating from
$-\infty$ to $\infty$. 

We start with a perceptron with weight vector $\boldsymbol{w}\in\mathbb{R}^{N}$,
normalized to $||\boldsymbol{w}||^{2}=N.$ Our data consists of pairs$\left\{ \boldsymbol{\xi}^{\mu},\zeta^{\mu}\right\} _{\mu=1}^{P}$,
where $\boldsymbol{\xi}^{\mu}$ is an $N$-dimentional random vector
drawn i.i.d. from a standard normal distribution, $p(\xi_{i}^{\mu})=\mathcal{N}(0,1)$,
and $\zeta^{\mu}$ are random binary class labels with $p(\zeta^{\mu})=\frac{1}{2}\delta(\zeta^{\mu}+1)+\frac{1}{2}\delta(\zeta^{\mu}-1)$.
The goal is to find a hyperplane through the origin, perpendicular
to $\boldsymbol{w}$, such that it separates the two classes of examples
correctly (Fig.\ref{fig:capacity_schemetic}). In the following, we
stop distinguishing $\boldsymbol{w}$ and the hyperplane that it defines. 

We call $\boldsymbol{w}$ a separating hyperplane when it correctly
classifies all the examples with margin $\kappa>0$:
\begin{equation}
\zeta^{\mu}\frac{\boldsymbol{w}\cdot\boldsymbol{\xi}^{\mu}}{||\boldsymbol{w}||}\geq\kappa.\label{c2eq:margin}
\end{equation}
\begin{figure}
\centering{}\includegraphics[scale=0.6]{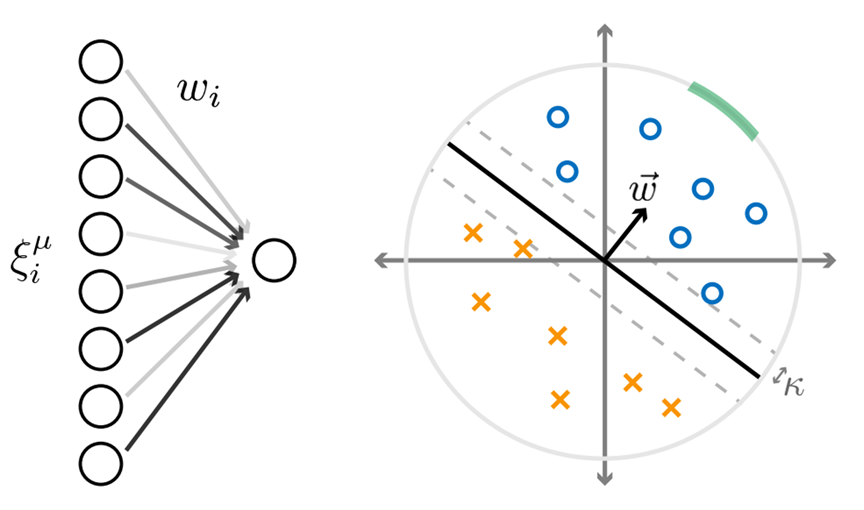}\caption{\label{fig:capacity_schemetic}Schematics of the perceptron classification
problem. Left: A perceptron with inputs $\xi_{i}^{\mu}$ and synaptic
weights $w_{i}$. Right: visualization of the perceptron binary classification
problem in 2-dimension. Solid line is the decision surface, which
is perpendicular to the perceptron weight vector $\vec{w}$. Dash
line corresponds to the geometric margin $\kappa$, which is defined
as the minimal distance to the examples $\vec{\xi^{\mu}}$ (shown
in blue and orange, different colors represent the two classes). }
\end{figure}
Note that since $w_{i}\sim\mathcal{O}(1),$$\sum w_{i}\xi_{i}^{\mu}\sim\mathcal{O}(\sqrt{N}),$and
$||\boldsymbol{w}||=\sqrt{N},$the LHS of Eq.\ref{c2eq:margin} is
$\mathcal{O}(1).$

We want to calculate the volume fraction $V$ of the viable weights
to all possible weights
\begin{equation}
V=\frac{\int d\boldsymbol{w}\left[\prod_{\mu=1}^{P}\Theta\left(\zeta^{\mu}\frac{\boldsymbol{w}\cdot\boldsymbol{\xi}^{\mu}}{||\boldsymbol{w}||}-\kappa\right)\right]\delta(||\boldsymbol{w}||^{2}-N)}{\int d\boldsymbol{w}\delta(||\boldsymbol{w}||^{2}-N)}.\label{c2eq:volume}
\end{equation}
We would like to perform a quenched average over random patterns $\boldsymbol{\xi}^{\mu}$
and labels $\zeta^{\mu}$. This amounts to calculating $\left\langle \log V\right\rangle $,
which can be done using the replica trick $\left\langle \log V\right\rangle =\lim_{n\to0}(\left\langle V^{n}\right\rangle -1)/n$.
We consider first integer $n$, and at the end perform analytic continuation
of $n\to0$. The replicated Gardner volume is:
\begin{equation}
V^{n}=\frac{\prod_{\alpha=1}^{n}\int d\boldsymbol{w}^{\alpha}\left[\prod_{\mu=1}^{P}\Theta\left(\zeta^{\mu}\frac{\boldsymbol{w}^{\alpha}\cdot\boldsymbol{\xi}^{\mu}}{||\boldsymbol{w}^{\alpha}||}-\kappa\right)\right]\delta(||\boldsymbol{w}^{\alpha}||^{2}-N)}{\prod_{\alpha=1}^{n}\int d\boldsymbol{w}^{\alpha}\delta(||\boldsymbol{w}^{\alpha}||^{2}-N)}\label{c2eq:volume_cap}
\end{equation}
We start by rewriting the Heaviside step function using Fourier representation
of the $\delta$-function $\delta(x)=\int_{-\infty}^{\infty}\frac{dk}{2\pi}e^{ikx}$
as (defining $z_{\alpha}^{\mu}=\zeta^{\mu}\frac{\boldsymbol{w}^{\alpha}\cdot\boldsymbol{\xi}^{\mu}}{||\boldsymbol{w}^{\alpha}||}$)
\begin{equation}
\Theta\left(z_{\alpha}^{\mu}-\kappa\right)=\int_{\kappa}^{\infty}d\rho_{\alpha}^{\mu}\delta(\rho_{\alpha}^{\mu}-z_{\alpha}^{\mu})=\int_{\kappa}^{\infty}d\rho_{\alpha}^{\mu}\int\frac{dx_{\alpha}^{\mu}}{2\pi}e^{ix_{\alpha}^{\mu}(\rho_{\alpha}^{\mu}-z_{\alpha}^{\mu})}.
\end{equation}
Note that now all the $\boldsymbol{\xi}^{\mu},\zeta^{\mu}$ dependence
is in $e^{-ix_{\alpha}^{\mu}z_{\alpha}^{\mu}}$. We perform the average
with respect to $\xi_{i}^{\mu}\sim p(\xi_{i}^{\mu})=\mathcal{N}(0,1)$
and $p(\zeta^{\mu})=\frac{1}{2}\delta(\zeta^{\mu}+1)+\frac{1}{2}\delta(\zeta^{\mu}-1)$
(also note that $||\boldsymbol{w}^{\alpha}||=\sqrt{N}$):
\begin{equation}
\begin{split}\left\langle \prod_{\mu\alpha}e^{-ix_{\alpha}^{\mu}z_{\alpha}^{\mu}}\right\rangle _{\xi\eta} & =\prod_{\mu j}\left\langle \exp\left\{ -\frac{i}{\sqrt{N}}\zeta^{\mu}\xi_{j}^{\mu}\sum_{\alpha}x_{\alpha}^{\mu}w_{j}^{\alpha}\right\} \right\rangle _{\xi\zeta}\\
 & =\prod_{\mu i}\left\langle \exp\left\{ -\frac{(\zeta^{\mu})^{2}}{2N}\sum_{\alpha\beta}x_{\alpha}^{\mu}x_{\beta}^{\mu}w_{i}^{\alpha}w_{i}^{\beta}\right\} \right\rangle _{\zeta}\\
 & =\prod_{\mu}\exp\left\{ -\frac{1}{2N}\sum_{\alpha\beta}x_{\alpha}^{\mu}x_{\beta}^{\mu}\sum_{i}w_{i}^{\alpha}w_{i}^{\beta}\right\} .
\end{split}
\label{c2eq:cap_dist_energy_x_z}
\end{equation}
Introducing the replica overlap parameter $q_{\alpha\beta}=\frac{1}{N}\sum_{i}w_{i}^{\alpha}w_{i}^{\beta}$,
and notice that the $\mu$ index gives $P$ identical copies of the
same integral. We can suppress the $\mu$ indices and write
\begin{equation}
\begin{split}\left\langle \prod_{\mu\alpha}\Theta\left(z_{\alpha}^{\mu}-\kappa\right)\right\rangle _{\xi\zeta} & =\left[\int_{\kappa}^{\infty}\left(\prod_{\alpha}\frac{d\rho_{\alpha}dx_{\alpha}}{2\pi}\right)e^{K}\right]^{P}\end{split}
,
\end{equation}
where
\begin{equation}
K=i\sum_{\alpha}x_{\alpha}\rho_{\alpha}-\frac{1}{2}\sum_{\alpha\beta}q_{\alpha\beta}x_{\alpha}x_{\beta}\label{c2eq:cap_dist_K_replica}
\end{equation}
captures all the data dependence in the quenched free energy landscape,
and therefore it is called the `energetic' part of the free energy.
In contrast, the $\delta$-functions in Eqn.\ref{c2eq:volume_cap}
are called `entropic' part because they regulate what kind of weights
are considered in the version space (space of viable weights).

\subsubsection*{The entropic part}

The delta-function we have is from the introduction of $q_{\alpha\beta}$
(note that $\delta(||\boldsymbol{w}^{\alpha}||^{2}-N)$ amounts to
$q_{\alpha=\beta}=1$),
\begin{equation}
\begin{split}\delta(Nq_{\alpha\beta}-\sum_{i}w_{i}^{\alpha}w_{i}^{\beta}) & =\int\frac{d\hat{q}_{\alpha\beta}}{2\pi}\exp\left\{ iN\hat{q}_{\alpha\beta}q_{\alpha\beta}-i\hat{q}_{\alpha\beta}\sum_{i}w_{i}^{\alpha}w_{i}^{\beta}\right\} \end{split}
.
\end{equation}
Note that the normalization constraint $\delta(||\boldsymbol{w}^{\alpha}||^{2}-N)$
is automatically satisfied by requiring $q_{\alpha\alpha}=1$. Using
replica-symmetric ansatz: 
\begin{equation}
\hat{q}_{\alpha\beta}=-\frac{i}{2}(\Delta\hat{q}\delta_{\alpha\beta}+\hat{q}_{1}),\;\;q_{\alpha\beta}=(1-q)\delta_{\alpha\beta}+q
\end{equation}
We have
\begin{equation}
iN\sum_{\alpha\beta}\hat{q}_{\alpha\beta}q_{\alpha\beta}=\frac{nN}{2}\left[\Delta\hat{q}+\hat{q}_{1}(1-q)\right]+\mathcal{O}(n^{2}).
\end{equation}
and
\begin{equation}
\begin{split}-i\sum_{\alpha\beta}\hat{q}_{\alpha\beta}\sum_{i}w_{i}^{\alpha}w_{i}^{\beta} & =-\frac{1}{2}(\Delta\hat{q}+\hat{q}_{1})\sum_{\alpha}\sum_{i}(w_{i}^{\alpha})^{2}-\frac{1}{2}\hat{q}_{1}\sum_{(\alpha\beta)}\sum_{i}w_{i}^{\alpha}w_{i}^{\beta}\\
 & =-\frac{1}{2}\Delta\hat{q}\sum_{\alpha}\sum_{i}(w_{i}^{\alpha})^{2}-\frac{1}{2}\hat{q}_{1}\sum_{i}\left(\sum_{\alpha}w_{i}^{\alpha}\right)^{2}\\
 & \HSTeq-\frac{1}{2}\Delta\hat{q}\sum_{\alpha}\sum_{i}(w_{i}^{\alpha})^{2}+\sqrt{-\hat{q}_{1}}\sum_{i}t_{i}\left(\sum_{\alpha}w_{i}^{\alpha}\right),
\end{split}
\end{equation}
where in the last step HST denotes Hubbard-Stratonovich transformation
$\int\frac{dt}{\sqrt{2\pi}}e^{-t^{2}/2}e^{bt}=e^{b^{2}/2}$ that we
use to linearize the quadratic term at the cost of introducing an
auxiliary Gaussian variable $t$ to be averaged over later.

We can now express the complete free energy while disregarding overall
constant coefficients such as $2\pi$'s and $i$'s in the integration
measure, which become inconsequential when employing the saddle-point
approximation. Additionally, we will omit the denominator of $V$,
as it is independent of data and serves as an overall constant. It
is important to note that under the replica-symmetric assumption,
the replica index $\alpha$ generates $n$ identical copies of the
same integral, allowing for the suppression of the replica index $\alpha$
(also applicable to the synaptic index $i$):
\begin{equation}
\left\langle V^{n}\right\rangle =\int dqd\hat{\lambda}(k)d\Delta\hat{q}d\hat{q}_{1}e^{nN(G_{0}+G_{1})},\label{c2eq:cap_dist_free_energy}
\end{equation}
where
\begin{equation}
\begin{split}G_{0} & =\frac{1}{2}\Delta\hat{q}+\frac{1}{2}\hat{q}_{1}(1-q)+\left\langle \ln X(t)\right\rangle _{t},\\
X(t) & =\int_{-\infty}^{\infty}dw\exp\left\{ -\frac{1}{2}\Delta\hat{q}w^{2}+\sqrt{-\hat{q}_{1}}tw\right\} .
\end{split}
\end{equation}
Note that integrals in Eqn.\ref{c2eq:cap_dist_free_energy} can be
evaluated using saddle-point approximation in the thermodynamic limit
$N\to\infty$.

\subsubsection*{Limit $q\to1$}

We are interested in the critical load $\alpha_{c}$ where the version
space (space of viable weights) shrinks to a single point, i.e., there
exists only one viable solution. Since $q$ measures the typical overlap
between weight vectors in the version space, the uniqueness of the
solution implies $q\to1$ at $\alpha_{c}$. In this limit, the order
parameters $\left\{ \hat{q}_{1},\Delta\hat{q}\right\} $ diverges
and we need to express them in terms of undiverged order parameters
$\left\{ u,v\right\} $:
\begin{equation}
\hat{q}_{1}=\frac{-u^{2}}{(1-q)^{2}};\qquad\Delta\hat{q}=\frac{v}{1-q}
\end{equation}
Then $X(t)$ becomes
\begin{eqnarray}
X(t) & = & \int_{-\infty}^{\infty}dw\exp\frac{1}{1-q}\left\{ -\frac{1}{2}vw^{2}+utw\right\} 
\end{eqnarray}
We can perform yet another saddle point approximation to the $w$
integral. To $\mathcal{O}(\frac{1}{1-q})$ we have
\begin{eqnarray}
\left\langle \ln X(t)\right\rangle _{t} & = & \frac{1}{2(1-q)}\left[-v\left\langle w^{2}\right\rangle _{t}+2u\left\langle tw\right\rangle _{t}\right],
\end{eqnarray}
where the saddle value $w$ satisfies
\begin{equation}
w=\frac{ut}{v}
\end{equation}
Assuming $u/v>0$, the integration range of $t$ is unaffected by
the saddle point approximation, $\left\langle \cdot\right\rangle _{t}=\int_{-\infty}^{\infty}(\cdot)Dt$.

Note that 
\begin{equation}
\left\langle tw\right\rangle _{t}=\frac{v}{u}\left\langle w^{2}\right\rangle _{t}.
\end{equation}
So we have 
\begin{equation}
G_{0}=\frac{1}{2(1-q)}\left(v-u^{2}+v\left\langle w^{2}\right\rangle _{t}\right)
\end{equation}
Now we can perform the $t$ integral in $\left\langle w^{2}\right\rangle _{t}$
and obtain
\begin{equation}
G_{0}=\frac{1}{2(1-q)}\left(v-u^{2}+\frac{u^{2}}{v}\right)
\end{equation}
We seek saddle-point self-consistency equations with respect to order
parameters $u$ and $v$:
\begin{equation}
0=\frac{\partial G_{0}}{\partial u}\Rightarrow0=-2u+\frac{2u}{v}
\end{equation}
\begin{equation}
0=\frac{\partial G_{0}}{\partial v}\Rightarrow0=1-\frac{v^{2}}{u^{2}}
\end{equation}
Solving gives $u=v=1$. So $G_{0}$ becomes
\begin{equation}
G_{0}=\frac{1}{2(1-q)}.\label{c2eq:G0_final}
\end{equation}

\subsubsection*{The energetic part}

We would like to perform a similar procedure as shown above, to Eqn.\ref{c2eq:cap_dist_K_replica}
using the replica-symmetric ansatz. 

Under the replica-symmetric ansatz $q_{\alpha\beta}=(1-q)\delta_{\alpha\beta}+q$,
Eqn.\ref{c2eq:cap_dist_K_replica} becomes
\begin{equation}
\begin{split}K & =i\sum_{\alpha}x_{\alpha}\rho_{\alpha}-\frac{1-q}{2}\sum_{\alpha}x_{\alpha}^{2}-\frac{q}{2}\left(\sum_{\alpha}x_{\alpha}\right)^{2}\\
 & \HSTeq i\sum_{\alpha}x_{\alpha}\rho_{\alpha}-\frac{1-q}{2}\sum_{\alpha}x_{\alpha}^{2}-it\sqrt{q}\sum_{\alpha}x_{\alpha}.
\end{split}
\end{equation}
where we have again used the Hubbard-Stratonovich transformation to
linearize the quadratic piece. Performing the Gaussian integrals in
$x_{\alpha}$ (define $\alpha=\frac{P}{N}$),
\begin{equation}
nG_{1}=\alpha\log\left[\left\langle \int_{\kappa}^{\infty}\frac{d\rho}{\sqrt{2\pi(1-q)}}\exp\left\{ -\frac{(\rho+t\sqrt{q})^{2}}{2(1-q)}\right\} \right\rangle _{t}^{n}\right].
\end{equation}
At the limit $n\to0$,
\begin{equation}
nG_{1}=\alpha n\left\langle \log\left[\int_{\kappa}^{\infty}\frac{d\rho}{\sqrt{2\pi(1-q)}}\exp\left\{ -\frac{(\rho+t\sqrt{q})^{2}}{2(1-q)}\right\} \right]\right\rangle _{t}.
\end{equation}
Perform the Gaussian integral in $\rho$ and define $\tilde{\kappa}=\frac{\kappa+t\sqrt{q}}{\sqrt{1-q}}$,
we have
\begin{equation}
G_{1}=\alpha\int Dt\log H(\tilde{\kappa}).
\end{equation}
At the limit $q\to1,\alpha\to\alpha_{c}$, $\int_{-\infty}^{\infty}Dt$
is dominated by $\int_{-\kappa}^{\infty}Dt$, and $H(\tilde{\kappa})\to\frac{1}{\sqrt{2\pi}\tilde{\kappa}}e^{-\tilde{\kappa}^{2}/2}$.
The $\mathcal{O}\left(\frac{1}{1-q}\right)$ (leading order) contribution
gives
\begin{equation}
G_{1}=-\frac{1}{2(1-q)}\alpha_{c}\int_{-\kappa}^{\infty}Dt(\kappa+t)^{2}.\label{c2eq:cap_dist_G1_1pop}
\end{equation}
Let $G=G_{0}+G_{1}$. As $n\to0$, $\left\langle V^{n}\right\rangle =e^{n\left(NG\right)}\to1+n\left(NG\right)$,
and $\left\langle \log V\right\rangle =\lim_{n\to0}\frac{\left\langle V^{n}\right\rangle -1}{n}=NG$. 
Combining with Eqn.\ref{c2eq:G0_final}, we have 
\begin{equation}
\left\langle \log V\right\rangle =\frac{N}{2(1-q)}\left[1-\alpha_{c}\int_{-\kappa}^{\infty}Dt(\kappa+t)^{2}\right]\label{c2eq:cap_dist_logV}
\end{equation}
Capacity $\alpha_{c}$ is reached when Eqn.\ref{c2eq:cap_dist_logV}
goes to zero. We arrive at the famous Gardner capacity
\begin{equation}
\alpha_{c}(\kappa)=\left[\int_{-\kappa}^{\infty}Dt(\kappa+t)^{2}\right]^{-1},
\end{equation}
which reduces to $\alpha_{c}=2$ when $\kappa=0$. 

\section{Teacher-student setup for generalization performance}

In this section, we introduce the basics of teacher-student setup
for studying perceptron's generalization performance, first developed
in \cite{seung1992statistical}. We follow the pedagogical review
in \cite{engel2001statistical}.

Let's consider a noisy teacher perceptron, $\boldsymbol{w}_{t}\in\mathbb{R}^{N}$,
given random inputs $\boldsymbol{\xi}^{\mu}$ with $p(\xi_{i}^{\mu})=\mathcal{N}(0,1)$,
it generate labels by $\zeta^{\mu}=\text{sgn}(\boldsymbol{w}_{t}\cdot\boldsymbol{\xi}^{\mu}/||\boldsymbol{w}_{t}||+\eta^{\mu})$,
where $\eta^{\mu}$ is input noise and $\eta^{\mu}\sim\mathcal{N}(0,\sigma^{2})$.
The student perceptron $\boldsymbol{w}_{s}$ (noiseless) tries to
predict the labels of $\boldsymbol{\xi}^{\mu}$ by computing $\hat{\zeta}^{\mu}=\text{sgn}(\boldsymbol{w}_{s}\cdot\boldsymbol{\xi}^{\mu}/||\boldsymbol{w}_{s}||)$.
We are interested in finding the max-margin student for the dataset
$\{\boldsymbol{\xi}^{\mu},\zeta^{\mu}\}_{\mu=1}^{p}$ the noisy teacher
generates: $\max\kappa:\zeta^{\mu}\boldsymbol{w}_{s}\cdot\boldsymbol{\xi}^{\mu}\geq\kappa||\boldsymbol{w}_{s}||$. 

The generalization error in this problem is defined to be the averaged
number of errors the student perceptron makes
\begin{equation}
\varepsilon_{g}=\left\langle \Theta\left(-\hat{\zeta}\zeta\right)\right\rangle _{\boldsymbol{\xi}\zeta}\label{c2eq:gen_err}
\end{equation}
In the following, we normalize both the teacher and the student's
weight vectors to have $\left\Vert \boldsymbol{w}_{s}\right\Vert =\left\Vert \boldsymbol{w}_{t}\right\Vert =\sqrt{N}.$
Eq.\ref{c2eq:gen_err} can be rewritten as 
\begin{equation}
\varepsilon_{g}=\left\langle \Theta\left(\left(\frac{\boldsymbol{w}_{t}\cdot\boldsymbol{\xi}^{\mu}}{\sqrt{N}}+\eta^{\mu}\right)\left(\frac{\boldsymbol{w}_{s}\cdot\boldsymbol{\xi}^{\mu}}{\sqrt{N}}\right)\right)\right\rangle _{\boldsymbol{\xi}\zeta}
\end{equation}
We can carry out the average explicitly by introducing variables $h_{0}=\boldsymbol{w}_{t}\cdot\boldsymbol{\xi}^{\mu}/\sqrt{N}+\eta^{\mu}$
and $h=\boldsymbol{w}_{s}\cdot\boldsymbol{\xi}^{\mu}/\sqrt{N}$ and
the corresponding delta-functions to enforce these relations. It is
also convenient to introduce the teacher-student overlap
\begin{equation}
R=\frac{\boldsymbol{w}_{s}\cdot\boldsymbol{w}_{t}}{\left\Vert \boldsymbol{w}_{s}\right\Vert \left\Vert \boldsymbol{w}_{t}\right\Vert },\label{c2eq:overlap}
\end{equation}
which is a measure of how close the student weight vector is to that
of the teacher's. After performing the integrals, we obtain
\begin{equation}
\varepsilon_{g}=\frac{1}{\pi}\arccos\left(\frac{R}{\sqrt{1+\sigma^{2}}}\right)
\end{equation}
\begin{figure}
\centering{}\includegraphics[scale=0.3]{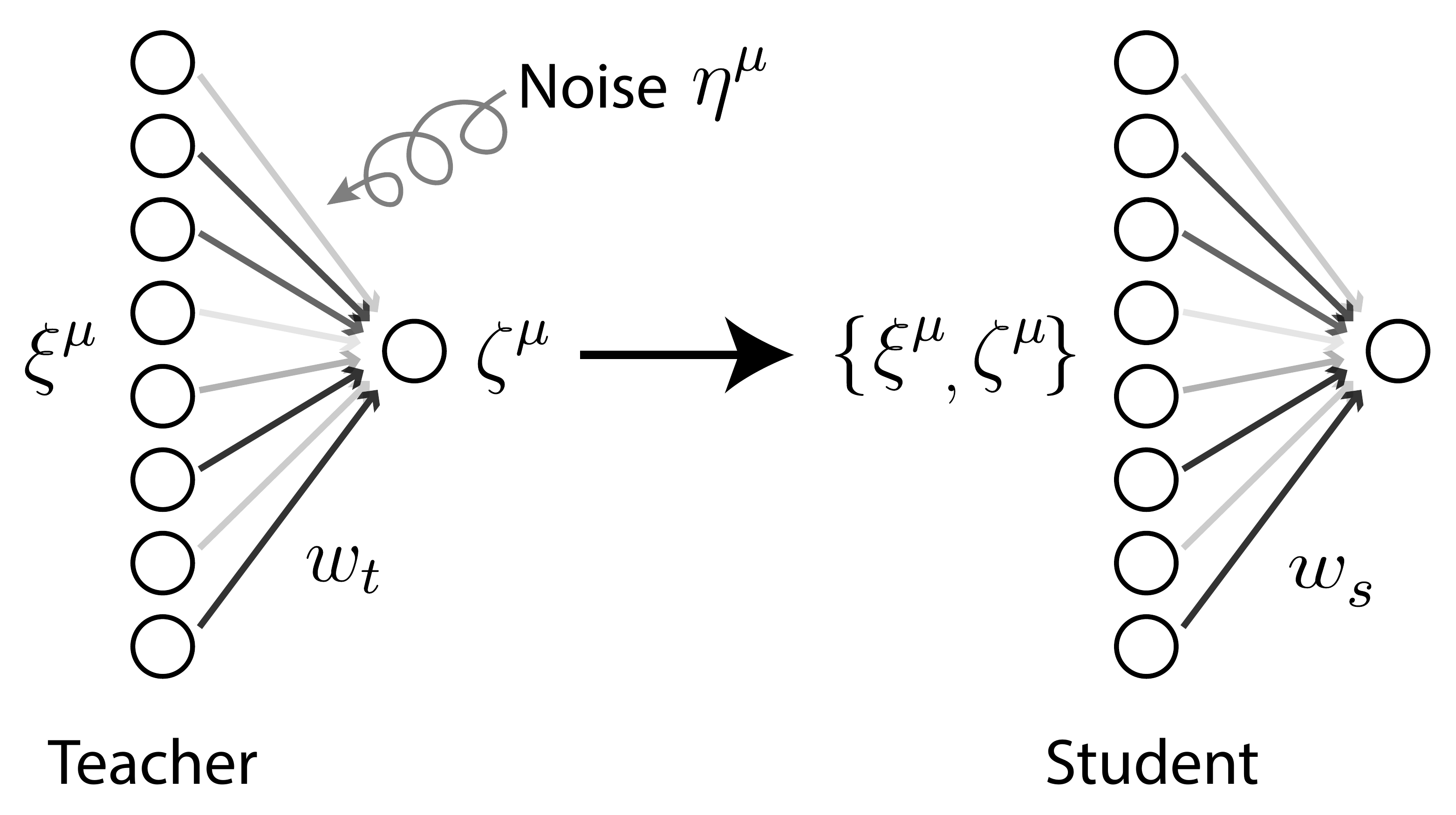}\caption{\label{fig:generalization_schemetic}Schematics of theteacher-student
setup. Left: A teacher perceptron $\boldsymbol{w}_{t}$ takes in inputs
$\boldsymbol{\xi}^{\mu}$, corrupted with noise $\eta^{\mu}$, then
generates output label $\zeta^{\mu}$. A student perceptron $\boldsymbol{w}_{s}$
tries to learn the input-output association $\{\boldsymbol{\xi}^{\mu},\zeta^{\mu}\}$
generated by the teacher.}
\end{figure}
In the following, to ease notation, we denote the teacher perceptron
$\boldsymbol{w_{t}}\equiv\boldsymbol{w}^{0}$ and the (replicated)
student perceptron $\boldsymbol{w}_{s}^{a}\equiv\boldsymbol{w}^{a}.$
Given random inputs $\boldsymbol{\xi}^{\mu}$ with $p(\xi_{i}^{\mu})=\mathcal{N}(0,1)$,
we generate labels by $\zeta^{\mu}=\text{sgn}(\boldsymbol{w}^{0}\cdot\boldsymbol{\xi}^{\mu}/||\boldsymbol{w}^{0}||+\eta^{\mu})$,
where $\eta^{\mu}$ is input noise and $\eta^{\mu}\sim\mathcal{N}(0,\sigma^{2})$.
The Gardner volume for this problem is: 
\begin{equation}
\left\langle V^{n}\right\rangle _{\xi\eta w^{0}}=\prod_{\alpha=1}^{n}\left\langle \int_{-\infty}^{\infty}\frac{d\boldsymbol{w}^{a}}{\sqrt{2\pi}}\prod_{\mu=1}^{p}\Theta\left(\text{sgn}\left(\frac{\boldsymbol{w^{0}}\cdot\boldsymbol{\xi}^{\mu}}{||\boldsymbol{w^{0}}||}+\eta^{\mu}\right)\frac{\boldsymbol{w^{a}}\cdot\boldsymbol{\xi}^{\mu}}{||\boldsymbol{w^{a}}||}-\kappa\right)\right\rangle _{\xi\eta w^{0}}.\label{sc_original}
\end{equation}
Let's define the local fields as
\begin{equation}
h_{\mu}^{a}=\frac{\boldsymbol{w^{a}}\cdot\boldsymbol{\xi}^{\mu}}{\sqrt{N}};\qquad h_{\mu}^{0}=\frac{\boldsymbol{w^{0}}\cdot\boldsymbol{\xi}^{\mu}}{\sqrt{N}}+\eta^{\mu}
\end{equation}
We leave the average over teacher $w^{0}$ to the very end.
\begin{equation}
\begin{split}\left\langle V^{n}\right\rangle _{\xi\eta} & =\prod_{\mu a}\int\frac{d\boldsymbol{w}^{a}}{\sqrt{2\pi}}\int dh_{\mu}^{a}\Theta\bigg(\text{sgn}(h_{\mu}^{0})h_{\mu}^{a}-\kappa\bigg)\left\langle \delta\left(h_{\mu}^{a}-\frac{\boldsymbol{w^{a}}\cdot\boldsymbol{\xi}^{\mu}}{\sqrt{N}}\right)\right\rangle _{\xi\eta}\left\langle \delta\left(h_{\mu}^{0}-\frac{\boldsymbol{w^{0}}\cdot\boldsymbol{\xi}^{\mu}}{\sqrt{N}}-\eta^{\mu}\right)\right\rangle _{\xi\eta}\\
 & =\int(\prod_{a=1}^{n}\frac{d\boldsymbol{w}^{a}}{\sqrt{2\pi}})\int\prod_{\mu a}\frac{dh_{\mu}^{a}d\hat{h}_{\mu}^{a}}{2\pi}\int\prod_{\mu}\frac{dh_{\mu}^{0}d\hat{h}_{\mu}^{0}}{2\pi}\prod_{\mu a}\Theta\bigg(\text{sgn}(h_{\mu}^{0})h_{\mu}^{a}-\kappa\bigg)\\
 & \times\bigg\langle\exp\bigg\{\sum_{\mu a}\bigg(i\hat{h}_{\mu}^{a}h_{\mu}^{a}-i\hat{h}_{\mu}^{a}\frac{\boldsymbol{w^{a}}\cdot\boldsymbol{\xi}^{\mu}}{\sqrt{N}}\bigg)+\sum_{\mu}\bigg(i\hat{h}_{\mu}^{0}h_{\mu}^{0}-i\hat{h}_{\mu}^{0}\frac{\boldsymbol{w^{0}}\cdot\boldsymbol{\xi}^{\mu}}{\sqrt{N}}-i\hat{h}_{\mu}^{0}\eta^{\mu}\bigg)\bigg\}\bigg\rangle_{\xi\eta}\\
 & =\int(\prod_{a=1}^{n}\frac{d\boldsymbol{w}^{a}}{\sqrt{2\pi}})\int\prod_{\mu a}\frac{dh_{\mu}^{a}d\hat{h}_{\mu}^{a}}{2\pi}\int\prod_{\mu}\frac{dh_{\mu}^{0}d\hat{h}_{\mu}^{0}}{2\pi}\prod_{\mu a}\Theta\bigg(\text{sgn}(h_{\mu}^{0})h_{\mu}^{a}-\kappa\bigg)\\
 & \times\exp\left\{ \sum_{\mu a}i\hat{h_{\mu}^{a}}h_{\mu}^{a}+\sum_{\mu}i\hat{h}_{\mu}^{0}h_{\mu}^{0}-\sum_{\mu}\frac{\sigma^{2}}{2}\hat{h}_{\mu}^{0}{}^{2}\right\} \\
 & \times\prod_{\mu}\exp\left\{ -\frac{1}{2N}\left[\sum_{a,b}\hat{h}_{\mu}^{a}\hat{h}_{\mu}^{b}\sum_{i}w_{i}^{a}w_{i}^{b}+N\left(\hat{h}_{\mu}^{0}\right)^{2}+2\sum_{a}\hat{h}_{\mu}^{a}\hat{h}_{\mu}^{0}\sum_{i}w_{i}^{a}w_{i}^{0}\right]\right\} ,
\end{split}
\end{equation}
where in the last step we perform the average over noise $\eta^{\mu}\sim\mathcal{N}(0,\sigma^{2})$
and patterns $p(\xi_{i}^{\mu})=\mathcal{N}(0,1)$, and make use of
the normalization conditions $\sum_{i}(w_{i}^{0})^{2}=N$ and $\sum_{i}(w_{i}^{a})^{2}=N$.

Now let's define
\begin{equation}
q^{ab}=\frac{1}{N}\sum_{i}w_{i}^{a}w_{i}^{b},\qquad R^{a}=\frac{1}{N}\sum_{i}w_{i}^{a}w_{i}^{0},\qquad\gamma=\frac{1}{\sqrt{1+\sigma^{2}}}
\end{equation}
Then, 
\begin{equation}
\begin{split}\langle V^{n}\rangle_{\xi\eta} & =\int(\prod_{a=1}^{n}\frac{d\boldsymbol{w}^{a}}{\sqrt{2\pi}})\int\prod_{\mu a}\frac{dh_{\mu}^{a}d\hat{h}_{\mu}^{a}}{2\pi}\int\prod_{\mu}\frac{dh_{\mu}^{0}d\hat{h}_{\mu}^{0}}{2\pi}\prod_{\mu a}\theta\bigg(\text{sgn}(h_{\mu}^{0})h_{\mu}^{a}-\kappa\bigg)\\
 & \times\int\prod_{a}Ndq^{ab}\int\prod_{a}NdR^{a}\prod_{ab}\delta(Nq^{ab}-\sum_{i}w_{i}^{a}w_{i}^{b})\prod_{a}\delta(NR^{a}-\sum_{i}w_{i}^{a}w_{i}^{0})\\
 & \times\exp\bigg\{ i\sum_{\mu a}\hat{h}_{\mu}^{a}h_{\mu}^{a}-\frac{1}{2}\sum_{\mu ab}\hat{h}_{\mu}^{a}\hat{h}_{\mu}^{b}q^{ab}-\frac{1}{2}\sum_{\mu}\gamma^{-2}(\hat{h}_{\mu}^{0})^{2}+i\sum_{\mu}\hat{h}_{\mu}^{0}h_{\mu}^{0}\\
 & -\sum_{\mu a}\hat{h}_{\mu}^{0}\hat{h}_{\mu}^{a}R^{a}\bigg\}
\end{split}
\end{equation}
We can do the $\hat{h}_{\mu}^{0}$ integral, and redefine $\bar{h}_{\mu}^{0}=\gamma h_{\mu}^{0}$
\begin{equation}
\begin{split}\langle V^{n}\rangle_{\xi\eta} & =\int(\prod_{a=1}^{n}\frac{d\boldsymbol{w}^{a}}{\sqrt{2\pi}})\int\prod_{ab}Ndq^{ab}\int\prod_{a}NdR^{a}\prod_{ab}\delta(Nq^{ab}-\sum_{i}w_{i}^{a}w_{i}^{b})\prod_{a}\delta(NR^{a}-\sum_{i}w_{i}^{a}w_{i}^{0})\\
 & \times\prod_{\mu}\bigg[\int\prod_{a}\frac{dh_{\mu}^{a}d\hat{h}_{\mu}^{a}}{2\pi}\int\frac{d\bar{h}_{\mu}^{0}}{\sqrt{2\pi}}\prod_{a}\theta\bigg(\text{sgn}(\bar{h}_{\mu}^{0})h_{\mu}^{a}-\kappa\bigg)\exp\bigg\{ i\sum_{a}\hat{h}_{\mu}^{a}h_{\mu}^{a}-\frac{1}{2}\sum_{ab}\hat{h}_{\mu}^{a}\hat{h}_{\mu}^{b}q^{ab}\\
 & +\frac{1}{2}\gamma^{2}\sum_{ab}\hat{h}_{\mu}^{a}\hat{h}_{\mu}^{b}R^{a}R^{b}-\frac{1}{2}(\bar{h}_{\mu}^{0})^{2}-i\gamma\sum_{a}h_{\mu}^{a}R^{a}\bar{h}_{\mu}^{0}\bigg\}\bigg]
\end{split}
\end{equation}
Next, we Fourier decompose the remaining $\delta$-functions by using
the identity 
\begin{equation}
\delta(x)=\int_{-i\infty}^{i\infty}\frac{d\hat{r}}{2\pi i}e^{-\hat{r}x}
\end{equation}
Note that for for the ease of notation later on, we choose to integrate
over the imaginary axis instead of the usual real axis. 
We also introduce conjugate variables $\hat{q}^{ab}$ and $\hat{R}^{a}$
to write the $\delta$-functions into its Fourier representations. 
After rescaling $\hat{q}^{ab}\to\hat{q}^{ab}/2$, and note that the
$\mu$-indexed terms factor into $p$-identical integrals, and the
$i$-indexed terms factor into $N$-identical integrals, we can bring
the Gardner volume into the following form ($\alpha\equiv p/N$):
\begin{equation}
\begin{split}\langle V^{n}\rangle_{\xi\eta} & =\int(\prod_{ab}dq^{ab}d\hat{q}^{ab})(\prod_{a}dR^{a}d\hat{R}^{a})e^{N\left(G_{0}+\alpha G_{E}\right)}\end{split}
,
\end{equation}
where ($\bar{h}_{\mu}^{0}=\gamma h_{\mu}^{0};\quad\gamma=1/\sqrt{1+\sigma^{2}}$)
the entropic part is
\begin{equation}
\begin{split}G_{0}= & -\frac{1}{2}\sum_{ab}\hat{q}^{ab}q^{ab}-\sum_{a}\hat{R}^{a}R^{a}+n\left\langle \ln Z\right\rangle _{w^{0}},\\
Z= & \int\left(\prod_{a}dw_{i}^{a}\right)\exp\bigg\{\frac{1}{2}\sum_{ab}\hat{q}^{ab}w_{i}^{a}w_{i}^{b}+\sum_{a}\hat{R}^{a}w_{i}^{a}w_{i}^{0}\bigg\},
\end{split}
\label{c2eq:entropic part}
\end{equation}
and the energetic part is
\begin{equation}
\begin{split}G_{1}= & \ln\int\prod_{a}\frac{d\hat{h}^{a}dh^{a}}{2\pi}\int D\bar{h}^{0}\prod_{a}\Theta\bigg(\text{sgn}(\frac{\bar{h}^{0}}{\gamma})h^{a}-\kappa\bigg)\\
 & \times\exp\bigg\{ i\sum_{a}\hat{h}^{a}h^{a}-i\gamma\bar{h}^{0}\sum_{a}h^{a}R^{a}-\frac{1}{2}\sum_{ab}\hat{h}^{a}\hat{h}^{b}(q^{ab}-\gamma^{2}R^{a}R^{b})\bigg\}.
\end{split}
\label{c2eq:energetic part}
\end{equation}

\subsection{Energetic part}

In this subsection, we try to perform the integrations in $G_{1}$.
In the following, we assume replica symmetric solutions 
\begin{equation}
q^{ab}=q+(1-q)\delta_{ab};\qquad R^{a}=R
\end{equation}
First note that $\gamma=\frac{1}{\sqrt{1+\sigma^{2}}}>0$, so $\text{sgn}(\bar{h}^{0}/\gamma)=\text{sgn}(\bar{h}^{0})$.
Now the effect of $\theta(\text{sgn}(\bar{h}^{0})h^{a}-\kappa)$ can
be understood as the following: 
\[
\begin{cases}
\bar{h}^{0}>0;\;h^{a}>\kappa\Rightarrow\int_{\kappa}^{\infty}dh^{a}\int_{0}^{\infty}d\bar{h}^{0}\\
\bar{h}^{0}<0;\;h^{a}<-\kappa\Rightarrow\int_{-\infty}^{-\kappa}dh^{a}\int_{-\infty}^{0}d\bar{h}^{0}=\int_{\kappa}^{\infty}dh^{a}\int_{0}^{\infty}d\bar{h}^{0}
\end{cases}
\]
Therefore, the net effect of the step-function is to modify the integration
range and an overall factor of 2 in the energetic part (Eqn. (\ref{c2eq:energetic part})),
\begin{equation}
\begin{split}G_{1} & =2\int\prod_{a}\frac{d\hat{h}^{a}}{2\pi}\int_{\kappa}^{\infty}dh^{a}\int_{0}^{\infty}D\bar{h}^{0}\exp\bigg\{ i\sum_{a}\hat{h}^{a}h^{a}-i\bar{h}^{0}\gamma R\sum_{a}h^{a}\\
 & -\frac{1}{2}(1-\gamma^{2}R^{2})\sum_{a}(\hat{h}^{a})^{2}\underbrace{-\frac{1}{2}(q-\gamma^{2}R^{2})\sum_{a\neq b}\hat{h}^{a}\hat{h}^{b}}_{=-\frac{1}{2}(q-\gamma^{2}R^{2})(\sum_{a}\hat{h}^{a})^{2}+\frac{1}{2}(q-\gamma^{2}R^{2})\sum_{a}(\hat{h}^{a})^{2}}\bigg\}
\end{split}
\end{equation}
We can linearize the $(\sum_{a}\hat{h}^{a})^{2}$ term using the Hubbard-Stratonovich
transformation 
\begin{equation}
-\frac{1}{2}(q-\gamma^{2}R^{2})(\sum_{a}\hat{h}^{a})^{2}=\int Dt\exp\bigg\{-i\sqrt{q-\gamma^{2}R^{2}}t\sum_{a}\hat{h}^{a}\bigg\}
\end{equation}
Then, 
\begin{equation}
\begin{split}G_{1} & =2\int\prod_{a}\frac{d\hat{h}^{a}}{2\pi}\int_{\kappa}^{\infty}dh^{a}\int_{0}^{\infty}D\bar{h}^{0}\int Dt\\
 & \times\exp\bigg\{ i\sum_{a}\hat{h}^{a}\bigg(h^{a}-\bar{h}^{0}\gamma R-\sqrt{q-\gamma^{2}R^{2}}t\bigg)-\frac{1}{2}(1-q)\sum_{a}(\hat{h}^{a})^{2}\bigg\}\\
 & =\ln2\int\prod_{a}\frac{d\hat{h}^{a}}{2\pi}\int_{\kappa}^{\infty}dh^{a}\int_{0}^{\infty}D\bar{h}^{0}\int Dt\\
 & \times\prod_{a}\exp\bigg\{-\frac{1-q}{2}\bigg[\hat{h}^{a}-i\frac{h^{a}-\bar{h}^{0}\gamma R-\sqrt{q-\gamma^{2}R^{2}}t}{1-q}\bigg]^{2}-\frac{(h^{a}-\bar{h}^{0}\gamma R-\sqrt{q-\gamma^{2}R^{2}}t)^{2}}{2(1-q)}\bigg\}
\end{split}
\end{equation}
We can do the $\hat{h}^{a}$-integral,
\begin{equation}
\begin{split}G_{1}=2\int_{0}^{\infty}D\bar{h}^{0}\int Dt\bigg[\int_{\kappa}^{\infty}\frac{dh^{a}}{\sqrt{2\pi(1-q)}}\exp\bigg\{-\frac{1}{2}\frac{(h^{a}-\bar{h}^{0}\gamma R-\sqrt{q-\gamma^{2}R^{2}}t)^{2}}{1-q}\bigg\}\bigg]^{n}\end{split}
\end{equation}
Then we can do the $h^{a}$ and $\bar{h}^{0}$-integrals by expressing
the result in terms of $H$-functions ($H(x)=\int_{x}^{\infty}Dt$)
and simplifying,
\begin{equation}
\begin{split}G_{1} & =\end{split}
2\int DtH\bigg(-\frac{\gamma Rt}{\sqrt{q-\gamma^{2}R^{2}}}\bigg)H^{n}\bigg(\frac{\kappa-\sqrt{q}t}{\sqrt{1-q}}\bigg)
\end{equation}
Note that as $n\to0$, at leading order 
\begin{equation}
\ln\int DtH(x)H^{n}(y)=n\int DtH(x)\ln H(y)
\end{equation}
So we have 
\begin{equation}
G_{1}/n=2\int DtH\bigg(-\frac{\gamma Rt}{\sqrt{q-\gamma^{2}R^{2}}}\bigg)\ln H\bigg(\frac{\kappa-\sqrt{q}t}{\sqrt{1-q}}\bigg)
\end{equation}
As $q\to1$, only the max-margin solution exists, $\kappa\to\kappa_{max}$,
and 
\begin{equation}
G_{1}/n=-\frac{\alpha}{1-q}\int_{-\infty}^{\kappa_{max}}DtH\bigg(-\frac{\gamma Rt}{\sqrt{1-\gamma^{2}R^{2}}}\bigg)(\kappa_{max}-t)^{2}.
\end{equation}

\subsection{Entropic part}

In this subsection, we perform the integrals in the entropic part.
We start by assuming a replica-symmetric solution for the auxiliary
variables introduced in the Fourier decomposition of the $\delta$-functions,
\begin{equation}
\hat{R}^{a}=\hat{R};\qquad\hat{q}^{ab}=\hat{q}+(\hat{q}_{1}-\hat{q})\delta_{ab}
\end{equation}
Then the log-term in the entropic part becomes, 
\begin{equation}
\begin{split}Z & =\int\left(\prod_{a}\frac{dw_{i}^{a}}{\sqrt{2\pi}}\right)\exp\bigg\{\frac{1}{2}(\hat{q}_{1}-\hat{q})\sum_{a}(w_{i}^{a})^{2}+\hat{R}w_{i}^{0}\sum_{a}w_{i}^{a}+\frac{1}{2}\hat{q}(\sum_{a}w_{i}^{a})^{2}\bigg\}\\
 & \HSTeq\int Dt\int(\prod_{a}\frac{dw_{i}^{a}}{\sqrt{2\pi}})\exp\bigg\{\frac{1}{2}(\hat{q}_{1}-\hat{q})\sum_{a}(w_{i}^{a})^{2}+(\hat{R}w_{i}^{0}+t\sqrt{\hat{q}})\sum_{a}w_{i}^{a}\bigg\},
\end{split}
\end{equation}
where we have introduced Gaussian variable $t$ to linearize quadratic
term as usual. Now the integral becomes $n$ identical copies and
we can drop the replica index $a$.

Note to $\mathcal{O}(n)$ we have
\begin{equation}
-\frac{1}{2}\sum_{ab}\hat{q}^{ab}q^{ab}=-\frac{n}{2}\left(\hat{q}_{1}-\hat{q}q\right)
\end{equation}
Therefore,
\begin{equation}
G_{0}/n=-\frac{1}{2}\hat{q}_{1}+\frac{1}{2}\hat{q}q-\hat{R}R+\left\langle \ln Z\right\rangle _{t,w^{0}}.
\end{equation}
We can bring the log term into the form of an induced distribution
$f(w)$,
\begin{equation}
\begin{split}Z= & \int_{-\infty}^{\infty}\frac{dw}{\sqrt{2\pi}}\exp\left[-f(w)\right]\\
f(w)= & \frac{1}{2}(\hat{q}-\hat{q}_{1})w^{2}-(\hat{R}w^{0}+t\sqrt{\hat{q}})w
\end{split}
.
\end{equation}
Under saddle-point approximation, we obtain a set of mean field self-consistency
equations for the order parameters: 
\begin{equation}
\begin{split}0=\frac{\partial G_{0}}{\partial\hat{q}_{1}} & \Rightarrow1=\left\langle \left\langle w^{2}\right\rangle _{f}\right\rangle _{t,w^{0}}\\
0=\frac{\partial G_{0}}{\partial\hat{R}} & \Rightarrow R=\left\langle w^{0}\left\langle w\right\rangle _{f}\right\rangle _{t,w^{0}}\\
0=\frac{\partial G_{0}}{\partial\hat{q}} & \Rightarrow q=\left\langle \left\langle w\right\rangle _{f}^{2}\right\rangle _{t,w^{0}}
\end{split}
,\label{c2eq:sc_saddle_1-3_qneq1}
\end{equation}
\begin{equation}
\begin{split}0=\frac{\partial G_{1}}{\partial q} & \Rightarrow\hat{q}=-2\alpha\partial_{q}G_{1}\\
0=\frac{\partial G_{1}}{\partial R} & \Rightarrow\hat{R}=\alpha\partial_{R}G_{1}
\end{split}
.\label{c2eq:sc_saddle_4-5_qneq1}
\end{equation}
Note that Eq.(\ref{c2eq:sc_saddle_1-3_qneq1}) has nice interpretations:
(1) the weights must be normalized WRT the induced distribution $f(w)$;
(2) the student-teacher overlap $R$ is the overlap between teacher
$w^{0}$ and the average student from the family $f(w)$; (3) the
replica-overlap $q$ is the overlap between students drawn from $f(w)$. 

\subsubsection*{$q\to1$ limit}

In this limit the order parameter diverges, and we define the new
set of undiverged order parameters as
\begin{equation}
\hat{R}=\frac{\tilde{R}}{1-q};\qquad\hat{q}=\frac{\tilde{q}^{2}}{(1-q)^{2}};\qquad\hat{q}-\hat{q}_{1}=\frac{\Delta}{1-q}.
\end{equation}
Then 
\begin{equation}
\begin{split}f(w)= & \frac{1}{1-q}\left[\frac{1}{2}\Delta w^{2}-(\tilde{R}w^{0}+t\tilde{q})w\right]\end{split}
\end{equation}
We can perform again saddle point approximation for $Z$, 
\begin{equation}
\ln Z=\frac{1}{1-q}\left\{ -\frac{1}{2}\left\langle \Delta w^{2}\right\rangle _{t,w^{0}}-\left\langle (\tilde{R}w^{0}+t\tilde{q})^{2}w^{2}\right\rangle _{t,w^{0}}\right\} 
\end{equation}
where at the saddle 
\begin{equation}
w=-\frac{\tilde{R}w^{0}+t\tilde{q}}{\Delta}
\end{equation}
Now $G_{0}$ can be simplified to
\begin{equation}
G_{0}/n=\frac{1}{2(1-q)}\left\{ \Delta-\tilde{q}^{2}-2\tilde{R}R+\frac{\tilde{R}^{2}\tilde{q}^{2}}{2\Delta}\right\} 
\end{equation}
We can write down the saddle point equations:
\begin{equation}
0=\frac{\partial G_{0}}{\partial\Delta}\Rightarrow0=1-\frac{\tilde{R}^{2}\tilde{q}^{2}}{2\Delta^{2}}
\end{equation}
\begin{equation}
0=\frac{\partial G_{0}}{\partial\tilde{q}}\Rightarrow0=-2\tilde{q}+\frac{\tilde{q}}{\Delta}
\end{equation}
\begin{equation}
0=\frac{\partial G_{0}}{\partial\tilde{R}}\Rightarrow0=-2R+\frac{\tilde{R}}{\Delta}
\end{equation}
Solving these equations gives $\tilde{R}=R,$$\Delta=1/2$, and $\tilde{q}^{2}=1/2-R^{2}.$
We can further simplify $G_{0}$ into
\begin{equation}
G_{0}/n=\frac{1}{2(1-q)}\left(1-R^{2}\right)
\end{equation}

\subsection{Summary}

Putting everything together, we have 
\begin{equation}
G/n=\frac{1}{(1-q)}\left\{ \frac{1-R^{2}}{2}-\alpha\int_{-\infty}^{\kappa_{max}}DtH\bigg(-\frac{\gamma Rt}{\sqrt{1-\gamma^{2}R^{2}}}\bigg)(\kappa_{max}-t)^{2}\right\} 
\end{equation}
The two remaining saddle-point equations, $\log V=0$ $(G=0)$ and
$0=\frac{\partial G_{0}}{\partial R}$ self-consistently determine
the two order parameters $\left\{ R,\kappa_{max}\right\} $,
\begin{equation}
1-R^{2}=2\alpha\int_{-\infty}^{\kappa_{max}}DtH\bigg(-\frac{\gamma Rt}{\sqrt{1-\gamma^{2}R^{2}}}\bigg)(\kappa_{max}-t)^{2}\label{c2eq:tsmf_1}
\end{equation}
\begin{equation}
R=\frac{\alpha\gamma}{\sqrt{2\pi}}\sqrt{1-\gamma^{2}R^{2}}\int_{-\tilde{\kappa}}^{\infty}Dt\bigg(\tilde{\kappa}+t\bigg)\label{c2eq:tsmf_2}
\end{equation}
where $\tilde{\kappa}=\kappa/\sqrt{1-\gamma^{2}R^{2}}$. 
We numerically solve Eq.(\ref{c2eq:tsmf_1})-(\ref{c2eq:tsmf_2})
in Fig.\ref{fig:generalization_numerics}. 
\begin{figure}
\centering{}\includegraphics[scale=0.3]{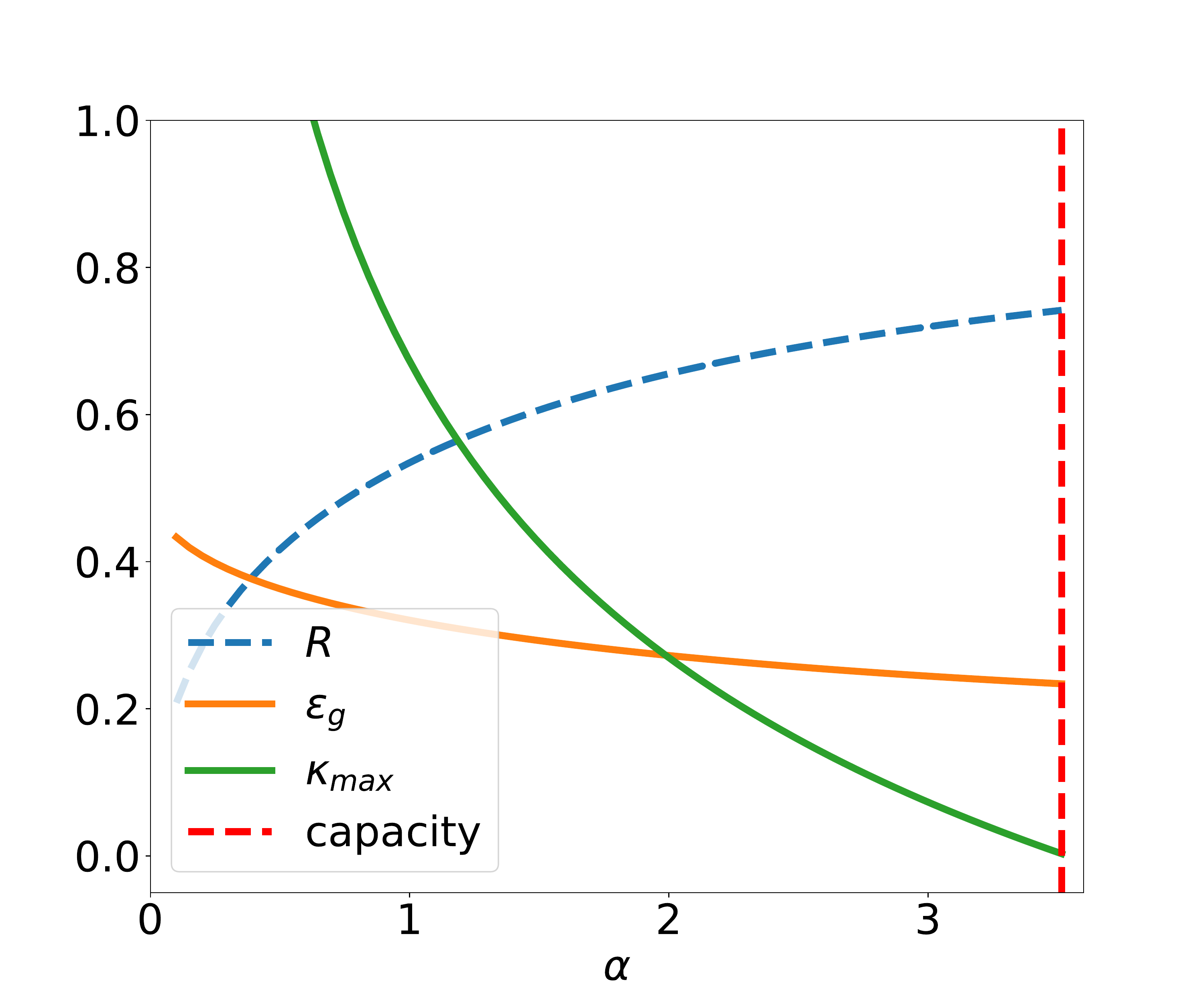}\caption{\label{fig:generalization_numerics}Numerical solutions of overlap
$R$, max-margin $\kappa_{max}$, and generalization error $\varepsilon_{g}$
as a function of load $\alpha$. The red dashed line represents capacity
due to nonzero teacher noise, beyond which the student can no longer
learn the data perfectly ($\kappa_{max}<0$). }
\end{figure}
\label{gardner}


\chapter{Feedforward neural networks under structural constraints}
\label{disco}
\section{Introduction}

Learning and memory are thought to take place at the microscopic level
by modifications of synaptic connections. Unlike learning in artificial
neural networks, synaptic plasticity in the brain operates under structural
biological constraints. Theoretical efforts to incorporate some of
these constraints have focused largely on the degree of connectivity
\cite{buzsaki2014log,koulakov2009correlated} and the constraints
on the sign of the synapses (Excitatory vs. Inhibitory) \cite{amit1989perceptron,brunel2004optimal},
but few include additional features of synaptic weight distributions
observed in the brain \cite{barbour2007can}. More generally, recent
large-scale connectomic studies \cite{kunst2019cellular,scheffer2020connectome,shapson2021connectomic}
are beginning to provide a wealth of structural information of neuronal
circuits at an unprecedented scope and level of precision, which presents
a remarkable opportunity for a more refined theoretical study of learning
and memory that takes into account these hitherto unavailable structural
information.

Perceptron \cite{rosenblatt1958perceptron} is arguably the simplest
model of computation by single neuron and is the fundamental building
block for many modern neural networks. Despite the drastic oversimplification,
studying the computational properties of (binary and analog) perceptron
has been used extensively in computational neuroscience since its
dawn, particularly in the cerebellum (as a model of sensory-motor
association) but also in cerebral cortex (for generic associative
memory functions) \cite{marr1969theory,albus1971theory,brunel2004optimal,chapeton2012efficient,brunel2016cortical,bouvier2018cerebellar}.
Forming associations is considered an ‘atomic’ building block for
generic cortical functions, and perceptron memory capacity sets a
tight bound on the memory capacity in recurrently connected neuronal
circuits with application to cortex and hippocampus\textcolor{red}{{}
}\cite{gardner1987maximum,rolls1998neural,rubin2017balanced}. Statistical
mechanical analysis predicts that near capacity, an unconstrained
perceptron classifying random input-output associations has normally
distributed weights \cite{gardner1988optimal,gardner1988space,4038449},
see Fig.\ref{fig:motivation}(a). In contrast, physiological experiments
suggest that biological synapses do not change their excitatory/inhibitory
identity during learning (but see recent \cite{kim2022co}). In order
to take perceptron a step closer to biological realism, prior work
has imposed sign constraints during learning \cite{amit1989perceptron,brunel2004optimal}.
In this case, the predicted weight distribution is a delta-function
centered at zero plus a half-normal distribution, see Fig.\ref{fig:motivation}(b).
However, a wide range of connectomic studies ranging from cortical
circuits in animals \cite{levy2012spatial,holmgren2003pyramidal,molnar2008complex,yang2013development,shapson2021connectomic,loewenstein2011multiplicative,avermann2012microcircuits},
to human cerebral cortex \cite{molnar2008complex,shapson2021connectomic}
have shown evidence of lognormally distributed synaptic connections.
As an example, Fig.\ref{fig:motivation}(c) shows the weight connection
distribution in mouse primary auditory cortex (data adapted from \cite{levy2012spatial}).
 Possible reasons for the ubiquitous lognormal distributions range
from biological structural/developmental constraints to computational
benefits \cite{teramae2014computational}. Various potential mechanisms
for lognormal distributions has been proposed, from multiplicative
gradient updates in feedforward networks\cite{kivinen1997exponentiated,loewenstein2011multiplicative},
to mixture of additive and multiplicative plasticity rules in spiking
networks\cite{gilson2011stability}, but the majority of these proposals
lead not just to lognormal distributions but also to sparsification
in the weights. Instead of adding yet another explanation to the
computational origin of lognormal distribution, here we take the observed
weight distribution as a prior on the network structure, and ask for
its computational consequences. The goal of the paper is to present
for the first time a quantitative and qualitative theory of neural
network learning performance under non-Gaussian and general weight
distributions (not limited to lognormal distributions). 

In this paper, we combine two powerful tools: statistical mechanics
and optimal transport theory, and present a theory of perceptron learning
that incorporates the knowledge of both distribution and sign information
as constraints, and gives accurate predictions for capacity and generalization
error. Interestingly, the theory predicts that the reduction in capacity
due to the constrained weight-distribution is related to the Wasserstein
distance between the cumulative distribution function of the constrained
weights and that of the standard normal distribution. Along with the
theoretical framework, we also present a learning algorithm derived
from information geometry that is capable of efficiently finding viable
perceptron weights that satisfy desired distribution and sign constraints.
This paper is organized as follows: in Section \ref{sec:capacity_one_population}
we derive the perceptron capacity for classifying random input-output
associations using statistical mechanics, and illustrate our theory
with a simple example. In Section \ref{sec:disco_algorithm}, we derive
our learning algorithm using optimal transport theory, and show that
distribution of weights found by the learning algorithm coincide with
geodesic distributions on a Wasserstein statistical manifold, and
therefore training can be interpreted as a geodesic flow. In Section
\ref{sec:compare_experiment} we analyze a parameterized family of
biologically realistic weight distributions, and use our theory to
predict the shape of the distribution with optimal parameters. We
map out the experimental parameter landscape for the estimated distribution
of synaptic weights in mammalian cortex and show that our theory's
prediction for optimal distribution is close to the experimentally
measured value. In Section \ref{sec:Generalization} we further develop
a statistical mechanical theory for teacher-student perceptron rule
learning and ask for the best way for the student to incorporate prior
knowledge about the weight distribution of the rule (i.e., the teacher).
Our theory shows that it is beneficial for the learner to adopt different
prior weight distributions during learning. 

\begin{figure}
\centering{}\includegraphics[scale=0.18]{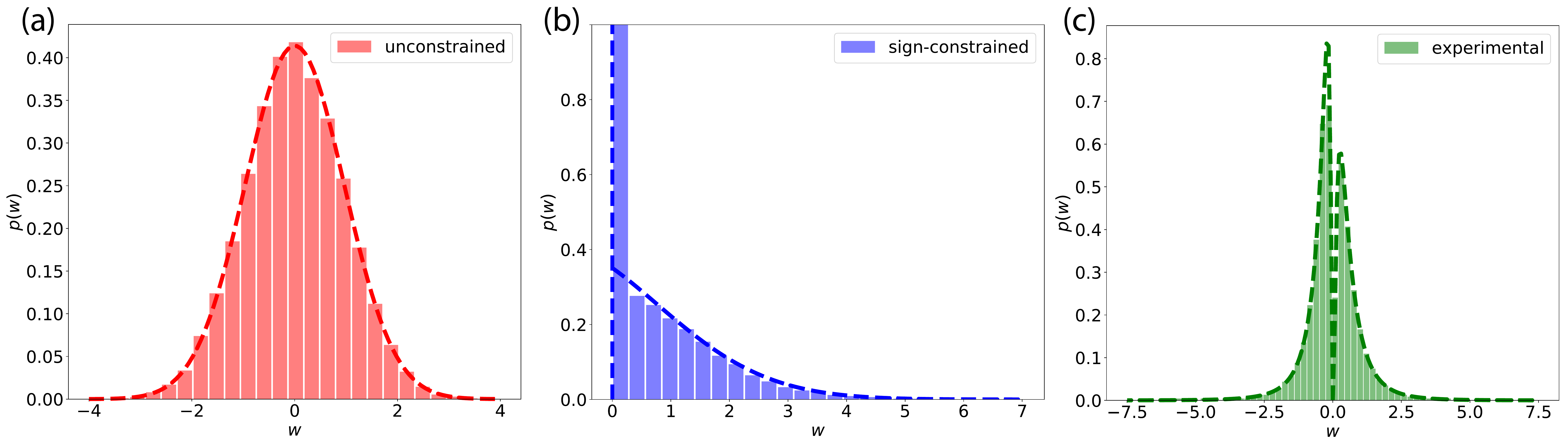}\caption{\label{fig:motivation}Theoretical and empirical synaptic weight distributions.
(a)-(b) predicted distribution following perceptron learning at capacity.
(a) Normal distribution when learning is unconstrained. (b) A delta-function
plus a half-normal distribution when learning is sign-constrained.
(c) Experimentally measured synaptic weight distribution (mouse primary
auditory cortex \cite{levy2012spatial}).}
\end{figure}

\section{Capacity}

\subsection{Learning under weight distribution constraints}

\label{sec:capacity_one_population}

We begin by considering a canonical learning problem: classifying
random input-output associations by a perceptron. In biological memory
systems, the heavily correlated sensory data is undergoing heavy preprocessing
including massive decorrelations, and previous work on brain related
perceptron modeling \cite{gardner1987maximum,brunel2004optimal,rubin2017balanced}
assumes similarly unstructured data. The data consists of pairs$\left\{ \boldsymbol{\xi}^{\mu},\zeta^{\mu}\right\} _{\mu=1}^{P}$,
where $\boldsymbol{\xi}^{\mu}$ is an $N$-dimentional random vector
drawn i.i.d. from a standard normal distribution, $p(\xi_{i}^{\mu})=\mathcal{N}(0,1)$,
and $\zeta^{\mu}$ are random binary class labels with $p(\zeta^{\mu})=\frac{1}{2}\delta(\zeta^{\mu}+1)+\frac{1}{2}\delta(\zeta^{\mu}-1)$.
The goal is to find a hyperplane through the origin, described by
a perceptron weight vector $\boldsymbol{w}\in\mathbb{R}^{N}$, normalized
to $||\boldsymbol{w}||^{2}=N.$

We call $\boldsymbol{w}$ a separating hyperplane when it correctly
classifies all the examples with margin $\kappa>0$:

\begin{equation}
\zeta^{\mu}\frac{\boldsymbol{w}\cdot\boldsymbol{\xi}^{\mu}}{||\boldsymbol{w}||}\geq\kappa.\label{eq:margin}
\end{equation}

We are interested in solutions $\boldsymbol{w}$ to Eqn.\ref{eq:margin}
that obey a prescribed distribution constraint, $w_{i}\sim q(w)$,
where $q$ is an arbitrary probability density function. We further
demand that $\langle w^{2}\rangle_{q(w)}=1$ to fix the overall scale
of the distribution (since the task is invariant to the overall scale
of $w$). Thus, the goal of learning is to find weights that satisfy
\ref{eq:margin} with the additional constraint that the empirical
density function $\hat{q}(w)=\frac{1}{N}\sum_{i}^{N}\delta(w-w_{i})$,
formed by the learned weights is similar to $q(w)$, and more precisely
that it converges to $q(w)$ as $N\rightarrow\infty$ (see Section
\ref{subsec:capacity_gardner} below). 

Extension of this setup that includes an arbitrary number of populations
each satisfying its own prescribed distribution constraints is discussed
in Section \ref{sec:compare_experiment} and in Appendix \ref{app:capacity_M_pop}.
Note that the sign constraint is a special case of this scenario with
two synaptic populations: one excitatory and one inhibitory. We further
discuss the generalization of this setup to include biased inputs
and sparse labels in Appendix \ref{app:capacity_biased_sparse}.

\subsection{Statistical mechanical theory of capacity}

\label{subsec:capacity_gardner}

We are interested in the thermodynamic limit where $P,N\to\infty$
, but the load $\alpha=\frac{P}{N}$ stays $\mathcal{O}(1)$. This
limit is amenable to mean-field analysis using statistical mechanics.

Following Gardner's seminal work \cite{gardner1988optimal,gardner1988space},
we consider the fraction $V$ of viable weights that satisfies both
Eqn.\ref{eq:margin} and the distribution constraint $\hat{q}=q$$,$
to all possible weights:

\begin{equation}
V=\frac{\int d\boldsymbol{w}\left[\prod_{\mu=1}^{P}\Theta\left(\zeta^{\mu}\frac{\boldsymbol{w}\cdot\boldsymbol{\xi}^{\mu}}{||\boldsymbol{w}||}-\kappa\right)\right]\delta(||\boldsymbol{w}||^{2}-N)\delta\bigg(\int dk\left(\hat{q}(k)-q(k)\right)\bigg)}{\int d\boldsymbol{w}\delta(||\boldsymbol{w}||^{2}-N)}.\label{eq:volume}
\end{equation}
In Eqn.\ref{eq:volume}, we impose the distribution constraint $\hat{q}=q$
by demanding that in the thermodynamic limit, all Fourier modes of
$q$ and $\hat{q}$ are the same , i.e., that $q(k)=\int dwe^{ikw}q(w)$
= $\hat{q}(k)=\frac{1}{N}\sum_{i}^{N}e^{ikw_{i}},$where in the last
equality we have used the definition of empirical distribution. We
perform a quenched average over random patterns $\boldsymbol{\xi}^{\mu}$
and labels $\zeta^{\mu}$. This amounts to calculating $\left\langle \log V\right\rangle $,
which can be done using the replica trick \cite{gardner1988optimal,gardner1988space}.

We focus on solutions with maximum margin $\kappa$ at a given load
$\alpha$, or equivalently, the maximum load capacity $\alpha_{c}(\kappa)$
of separable patterns given margin $\kappa$. We proceed by assuming
replica symmetry in our mean field analysis, which in general might
not hold because the constraint $\hat{q}=q$ is non-convex. For all
the results presented in the main text, replica symmetry solution
is supported by numerical simulations. In Appendix \ref{app:Replica-symmetry-breaking}
we explore the validity of replica symmetric solutions in the case
of strongly bimodal distributions and show that they fail only very
close to the binary (Ising) limit.

Detailed calculations of the mean-field theory are presented in Appendix
\ref{app:capacity_dist_const}. Our mean-field theory predicts that
the reduction in capacity due to the distribution constraint is proportional
to the Jacobian of the transformation from $w\sim q(w)$ to a normally
distributed variable $x(w)\sim\mathcal{N}(0,1)$,

\begin{equation}
\alpha_{c}(\kappa)=\alpha_{0}(\kappa)\left\langle \frac{dw}{dx}\right\rangle _{x}^{2},\label{eq:single_pop}
\end{equation}

where $\alpha_{0}(\kappa)=\left[\int_{-\kappa}^{\infty}Dt(\kappa+t)^{2}\right]^{-1}$
is the capacity of an unconstrained perceptron, from Gardner theory
\cite{gardner1988optimal,gardner1988space}, and $\kappa=0$ reduces
to the classical result of $\alpha_{0}(0)=2.$ The Jacobian factor,
$\langle dw/dx\rangle_{x}$, can be written in terms of the constrained
distribution's cumulative distribution function (CDF), $Q(w)$, and
the standard normal CDF $P(x)=\frac{1}{2}\left[1+\text{Erf}(\frac{x}{\sqrt{2}})\right]$,
namely,

\begin{equation}
\left\langle \frac{dw}{dx}\right\rangle _{x}=\int_{0}^{1}duQ^{-1}(u)P^{-1}(u).\label{eq:jacobian}
\end{equation}

Note that since the second moments are fixed to unity, $0\leq\left\langle \frac{dw}{dx}\right\rangle _{x}\leq1$
and it equals $1$ iff $p=q.$ 

\begin{figure}
\centering{}\includegraphics[scale=0.16]{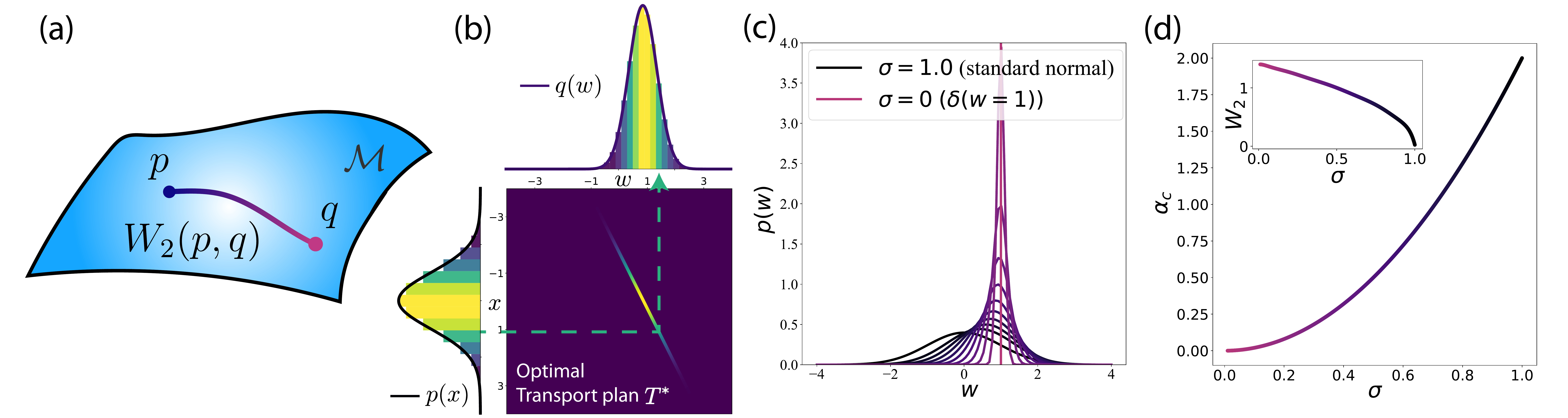}\caption{\label{fig:theory_illustration}An illustration of optimal transport
from a standard normal distribution $\mathcal{N}(0,1)$ to normal
distributions with nonzero mean $\mathcal{N}(\sqrt{1-\sigma^{2}},\sigma^{2})$.
(a) A schematic of the space $(\mathcal{M},W_{2})$ of probability
distributions. (b) An example optimal transport plan from standard
normal, $p(x)$, to a normal with $\sigma=0.5$, $q(w)$. The optimal
transport plan $T^{*}$ is plotted in between the distributions. $T^{*}$
moves $p(x)$ units of probability mass $x$ to location $w$, as
indicated by the dashed line, and the colors are chosen to reflect
the amount of probability mass to be transported. (c) $\mathcal{N}(\sqrt{1-\sigma^{2}},\sigma^{2})$
interpolates between standard normal ($\sigma=0$) to a $\delta$-function
at 1 ($\sigma=1$). (d) Capacity $\alpha_{c}(\kappa=0)$ as a function
of $\sigma$. Inset shows the $W_{2}$ distance as a function of $\sigma$.}
\end{figure}

\subsection{Geometrical interpretation of capacity}

\label{subsec:Geometrical-interpretation}

The jacobian factor Eqn.\ref{eq:jacobian} can be rewritten as

\begin{equation}
\left\langle \frac{dw}{dx}\right\rangle _{x}=1-\frac{1}{2}W_{2}(Q,P)^{2},\label{eq:W2_dist}
\end{equation}
where $W_{k}$ ($k=2$ in above) is the Wasserstein-$k$ distance,
given by

\begin{equation}
W_{k}(Q,P)=\left[\int_{0}^{1}du\left(Q^{-1}(u)-P^{-1}(u)\right)^{k}\right]^{1/k}.\label{eq:Wk_dist}
\end{equation}

{[}In the following, we will make frequent use of both the probability
density function (PDF), and the cumulative distribution function (CDF).
We distinguish them by using upper case letters for CDFs, and lower
case letters for PDFs.{]} 

The Wasserstein distance measures the dissimilarity between two probability
distributions, and is the geodesic distance between points on the
manifold of probability distributions \cite{lott2006some,figalli2011optimal,chen2020optimal}.
Therefore, we can interpret Eqn.\ref{eq:single_pop} as predicting
that the reduction in memory capacity tracks the geodesic distance
we need travel from the standard normal distribution $P$ to the target
distribution $Q$ (Fig.\ref{fig:theory_illustration}(a)).

We demonstrate Eqn.\ref{eq:single_pop} and Eqn.\ref{eq:W2_dist}
with an instructive example. Let's consider a parameterized family
of normal distributions, with the second moment fixed to 1: $q(w)=\mathcal{N}(\sqrt{1-\sigma^{2}},\sigma^{2})$,
see Fig.\ref{fig:theory_illustration}(c). At $\sigma=1$, $q(w)$
is the standard normal distribution and we recover the unconstrained
Gardner capacity $\alpha_{0}(\kappa=0)=2$. As $\sigma\to0,$ $q(w)$
becomes a $\delta$-function at $1$ and $\alpha_{c}(\kappa)\to0$
(Fig.\ref{fig:theory_illustration}(c)). 

As evident in this simple example, perceptron capacity is strongly
affected by its weight distribution. Our theory enables prediction
of the shape of the distribution with optimal parameters within a
parameterized family of distributions. We apply our theory to a family
of biologically plausible distributions and compare our prediction
with experimentally measured distributions in Section \ref{sec:compare_experiment}.

\section{Optimal transport and the DisCo-SGD learning algorithm}

\label{sec:disco_algorithm}

Eqn.\ref{eq:single_pop} predicts the storage capacity for a perceptron
with a given weight distribution, but it does not specify a learning
algorithm for finding a solution to this non-convex learning problem.
Here we present a learning algorithm for perceptron learning with
a given weight distribution constraint. This algorithm will also serve
to test our theoretical predictions. For this purpose, we use optimal
transport theory to develop an SGD-based algorithm that is able to
find max-margin solutions that obey the prescribed distribution constraint.
Furthermore, we show that training can be interpreted as traveling
along the geodesic connecting the current empirical distribution and
the target distribution. 

Stochastic gradient descent (SGD) on a cross-entropy loss has been
shown to asymptotically converge to max-margin solutions on separable
data \cite{soudry2018implicit,nacson2019stochastic}. Given data $\left\{ \boldsymbol{\xi}^{\mu},\zeta^{\mu}\right\} _{\mu=1}^{P}$,
we use logistic regression to predict class labels from our perceptron
weights, $\hat{\zeta}^{\mu}=\sigma(\boldsymbol{w}^{t}\cdot\boldsymbol{\xi}^{\mu}),$where
$\sigma(z)=\left(1+e^{-z}\right)^{-1}$ and $\boldsymbol{w}^{t}$
is the weight at the $t$-th update. This defines an SGD update rule
:

\begin{equation}
w_{i}^{t+\delta t}\leftarrow w_{i}^{t}-\delta t\sum_{\mu}\xi_{i}^{\mu}(\hat{\zeta}^{\mu}-\zeta^{\mu}),\label{eq:SGD}
\end{equation}

where the $\mu$-summation goes from $1$ to $P$ for full-batch GD
and goes from $1$ to mini-batch size $B$ for mini-batches SGD (see
Appendix \ref{app:DiscoSGD} for more details). The theory of optimal
transport provides a principled way of transporting each individual
weight $w_{i}^{t}$ to a new value so that overall the new set of
weights satisfies the prescribed target distribution. In $1$-D, the
optimal transport plan $T^{*}$ has a closed-form solution in terms
of the current CDF $P$ and target CDF $Q$ \cite{thorpe2019introduction,ambrosio2013user}:
$T^{*}=Q^{-1}\circ P$, where $\circ$ denotes functional composition.
We demonstrate the optimal transport map in Fig.\ref{fig:theory_illustration}(b)
for the instructive example discussed in Section \ref{subsec:Geometrical-interpretation}. 

\begin{table}
\centering{}\includegraphics[scale=0.23]{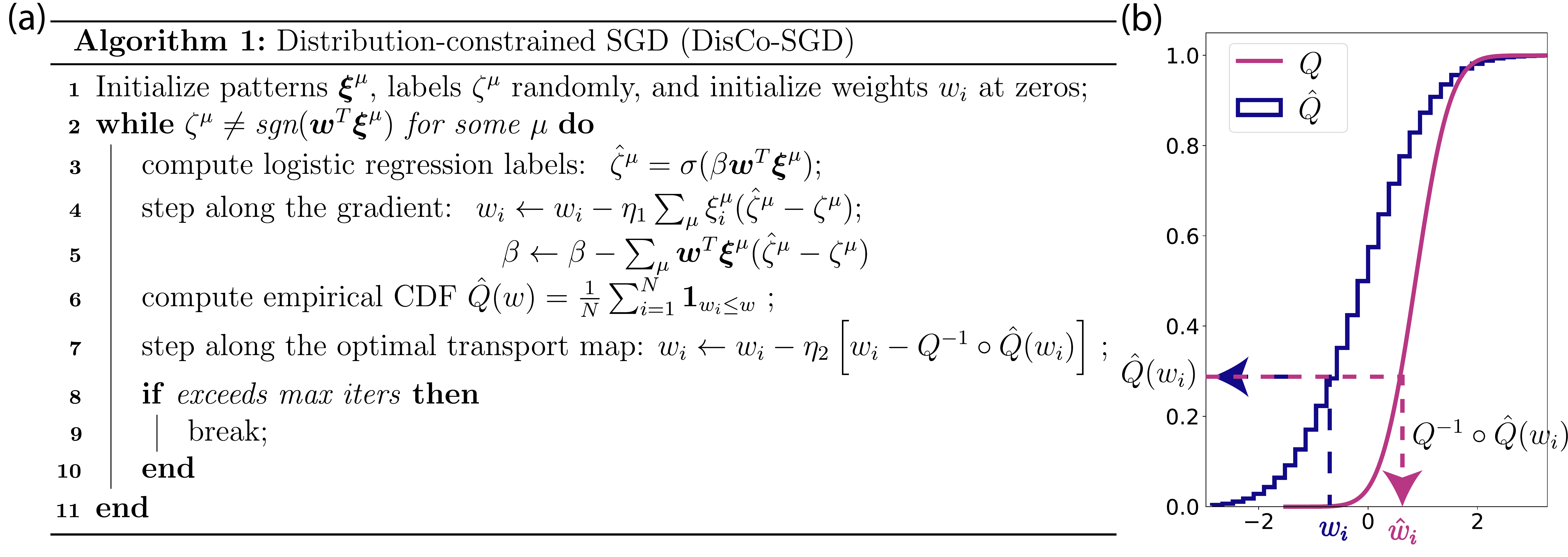}\caption{\label{tab:discoSGD}Disco-SGD algorithm. (a) We perform alternating
steps of gradient descent along the cross-entropy loss (Eqn.\ref{eq:SGD}),
followed by steps along the optimal transport direction (Eqn.\ref{eq:soft_constraint}).
(b) An illustration of Eqn.\ref{eq:w_hat}. For a given $w_{i}$,
we first compute its empirical CDF value $\hat{Q}(w_{i}),$then use
the inverse target CDF to transport $w_{i}$ to its new value, $\hat{w_{i}}=Q^{-1}\left(\hat{Q}(w_{i})\right)$.}
\end{table}

In order to apply $T^{*}$ to transport our weights $\left\{ w_{i}\right\} $
(omitting superscript $t$), we form the empirical CDF $\hat{Q}(w)=\frac{1}{N}\sum_{i=1}^{N}\mathbf{1}_{w_{i}\leq w}$,
which counts how many weights $w_{i}$ are observed below value $w$.
Then the new set of weights $\left\{ \hat{w}_{i}\right\} $ satisfying
target CDF $Q$ can be written as 

\begin{equation}
\hat{w}_{i}=Q^{-1}\circ\hat{Q}(w_{i}).\label{eq:w_hat}
\end{equation}

We illustrate Eqn.\ref{eq:w_hat} in action in Table \ref{tab:discoSGD}(b).

\begin{figure}
\centering{}\includegraphics[scale=0.4]{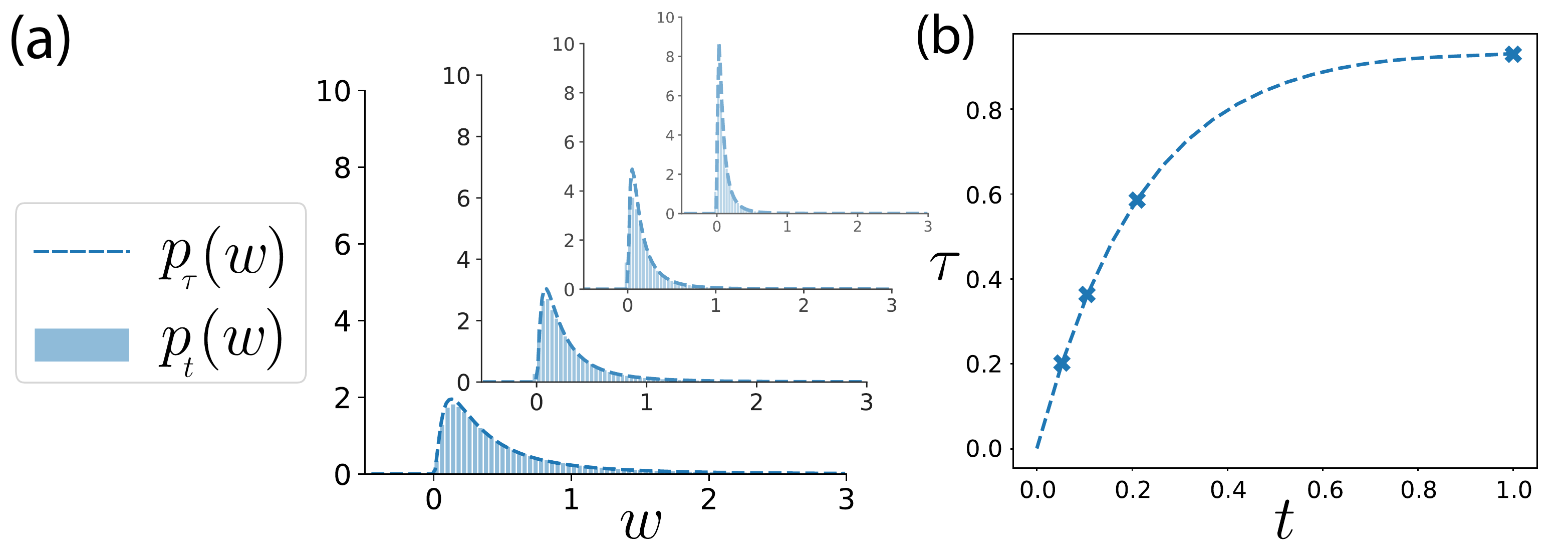}\caption{\label{fig:geodesic}Intermediate distributions during learning are
on the geodesic. (a) The solid histograms are the intermediate distribution
$p_{t}$ at different training time $t$ from the DisCo-SGD algorithm,
the dashed lines are geodesic distributions $p_{\tau}$ with the same
$W_{2}$ distance to the target distribution $Q$. From right to left
the training time advances, and the distributions transform further
away from the $\delta$-function initialization, and approach the
target distribution (a lognormal, in this example). (b) The geodesic
time $\tau$ as a function of the training time $t$. Location of
the crosses correspond to the distributions shown in (a).}
\end{figure}

However, performing such a one-step projection strongly interferes
with the cross-entropy objective, and numerically often results in
solutions that do not perfectly classify the data. Therefore, it would
be beneficial to have an incremental update rule based on Eqn.\ref{eq:w_hat}:

\begin{equation}
w_{i}^{\tau+\delta\tau}\leftarrow w_{i}^{\tau}+\delta\tau\left(\hat{w}_{i}-w_{i}^{\tau}\right),\label{eq:soft_constraint}
\end{equation}

where we have used a different update time $\tau$ to differentiate
with the cross-entropy update time $t$. 

We present our complete algorithm in Table \ref{tab:discoSGD}(a),
which we named `Distribution-constrained SGD' (DisCo-SGD) algorithm.
In the DisCo-SGD algorithm, we perform alternating updates on Eqn.\ref{eq:SGD}
and Eqn.\ref{eq:soft_constraint}, and identify $\delta t$ and $\delta\tau$
as learning rates $\eta_{1}$ and $\eta_{2}$. Note that in logistic
regression, the norm of the weight vector $||\boldsymbol{w}||$ is
known to increase with training and the max-margin solution is only
recovered at $||\boldsymbol{w}||\to\infty$. In contrast, imposing
a distribution constraint fixes the norm. Therefore, to allow a variable
norm, in Table \ref{tab:discoSGD} we include a trainable parameter
$\beta$ in our algorithm to serve as the norm of the weight vector.
This algorithm allows us to reliably discover linearly separable solutions
obeying the prescribed weight distribution $Q$.

Interestingly, Eqn.\ref{eq:soft_constraint} takes a similar form
to geodesic flows in Wasserstein space. Given samples $\left\{ w_{i}\right\} $
drawn from the initial distribution $P$ and $\left\{ \hat{w}_{i}\right\} $
drawn from the final distribution $Q$, samples $\left\{ w_{i}^{\tau}\right\} $
from intermediate distributions $P_{\tau}$ along the geodesic can
be calculated as $w_{(i)}^{\tau}=(1-\tau)w_{(i)}+\tau\hat{w}_{(i)}$,
where subscript $(i)$ denotes ascending order (see more in Appendix
\ref{app:Optimal-transport-theory}). For intermediate perceptron
weights $\boldsymbol{w}^{t}$ found by our algorithm, we can compute
its empirical distribution $p_{t}$ and compare with theoretical distribution
$p_{\tau}$ along the geodesic with the same $W_{2}$ distance to
the target distribution (see Appendix \ref{app:Optimal-transport-theory}
for how to calculate $p_{\tau}$). In Fig.\ref{fig:geodesic}(a),
we show that indeed the empirical distributions $p_{t}$ agree with
the geodesic distributions $p_{\tau}$ at geodesic time $\tau(t)$
(Fig.\ref{fig:geodesic}(a)). The relation between the geodesic time
$\tau$ and the SGD update time $t$ is shown in Fig.\ref{fig:geodesic}(b).
The interplay between the cross-entropy objective and the distribution
constraint is manifested in the rate at which the distribution moves
along the geodesic between the initial distribution and the target
one. 

\begin{figure}
\centering{}\includegraphics[scale=0.26]{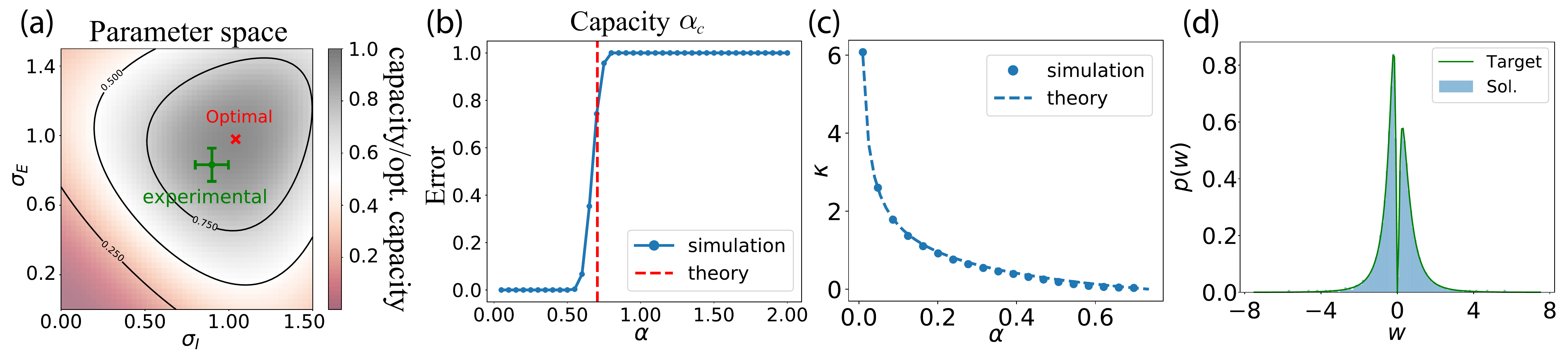}\caption{\label{fig:Experimental-landscape}Biologically-realistic distribution
and parameter landscape. (a) Capacity (normalized by the optimal value
in the landscape) as a function of the lognormal parameters $\sigma_{E}$
and $\sigma_{I}$. Experimental value is shown in green with error
bars, and optimal capacity is shown in red. (b)-(d) (theory from Eqn.\ref{eq:capacity_M_pop}
and simulations from DisCo-SGD): (b) Determination of capacity; (c)
Max-margin $\kappa$ at different load $\alpha$, which is the same
as $\alpha_{c}(\kappa)$; (d) Example weight distribution obtained
in simulation.}
\end{figure}

\section{Biologically-realistic distribution (E/I balanced lognormals) and
experimental landscape}

\label{sec:compare_experiment}

In order to apply our theory to the more biologically-realistic cases,
we generalize our theory from a single prescribed distribution to
an arbitrary number of input subpopulations each obeys its own distribution.
We consider a perceptron that consists of $M$ synaptic populations
$\boldsymbol{w}^{m}$ indexed by $m$, each constrained to satisfy
its own weight distribution $w_{i}^{m}\sim q_{m}(w^{m})$. We denote
the overall weight vector as $\boldsymbol{w}\equiv\{\boldsymbol{w}^{m}\}_{m=1}^{M}\in\mathbb{R}^{N\times1}$,
where the total number of weights is $N=\sum_{m=1}^{M}N_{m}$. In
this case, the capacity Eqn.\ref{eq:single_pop} is generalized to
(See Appendix \ref{app:capacity_M_pop} for detailed derivation):

\begin{equation}
\alpha_{c}(\kappa)=\alpha_{0}(\kappa)\left[\sum_{m}^{M}g_{m}\left\langle \frac{dw^{m}}{dx}\right\rangle _{x}\right]^{2},\label{eq:capacity_M_pop}
\end{equation}

where $g_{m}=N_{m}/N$ is the fraction of weights in this population.
Eqn. \ref{eq:capacity_M_pop} allows us to investigate the parameter
space of capacity with biologically-realistic distributions and compare
with the experimentally measured values. In particular, we are interested
the case with two synaptic populations that models the excitatory/inhibitory
synpatic weights of a biological neuron, hence, $m=E,I$. We model
the excitatory/inhibitory synaptic weights as drawn from two separate
lognormal distributions $(g_{I}=1-g_{E}$): $w_{i}^{E}\sim\text{\ensuremath{\frac{1}{\sqrt{2\pi}\sigma_{E}w^{E}}\exp\left\{ -\frac{(\ln w^{E}-\mu_{E})^{2}}{2\sigma_{E}^{2}}\right\} }}$
and $w_{i}^{I}\sim\text{\ensuremath{\frac{1}{\sqrt{2\pi}\sigma_{I}w^{I}}\exp\left\{ -\frac{(\ln w^{I}-\mu_{I})^{2}}{2\sigma_{I}^{2}}\right\} }}.$

We also demand that the mean synaptic weights satisfy the E/I balance
condition \cite{van1996chaos,van1998chaotic,tsodyks1995rapid,van2005irregular,rubin2017balanced,mongillo2018inhibitory,chapeton2012efficient}
$g_{E}\left\langle w^{E}\right\rangle =g_{I}\left\langle w^{I}\right\rangle $
as is often observed in cortex connectomic experiments \cite{anderson2000orientation,wehr2003balanced,okun2008instantaneous,poo2009odor,atallah2009instantaneous}.
With the E/I balance condition and fixed second moment, the capacity
is a function of the lognormal parameters $\sigma_{E}$ and $\sigma_{I}$.
In Fig.\ref{fig:Experimental-landscape}(a) we map out the 2d parameter
space of $\sigma_{E}$ and $\sigma_{I}$ using Eqn.\ref{eq:capacity_M_pop},
and find that the optimal choice of parameters which yields the maximum
capacity solution is close to the experimentally measured values in
a recent connectomic studies in mouse primary auditory cortex \cite{levy2012spatial}. 

In order to test our theory's validity on this estimated distribution
of synaptic weights, we perform DisCo-SGD simulation with model parameters
$\sigma_{E}$ and $\sigma_{I}$ fixed to their experimentally measured
values. Both the capacity (Fig.\ref{fig:Experimental-landscape}(b)),
max-margin $\kappa$ at different load (Fig.\ref{fig:Experimental-landscape}(c)),
and the empirical weights found by the algorithm (Fig.\ref{fig:Experimental-landscape}(d))
are in good agreement with our theoretical prediction.

\section{Generalization performance}

\label{sec:Generalization}

\subsection{Distribution-constrained learning as circuit inference}

A central question in computational neuroscience is how underlying
neural circuits determine its computation. Recently, thanks to new
parallelized functional recording technologies, simultaneous recordings
of the activity of hundreds of neurons in response to an ensemble
of inputs are possible \cite{ahrens2013whole,berenyi2014large}. An
interesting challenge is to infer the structural connectivity from
the measured input-output activity patterns. It is interesting to
ask how are these stimuli-response relations related to the underlying
structure of the circuit \cite{real2017neural,liu2017inference}.
In the following, we try to adress this circuit reconstruction task
in a simple setup where a student perceptron tries to learn from a
teacher perceptron \cite{seung1992statistical,engel2001statistical}.
In this setup, the teacher is considered to be the underlying ground-truth
neural circuit. The student is attempting to infer the connection
weights of this ground-truth circuit by observing a series of input-output
relations generated by the teacher. After learning is completed, one
can assess the faithfulness of the inference by comparing the teacher
and student. The teacher-student setup is also a well-known ‘toy model’
for studying generalization performance\textcolor{red}{{} }\cite{loureiro2021learning,lee2021continual,matiisen2019teacher}.
In this case since the learning data are generated by the teacher,
the overlap between teacher and student determines the generalization
performance of the learning. Here we ask to what extent prior knowledge
of the teacher weight distribution helps in learning the rule and
how this knowledge can be incorporated in learning. A similar motivation
may arise in other contexts, in which there is a prior knowledge about
the weight distribution of an unknown target linear classifier. 

Let's consider the teacher perceptron, $\boldsymbol{w}_{t}\in\mathbb{R}^{N}$,
drawn from some ground-truth distribution $p_{t}$. Given random inputs
$\boldsymbol{\xi}^{\mu}$ with $p(\xi_{i}^{\mu})=\mathcal{N}(0,1)$,
we generate labels by $\zeta^{\mu}=\text{sgn}(\boldsymbol{w}_{t}\cdot\boldsymbol{\xi}^{\mu}/||\boldsymbol{w}_{t}||+\eta^{\mu})$,
where $\eta^{\mu}$ is input noise and $\eta^{\mu}\sim\mathcal{N}(0,\sigma^{2})$.
We task the student perceptron $\boldsymbol{w}_{s}$ to find the max-margin
linear classifier for data $\{\boldsymbol{\xi}^{\mu},\zeta^{\mu}\}_{\mu=1}^{p}$:
$\max\kappa:\zeta^{\mu}\boldsymbol{w}_{s}\cdot\boldsymbol{\xi}^{\mu}\geq\kappa||\boldsymbol{w}_{s}||$.
Let's define the teacher-student overlap as

\begin{equation}
R=\frac{\boldsymbol{w}_{s}\cdot\boldsymbol{w}_{t}}{\left\Vert \boldsymbol{w}_{s}\right\Vert \left\Vert \boldsymbol{w}_{t}\right\Vert },\label{eq:overlap}
\end{equation}

which is a measure the faithfulness of the circuit inference. The
student's generalization error is then related to the overlap by $\varepsilon_{g}=1/\pi\arccos\left(R/\sqrt{1+\sigma^{2}}\right)$
\cite{seung1992statistical,engel2001statistical}.

As a baseline, let's first consider a totally uninformed student (without
any structural knowledge of the teacher), learning from a teacher
with a given (in particular non-Gaussian) weight distribution. In
this case, we can determine the overlap $R$ (Eqn.\ref{eq:overlap})
as a function of load $\alpha$ by solving the replica symmetric mean
field self-consistency equations as in \cite{seung1992statistical,engel2001statistical}.
An example of such learning for a lognormal teacher distribution is
shown in Fig.\ref{fig:compare_learnings}(a) (`unconstrained') for
the noiseless case ($\sigma=0$). Note that in the presence of noise
in the labels $(\sigma\neq0)$, $\alpha$ is bounded by $\alpha_{c}(\sigma)$
, since max-margin learning of separable data is assumed. The case
with nonzero $\sigma$ is presented in Appendix \ref{app:generalization_noise}.
In this unconstrained case, the student's weight distribution evolves
from a Gaussian for low $\alpha$ to one which increasingly resembles
the teacher distribution for large $\alpha$ (Fig.\ref{fig:compare_learnings}(b)). 

Next, we consider a student with information about the signs of the
individual teacher weights. We can apply this knowledge as a constraint
and demand that the signs of individual student weights agree with
that of the teacher's. The additional sign-constraints require a modification
of replica calculation in \cite{seung1992statistical,engel2001statistical},
which we present in Appendix \ref{app:generalization_sign_const}.
Surprisingly, we find both analytically and numerically that if the
teacher weights are not too sparse, the max-margin solution generalizes
poorly: after a single step of learning (with random input vectors),
the overlap, $R$, drops substantially from its initial value (see
`sign-constrained' in Fig.\ref{fig:compare_learnings}(a)). The source
of the problem is that, due to the sign constraint, max-margin training
with few examples yields a significant mismatch between the student
and teacher weight distributions. After only a few steps of learning,
half of the student's weights are set to zero, and the student's distribution,
$p(w_{s})=1/2\delta(0)+1/\sqrt{2\pi}\exp\{-w_{s}^{2}/4\}$, deviates
significantly from the teacher's distribution (see more in Appendix
\ref{app:generalization_sparsification}). The discrepancy between
the teacher and student weight distributions therefore suggest that
we should incorporate distribution-constraint into learning.

\begin{figure}
\centering{}\includegraphics[scale=0.24]{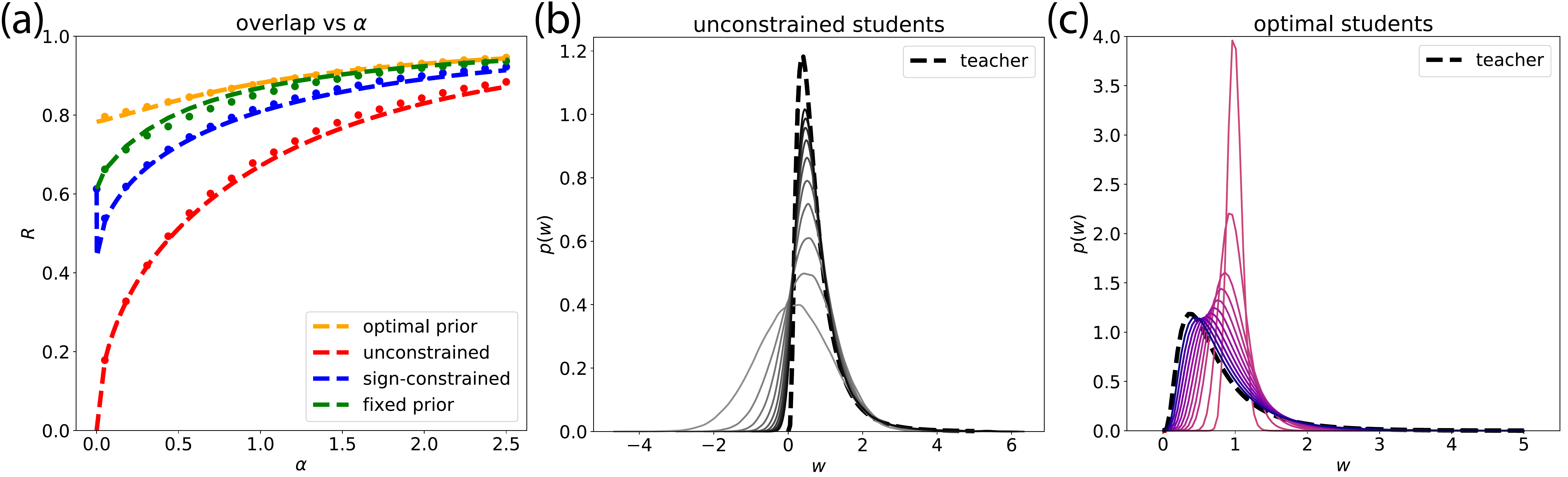}\caption{\label{fig:compare_learnings}Compare different learning paradigms.
(a) Teacher-student overlap $R$ , or equivalently the generalization
error $\varepsilon_{g}=1/\pi\arccos R$, as a function of load $\alpha$
in different learning paradigms. Dashed lines are from theory, and
dots are from simulation. Note that there is an initial drop of the
overlap in sign-constrained learning due to sparsification of weights.
(b)-(c) The darker color curves correspond to larger $\alpha$, and
dashed line is teacher distribution (same in both cases). (b) Distribution
of an unconstrained student evolves from normal distribution toward
the teacher distribution. (c) Optimal student prior evolves from a
$\delta$-function toward the teacher distribution.}
\end{figure}

\subsection{Distribution-constrained learning outperforms unconstrained and sign-constrained
learning }

Let's consider the case that the student weight are constrained to
some \textit{prior} distribution $q_{s}(w_{s})$, while the teacher
obeys a distribution $p_{t}(w_{t}),$for an arbitrary pair $q_{s},p_{t}$.
We can write down the Gardner volume $V_{g}$ for generalization as
in the capacity case (Eqn.\ref{eq:volume}):

\begin{equation}
V_{g}=\frac{\int d\boldsymbol{w}_{s}\left[\prod_{\mu=1}^{P}\Theta\left(\text{sgn}\left(\frac{\boldsymbol{w}_{t}\cdot\boldsymbol{\xi}^{\mu}}{||\boldsymbol{w}_{t}||}+\eta^{\mu}\right)\frac{\boldsymbol{w}_{s}\cdot\boldsymbol{\xi}^{\mu}}{||\boldsymbol{w}_{s}||}-\kappa\right)\right]\delta(||\boldsymbol{w}_{s}||^{2}-N)\delta\bigg(\int dk\left(\hat{q}(k)-q(k)\right)\bigg)}{\int d\boldsymbol{w}_{s}\delta(||\boldsymbol{w}_{s}||^{2}-N)}.\label{eq:volume-generalization}
\end{equation}

To obtain ensemble average of system over different realizations of
the training set, we perform the quenched average of $\log V_{g}$
over the patterns $\boldsymbol{\xi}^{\mu}$ and teacher $\boldsymbol{w}_{t}$,
and consider the thermodynamic limit of $N,P\to\infty$ and $\alpha=\frac{P}{N}$
stays $\mathcal{O}(1)$. We use the replica trick similar to \cite{seung1992statistical,engel2001statistical}.
Overlap $R$ (Eqn.\ref{eq:overlap}) can be determined as a function
of load $\alpha$ by solving the replica symmetric mean field self-consistency
equations in Appendix \ref{app:generalization_dist_const}. In this
distribution-constrained setting, we can perform numerical simulations
with DisCo-SGD algorithm (Table \ref{tab:discoSGD}) to find such
weights and compare with the predictions of our theory. 

Now we ask if the student has a \textit{prior} on the teacher's weight
distribution $p_{t}$, whether incorporating this knowledge in training
will improve generalization performance. One might be tempted to conclude
that the optimal prior distribution the student should adopt is always
that of the teacher's, i.e., $q_{s}=p_{t}$. We call this learning
paradigm `fixed prior', and show that its generalization performance
is better than that of the unconstrained and sign-constrained case
(Fig.\ref{fig:compare_learnings}(a)). However, instead of using a
fixed prior for the student, we can in fact choose the \textit{optimal
prior} distribution $p_{s}^{*}$ at different load $\alpha$. This
presents a new learning paradigm we called `optimal prior'. In Fig.\ref{fig:compare_learnings}(a),
we show that choosing optimal priors at different $\alpha$ achieves
the overall best generalization performance compared with all other
learning paradigms. For a given parameterized family of distributions,
our theory provides a way to analytically obtain the optimal prior
$p_{s}^{*}$ as a function of $\alpha$ (Fig.\ref{fig:compare_learnings}(c)).
Note that unlike the unconstrained case (Fig.\ref{fig:compare_learnings}(b)),
the optimal prior starts from a $\delta$-function at $1$ at zero
$\alpha$, and asymptotically approaches the teacher distribution
$p_{t}$ as $\alpha\to\infty$.

\section{Summary and Discussion}

We have developed a statistical mechanical framework that incorporates
structural constraints (sign and weight distribution) into perceptron
learning. The synaptic weights in our perceptron learning satisfy
two key biological constraints: (1) individual synaptic signs are
not affected by the learning task (2) overall synaptic weights obey
a prescribed distribution. These constraints may arise also in neuromorphic
devices \cite{han2021alternative,truong2014new}. Under the replica-symmetry
assumption, we derived a novel form of distribution-constrained perceptron
storage capacity, which admits a simple geometric interpretation of
the reduction in capacity in terms of the Wasserstein distance between
the standard normal distribution and the imposed distribution. To
numerically test our analytic theory, we used tools from optimal transport
and information geometry to develop an SGD-based algorithm, DisCo-SGD,
in order to reliably find weights that satisfy such prescribed constraints
and correctly classify the data, and showed that training with the
algorithm can be interpreted as geodesic flows in the Wasserstein
space of distributions. It would be interesting to compare our theory
and algorithm to \cite{arjovsky2017wasserstein,sanjabi2018convergence}
where the Wasserstein distance is used as an objective for training
generative models. We applied our theory to the biologically realistic
case of of excitatory/inhibitory lognormal distributions that are
observed in the cortex, and found experimentally-measured parameters
close to the optimal parameter values predicted by our theory. We
further studied input-output rule learning where the target rule is
defined in terms of a weighted sum of the inputs, and asked to what
extent prior knowledge of the target distribution may improve generalization
performance. Using the teacher-student perceptron learning setup,
we showed analytically and numerically that distribution constrained
learning substantially enhances the generalization performance. In
the context of circuit inference, distribution constrained learning
provides a novel and reliable way to recover the underlying circuit
structure from observed input-output neural activities. In summary,
our work provides new strategies of incorporating knowledge about
weight distribution in neural learning and reveals a powerful connection
between structure and function in neural networks. Ongoing extensions
of the present work include weight distribution constraints in recurrent
and deep architectures as well as testing against additional connectomic
databases. 

\newpage

\section{Appendix}

\subsubsection*{Preliminaries}

Throughout the appendix, we make frequent use of Gaussian integrals.
We introduce short-hand notations $\int Dt\equiv\int\frac{dt}{\sqrt{2\pi}}e^{-t^{2}/2}$
and $H(x)\equiv\int_{x}^{\infty}Dt$. Also, when we do not specify
the integration range it is understood that we are integrating from
$-\infty$ to $\infty$. 

\subsection{Capacity supplemental materials}

\label{app:capacity}

\subsubsection{Replica calculation of distribution-constrained capacity}

\label{app:capacity_dist_const}

In this section, we present the replica calculation of the distribution-constrained
storage capacity of a perceptron. 

As described in main text Eqn.2, we need to perform a quenched average
$\left\langle \cdot\right\rangle $ over the patterns $\boldsymbol{\xi}^{\mu}$
and labels $\zeta^{\mu}$ for $\log V$, which can be carried out
using the replica trick, $\left\langle \log V\right\rangle =\lim_{n\to0}(\left\langle V^{n}\right\rangle -1)/n$.
Following \cite{gardner1988optimal,gardner1988space}, we consider
first integer $n$, and at the end perform analytic continuation of
$n\to0$. The replicated Gardner volume is:

\begin{equation}
V=\frac{\prod_{\alpha=1}^{n}\int d\boldsymbol{w}^{\alpha}\left[\prod_{\mu=1}^{P}\Theta\left(\zeta^{\mu}\frac{\boldsymbol{w}^{\alpha}\cdot\boldsymbol{\xi}^{\mu}}{||\boldsymbol{w}^{\alpha}||}-\kappa\right)\right]\delta(||\boldsymbol{w}^{\alpha}||^{2}-N)\delta\left(\int dk\bigg(\hat{q}(k)-q(k)\bigg)\right)}{\prod_{\alpha=1}^{n}\int d\boldsymbol{w}^{\alpha}\delta(||\boldsymbol{w}^{\alpha}||^{2}-N)}\label{appeq:volume_cap_dist}
\end{equation}

Let's rewrite the Heaviside step function using Fourier representation
of the $\delta$-function $\delta(x)=\int_{-\infty}^{\infty}\frac{dk}{2\pi}e^{ikx}$
as (defining $z_{\alpha}^{\mu}=\zeta^{\mu}\frac{\boldsymbol{w}^{\alpha}\cdot\boldsymbol{\xi}^{\mu}}{||\boldsymbol{w}^{\alpha}||}$)

\begin{equation}
\Theta\left(z_{\alpha}^{\mu}-\kappa\right)=\int_{\kappa}^{\infty}d\rho_{\alpha}^{\mu}\delta(\rho_{\alpha}^{\mu}-z_{\alpha}^{\mu})=\int_{\kappa}^{\infty}d\rho_{\alpha}^{\mu}\int\frac{dx_{\alpha}^{\mu}}{2\pi}e^{ix_{\alpha}^{\mu}(\rho_{\alpha}^{\mu}-z_{\alpha}^{\mu})}.
\end{equation}

Note that now all the $\boldsymbol{\xi}^{\mu},\zeta^{\mu}$ dependence
is in $e^{-ix_{\alpha}^{\mu}z_{\alpha}^{\mu}}$. We perform the average
with respect to $\xi_{i}^{\mu}\sim p(\xi_{i}^{\mu})=\mathcal{N}(0,1)$
and $p(\zeta^{\mu})=\frac{1}{2}\delta(\zeta^{\mu}+1)+\frac{1}{2}\delta(\zeta^{\mu}-1)$
(also note that $||\boldsymbol{w}^{\alpha}||=\sqrt{N}$):

\begin{equation}
\begin{split}\left\langle \prod_{\mu\alpha}e^{-ix_{\alpha}^{\mu}z_{\alpha}^{\mu}}\right\rangle _{\xi\eta} & =\prod_{\mu j}\left\langle \exp\left\{ -\frac{i}{\sqrt{N}}\zeta^{\mu}\xi_{j}^{\mu}\sum_{\alpha}x_{\alpha}^{\mu}w_{j}^{\alpha}\right\} \right\rangle _{\xi\zeta}\\
 & =\prod_{\mu i}\left\langle \exp\left\{ -\frac{(\zeta^{\mu})^{2}}{2N}\sum_{\alpha\beta}x_{\alpha}^{\mu}x_{\beta}^{\mu}w_{i}^{\alpha}w_{i}^{\beta}\right\} \right\rangle _{\zeta}\\
 & =\prod_{\mu}\exp\left\{ -\frac{1}{2N}\sum_{\alpha\beta}x_{\alpha}^{\mu}x_{\beta}^{\mu}\sum_{i}w_{i}^{\alpha}w_{i}^{\beta}\right\} .
\end{split}
\label{appeq:cap_dist_energy_x_z}
\end{equation}

Introducing the replica overlap parameter $q_{\alpha\beta}=\frac{1}{N}\sum_{i}w_{i}^{\alpha}w_{i}^{\beta}$,
and notice that the $\mu$ index gives $P$ identical copies of the
same integral. We can suppress the $\mu$ indices and write

\begin{equation}
\begin{split}\left\langle \prod_{\mu\alpha}\Theta\left(z_{\alpha}^{\mu}-\kappa\right)\right\rangle _{\xi\zeta} & =\left[\int_{\kappa}^{\infty}\left(\prod_{\alpha}\frac{d\rho_{\alpha}dx_{\alpha}}{2\pi}\right)e^{K}\right]^{P}\end{split}
,
\end{equation}

where

\begin{equation}
K=i\sum_{\alpha}x_{\alpha}\rho_{\alpha}-\frac{1}{2}\sum_{\alpha\beta}q_{\alpha\beta}x_{\alpha}x_{\beta}\label{appeq:cap_dist_K_replica}
\end{equation}

captures all the data dependence in the quenched free energy landscape,
and therefore it is called the `energetic' part of the free energy.
In contrast, the $\delta$-functions in Eqn.\ref{appeq:volume_cap_dist}
are called `entropic' part because they regulate what kind of weights
are considered in the version space (space of viable weights).

\subsubsection*{The entropic part}

\begin{equation}
\begin{split}\delta(Nq_{\alpha\beta}-\sum_{i}w_{i}^{\alpha}w_{i}^{\beta}) & =\int\frac{d\hat{q}_{\alpha\beta}}{2\pi}\exp\left\{ iN\hat{q}_{\alpha\beta}q_{\alpha\beta}-i\hat{q}_{\alpha\beta}\sum_{i}w_{i}^{\alpha}w_{i}^{\beta}\right\} \end{split}
.
\end{equation}

Note that the normalization constraint $\delta(||\boldsymbol{w}^{\alpha}||^{2}-N)$
is automatically satisfied by requiring $q_{\alpha\alpha}=1$. Using
replica-symmetric ansatz: $\hat{q}_{\alpha\beta}=-\frac{i}{2}(\Delta\hat{q}\delta_{\alpha\beta}+\hat{q}_{1})$,
and $q_{\alpha\beta}=(1-q)\delta_{\alpha\beta}+q$, we have

\begin{equation}
iN\sum_{\alpha\beta}\hat{q}_{\alpha\beta}q_{\alpha\beta}=\frac{nN}{2}\left[\Delta\hat{q}+\hat{q}_{1}(1-q)\right]+\mathcal{O}(n^{2}).
\end{equation}

and

\begin{equation}
\begin{split}-i\sum_{\alpha\beta}\hat{q}_{\alpha\beta}\sum_{i}w_{i}^{\alpha}w_{i}^{\beta} & =-\frac{1}{2}(\Delta\hat{q}+\hat{q}_{1})\sum_{\alpha}\sum_{i}(w_{i}^{\alpha})^{2}-\frac{1}{2}\hat{q}_{1}\sum_{(\alpha\beta)}\sum_{i}w_{i}^{\alpha}w_{i}^{\beta}\\
 & =-\frac{1}{2}\Delta\hat{q}\sum_{\alpha}\sum_{i}(w_{i}^{\alpha})^{2}-\frac{1}{2}\hat{q}_{1}\sum_{i}\left(\sum_{\alpha}w_{i}^{\alpha}\right)^{2}\\
 & \HSTeq-\frac{1}{2}\Delta\hat{q}\sum_{\alpha}\sum_{i}(w_{i}^{\alpha})^{2}+\sqrt{-\hat{q}_{1}}\sum_{i}t_{i}\left(\sum_{\alpha}w_{i}^{\alpha}\right),
\end{split}
\end{equation}

where in the last step HST denotes Hubbard-Stratonovich transformation
$\int\frac{dt}{\sqrt{2\pi}}e^{-t^{2}/2}e^{bt}=e^{b^{2}/2}$ that we
use to linearize the quadratic term at the cost of introducing an
auxiliary Gaussian variable $t$ to be averaged over later.

Recall that $\hat{q}(k)=\int e^{ikw}\hat{p}(w)=\frac{1}{N}\sum_{i}^{N}e^{ikw_{i}^{\alpha}}$,
the distribution constraint becomes 

\begin{equation}
\begin{split}\delta\bigg(\int dk\left(\hat{q}(k)-q(k)\right)\bigg) & =\delta\left(\int dk\left(\frac{1}{N}\sum_{i}^{N}e^{ikw_{i}^{\alpha}}-q(k)\right)\right)\\
 & =\int\frac{d\hat{\lambda}_{\alpha}(k)}{2\pi}\exp\left\{ \int dki\hat{\lambda}_{\alpha}(k)\left(\sum_{i}e^{ikw_{i}^{\alpha}}-Nq(k)\right)\right\} .
\end{split}
\label{appeq:cap_dist_dist_const_delta}
\end{equation}

Note that $\sum_{i}\int dki\hat{\lambda}_{\alpha}(k)e^{ikw_{i}^{\alpha}}=2\pi i\sum_{i}\lambda_{\alpha}(-w_{i}^{\alpha})$
by inverse Fourier transform. Next,

\begin{equation}
\begin{split}-iN\int dk\hat{\lambda}_{\alpha}(k)q(k)= & -iN\int dk\left(\int dwe^{ikw}\lambda_{\alpha}(w)\right)\left(\int dw'e^{ikw'}q(w')\right)\\
 & =-2\pi iN\int dwdw'\lambda_{\alpha}(w)q(w')\delta(w+w')\\
 & =-2\pi iN\int dwq(w)\lambda_{\alpha}(-w).
\end{split}
\end{equation}

Now we can write down the full free energy. We ignore overall constant
coefficients such as $2\pi$'s and $i$'s in the integration measure,
which become irrelevant upon taking the saddle-point approximation.
We also leave out the denominator of $V$, as it does not depend on
data and is an overall constant. Note that under the replica-symmetric
ansatz the replica index $\alpha$ gives $n$ identical copies of
the same integral and thus the replica index $\alpha$ can be suppressed
(same for synaptic index $i$):

\begin{equation}
\left\langle V^{n}\right\rangle =\int dqd\hat{\lambda}(k)d\Delta\hat{q}d\hat{q}_{1}e^{nN(G_{0}+G_{1})},\label{appeq:cap_dist_free_energy}
\end{equation}

where (please note that $q$ is replica overlap, and $q(w)$ is the
imposed target distribution)

\begin{equation}
\begin{split}G_{0} & =\frac{1}{2}\Delta\hat{q}+\frac{1}{2}\hat{q}_{1}(1-q)-2\pi i\int dwq(w)\lambda(-w)+\left\langle \log Z(t)\right\rangle _{t},\\
Z(t) & =\int dw\exp\left\{ 2\pi i\lambda(-w)-\frac{1}{2}\Delta\hat{q}w^{2}+\sqrt{-\hat{q}_{1}}tw\right\} .
\end{split}
\end{equation}

Note that integrals in Eqn.\ref{appeq:cap_dist_free_energy} can be
evaluated using saddle-point approximation in the thermodynamic limit
$N\to\infty$.

Redefining $2\pi i\lambda(-w)-\frac{1}{2}\Delta\hat{q}w^{2}\to-\lambda(w)$
and $-\hat{q}_{1}\to\hat{q}_{1}$, we have 

\begin{equation}
\begin{split}G_{0} & =\frac{1}{2}\Delta\hat{q}-\frac{1}{2}\hat{q}_{1}(1-q)+\int dwq(w)\lambda(w)-\frac{1}{2}\Delta\hat{q}\int dwq(w)w^{2}+\left\langle \log Z(t)\right\rangle _{t},\\
Z(t) & =\int dw\exp\left\{ -\lambda(w)+\sqrt{\hat{q}_{1}}tw\right\} .
\end{split}
\label{appeq:G0_cap_dist_qneq1}
\end{equation}

We seek the saddle-point solution for $G_{0}$ with respect to the
order parameters $\Delta\hat{q}$, $\lambda(w)$, and $\hat{q}_{1}$:

\begin{equation}
\begin{split}0=\frac{\partial G_{0}}{\partial\Delta\hat{q}} & \Rightarrow1=\int dwq(w)w^{2}=\left\langle w^{2}\right\rangle _{q(w)},\end{split}
\label{appeq:second_moment}
\end{equation}
\begin{equation}
\begin{split}0=\frac{\partial G_{0}}{\partial\lambda(w)} & \Rightarrow q(w)=\left\langle \frac{1}{Z(t)}\exp\left\{ -\lambda(w)+\sqrt{\hat{q}_{1}}tw\right\} \right\rangle \end{split}
.\label{appeq:saddle_pt_1_qneq1}
\end{equation}

We observe that the saddle-point equation Eqn.\ref{appeq:second_moment}
fixes the second moment of the imposed distribution $q(w)$ to 1 and
therefore can be thought of as a second moment constraint. $G_{0}$
now simplifies to

\begin{equation}
G_{0}=-\frac{1}{2}\hat{q}_{1}(1-q)+\int dwq(w)\lambda(w)+\left\langle \log Z(t)\right\rangle _{t}.
\end{equation}

The remaining $\hat{q}_{1}$ saddle-point equation is a bit more complicated,

\begin{equation}
\begin{split}0=\frac{\partial G_{0}}{\partial\hat{q}_{1}} & =-\frac{1}{2}(1-q)+\frac{t}{2\sqrt{\hat{q}_{1}}}\left\langle \frac{1}{Z(t)}\int dww\exp\left\{ -\lambda(w)+\sqrt{\hat{q}_{1}}tw\right\} \right\rangle _{t}\end{split}
\end{equation}

Integration by parts for the second term in rhs:
\begin{equation}
\begin{split}1-q= & \frac{1}{\sqrt{\hat{q}_{1}}}\int Dt\frac{1}{Z}\sqrt{\hat{q}_{1}}\int dww^{2}\exp\left\{ -\lambda(w)+\sqrt{\hat{q}_{1}}tw\right\} \\
- & \frac{1}{\sqrt{\hat{q}_{1}}}\int Dt\frac{1}{Z^{2}}\sqrt{\hat{q}_{1}}\left(\int dww\exp\left\{ -\lambda(w)+\sqrt{\hat{q}_{1}}tw\right\} \right)^{2}\\
= & \left\langle \left\langle w^{2}\right\rangle _{f(w)}\right\rangle _{t}-\left\langle \left\langle w\right\rangle _{f(w)}^{2}\right\rangle _{t},
\end{split}
\end{equation}

where in the last step we have defined an induced distribution $f(w)=Z(t)^{-1}\exp\left\{ -\lambda(w)+\sqrt{\hat{q}_{1}}tw\right\} $.
Since the second moments are fixed to 1, we have

\begin{equation}
q=\left\langle \left\langle w\right\rangle _{f(w)}^{2}\right\rangle _{t},\label{appeq:saddle_pt_2_qneq1}
\end{equation}

which gives a nice interpretation of $q$ in terms of the average
overlap of $w$ in the induced distribution $f(w)$.

\subsubsection*{Limit $q\to1$}

We are interested in the critical load $\alpha_{c}$ where the version
space (space of viable weights) shrinks to a single point, i.e., there
exists only one viable solution. Since $q$ measures the typical overlap
between weight vectors in the version space, the uniqueness of the
solution implies $q\to1$ at $\alpha_{c}$. In this limit, the order
parameters $\left\{ \hat{q}_{1},\lambda(w)\right\} $ diverges and
we need to re-derive the saddle point equations Eqn.\ref{appeq:saddle_pt_1_qneq1}
and Eqn.\ref{appeq:saddle_pt_2_qneq1} in terms of the undiverged
order parameters $\left\{ u,r(w)\right\} $:

\begin{equation}
\hat{q}_{1}=\frac{u^{2}}{(1-q)^{2}};\qquad\lambda(w)=\frac{r(w)}{1-q}.
\end{equation}

Now $G_{0}$ becomes

\begin{equation}
G_{0}=\frac{1}{1-q}\left\{ -\frac{1}{2}u^{2}+\int dwq(w)r(w)+(1-q)\left\langle \log Z(t)\right\rangle _{t}\right\} ,
\end{equation}

and 

\begin{equation}
Z(t)\,=\int dw\exp\frac{1}{1-q}\left\{ -r(w)+utw\right\} .
\end{equation}

We can perform a saddle-point approximation for the $w$ integral
in $Z(t)$ at the saddle value $w$ such that $r'(w)=ut$:

\begin{equation}
Z(t)=\exp\left\{ \frac{-r(w)+utw}{1-q}\right\} .\label{appeq:cap_dist_Zt_1}
\end{equation}

Then

\begin{equation}
G_{0}=\frac{1}{1-q}\left\{ -\frac{1}{2}u^{2}+\int dwq(w)r(w)-\left\langle r(w)\right\rangle _{t}+u\left\langle tw\right\rangle \right\} .
\end{equation}

Let's use integration by parts to rewrite

\begin{equation}
\begin{split}\int dwq(w)r(w) & =-\int Q(w)r'(w)dw\\
\left\langle r(w)\right\rangle _{t} & =\int\frac{dt}{\sqrt{2\pi}}e^{-t^{2}/2}r(w)=-\int P(t)r'(w)dw,
\end{split}
\label{appeq:cap_dist_Zt_3}
\end{equation}

where $Q(w)$ is the CDF of the imposed distribution $q(w)$ and $P(t)=\frac{1}{2}\left[1+\text{Erf}(\frac{t}{\sqrt{2}})\right]$
is the normal CDF. 

Now the saddle-point equation

\begin{equation}
0=\frac{\partial G_{0}}{\partial r'(w)}\Rightarrow Q(w)=P(t)\label{appeq:cap_dist_saddle_1}
\end{equation}

determines $w(t)$ implicitly. The $u$ equation gives

\begin{equation}
0=\frac{\partial G_{0}}{\partial u}\Rightarrow u=\left\langle tw\right\rangle _{t}=\left\langle \frac{dw}{dt}\right\rangle _{t}\label{eq:appeq:cap_dist_saddle_2}
\end{equation}

where in the last equality we have used integration by parts. Using
Eqn.\ref{appeq:cap_dist_saddle_1}-\ref{eq:appeq:cap_dist_saddle_2}
$G_{0}$ is simplified to

\begin{equation}
G_{0}=\frac{1}{2(1-q)}\left\langle \frac{dw}{dt}\right\rangle _{t}^{2}.\label{appeq:cap_dist_G0}
\end{equation}

\subsubsection*{The energetic part}

We would like to perform a similar procedure as shown above, to Eqn.\ref{appeq:cap_dist_K_replica}
using the replica-symmetric ansatz. We observe that the effect of
the distribution constraint is entirely captured in $G_{0}$ and therefore
$G_{1}$ is unchanged compared with the standard Gardner calculation
of perceptron capacity. We reproduce the calculation here for completeness.

Under the replica-symmetric ansatz $q_{\alpha\beta}=(1-q)\delta_{\alpha\beta}+q$,
Eqn.\ref{appeq:cap_dist_K_replica} becomes

\begin{equation}
\begin{split}K & =i\sum_{\alpha}x_{\alpha}\rho_{\alpha}-\frac{1-q}{2}\sum_{\alpha}x_{\alpha}^{2}-\frac{q}{2}\left(\sum_{\alpha}x_{\alpha}\right)^{2}\\
 & \HSTeq i\sum_{\alpha}x_{\alpha}\rho_{\alpha}-\frac{1-q}{2}\sum_{\alpha}x_{\alpha}^{2}-it\sqrt{q}\sum_{\alpha}x_{\alpha}.
\end{split}
\end{equation}

where we have again used the Hubbard-Stratonovich transformation to
linearize the quadratic piece. Performing the Gaussian integrals in
$x_{\alpha}$ (define $\alpha=\frac{P}{N}$),

\begin{equation}
nG_{1}=\alpha\log\left[\left\langle \int_{\kappa}^{\infty}\frac{d\rho}{\sqrt{2\pi(1-q)}}\exp\left\{ -\frac{(\rho+t\sqrt{q})^{2}}{2(1-q)}\right\} \right\rangle _{t}^{n}\right].
\end{equation}

At the limit $n\to0$,

\begin{equation}
nG_{1}=\alpha n\left\langle \log\left[\int_{\kappa}^{\infty}\frac{d\rho}{\sqrt{2\pi(1-q)}}\exp\left\{ -\frac{(\rho+t\sqrt{q})^{2}}{2(1-q)}\right\} \right]\right\rangle _{t}.
\end{equation}

Perform the Gaussian integral in $\rho$ and define $\tilde{\kappa}=\frac{\kappa+t\sqrt{q}}{\sqrt{1-q}}$,
we have
\begin{equation}
G_{1}=\alpha\int Dt\log H(\tilde{\kappa}).
\end{equation}

At the limit $q\to1,\alpha\to\alpha_{c}$, $\int_{-\infty}^{\infty}Dt$
is dominated by $\int_{-\kappa}^{\infty}Dt$, and $H(\tilde{\kappa})\to\frac{1}{\sqrt{2\pi}\tilde{\kappa}}e^{-\tilde{\kappa}^{2}/2}$.
The $\mathcal{O}\left(\frac{1}{1-q}\right)$ (leading order) contribution
gives

\begin{equation}
G_{1}=-\frac{1}{2(1-q)}\alpha_{c}\int_{-\kappa}^{\infty}Dt(\kappa+t)^{2}.\label{appeq:cap_dist_G1_1pop}
\end{equation}

Let $G=G_{0}+G_{1}$. As $n\to0$, $\left\langle V^{n}\right\rangle =e^{n\left(NG\right)}\to1+n\left(NG\right)$,
and $\left\langle \log V\right\rangle =\lim_{n\to0}\frac{\left\langle V^{n}\right\rangle -1}{n}=NG$. 

Combining with Eqn.\ref{appeq:cap_dist_G0} (relabel $t\leftrightarrow x$
to distinguish between the two auxiliary Gaussian variables), we have 

\begin{equation}
\left\langle \log V\right\rangle =\frac{N}{2(1-q)}\left[\left\langle \frac{dw}{dx}\right\rangle _{x}^{2}-\alpha_{c}\int_{-\kappa}^{\infty}Dt(\kappa+t)^{2}\right]\label{appeq:cap_dist_logV}
\end{equation}

Capacity $\alpha_{c}$ is reached when Eqn.\ref{appeq:volume_cap_dist}
goes to zero. We arrive at the distribution-constrained capacity 

\begin{equation}
\alpha_{c}(\kappa)=\alpha_{0}(\kappa)\left\langle \frac{dw}{dx}\right\rangle _{x}^{2},\label{appeq:cap_dist_formula}
\end{equation}
where $\alpha_{0}(\kappa)=\left[\int_{-\kappa}^{\infty}Dt(\kappa+t)^{2}\right]^{-1}$
is the unconstrained capacity. 

\subsubsection*{Instructive Examples}

(1) Standard normal distribution $w\sim\mathcal{N}(0,1)$. 

In this case $w=x$ and $\alpha_{c}(\kappa)=\alpha_{0}(\kappa)$.

(2) Normal distribution with nonzero mean $w\sim\mathcal{N}(\mu,\sigma^{2}).$
This is the example discussed in the main text Fig.1.

In this case $w=\mu+\sigma x$ and $\mu^{2}+\sigma^{2}=1$ due to
the second moment constraint Eqn.\ref{appeq:second_moment}. Then
$\alpha_{c}(\kappa)=\sigma^{2}\alpha_{0}(\kappa).$

(3) Lognormal distribution $w\sim\frac{1}{\sqrt{2\pi}w}\exp\left\{ -\frac{(\ln w-\mu)^{2}}{2\sigma^{2}}\right\} .$

In this case $w=e^{\mu+\sigma x}$ where $\mu=-\sigma^{2}.$ $\alpha_{c}(\kappa)=\sigma^{2}e^{-\sigma^{2}}\alpha_{0}(\kappa)$.

\subsubsection*{Geometrical interpretation}

Note that although the Jacobian factor $\left\langle \frac{dw}{dx}\right\rangle _{x}$
takes a simple form, in practice sometimes it might not be the most
convenient form to use. Integrating by parts ($p(x)=\mathcal{N}(0,1)$),

\begin{equation}
\left\langle \frac{dw}{dx}\right\rangle _{x}=\int dxp(x)wx
\end{equation}

Now define $u=P(x)$ so that $du=p(x)dx$ and $w=Q^{-1}(P(x))=Q^{-1}(u)$,
we can express the Jacobian in terms of the CDFs

\begin{equation}
\left\langle \frac{dw}{dx}\right\rangle _{x}=\int_{0}^{1}du\left(Q^{-1}(u)P^{-1}(u)\right)
\end{equation}

Furthermore, 

\begin{equation}
\begin{split}\left\langle \frac{dw}{dx}\right\rangle _{x} & =\frac{1}{2}\left[\int_{0}^{1}du\left(Q^{-1}(u)\right)^{2}+\int_{0}^{1}du\left(P^{-1}(u)\right)^{2}-\int_{0}^{1}du\left(Q^{-1}(u)-P^{-1}(u)\right)^{2}\right]\\
 & =\frac{1}{2}\left[2-W_{2}(P,Q)^{2}\right],
\end{split}
\end{equation}

where we have used second moments equal to $1$ and the definition
of the Wasserstein-$k$ distance in the second equality. Therefore,
we have arrived at the geometric interpretation of the Jacobian term

\begin{equation}
\left\langle \frac{dw}{dx}\right\rangle _{x}=1-\frac{1}{2}W_{2}(P,Q)^{2}.
\end{equation}

\subsubsection{Theory for an arbitrary number of synaptic subpopulations}

\label{app:capacity_M_pop}

In this section, we generalize our theory in the above section to
the set up of a perceptron with $M$ synaptic populations indexed
by $m$, $\boldsymbol{w}^{m}$, such that each $w_{i}^{m}$ satisfies
its own distributions constraints $w_{i}^{m}\sim q_{m}(w^{m})$. We
denote the overall weight vector as $\boldsymbol{w}\equiv\{\boldsymbol{w}^{m}\}_{m=1}^{M}\in\mathbb{R}^{N\times1}$,
where the total number of weights is $N=\sum_{m=1}^{M}N_{m}$. The
replica overlap now becomes $q_{\alpha\beta}=\frac{1}{N}\sum_{m}^{M}\sum_{i}^{N_{m}}w_{i}^{m\alpha}w_{i}^{m\beta}.$
The distribution constraint becomes (see Eqn.\ref{appeq:cap_dist_dist_const_delta}
for the case of $M=1$)

\begin{equation}
\prod_{m}\delta\left(\int dk^{m}\left(\frac{1}{N_{m}}\sum_{i}^{N_{m}}e^{ik^{m}w_{i}^{m\alpha}}-q_{m}(k^{m})\right)\right).
\end{equation}

We introduce $\hat{q}_{\alpha\beta},\lambda_{m}(k)$ to write the
$\delta$-functions into Fourier representations, and use replica-symmetric
ansatz $\hat{q}_{\alpha\beta}=-\frac{i}{2}(\Delta\hat{q}\delta_{\alpha\beta}+\hat{q}_{1})$,
and $q_{\alpha\beta}=(1-q)\delta_{\alpha\beta}+q$ as before. After
similar manipulations that lead to Eqn.\ref{appeq:G0_cap_dist_qneq1},
the entropic part of the free energy becomes ($g_{m}=N_{m}/N$ is
the fraction of weights in $m$-th population)

\begin{equation}
\begin{split}G_{0}= & \frac{1}{2}\Delta\hat{q}-\frac{1}{2}\hat{q}_{1}(1-q)+\sum_{m}g_{m}\int dw^{m}q_{m}(w^{m})\lambda_{m}(w^{m})\\
 & -\frac{1}{2}\Delta\hat{q}\sum_{m}g_{m}\int dw^{m}q_{m}(w^{m})\left(w^{m}\right)^{2}+\sum_{m}g_{m}\left\langle \log Z_{m}(t)\right\rangle _{t},\\
Z_{m}(t)= & \int dw^{m}\exp\left\{ -\lambda_{m}(w^{m})+\sqrt{\hat{q}_{1}}tw^{m}\right\} .
\end{split}
\end{equation}

Now the second moment constraint $0=\partial G_{0}/\partial\Delta\hat{q}$
(Eqn.\ref{appeq:second_moment}) becomes the weighted sum of second
moments from each population:

\begin{equation}
1=\sum_{m}g_{m}\int dw^{m}q_{m}(w^{m})\left(w^{m}\right)^{2}=\sum_{m}g_{m}\left\langle \left(w^{m}\right)^{2}\right\rangle _{q_{m}}.
\end{equation}

We take the $q\to1$ limit as before: 

\begin{equation}
\hat{q}_{1}=\frac{u^{2}}{(1-q)^{2}};\qquad\lambda_{m}(w^{m})=\frac{r_{m}(w^{m})}{1-q}.
\end{equation}

Use saddle-point approximation for $Z_{m}(t)$ and integrate by parts
as in Eqn.\ref{appeq:cap_dist_Zt_1}-\ref{appeq:cap_dist_Zt_3}, the
entropic part becomes

\begin{equation}
G_{0}=\frac{1}{1-q}\left\{ -\frac{1}{2}u^{2}+\sum_{m}g_{m}r'_{m}(w^{m})\left[P(x)-Q_{m}(w^{m})\right]+u\sum_{m}g_{m}\left\langle tw^{m}\right\rangle _{t}\right\} .
\end{equation}

Now the saddle-point equation for order parameters $r'_{m}(w^{m})$
and $u$ gives

\begin{equation}
\begin{split}P(x) & =Q_{m}(w^{m})\\
u & =\sum_{m}g_{m}\left\langle tw^{m}\right\rangle _{t}=\sum_{m}g_{m}\left\langle \frac{dw^{m}}{dt}\right\rangle _{t}.
\end{split}
\end{equation}

Therefore,

\begin{equation}
G_{0}=\frac{1}{2(1-q)}\left[\sum_{m}g_{m}\left\langle \frac{dw^{m}}{dt}\right\rangle _{t}\right]^{2}.
\end{equation}

The energetic part (Eqn.\ref{appeq:cap_dist_Zt_1}) remains unchanged
and thus (relabel $t\leftrightarrow x$) 

\begin{equation}
\alpha_{c}(\kappa)=\alpha_{0}(\kappa)\left[\sum_{m}g_{m}\left\langle \frac{dw^{m}}{dx}\right\rangle _{x}\right]^{2}.
\end{equation}

\subsubsection*{E/I balanced lognormals}

Now we specialize to the biologically realistic E/I balanced lognormal
distributions described in the main text. We are interested the case
with two synaptic populations $m=E,I$ that models the excitatory/inhibitory
synpatic weights of a biological neuron. $w_{i}^{E}\sim\text{\ensuremath{\frac{1}{\sqrt{2\pi}\sigma_{E}w^{E}}\exp\left\{ -\frac{(\ln w^{E}-\mu_{E})^{2}}{2\sigma_{E}^{2}}\right\} }}$
and $w_{i}^{I}\sim\text{\ensuremath{\frac{1}{\sqrt{2\pi}\sigma_{I}w^{I}}\exp\left\{ -\frac{(\ln w^{I}-\mu_{I})^{2}}{2\sigma_{I}^{2}}\right\} }}$.
Let's denote the E/I fractions as $g_{E}=r$ and $g_{I}=1-r$. The
CDF of the lognormals are given by

\begin{equation}
\begin{split}Q_{m}(w^{m})= & H\left[\frac{1}{\sigma_{m}}\left(\mu_{m}-\ln w^{m}\right)\right].\end{split}
\end{equation}

The corresponding inverse CDF is 

\begin{equation}
Q_{m}^{-1}(u)=\exp\left\{ \mu_{m}-\sigma_{m}H^{-1}(u)\right\} .
\end{equation}

The capacity is therefore
\begin{equation}
\begin{split}\alpha_{c} & =\alpha_{0}\left[\sum_{m}g_{m}\int_{0}^{1}duQ_{m}^{-1}(u)P^{-1}(u)\right]^{2}\\
 & =\alpha_{0}\left[r\int_{0}^{1}duH^{-1}(u)\exp\left\{ \mu_{E}-\sigma_{E}H^{-1}(u)\right\} +(1-r)\int duH^{-1}(u)\exp\left\{ \mu_{I}-\sigma_{I}H^{-1}(u)\right\} \right]^{2}.
\end{split}
\end{equation}

This model has five parameters $\left\{ r,\sigma_{E},\sigma_{I},\mu_{E},\mu_{I}\right\} $.
We use values of $r$ reported in experiments (the ratio between of
E. connections found and I. connections found). 

We also have two constraints. The E/I balanced constraint $g_{E}\left\langle w^{E}\right\rangle _{q_{E}}=g_{I}\left\langle w^{I}\right\rangle _{q_{I}}$:

\begin{equation}
re^{\mu_{E}+\frac{1}{2}\sigma_{E}^{2}}=(1-r)e^{\mu_{I}+\frac{1}{2}\sigma_{I}^{2}},
\end{equation}

and the second moment constraint $1=\sum_{m}g_{m}\left\langle \left(w^{m}\right)^{2}\right\rangle _{q_{m}}$:

\begin{equation}
1=re^{2(\mu_{E}+\sigma_{E}^{2})}+(1-r)e^{2(\mu_{I}+\sigma_{I}^{2})}.
\end{equation}

Therefore there are two free parameters left and we choose to express
$\mu_{E}$ and $\mu_{I}$ in terms of the rest:

\begin{equation}
\begin{split}\mu_{I}= & -\frac{1}{2}\sigma_{I}^{2}-\ln(1-r)-\frac{1}{2}\ln\left[\frac{e^{\sigma_{I}^{2}}}{1-r}+\frac{e^{\sigma_{E}^{2}}}{r}\right]\\
\mu_{E}= & -\frac{1}{2}\sigma_{E}^{2}-\ln r-\frac{1}{2}\ln\left[\frac{e^{\sigma_{I}^{2}}}{1-r}+\frac{e^{\sigma_{E}^{2}}}{r}\right].
\end{split}
\end{equation}

The parameter landscape is plotted against the two free parameters
$\sigma_{E}$ and $\sigma_{I}$. Here we report comparisons across
different experiments \cite{levy2012spatial,avermann2012microcircuits,holmgren2003pyramidal,molnar2008complex,thomson2002synaptic,yang2013development}
similar to main text Fig.4 (Fig.4 (a) is included here for reference).
Note that despite the apparently different shape of distributions,
all the experimentally measured parameter values are within the first
quantile of the optimal values predicted by our theory.

\begin{figure}
\centering{}\includegraphics[scale=0.12]{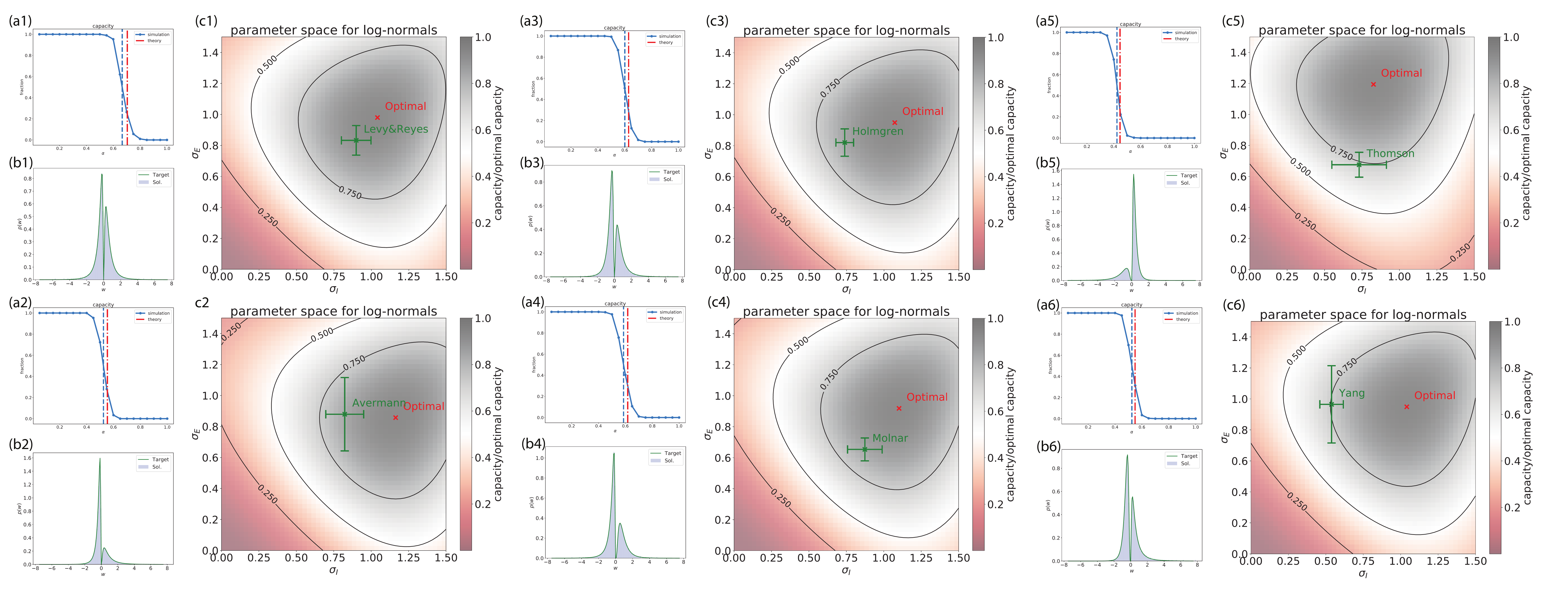}\caption{\label{appfig:Experimental_panel} Additional parameter landscape
for the biologically-realistic distribution. (a)-(b) (theory from
main text Eqn.10 and simulations from DisCo-SGD): (a) Determination
of capacity; (b) Example of weight distribution obtained in simulation.
(c) Capacity (normalized by the optimal value in the landscape) as
a function of the lognormal parameters $\sigma_{E}$ and $\sigma_{I}$.
Experimental value is shown in green with error bars, and optimal
capacity is shown in red. }
\end{figure}

\subsubsection{Capacity for biased inputs and sparse label}

\label{app:capacity_biased_sparse}

In this section, we generalized our theory in Section \ref{app:capacity_dist_const}
to the set up of nonzero-mean input patterns $\boldsymbol{\xi}^{\mu}$
and sparse labels $\zeta^{\mu}$:

\begin{equation}
\begin{split}p(\xi_{i}^{\mu})= & \mathcal{N}(m,1-m^{2})\\
p(\zeta^{\mu})= & f\delta(\zeta^{\mu}-1)+(1-f)\delta(\zeta^{\mu}+1).
\end{split}
\end{equation}

In this case, we need to include a bias in the perceptron $\hat{\zeta}^{\mu}=\text{sgn}(\frac{\boldsymbol{w}\cdot\boldsymbol{\xi}^{\mu}}{\left\Vert \boldsymbol{w}\right\Vert }-b)$
to be able to correctly classify patterns in general. 

Note that $m=0$ and $f=1/2$ reduces to the case in Section \ref{app:capacity_dist_const}.
We observe due to the multiplicative relation between the Jacobian
term and the original Gardner capacity in Eqn.\ref{appeq:cap_dist_formula},
entropic effects (such as distribution constraints and sign-constraints)
factors with the energetic effects (such as the nonzero mean inputs
and sparse labels), and they don't interfere with each other. Therefore,
the calculations for nonzero mean inputs and sparse labels are identical
with the original Gardner case. Here we only reproduce the calculation
for completeness. Readers already familiar with this calculation should
skip this part. 

The analog of Eqn.\ref{appeq:cap_dist_energy_x_z} reads (define the
local fields as $h_{i}^{\mu}=\sum_{\alpha}x_{\alpha}^{\mu}w_{i}^{\alpha}$)

\begin{equation}
\begin{split}\prod_{\mu\alpha}\left\langle e^{-\frac{i}{\sqrt{N}}x_{\alpha}^{\mu}\zeta^{\mu}\boldsymbol{\xi}^{\mu}\cdot\boldsymbol{w}^{\alpha}}\right\rangle _{\xi\zeta}= & \prod_{\mu i}\left\langle \exp\left\{ -\frac{i}{\sqrt{N}}\zeta^{\mu}\xi_{i}^{\mu}h_{i}^{\mu}\right\} \right\rangle _{\xi\zeta}\\
= & \prod_{\mu i}\left\langle \exp\left\{ -\frac{im}{\sqrt{N}}\zeta^{\mu}h_{i}^{\mu}-\frac{1}{2N}(1-m^{2})\left(h_{i}^{\mu}\right)^{2}\right\} \right\rangle _{\zeta}\\
= & \prod_{\mu}\left\langle \exp\left\{ -im\zeta^{\mu}\sum_{\alpha}x_{\alpha}^{\mu}M_{\alpha}-\frac{1-m^{2}}{2}\sum_{\alpha\beta}x_{\alpha}^{\mu}x_{\beta}^{\mu}q_{\alpha\beta}\right\} \right\rangle _{\zeta},
\end{split}
\end{equation}

where in the second equality we have carried out the Gaussian integral
in $\boldsymbol{\xi}^{\mu}$ and in the third equality we introduced
the order parameters 

\begin{equation}
q_{\alpha\beta}=\frac{1}{N}\sum_{i}w_{i}^{\alpha}w_{i}^{\beta},\qquad M_{\alpha}=\frac{1}{\sqrt{N}}\sum_{i}w_{i}^{\alpha}.
\end{equation}

Now the full energetic term becomes

$\begin{aligned}\left\langle \Theta\left(\frac{1}{\sqrt{N}}\zeta^{\mu}\boldsymbol{\xi}^{\mu}\cdot\boldsymbol{w}^{\alpha}-b\zeta^{\mu}-\kappa\right)\right\rangle _{\xi\zeta}\qquad\qquad\qquad\qquad\\
=\prod_{\mu}\left\langle \int_{\kappa+b\zeta^{\mu}}^{\infty}\frac{d\lambda_{\alpha}^{\mu}}{2\pi}\int dx_{\alpha}^{\mu}\exp\left\{ -im\zeta^{\mu}\sum_{\alpha}x_{\alpha}^{\mu}M_{\alpha}-\frac{1-m^{2}}{2}\sum_{\alpha\beta}x_{\alpha}^{\mu}x_{\beta}^{\mu}q_{\alpha\beta}\right\} \right\rangle _{\zeta}\\
=f\prod_{\mu}\int_{\kappa+b}^{\infty}\frac{d\lambda_{\alpha}^{\mu}}{2\pi}\int dx_{\alpha}^{\mu}\exp\left\{ i\sum_{\alpha}x_{\alpha}^{\mu}\left(\lambda_{\alpha}^{\mu}-mM_{\alpha}\right)-\frac{1-m^{2}}{2}\sum_{\alpha\beta}x_{\alpha}^{\mu}x_{\beta}^{\mu}q_{\alpha\beta}\right\} \\
+(1-f)\prod_{\mu}\int_{\kappa-b}^{\infty}\frac{d\lambda_{\alpha}^{\mu}}{2\pi}\int dx_{\alpha}^{\mu}\exp\left\{ i\sum_{\alpha}x_{\alpha}^{\mu}\left(\lambda_{\alpha}^{\mu}+mM_{\alpha}\right)-\frac{1-m^{2}}{2}\sum_{\alpha\beta}x_{\alpha}^{\mu}x_{\beta}^{\mu}q_{\alpha\beta}\right\} \\
=f\prod_{\mu}\int_{\frac{\kappa+b-mM_{\alpha}}{\sqrt{1-m^{2}}}}^{\infty}\frac{d\lambda_{\alpha}^{\mu}}{2\pi}\int dx_{\alpha}^{\mu}\exp\left\{ i\sum_{\alpha}x_{\alpha}^{\mu}\lambda_{\alpha}^{\mu}-\frac{1}{2}\sum_{\alpha\beta}x_{\alpha}^{\mu}x_{\beta}^{\mu}q_{\alpha\beta}\right\} \\
+(1-f)\prod_{\mu}\int_{\frac{\kappa-b+mM_{\alpha}}{\sqrt{1-m^{2}}}}^{\infty}\frac{d\lambda_{\alpha}^{\mu}}{2\pi}\int dx_{\alpha}^{\mu}\exp\left\{ i\sum_{\alpha}x_{\alpha}^{\mu}\lambda_{\alpha}^{\mu}-\frac{1}{2}\sum_{\alpha\beta}x_{\alpha}^{\mu}x_{\beta}^{\mu}q_{\alpha\beta}\right\} .
\end{aligned}
$

Now $G_{1}$ becomes 

\begin{equation}
\begin{split}G_{1}= & \frac{1}{1-q}\left\{ f\int_{\frac{\kappa-b+mM}{\sqrt{1-m^{2}}}}^{\infty}Dt\left(t+\frac{\kappa+b-mM}{\sqrt{1-m^{2}}}\right)^{2}+(1-f)\int_{\frac{-\kappa-b-mM}{\sqrt{1-m^{2}}}}^{\infty}Dt\left(t+\frac{\kappa-b+mM}{\sqrt{1-m^{2}}}\right)^{2}\right\} \end{split}
.
\end{equation}

Note that the hat-variables $\hat{M}_{\alpha}$ conjugated with $M_{\alpha}$
are in subleading order to $\hat{q}_{\alpha\beta}$ in the thermodynamic
limit, and therefore $G_{0}$ is unchanged. Let $v=M-b/m$, we have
now the capacity

\begin{equation}
\begin{split}\alpha_{c}(\kappa)= & \left\langle \frac{dw}{dx}\right\rangle _{x}^{2}\left[f\int_{\frac{-\kappa+mv}{\sqrt{1-m^{2}}}}^{\infty}Dt\left(t+\frac{\kappa-mv}{\sqrt{1-m^{2}}}\right)^{2}+(1-f)\int_{\frac{-\kappa-mv}{\sqrt{1-m^{2}}}}^{\infty}Dt\left(t+\frac{\kappa+mv}{\sqrt{1-m^{2}}}\right)^{2}\right]^{-1}\end{split}
,
\end{equation}

where the order parameter $v$ needs to be determined from the saddle-point
equation

\begin{equation}
f\int_{\frac{-\kappa+mv}{\sqrt{1-m^{2}}}}^{\infty}Dt\left(t+\frac{\kappa-mv}{\sqrt{1-m^{2}}}\right)=(1-f)\int_{\frac{-\kappa-mv}{\sqrt{1-m^{2}}}}^{\infty}Dt\left(t+\frac{\kappa+mv}{\sqrt{1-m^{2}}}\right).
\end{equation}

In Fig.\ref{appfig:nonzero_mean_inputs} we numerically solve $\alpha_{c}(\kappa)$
for different values of $m$ and $f$. 

\begin{figure}
\centering{}\includegraphics[scale=0.5]{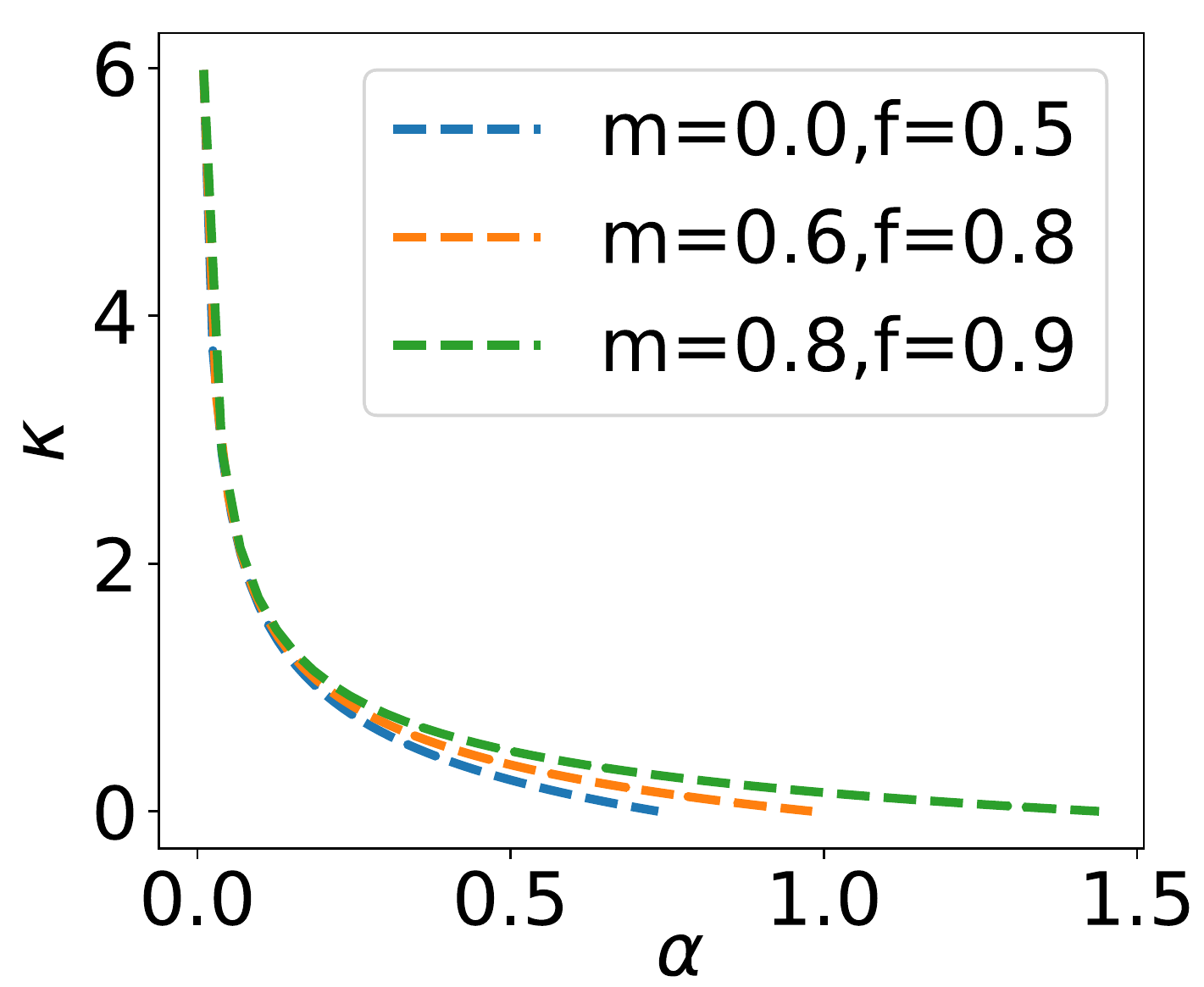}\caption{\label{appfig:nonzero_mean_inputs}$\alpha_{c}(\kappa)$ for different
values of input mean $m$ and label sparsity $f$. Note that the blue
curve corresponds to the vanilla case shown in main text Fig.4(c).}
\end{figure}

\subsection{Optimal transport theory}

\label{app:Optimal-transport-theory}

In recent years, Wasserstein distances has found diverse applications
in fields ranging from machine learning \cite{arjovsky2017wasserstein,frogner2015learning,montavon2016wasserstein}
to geophysics \cite{engquist2013application,engquist2016optimal,chen2018quadratic,metivier2016measuring,metivier2016optimal}.
In optimal transport theory, the Wasserstein-$k$ distance arise as
the minimal cost one needs to pay in transporting one probability
distribution to another, when the moving cost between probability
masses are measured by the $L_{k}$ norm \cite{villani2009optimal}.
When one equips the probability density manifold with the Wasserstein-$2$
distance as metric, it becomes the Wasserstein space, a Riemannian
manifold of real-valued distributions with a constant nonnegative
sectional curvature \cite{lott2006some,figalli2011optimal,chen2020optimal}.
Note that in our statistical mechanical theory main text Eqn.3-5,
the Wasserstein-$2$ distance naturally arises in the mean-field limit
without assuming any a priori transportation cost.

Here we briefly review the theory of optimal transport. Intuitively,
optimal transport concerns the problem of finding the shortest path
of morphing one distribution into another. In the following, we will
use the \textit{Monge} formulation \cite{thorpe2019introduction,ambrosio2013user}.

Given probability distributions $P$ and $Q$ with supports $X$ and
$Y$, we say that $T:X\to Y$ is a transport map from $P$ to $Q$
if the \textit{push-forward }of $P$ through $T,$ $T_{\#}P$, equals
$Q$:

\begin{equation}
Q=T_{\#}P\equiv P(T^{-1}(Y)).\label{eq:pushforward}
\end{equation}

Eqn.\ref{eq:pushforward} can be understood as moving probability
masses $x\in X$ from distribution $P$ to $y\in Y$ according to
transportation map $T$, such that upon completion the distribution
over $Y$ becomes $Q$.

We are interested in finding a transportation plan that minimizes
the transportation cost as measured by some distance function $d:X\times Y\to\mathbb{R}$
:

\begin{equation}
C(T;d)=\int_{X}d(T(x),x)p(x)dx\qquad\text{s.t.}\;T_{\#}P=Q.\label{eq:transportation_cost}
\end{equation}

The transportation plan that minimizes Eqn.\ref{eq:transportation_cost}
is called the optimal transport plan $T^{*}=\text{argmin}_{T}C(T;d)$.
When the distance function $d$ is chosen to be the $L_{k}$ norm,
the minimal cost becomes the Wasserstein-$k$ distance:

\begin{equation}
W_{k}(P,Q)=\inf_{T}C(T;L_{k})|_{T_{\#}P=Q}.\label{eq:Wass-p}
\end{equation}

In $1$-dimension, the Wasserstein-$k$ distance has a closed form
given by main text Eqn.6, and the optimal transport map has an explicit
formula in terms of the CDFs: $T^{*}=Q^{-1}\circ P$. An example of
the optimal transport map and how it moves probability masses between
distributions is given in Fig.\ref{appfig:optimal_transport_lognormal}
for transport between $p(w)=\mathcal{N}(0,1)$ and $q(w)=\frac{1}{\sqrt{2\pi}\sigma w}\exp\left\{ \frac{(\ln w-\mu)^{2}}{2\sigma^{2}}\right\} .$
Note that in this case, the optimal transport plan is simply $T^{*}(x)=e^{\mu+\sigma x}$.

\begin{figure}
\centering{}\includegraphics[scale=0.25]{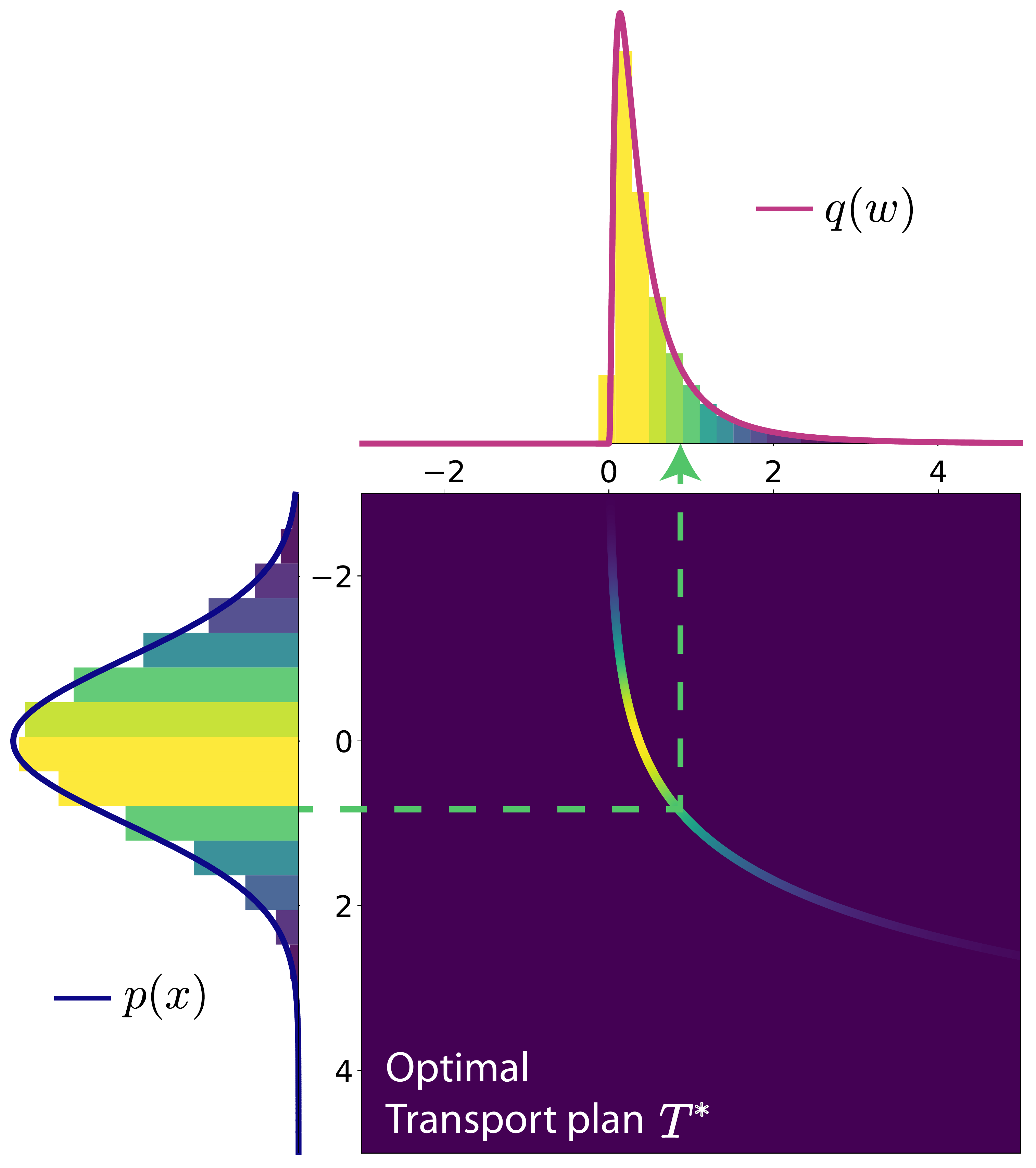}\caption{\label{appfig:optimal_transport_lognormal}An example optimal transport
plan from standard normal, $p(x)$, to a lognormal distribution $q(w)$.
The optimal transport plan $T^{*}$ is plotted in between the distributions.
$T^{*}$ moves $p(x)$ units of probability mass $x$ to location
$w$, as indicated by the dashed line, and the colors are chosen to
reflect the amount of probability mass to be transported.}
\end{figure}

Now consider the manifold $\mathcal{M}$ of real-valued probability
distributions, where points on this manifold are probability measures
that admits a probability density function. When endowed with the
$W_{k}$ metric, $(\mathcal{M},W_{k})$ becomes a metric space and
is in particular a geodesic space \cite{thorpe2019introduction,ambrosio2013user}.
We can explicitly construct the geodesics connecting points on $\mathcal{M}$.
We parameterize the geodesic by the \textit{geodesic time} $\tau\in[0,1].$
Then given $T^{*}$ an optimal transport plan, the intermediate probability
distributions along the geodesic take the following form \cite{thorpe2019introduction}:

\begin{equation}
P_{\tau}=\left((1-\tau)\text{Id}+\tau T^{*}\right)_{\#}P\label{eq:geodesic}
\end{equation}

where $\text{Id}$ is the identity map and $P_{\tau}$ is a constant
speed geodesic connecting $P_{\tau=0}=P$ and $P_{\tau=1}=Q$.

For the discrete case, we can describe the sample $\left\{ w_{i}^{\tau}\right\} $
from $P_{\tau}$ in a simple manner in terms of the samples $\left\{ w_{i}\right\} $
drawn from $P$ and $\left\{ \hat{w}_{i}\right\} $ drawn from $Q$.
We can arrange the samples in the ascending order, or equivalently,
forming their order statistics $\left\{ x_{(i)}:x_{(1)}\leq...\leq x_{(N)}\right\} $,
which can be thought of as atoms in a discrete measure. Then in terms
of the order statistics, 

\begin{equation}
w_{(i)}^{\tau}=(1-\tau)w_{(i)}+\tau\hat{w}_{(i)}
\end{equation}

Upon infinitetesimal change in the geodesic time, $\tau\to\tau+\delta\tau$,
the geodesic flow becomes

\begin{equation}
w_{(i)}^{\tau+\delta\tau}=w_{(i)}^{\tau}+\delta\tau\left(\hat{w}_{(i)}-w_{(i)}\right)
\end{equation}

Specializing to the case discussed in main text Section 3, $w_{(i)}=w_{(i)}^{\tau=0}$
is the initialization for the perceptron weight and therefore just
a constant, we can promoted it $w_{(i)}\to w_{(i)}^{\tau}$ to fix
the overall scale in the perceptron weight, then we arrive at main
text Eqn.9.

\subsection{Generalization supplemental materials}

\label{app:generalization}

\subsubsection{Replica calculation of generalization with sign-constraint}

\label{app:generalization_sign_const}

In this section, we calculate the sign-constraint teacher-student
setup. To ease notation, let's denote the teacher perceptron $\boldsymbol{w_{t}}\equiv\boldsymbol{w}^{0}$
and the (replicated) student perceptron $\boldsymbol{w}_{s}^{a}\equiv\boldsymbol{w}^{a}.$
Given random inputs $\boldsymbol{\xi}^{\mu}$ with $p(\xi_{i}^{\mu})=\mathcal{N}(0,1)$,
we generate labels by $\zeta^{\mu}=\text{sgn}(\boldsymbol{w}^{0}\cdot\boldsymbol{\xi}^{\mu}/||\boldsymbol{w}^{0}||+\eta^{\mu})$,
where $\eta^{\mu}$ is input noise and $\eta^{\mu}\sim\mathcal{N}(0,\sigma^{2})$.
Let's denote the signs of the teacher perceptron as $s_{i}=\text{sgn}(w_{i}^{0}).$
The student perceptron's weights are constrained to have the same
sign as that of the teacher's, so we insert $\Theta(s_{i}w_{i}^{a})$
in the Gardner volume to enforce this constraint (we leave out the
denominator part of $V$ as it does not depend on data and is an overall
constant): 

\begin{equation}
\left\langle V^{n}\right\rangle _{\xi\eta w^{0}}=\prod_{\alpha=1}^{n}\left\langle \int_{-\infty}^{\infty}\frac{d\boldsymbol{w}^{a}}{\sqrt{2\pi}}\prod_{\mu=1}^{p}\Theta\left(\text{sgn}\left(\frac{\boldsymbol{w^{0}}\cdot\boldsymbol{\xi}^{\mu}}{||\boldsymbol{w^{0}}||}+\eta^{\mu}\right)\frac{\boldsymbol{w^{a}}\cdot\boldsymbol{\xi}^{\mu}}{||\boldsymbol{w^{a}}||}-\kappa\right)\prod_{i}^{N}\Theta(s_{i}w_{i}^{a})\right\rangle _{\xi\eta w^{0}}.\label{sc_original}
\end{equation}

We observe that upon redefining $s_{i}w_{i}^{a}\to w_{i}^{a},s_{i}\xi_{i}^{\mu}\to\xi_{i}^{\mu}$,
we can absorb the sign-constraints into the integration range of $w$
from $[-\infty,+\infty]$ to $[0,\infty]$:

\begin{equation}
\left\langle V^{n}\right\rangle _{\xi\eta w^{0}}=\prod_{\alpha=1}^{n}\left\langle \int_{0}^{\infty}\frac{d\boldsymbol{w}^{a}}{\sqrt{2\pi}}\prod_{\mu=1}^{p}\Theta\left(\text{sgn}\left(\frac{\boldsymbol{w^{0}}\cdot\boldsymbol{\xi}^{\mu}}{||\boldsymbol{w^{0}}||}+\eta^{\mu}\right)\frac{\boldsymbol{w^{a}}\cdot\boldsymbol{\xi}^{\mu}}{||\boldsymbol{w^{a}}||}-\kappa\right)\right\rangle _{\xi\eta w^{0}}.\label{sc_vol}
\end{equation}

Therefore, sign constraint amounts to restricting all the weights
to be positive. In the following, we denote $\int_{0}^{\infty}$as
$\int_{c}$. 

Let's define the local fields as
\begin{equation}
h_{\mu}^{a}=\frac{\boldsymbol{w^{a}}\cdot\boldsymbol{\xi}^{\mu}}{\sqrt{N}};\qquad h_{\mu}^{0}=\frac{\boldsymbol{w^{0}}\cdot\boldsymbol{\xi}^{\mu}}{\sqrt{N}}+\eta^{\mu}
\end{equation}

We leave the average over teacher $w^{0}$ to the very end.

\begin{equation}
\begin{split}\left\langle V^{n}\right\rangle _{\xi\eta} & =\prod_{\mu a}\int_{c}\frac{d\boldsymbol{w}^{a}}{\sqrt{2\pi}}\int dh_{\mu}^{a}\Theta\bigg(\text{sgn}(h_{\mu}^{0})h_{\mu}^{a}-\kappa\bigg)\left\langle \delta\left(h_{\mu}^{a}-\frac{\boldsymbol{w^{a}}\cdot\boldsymbol{\xi}^{\mu}}{\sqrt{N}}\right)\right\rangle _{\xi\eta}\\
 & =\int_{c}(\prod_{a=1}^{n}\frac{d\boldsymbol{w}^{a}}{\sqrt{2\pi}})\int\prod_{\mu a}\frac{dh_{\mu}^{a}d\hat{h}_{\mu}^{a}}{2\pi}\int\prod_{\mu}\frac{dh_{\mu}^{0}d\hat{h}_{\mu}^{0}}{2\pi}\prod_{\mu a}\Theta\bigg(\text{sgn}(h_{\mu}^{0})h_{\mu}^{a}-\kappa\bigg)\\
 & \times\bigg\langle\exp\bigg\{\sum_{\mu a}\bigg(i\hat{h}_{\mu}^{a}h_{\mu}^{a}-i\hat{h}_{\mu}^{a}\frac{\boldsymbol{w^{a}}\cdot\boldsymbol{\xi}^{\mu}}{\sqrt{N}}\bigg)+\sum_{\mu}\bigg(i\hat{h}_{\mu}^{0}h_{\mu}^{0}-i\hat{h}_{\mu}^{0}\frac{\boldsymbol{w^{0}}\cdot\boldsymbol{\xi}^{\mu}}{\sqrt{N}}-i\hat{h}_{\mu}^{0}\eta^{\mu}\bigg)\bigg\}\bigg\rangle_{\xi\eta}\\
 & =\int_{c}(\prod_{a=1}^{n}\frac{d\boldsymbol{w}^{a}}{\sqrt{2\pi}})\int\prod_{\mu a}\frac{dh_{\mu}^{a}d\hat{h}_{\mu}^{a}}{2\pi}\int\prod_{\mu}\frac{dh_{\mu}^{0}d\hat{h}_{\mu}^{0}}{2\pi}\prod_{\mu a}\Theta\bigg(\text{sgn}(h_{\mu}^{0})h_{\mu}^{a}-\kappa\bigg)\\
 & \times\exp\left\{ \sum_{\mu a}i\hat{h_{\mu}^{a}}h_{\mu}^{a}+\sum_{\mu}i\hat{h}_{\mu}^{0}h_{\mu}^{0}\right\} \\
 & \times\prod_{\mu}\exp\left\{ -\frac{1}{2N}\left[\sum_{a,b}\hat{h}_{\mu}^{a}\hat{h}_{\mu}^{b}\sum_{i}w_{i}^{a}w_{i}^{b}+N\left(\hat{h}_{\mu}^{0}\right)^{2}+2\sum_{a}\hat{h}_{\mu}^{a}\hat{h}_{\mu}^{0}\sum_{i}w_{i}^{a}w_{i}^{0}\right]\right\} ,
\end{split}
\end{equation}

where in the last step we perform the average over noise $\eta^{\mu}\sim\mathcal{N}(0,\sigma^{2})$
and patterns $p(\xi_{i}^{\mu})=\mathcal{N}(0,1)$, and make use of
the normalization conditions $\sum_{i}(w_{i}^{0})^{2}=N$ and $\sum_{i}(w_{i}^{a})^{2}=N$.

Now let's define order parameters
\begin{equation}
q_{ab}=\frac{1}{N}\sum_{i}w_{i}^{a}w_{i}^{b},\qquad R_{a}=\frac{1}{N}\sum_{i}w_{i}^{a}w_{i}^{0}.
\end{equation}

We introduce conjugate variables $\hat{q}_{ab}$ and $\hat{R}_{a}$
to write the $\delta$-functions into its Fourier representations,
and after some algebraic manipulations we can bring the Gardner volume
into the following form ($\alpha\equiv p/N$):

\begin{equation}
\begin{split}\langle\langle V^{n}\rangle\rangle_{\xi,z} & =\int(\prod_{a}d\hat{q}_{1}^{a})(\prod_{ab}dq^{ab}d\hat{q}^{ab})(\prod_{a}dR^{a}d\hat{R}^{a})e^{nNG}\end{split}
,
\end{equation}

where ($\bar{h}_{\mu}^{0}=\gamma h_{\mu}^{0};\quad\gamma=1/\sqrt{1+\sigma^{2}}$)

\begin{equation}
\begin{split}nG= & nG_{0}+\alpha nG_{E}\\
nG_{0}= & -\frac{1}{2}\sum_{ab}\hat{q}^{ab}q^{ab}-\sum_{a}\hat{R}^{a}R^{a}+n\left\langle \ln Z\right\rangle _{w^{0}},\\
Z= & \int_{c}\left(\prod_{a}\frac{dw_{i}^{a}}{\sqrt{2\pi}}\right)\exp\bigg\{\frac{1}{2}\sum_{a}\hat{q}_{1}^{a}(w_{i}^{a})^{2}+\frac{1}{2}\sum_{a\neq b}\hat{q}^{ab}w_{i}^{a}w_{i}^{b}+\sum_{a}\hat{R}^{a}w_{i}^{a}w_{i}^{0}\bigg\},\\
nG_{1}= & \ln\int\prod_{a}\frac{d\hat{h}^{a}dh^{a}}{2\pi}\int D\bar{h}^{0}\prod_{a}\Theta\bigg(\text{sgn}(\frac{\bar{h}^{0}}{\gamma})h^{a}-\kappa\bigg)\\
 & \times\exp\bigg\{ i\sum_{a}\hat{h}^{a}h^{a}-i\gamma\bar{h}^{0}\sum_{a}h^{a}R^{a}-\frac{1}{2}\sum_{a}(\hat{h}^{a})^{2}[1-(\gamma R^{a})^{2}]-\frac{1}{2}\sum_{a\neq b}\hat{h}^{a}\hat{h}^{b}(q^{ab}-\gamma^{2}R^{a}R^{b})\bigg\}.
\end{split}
\end{equation}

The energetic part $G_{1}$ is the same as the unconstrained case
in \cite{seung1992statistical,engel2001statistical}. After standard
manipulations, we have
\begin{equation}
G_{1}=2\int DtH\bigg(-\frac{\gamma Rt}{\sqrt{q-\gamma^{2}R^{2}}}\bigg)\ln H\bigg(\frac{\kappa-\sqrt{q}t}{\sqrt{1-q}}\bigg).
\end{equation}

\subsubsection*{Entropic part}

In this subsection, we perform the integrals in the entropic part,
and we will see novel terms coming from the constraint on the student's
integration range.

We start by assuming a replica-symmetric solution for the auxiliary
variables introduced in the Fourier decomposition of the $\delta$-functions,
\begin{equation}
\hat{R}^{a}=\hat{R};\qquad\hat{q}^{ab}=\hat{q}+(\hat{q}_{1}-\hat{q})\delta_{ab};\qquad\hat{q}_{1}^{a}=\hat{q}_{1};\qquad m_{i}^{a}=m_{i};\qquad\hat{m}_{i}^{a}=\hat{m}_{i},
\end{equation}

and $q_{ab}=(1-q)\delta_{ab}+q.$

Then the entropic part, 
\begin{equation}
\begin{split}Z & =\int\left(\prod_{a}\frac{dw_{i}^{a}}{\sqrt{2\pi}}\right)\exp\bigg\{\frac{1}{2}(\hat{q}_{1}-\hat{q})\sum_{a}(w_{i}^{a})^{2}+\hat{R}w_{i}^{0}\sum_{a}w_{i}^{a}+\frac{1}{2}\hat{q}(\sum_{a}w_{i}^{a})^{2}\bigg\}\\
 & \HSTeq\int Dt\int_{c}(\prod_{a}\frac{dw_{i}^{a}}{\sqrt{2\pi}})\exp\bigg\{\frac{1}{2}(\hat{q}_{1}-\hat{q})\sum_{a}(w_{i}^{a})^{2}+(\hat{R}w_{i}^{0}+t\sqrt{\hat{q}})\sum_{a}w_{i}^{a}\bigg\},
\end{split}
\end{equation}

where we have introduced Gaussian variable $t$ to linearize quadratic
term as usual. Now the integral becomes $n$ identical copies and
we can drop the replica index $a$,
\begin{equation}
G_{0}=-\frac{1}{2}\hat{q}_{1}+\frac{1}{2}\hat{q}q-\hat{R}R+\left\langle \ln Z\right\rangle _{t,w^{0}}.
\end{equation}

We can bring the log term into the form of an induced distribution
$f(w)$,

\begin{equation}
\begin{split}Z= & \int_{0}^{\infty}\frac{dw}{\sqrt{2\pi}}\exp\left[-f(w)\right]\\
f(w)= & \frac{1}{2}(\hat{q}-\hat{q}_{1})w^{2}-(\hat{R}w^{0}+t\sqrt{\hat{q}})w
\end{split}
.
\end{equation}

Under saddle-point approximation, we obtain a set of mean field self-consistency
equations for the order parameters: 
\begin{equation}
\begin{split}0=\frac{\partial G_{0}}{\partial\hat{q}_{1}} & \Rightarrow1=\left\langle \left\langle w^{2}\right\rangle _{f}\right\rangle _{t,w^{0}}\\
0=\frac{\partial G_{0}}{\partial\hat{R}} & \Rightarrow R=\left\langle w^{0}\left\langle w\right\rangle _{f}\right\rangle _{t,w^{0}}\\
0=\frac{\partial G_{0}}{\partial\hat{q}} & \Rightarrow q=\left\langle \left\langle w\right\rangle _{f}^{2}\right\rangle _{t,w^{0}}
\end{split}
,\label{appeq:sc_saddle_1-3_qneq1}
\end{equation}

\begin{equation}
\begin{split}0=\frac{\partial G_{1}}{\partial q} & \Rightarrow\hat{q}=-2\alpha\partial_{q}G_{1}\\
0=\frac{\partial G_{1}}{\partial R} & \Rightarrow\hat{R}=\alpha\partial_{R}G_{1}
\end{split}
.\label{appeq:sc_saddle_4-5_qneq1}
\end{equation}

\subsubsection*{$q\to1$ limit}

In this limit the order parameter diverges, and we define the new
set of undiverged order parameters as

\begin{equation}
\hat{R}=\frac{\tilde{R}}{1-q};\qquad\hat{q}=\frac{\tilde{q}^{2}}{(1-q)^{2}};\qquad\hat{q}-\hat{q}_{1}=\frac{\Delta}{1-q}.
\end{equation}

Then 

\begin{equation}
\begin{split}f(w)= & \frac{1}{1-q}\left[\frac{1}{2}\Delta w^{2}-(\tilde{R}w^{0}+t\tilde{q})w\right]\\
= & \frac{1}{1-q}\left[\frac{1}{2}\Delta\left(w-\frac{1}{\Delta}(\tilde{R}w^{0}+t\tilde{q})\right)^{2}-\frac{1}{2\Delta}(\tilde{R}w^{0}+t\tilde{q})^{2}\right].
\end{split}
\end{equation}

Then $\langle w\rangle_{f}=\frac{1}{\Delta}\left(\tilde{R}w^{0}+t\tilde{q}\right)$,
and the integral over the auxiliary variable is dominated by values
of $t$ such that $\tilde{R}w^{0}+t\tilde{q}>0$. In the following,
we denote $\left\langle \left[g(t)\right]_{+}\right\rangle _{t}$
as integrating over range of $t$ such that $g(t)>0$. Then the self-consistency
equations Eqn.\ref{appeq:sc_saddle_1-3_qneq1} take a compact form
(after rescaling order parameters $\tilde{R}\rightarrow\tilde{R}\Delta$
, $\tilde{q}\rightarrow\tilde{q}\Delta$)

\begin{equation}
\begin{split}1= & \frac{1}{\Delta}\left\langle \Theta(\tilde{R}w^{0}+t\tilde{q})\right\rangle _{t,w^{0}}\\
1= & \left\langle \left[\tilde{R}w^{0}+t\tilde{q}\right]_{+}^{2}\right\rangle _{t,w^{0}}\\
R= & \left\langle w^{0}\left[\tilde{R}w^{0}+t\tilde{q}\right]_{+}\right\rangle _{t,w^{0}}
\end{split}
,\label{appeq:sc_saddle_1-3_q=00003D1}
\end{equation}

Eqn.\ref{appeq:sc_saddle_1-3_qneq1} becomes ($\tilde{\kappa}=\kappa/\sqrt{1-\gamma^{2}R^{2}}$)

\begin{equation}
\begin{split}\tilde{R}\Delta & =\frac{\alpha\gamma}{\sqrt{2\pi}}\sqrt{1-\gamma^{2}R^{2}}\int_{-\tilde{\kappa}}^{\infty}Dt\bigg(\tilde{\kappa}+t\bigg)\\
\frac{\Delta}{2}\left(2-\tilde{q}^{2}\Delta-2\tilde{R}R\right)= & \alpha\int_{-\infty}^{\kappa}DtH\left(-\frac{\gamma Rt}{\sqrt{1-\gamma^{2}R^{2}}}\right)(\kappa-t)^{2}
\end{split}
.\label{appeq:sc_saddle_4-5_q=00003D1}
\end{equation}
The free energy is (recall that $\gamma=1/\sqrt{1+\sigma^{2}}$)

\begin{equation}
G=\frac{1}{2(1-q)}\left(\Delta-\tilde{q}^{2}-2\tilde{R}R+\frac{1}{\Delta}\left\langle \left[\tilde{R}w^{0}+t\tilde{q}\right]_{+}^{2}\right\rangle _{t,w^{0}}\right)-\alpha\int_{-\infty}^{\kappa}DtH\bigg(-\frac{\gamma Rt}{\sqrt{1-\gamma^{2}R^{2}}}\bigg)(\kappa-t)^{2}.
\end{equation}

\subsubsection{Replica calculation of generalization with distribution-constraint}

\label{app:generalization_dist_const}

In this subsection, we will consider the case where student weights
are constrained to some \textit{prior} distribution $q_{s}(w_{s})$,
while the teacher obeys a distribution $p_{t}(w_{t}),$for an arbitrary
pair $q_{s},p_{t}$. We can write down the Gardner volume $V_{g}$
for generalization as in the capacity case (main text Eqn.2):

\begin{equation}
V_{g}=\frac{\int d\boldsymbol{w}_{s}\left[\prod_{\mu=1}^{P}\Theta\left(\text{sgn}\left(\frac{\boldsymbol{w}_{t}\cdot\boldsymbol{\xi}^{\mu}}{||\boldsymbol{w}_{t}||}+\eta^{\mu}\right)\frac{\boldsymbol{w}_{s}\cdot\boldsymbol{\xi}^{\mu}}{||\boldsymbol{w}_{s}||}-\kappa\right)\right]\delta(||\boldsymbol{w}_{s}||^{2}-N)\delta\bigg(\int dk\left(\hat{q}(k)-q(k)\right)\bigg)}{\int d\boldsymbol{w}_{s}\delta(||\boldsymbol{w}_{s}||^{2}-N)}.\label{appeq:gen_dc_vol}
\end{equation}

We treat the distribution constraint $q_{s}(w)$ similar to Section
\ref{app:capacity_dist_const}. The entropic part of the free energy
becomes

\begin{equation}
\begin{split}G_{0}= & -\frac{1}{2}\hat{q}_{1}+\frac{1}{2}\hat{q}q-\hat{R}R+\int_{-\infty}^{\infty}dwq_{s}(w)\lambda(w)+\left\langle \ln Z\right\rangle _{t,w_{t}}\\
Z= & \int\frac{dw}{\sqrt{2\pi}}\exp\left[-f(w)\right]\\
f(w)= & \frac{1}{2}(\hat{q}-\hat{q}_{1})w^{2}-(\hat{R}w_{t}+t\sqrt{\hat{q}})w+\lambda(w)
\end{split}
.
\end{equation}
At the limit $q\rightarrow1$, we make the following ansatz

\begin{equation}
\hat{R}=\frac{\tilde{R}}{1-q};\quad\hat{q}=\frac{u^{2}}{(1-q)^{2}};\quad\hat{q}-\hat{q}_{1}=\frac{\Delta}{1-q};\quad\lambda(w)=\frac{r(w)}{1-q}.
\end{equation}
Then

\begin{equation}
\begin{split}G_{0}= & \frac{1}{\left(1-q\right)}\left(-\frac{1}{2}u^{2}+\frac{1}{2}\Delta-\tilde{R}R+\int dwq_{s}(w)r(w)\right)+\langle\ln Z\rangle_{t,w_{t}}\\
f(w)= & \frac{1}{1-q}\left(\frac{1}{2}\Delta w^{2}-(\tilde{R}w_{t}+ut)w+r(w)\right)
\end{split}
\end{equation}
We can absorb $\frac{1}{2}\Delta w^{2}$ into the definition of $r(w)$,
$\frac{1}{2}\Delta w^{2}+r(w)\to r(w)$, and $0=\partial G_{0}/\partial\Delta$
gives the second moment constraint, $1=\int dwq_{s}(w)w^{2}$. 

Then,

\begin{equation}
\begin{split}G_{0}= & \frac{1}{\left(1-q\right)}\left(-\frac{1}{2}u^{2}-\tilde{R}R+\int dwq(w)r(w)\right)+\langle\ln Z\rangle_{t,w_{t}}\\
f(w)= & \frac{1}{1-q}\left(r(w)-(\tilde{R}w_{t}+ut)w\right)
\end{split}
.
\end{equation}
Next, we perform a saddle-point approximation on the log-term in $G_{0}$,
\begin{equation}
Z=\int\frac{dw}{\sqrt{2\pi}}\exp\left[-f(w)\right]\approx\exp\left[-f(w_{s})\right],
\end{equation}
where $w_{s}$ is the saddle-point value for the weight, and is determined
implicitly by 
\begin{equation}
r'(w_{s})=\tilde{R}w_{t}+ut.
\end{equation}
Note that $r'(w_{s})$ is now an induced random variable from random
variables $w_{t}$ and $t$. For later convenience, we rescale $r'(w_{s})$
to define a new random variable $z$,
\begin{equation}
z\equiv u^{-1}r'(w_{s})=t+u^{-1}\tilde{R}w_{t}\equiv t+\varepsilon w_{t},
\end{equation}
where we have also defined 
\begin{equation}
\varepsilon\equiv u^{-1}\tilde{R}.
\end{equation}
The induced distribution on $z$ is then 
\begin{equation}
\tilde{p}(z)=\int Dt\int dw_{t}p(w_{t})\delta(z-t-\varepsilon w_{t}).
\end{equation}
Now the entropic part becomes
\begin{equation}
G_{0}=\frac{1}{\left(1-q\right)}\left(-\frac{1}{2}u^{2}-\tilde{R}R+\int dwq_{s}(w)r(w)+\langle(\tilde{R}w_{t}+ut)w_{s}\rangle_{t,w_{t}}-\langle r(w_{s})\rangle_{t,w_{t}}\right).
\end{equation}
Integrate by parts,
\begin{equation}
\int dwq(w)r(w)=-\int dwQ(w)r'(w),
\end{equation}

\begin{equation}
\begin{split}\langle r(w_{s})\rangle_{t,w_{t}}= & \int Dtdw_{t}p_{t}(w_{t})r(w_{s})\\
= & \int dz\delta(z-t-\varepsilon w_{t})\int Dtdw_{t}p_{t}(w_{t})r(w_{s})\\
= & \int dz\tilde{p}(z)r(w_{s})\\
= & -\int dz\tilde{P}(z)r'(w_{s})
\end{split}
.\label{appeq:113}
\end{equation}
Now $0=\partial G/\partial r'(w_{s})$ gives

\begin{equation}
Q(w_{s})=\tilde{P}(z).
\end{equation}
 which implicitly determines $w_{s}(z).$ 

Next,
\begin{equation}
0=\frac{\partial G}{\partial u}\Rightarrow u=\langle w_{s}(z)t\rangle_{t,w_{t}},
\end{equation}
\begin{equation}
0=\frac{\partial G}{\partial\tilde{R}}\Rightarrow R=\langle w_{s}(z)w_{t}\rangle_{t,w_{t}}.
\end{equation}

The free energy then simplifies to 
\begin{equation}
G=\frac{u^{2}}{2\left(1-q\right)}+\alpha G_{1}.
\end{equation}
The energetic part as $q\to1$ becomes (same as the unconstrained
and sign-constrained case)

\begin{equation}
G_{1}=-\frac{1}{1-q}\int_{-\infty}^{\kappa}DtH\bigg(-\frac{\gamma Rt}{\sqrt{1-\gamma^{2}R^{2}}}\bigg)(\kappa-t)^{2}.
\end{equation}
The remaining two saddle point equations are (1) the vanishing log-Gardner
volume and (2) $0=\partial G/\partial R$:
\begin{equation}
\frac{1}{2}u^{2}=\alpha\int_{-\infty}^{\kappa}DtH\bigg(-\frac{\gamma Rt}{\sqrt{1-\gamma^{2}R^{2}}}\bigg)(\kappa-t)^{2},
\end{equation}
\begin{equation}
\varepsilon u=\alpha\gamma\sqrt{\frac{2}{\pi}}\sqrt{1-\gamma^{2}R^{2}}\int_{-\tilde{\kappa}}^{\infty}Dt\bigg(\tilde{\kappa}+t\bigg).
\end{equation}

In summary, the order parameters $\left\{ R,\kappa,u,\varepsilon\right\} $
can be determined from a set of self-consistency equations:

\begin{equation}
\begin{split}u= & \langle w_{s}(z)t\rangle_{t,w_{t}}\\
R= & \langle w_{s}(z)w_{t}\rangle_{t,w_{t}}\\
\frac{1}{2}u^{2}= & \alpha\int_{-\infty}^{\kappa}DtH\bigg(-\frac{\gamma Rt}{\sqrt{1-\gamma^{2}R^{2}}}\bigg)(\kappa-t)^{2}\\
\varepsilon u= & \frac{2\alpha\gamma}{\sqrt{2\pi}}\sqrt{1-\gamma^{2}R^{2}}\int_{-\tilde{\kappa}}^{\infty}Dt\bigg(\tilde{\kappa}+t\bigg)
\end{split}
,
\end{equation}

where we have introduced $\tilde{\kappa}=\kappa/\sqrt{1-\gamma^{2}R^{2}}$,
an auxiliary normal variable $t\sim\mathcal{N}(0,1)$, and an induced
random variable $z\equiv t+\varepsilon w_{t}$ with induced distribution

\begin{equation}
\tilde{p}(z)=\int Dt\int dw_{t}p_{t}(w_{t})\delta(z-t-\varepsilon w_{t}).\label{appeq:induced_dist}
\end{equation}
Note that $w_{s}(z)$ can be determined implicitly by equating the
CDF of the induced variable $z$ and the distribution that the student
is constrained to: 
\begin{equation}
Q(w_{s})=\tilde{P}(z).
\end{equation}

\subsubsection*{Examples}

(1) Lognormal distribution

In the following, we solve $w_{s}(z)$ explicitly from the CDF equation
$Q(w_{s})=\tilde{P}(z)$. For a lognormal teacher,

\begin{equation}
p_{t}(w_{t})=\frac{1}{w_{t}}\frac{1}{\sqrt{2\pi}\sigma}\exp\left\{ -\frac{(\ln w_{t}-\mu)^{2}}{2\sigma^{2}}\right\} .
\end{equation}
The second moment constraint implies $\mu=-\sigma^{2}.$

The induced CDF of $z$ is
\begin{equation}
\tilde{P}(z)=\int_{-\infty}^{z}dz'\int_{-\infty}^{\infty}Dt\int_{0}^{\infty}dw_{t}p_{t}(w_{t})\delta(z'-t-\varepsilon w_{t}).
\end{equation}
Let $x=(\ln w-\mu)/\sigma,$

\begin{equation}
\begin{split}\tilde{P}(z)= & \int_{-\infty}^{z}dz'\int_{-\infty}^{\infty}Dt\int_{-\infty}^{\infty}Dx\delta(z'-t-\varepsilon e^{\mu+\sigma x})\\
= & \int_{-\infty}^{\infty}DxH(\varepsilon e^{\mu+\sigma x}-z)
\end{split}
.
\end{equation}
Now the CDF of $w_{s}$ is

\begin{equation}
\begin{split}Q_{s}(w_{s})= & \int_{-\infty}^{w_{s}}q_{s}(w)dw=H\left(-\frac{\ln w_{s}-\mu}{\sigma}\right)\end{split}
.
\end{equation}
Therefore, equating $\tilde{P}(z)$ and $Q_{s}(w_{s})$:
\begin{equation}
\int_{-\infty}^{\infty}DxH(\varepsilon e^{\mu+\sigma x}-z)=H\left(-\frac{\ln w_{s}-\mu}{\sigma}\right),
\end{equation}

We can solve for \textbf{$w_{s}(z)$} by (recall\textbf{ $z\equiv t+\varepsilon w_{t}$})
\begin{equation}
w_{s}(z)=\exp\left\{ \mu+\sigma H^{-1}\left(\int DxH(z-\varepsilon e^{\mu+\sigma x})\right)\right\} .
\end{equation}

Or in terms of error functions
\begin{equation}
w_{s}(z)=\exp\left\{ \mu+\text{\ensuremath{\sqrt{2}\sigma}erf}^{-1}\left(\int Dx\text{erf}\left(\frac{\varepsilon e^{\mu+\sigma x}-z}{\sqrt{2}}\right)\right)\right\} .
\end{equation}

We can also calculate the initial overlap (before any learning):
\begin{equation}
R_{0}=\left\langle \boldsymbol{w}_{t}\cdot\boldsymbol{w}_{s}\right\rangle _{p_{t}q_{s}}=e^{2\mu+\sigma^{2}}=e^{-\sigma^{2}}.
\end{equation}

(2) Uniform distribution 

Assuming that both the teacher and the student have a uniform distribution
in range $[0,\sigma].$

The second moment constraint fixes $\sigma=\sqrt{3}.$

We can solve (as in the lognormal example above), 

\begin{equation}
w_{s}(z)=\frac{1}{\varepsilon}\int_{-\infty}^{z}dz'\left(H(z'-\varepsilon\sigma)-H(z')\right).
\end{equation}

(3) Half-normal distribution

Assuming that both the teacher and the student has a half-normal distribution
$\frac{2}{\sqrt{2\pi}\sigma}\exp\left\{ -\frac{w^{2}}{2\sigma^{2}}\right\} $.

The second moment constraint fixes $\sigma=1$, and 

\begin{equation}
w_{s}(z)=\sigma H^{-1}\left\{ \frac{1}{2}-\int_{-\infty}^{\frac{z}{\sqrt{1+\sigma^{2}\varepsilon^{2}}}}DtH(-\sigma\varepsilon t)\right\} .
\end{equation}

\subsubsection*{Arbitrary number of synaptic subpopulations}

Just like in the case of Section \ref{app:capacity_M_pop}, we can
generalize our theory above to incorporate distribution constraints
with an arbitrary number of synaptic subpopulations. Let's consider
a student perceptron with $M$ synaptic populations indexed by $m$,
$\boldsymbol{w}^{m}$, such that each $w_{i}^{m}$ satisfies its own
distributions constraints $w_{i}^{m}\sim Q_{m}(w^{m})$. We denote
the overall weight vector as $\boldsymbol{w}\equiv\{\boldsymbol{w}^{m}\}_{m=1}^{M}\in\mathbb{R}^{N\times1}$.
The total number of weights is $N=\sum_{m=1}^{M}N_{m}$, and we denote
the fractions as $g_{m}=N_{m}/N$. Since the derivation is similar
to that of Section \ref{app:capacity_M_pop} and Section \ref{app:generalization_dist_const},
we will only present the results here.

As before, the order parameters $\left\{ R,\kappa,u,\varepsilon\right\} $
can be determined from a set of self-consistency equations:

\begin{equation}
\begin{split}u= & \sum_{m}g_{m}\langle w^{m}(z)t\rangle_{t,w_{t}}\\
R= & \sum_{m}g_{m}\langle w^{m}(z)w_{t}\rangle_{t,w_{t}}\\
\frac{1}{2}u^{2}= & \alpha\int_{-\infty}^{\kappa}DtH\bigg(-\frac{\gamma Rt}{\sqrt{1-\gamma^{2}R^{2}}}\bigg)(\kappa-t)^{2}\\
\varepsilon u= & \frac{2\alpha\gamma}{\sqrt{2\pi}}\sqrt{1-\gamma^{2}R^{2}}\int_{-\tilde{\kappa}}^{\infty}Dt\bigg(\tilde{\kappa}+t\bigg)
\end{split}
,
\end{equation}

where $\tilde{\kappa}=\kappa/\sqrt{1-\gamma^{2}R^{2}}$, $t\sim\mathcal{N}(0,1)$.
and an induced random variable $z\equiv t+\varepsilon w_{t}$ with
induced distribution the same as Eqn.\ref{appeq:induced_dist}.

Note that every $w^{m}(z)$ can be determined by equating the CDF
of the induced variable $z$ and the $m$-th distribution that $w^{m}(z)$
is constrained to: 
\begin{equation}
Q_{m}(w^{m})=\tilde{P}(z).
\end{equation}

\subsubsection{Sparsification of weights in sign-constraint learning}

\label{app:generalization_sparsification}

For unconstrained weights, max-margin solutions are considered beneficial
for generalization particularly for small size training sets. As a
first step toward biological plausibility, one can try to constraint
the sign of individual weights during learning (e.g., excitatory or
inhibitory). In the generalization error setup, we can impose a constraint
that the teacher and student have the same set of weight signs. Surprisingly,
we find both analytically and numerically that if the teacher weights
are not too sparse, the max-margin solution generalizes poorly: after
a single step of learning (with random input vectors), the overlap,
$R$, drops substantially from its initial value $R_{0}$ (by a factor
of $\sqrt{2}$ for a half-Gaussian teacher, see the blue curves in
Fig.\ref{appfig:sparsification}(a). 

We can verify this by calculating $R_{0}$ in two different ways.
As an example, in the following we consider the case where both the
teacher and student have half-normal distributions. 

(1) By definition, the overlap is $R=\frac{\boldsymbol{w}_{s}\cdot\boldsymbol{w}_{t}}{\left\Vert \boldsymbol{w}_{s}\right\Vert \left\Vert \boldsymbol{w}_{t}\right\Vert }$.
Since $\boldsymbol{w}_{s}$ and $\boldsymbol{w}_{t}$ are uncorrelated
before learning ($\alpha=0$), the initial overlap is then $R_{0}=\frac{\left\langle w_{s}\right\rangle \left\langle w_{t}\right\rangle }{||\boldsymbol{w}_{s}||\boldsymbol{w}_{t}||}$$=\frac{2}{\pi};$

(2) Take the $\alpha\to0$ limit in Eqn.\ref{appeq:sc_saddle_1-3_qneq1}
and Eqn.\ref{appeq:sc_saddle_4-5_qneq1} and calculate $R_{0+}=\lim_{\alpha\to0+}R(\alpha)$
$=\frac{\sqrt{2}}{\pi}$. 

Therefore, in this example $R_{0+}=R_{0}/\sqrt{2}.$

The source of the problem is that due to the sign constraint, max-margin
training with few examples yields a significant mismatch between the
student and teacher weight distributions. After only a few steps of
learning, half of the student's weights are set to zero, and the student's
distribution, $p(w_{s})=\frac{1}{2}\delta(0)+\frac{1}{\sqrt{2\pi}}\exp\{-\frac{w_{s}^{2}}{4}\}$,
deviates significantly from the teacher's half-normal distribution
(Fig.\ref{appfig:sparsification}(b)).

\begin{figure}
\centering{}\includegraphics[scale=0.18]{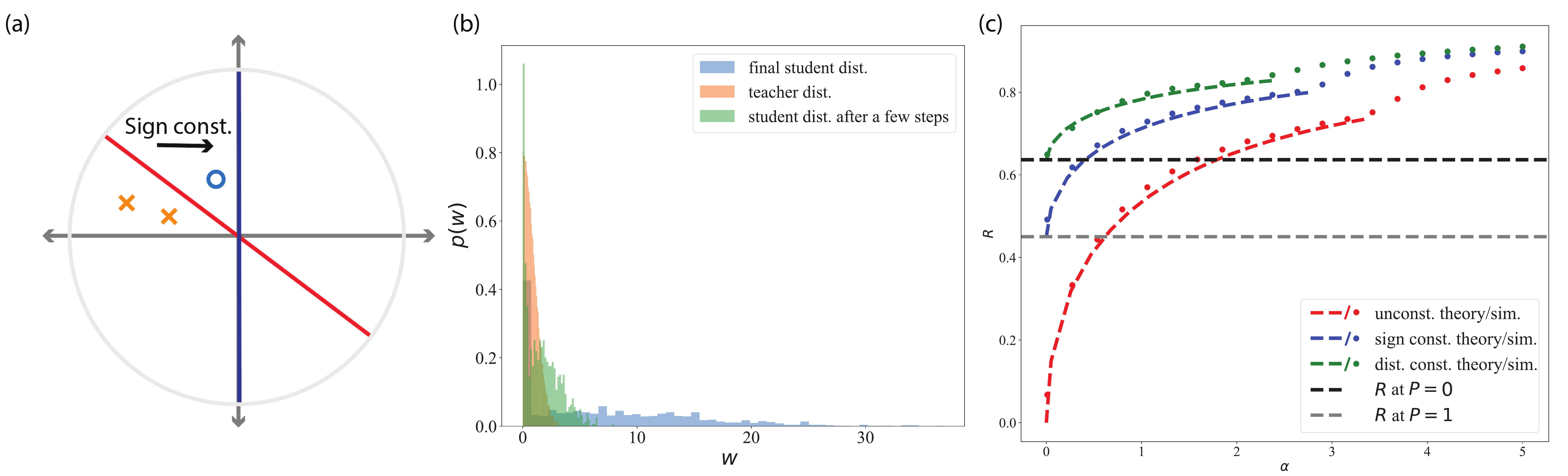}\caption{\label{appfig:sparsification} Sparsification of weights in sign-constraint
learning. (a) An illustration of weight sparsification. In this schematic,
the perceptron lives on this $1$-dimensional circle and $N=2$. Red
line denotes the hyperplane orthogonal to the perceptron weight before
sign-constraint, crosses and circles indicate examples in different
classes. Sign-constraint pushes the weights to the first quadrant,
which zeros half of the weights on average. Blue line indicates the
hyperplane obtained after the sign-constraint. (b) Sparsification
of weights due to max-margin training. After only a few iterations,
nearly half of the student weights are set to zero, and the distribution
deviates significantly from the teacher's distribution. (c) Teacher-student
overlap as a function of load $\alpha$ for different learning paradigms.
Dashed lines are from theory, and dots are from simulation. Note the
horizontal dashed lines show the initial drop in overlap from zero
example and to just a single example. In this case teacher has nonzero
noise, $\gamma=0.85$. }
\end{figure}

\subsubsection{Noisy teacher}

\label{app:generalization_noise}

We generate examples $\{\boldsymbol{\xi}^{\mu},\zeta^{\mu}\}_{\mu=1}^{P}$
from a teacher perceptron, $\boldsymbol{w}_{t}\in\mathbb{R}^{N}$:
$\zeta^{\mu}=\text{sgn}(\boldsymbol{w}_{t}\cdot\boldsymbol{\xi}^{\mu}/||\boldsymbol{w}_{t}||+\eta^{\mu})$,
where $\eta^{\mu}$ is input noise and $\eta^{\mu}\sim\mathcal{N}(0,\sigma^{2})$.
In this subsection we present additional numerical results for the
case when $\sigma\neq0$. As in previous sections, we define the noise
level parameter $\gamma=1/\sqrt{1+\sigma^{2}}.$

Our theory's prediction is confirmed by numerical simulation for a
wide range of teacher noise level $\gamma$ and teacher weight distributions
$P_{t}(w_{t})$. We find that distribution-constrained learning performs
consistently better all the way up to capacity (capacity in this framework
is due to teacher noise). For illustration, in Fig.\ref{appfig:vary_noise}
we show theory and simulation for fixed prior learning of three different
teacher distributions: uniform, half-normal, and lognormal.

\begin{figure}
\centering{}\includegraphics[scale=0.18]{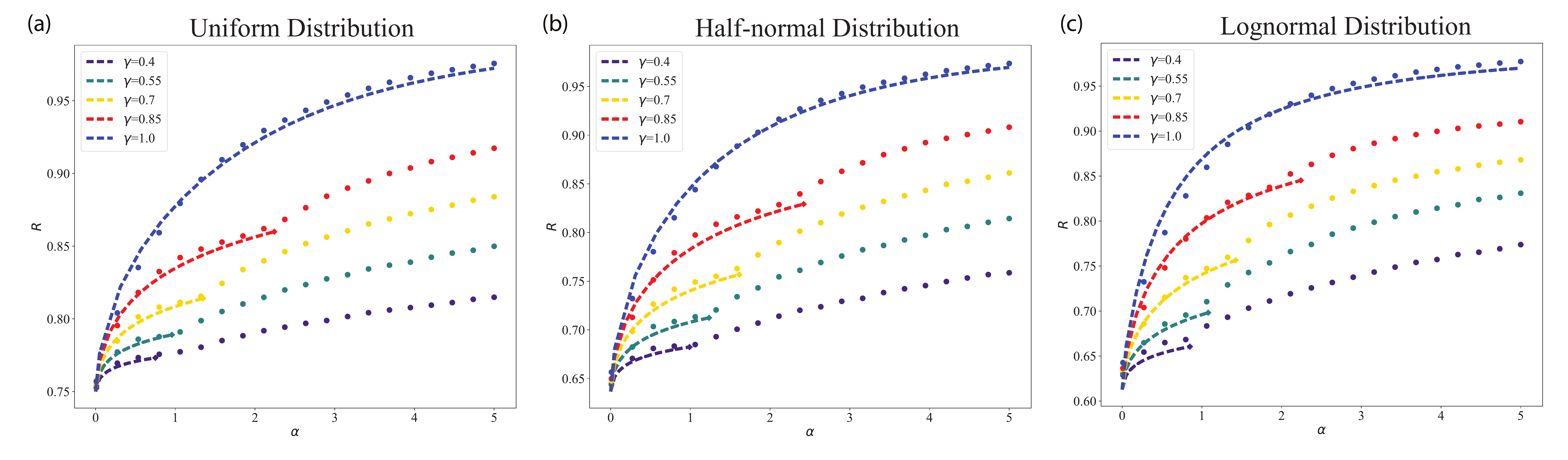}\caption{\label{appfig:vary_noise}Generalization (measured by overlap) performance
for different distributions and different noise levels in fixed prior
learning. From left to right: uniform, half-normal, and lognormal
distribution. In all cases the student is constrained to have the
same distribution as that of the teacher's. Dashed lines are from
theory and dots are from DisCo-SGD simulation.}
\end{figure}

\subsection{DisCo-SGD simulations}

\label{app:DiscoSGD}

\subsubsection*{Avoid vanishing gradients}

Note that we often observe a vanishing gradient in DisCo-SGD when
we choose a constant learning rate $\eta_{1}$, and in such cases
the algorithm tends to find poor margin $\kappa$ which deviates from
the max-margin value predicted from the theory. We find that scaling
$\eta_{1}$ with the standard deviation of the gradient solves this
problem:

\begin{equation}
\eta_{1}=\eta_{1}^{0}/\text{std}\left(\sum_{\mu}\xi_{i}^{\mu}(\hat{\zeta}^{\mu}-\zeta^{\mu})\right),
\end{equation}

where the standard deviation is computed across the synaptic index
$i$ and $\eta_{1}^{0}$ is a constant. 

\subsubsection*{Mini-batches}

For the capacity simulations, we always use full-batch in the SGD
update, so it is in fact simply gradient descent. However, in the
case of generalization, we find that training with mini-batches improves
the generalization performance, since it acts as an source of stochasticity
during training. In main text Fig.5 we use mini-batch size $B=0.8P$
($80\%$ of examples are used for each SGD update).

When we vary teacher's noise level, we find that scaling $B$ with
$\gamma$ improves the quality of the solutions, as measured by the
generalization performance (or equivalently, the teacher-student overlap).
Generally, the more noisy the teacher is, the smaller the mini-batches
should be. This is because smaller mini-batch size corresponds to
higher stochasticity, which helps overcoming higher teacher noise. 

\subsubsection*{Parameters}

All the capacity simulations are performed with the following parameters
$N=1000,\eta_{1}^{0}=0.01,\eta_{2}=0.6,t_{max}=10000,$ where $t_{max}$
is the maximum number of iterations of the DisCo-SGD algorithm. 

All results are averaged over 300 realizations.

In main text Fig.4, the experimental \cite{levy2012spatial} parameters
are $g_{E}=45.8\%,\sigma_{E}=0.833,\sigma_{I}=0.899$.

In main text Fig.5(a): We show the teacher-student overlap as a function
of $\alpha$. Dots are simulations performed with series of student
distribution from $\sigma_{s}=0.1$ to $\sigma_{s}=1.4$, where the
teacher distribution sits in the middle of this range, $\sigma_{t}=0.7$.
Each such simulation is performed with fixed $\sigma_{s}$ and varying
load $\alpha\in[0.05,2.5].$ In main text Fig.5(b): we show the empirical
weight distributions found by unconstrained perceptron learning for
$\alpha\in[0.05,10]$. In main text Fig.5(c) we show optimal student
distribution for $\alpha\in[0.05,2.5]$. Note that optimal prior learning
approaches the teacher distribution much faster than unconstrained
learning. 

All the generalization DisCo-SGD simulations are performed with the
same parameter as in the capacity DisCo-SGD simulations, but with
two additional parameter: teacher's noise level $\gamma$ and SGD
mini-batch size $B$. 

For the simulations in Fig.\ref{appfig:vary_noise} we use

$\gamma=0.4,B=0.2P;\gamma=0.55,B=0.4P;\gamma=0.7,B=0.6P;\gamma=0.85,B=0.8P;\gamma=1.0,B=P$
(noiseless case).

\subsection{Replica symmetry breaking}

\label{app:Replica-symmetry-breaking}

\subsubsection{Bimodal distributions}

In deriving the capacity formula, we have assumed replica-symmetry
(RS). It is well-known that replica-symmetry breaking occurs in the
Ising perceptron \cite{penney1993weight,bouten1998learning}, so it
is natural to ask to what extent our theory holds when approaching
the Ising limit. Let's consider a bimodal distribution with a mixture
of two normal distributions with non-zero mean centered around zero,

\[
p(w)=\frac{1}{2}\mathcal{N}(-\mu,\sigma)+\frac{1}{2}\mathcal{N}(\mu,\sigma)
\]
The second moment constraint requires $\mu^{2}+\sigma^{2}=1$.

We can gradually decrease the Gaussian width $\sigma$, or equivalently
$\mu=\sqrt{1-\sigma^{2}}$ (which we call `separation' in the following)
and compare the capacity theoretically predicted by the RS theory
and numerically found by the DisCo-SGD algorithm.

\begin{figure}
\centering{}\includegraphics[scale=0.32]{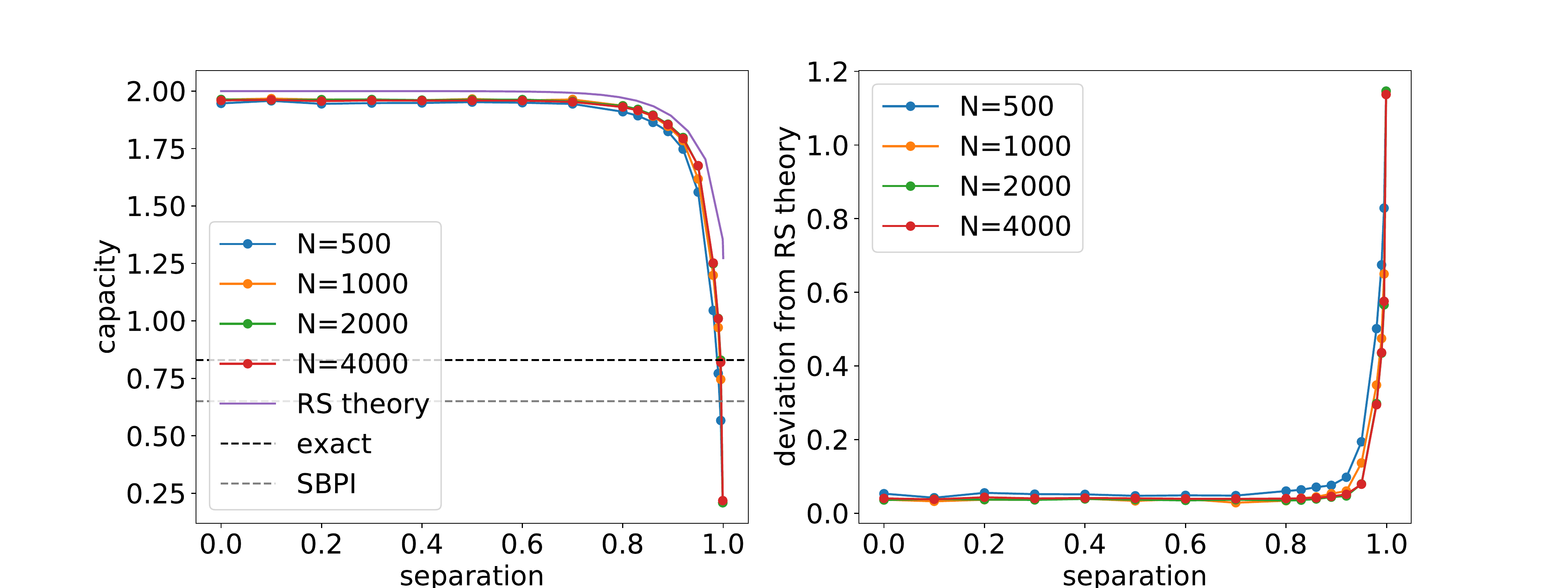}\caption{\label{appfig:replica_symmetry_breaking}Left: Capacity as a function
of separation for different size perceptrons. Dots are from DisCo-SGD
simulations and the `RS theory' line is from our theory. Exact values
for Ising perceptron and state-of-the-art numerical values are included
as well. Right: Deviation from the RS theory as a function of separation.
This is the same as subtracting the simulation values from the theoretical
predictions in the left figure. }
\end{figure}

In Fig.\ref{appfig:replica_symmetry_breaking} we can see that the
simulation agrees well with the RS theory until one gets very close
to the Ising limit ($\mu=1$). To understand finite size effects,
we extrapolate to the infinite size limit ($N\to\infty$) in Fig.\ref{appfig:finite_size_effects},
and found that the deviation from RS theory has a sharp transition
near $\mu=1$, marking the breakdown of the RS theory.

\begin{figure}
\centering{}\includegraphics[scale=0.32]{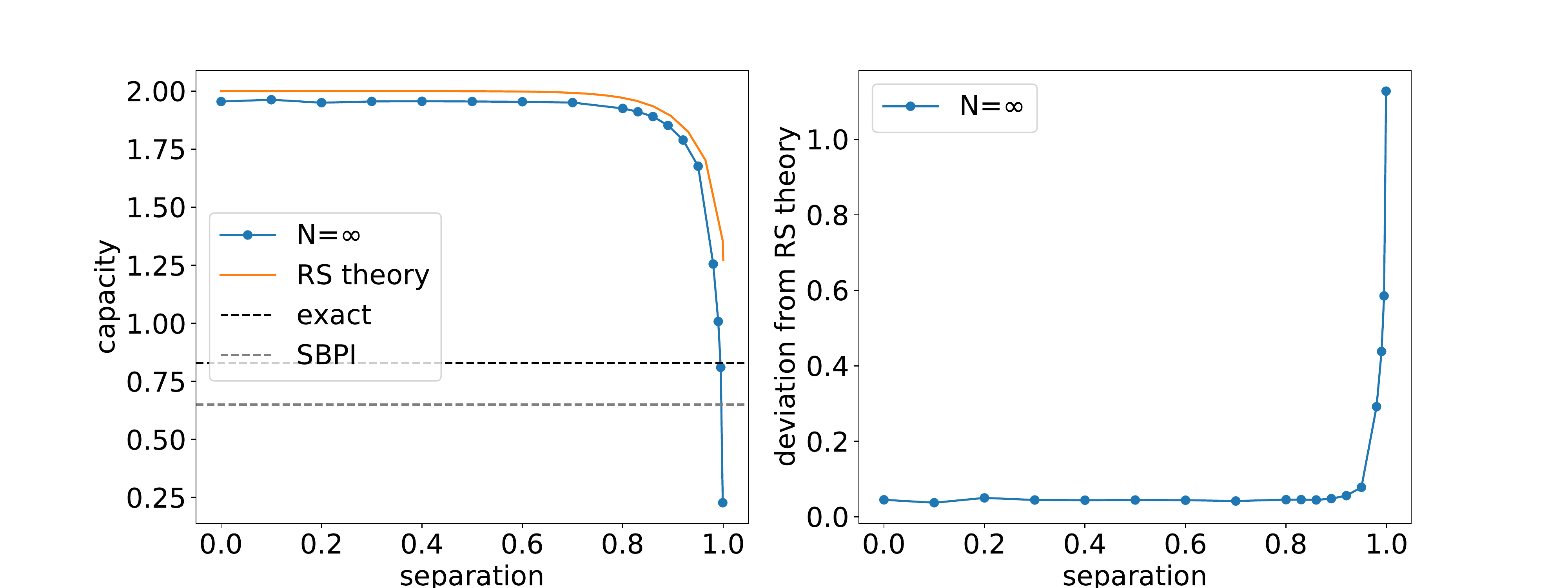}\caption{\label{appfig:finite_size_effects}Finite size effects. Left/Right:
we extrapolate simulation values in Fig.\ref{appfig:replica_symmetry_breaking}
Left/Right to infinite $N$. }
\end{figure}

\subsubsection*{Ising perceptron}

It is also interesting to compare our distribution-constrained RS
theory to the unconstrained RS theory. In this Ising limit,

\begin{equation}
q(w)=\frac{1}{2}\delta(w-1)+\frac{1}{2}\delta(w+1),
\end{equation}

and CDF
\begin{equation}
Q(w)=\frac{1}{2}\Theta(w-1)+\frac{1}{2}\Theta(w+1).
\end{equation}

Equating $Q(w)$ with the normal CDF $P(x)$ and solve for $w(x)$,
we find $w(x)=\text{sgn}(x)$. Then $dw/dx=2\delta(x)$ and $\left\langle \frac{dw}{dx}\right\rangle _{x}=\frac{2}{\sqrt{2\pi}}$.
Therefore,

\begin{equation}
\lim_{Ising}\alpha_{c}(\kappa=0)=\frac{4}{\pi},
\end{equation}

which is exactly the same as the prediction from the unconstrained
RS theory \cite{penney1993weight,bouten1998learning}. In contrast,
the exact capacity of Ising perceptron with replica-symmetry breaking
is $\alpha_{c}\approx0.83$. For comparison, we have included these
values in Fig.\ref{appfig:finite_size_effects}(a), as well as the
capacity found by the state-of-the-art supervised learning algorithm
(Stochastic Belief Propagation, SBPI \cite{baldassi2007efficient})
for Ising perceptron.

\subsubsection{Sparse distributions}

For a teacher with sparse distribution, $p(w_{t})=(1-\rho)\delta(w_{t})+\text{\ensuremath{\frac{\rho}{\sqrt{2\pi}\sigma_{t}w_{t}}\exp\left\{ -\frac{(\ln w_{t}-\mu_{t})^{2}}{2\sigma_{t}^{2}}\right\} }. }$We
found that the simulations start to deviate from the theory, and the
reason might be due to replica symmetry breaking. In Fig.\ref{appfig:sparse_teacher},
we use the optimal prior learning paradigm similar to main text Fig.5.
We see that our RS theory no longer gives accurate prediction of overlap
in this case.

\begin{figure}
\centering{}\includegraphics[scale=0.3]{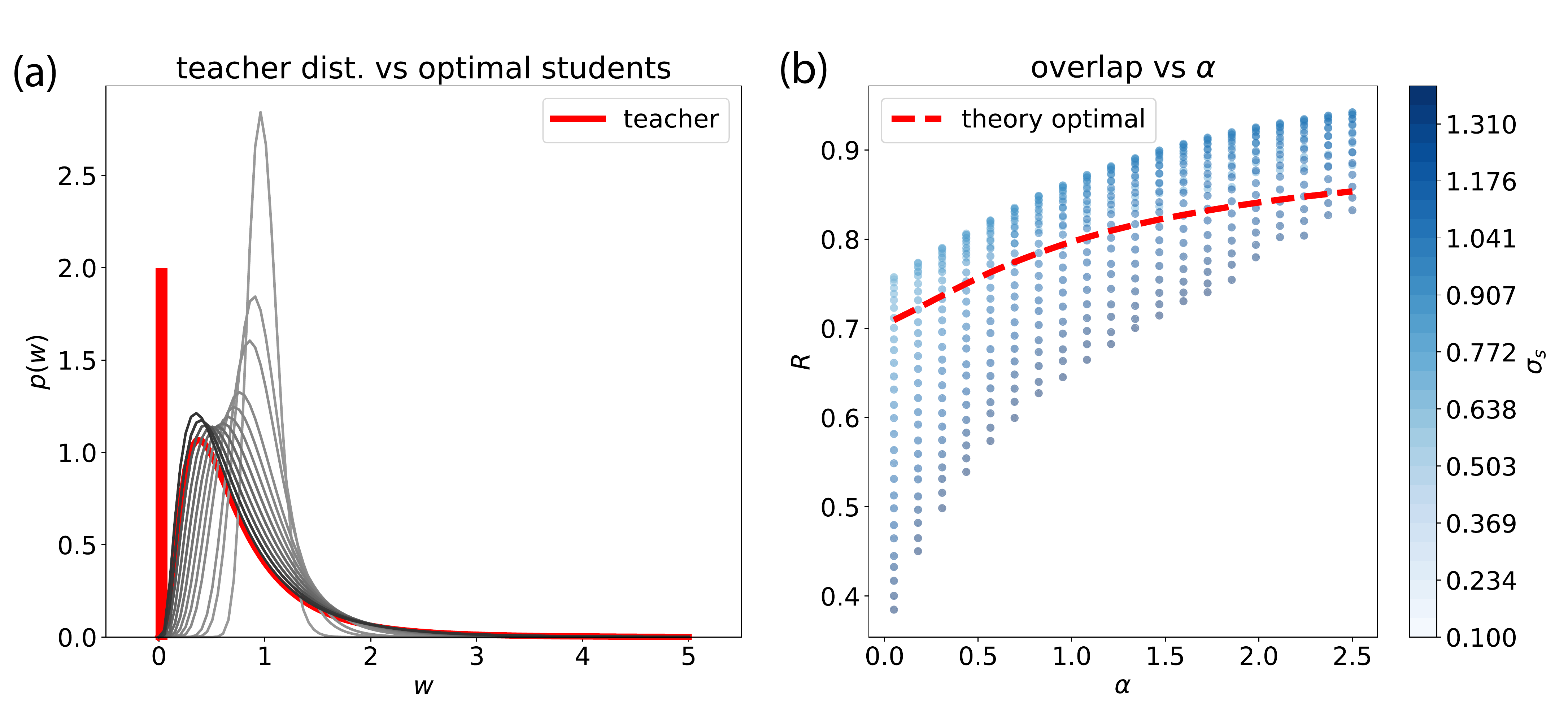}\caption{\label{appfig:sparse_teacher}Optimal student prior distribution as
a function of $\alpha$. (a) Gray curves correspond to a series of
optimal student distributions as a function of $\alpha$, with the
darker color representing larger $\alpha$. Red is teacher distribution.
(b) Overlap as a function of $\alpha$ for different student priors.
Red dashed line is the optimal overlap calculated from our replica-symmetric
theory. Dots are from DisCo-SGD simulations. For the same $\alpha$,
different color dots represent different overlaps obtained from simulations
with different $\sigma_{s}$. }
\end{figure}

\chapter{Generative modeling by feedforward neural networks}
\label{vae}
\section{Introduction}

Although data sampled from the natural world appear to be high-dimensional, their variations can usually be explained using a much smaller number of latent factors. Both biological and artificial information processing systems exploit such structure and learn explicit representations that are faithful to data generative factors, known commonly as disentangled representations \cite{bengio2013representation}.  For example, sparse coding, an influential model of the primary visual cortex, proposes that the visual cortex neurons are coding for latent variables of natural scenes: oriented edges \cite{olshausen1997sparse}. A very popular method of extracting latent variables is by using the bottleneck neurons of deep autoencoders \cite{hinton1994autoencoders,alemi2018information}. 
In this paper, we examine unsupervised learning of disentangled representations  in the context of variational inference and a generalization of the Variational Autoencoder (VAE)  \cite{kingma2013auto}, $\beta$-VAE, developed specifically for disentangled representation learning \cite{higgins2017beta}.


We will adopt a probabilistic framework for latent-variable modeling of data \cite{kingma2019introduction}, where a generative model $p_{\btheta}(\x,\z)$ for data $\x$ and latent variables $\z$ is assumed:
\begin{align}
    \label{eqn:generative}
    p_{\btheta}(\x,\z) = \ p_{\btheta}(\x|\z)p(\z).
\end{align}
Here 
$\btheta$ denotes the parameters of our model, $p_{\btheta}(\x|\z)$ models the stochastic process that generates the data given the latent variables, and $p(\z)$ is the prior on the latent variables. An interpretable and common choice for $p(\z)$, and the subject of our paper, is a factorized distribution $p(\z) = \prod_{i=1}^k p_i(z_i)$, which implies statistical independence. Examples of models with independent priors include popular methods such as Independent Component Analysis \cite{hyvarinen2000independent,khemakhem2019variational} and Principal Component Analysis \cite{tipping1999probabilistic}.


While a common definition of learning disentangled representations has yet to be agreed upon \cite{bengio2013representation,kingma2013auto,locatello2018challenging,higgins2018towards}, extracting statistically independent latent factors is a natural choice \cite{bengio2013representation,hyvarinen2000independent} and is the definition we will adopt. Such a representation is efficient in that it carries no redundant information \cite{dayan2001theoretical}, and at the same time sufficient information to generate the data. 

In our probabilistic framework, the model posterior distribution $p_{\btheta}(\z|\x)$ allows inference of true latent variables. In principle, this could be used to form disentangled representations. However, model posterior is often intractable \cite{kingma2019introduction}, and variational methods are used to estimate it.

We focus on a state-of-the-art variational inference method for learning disentangled representations, $\beta$-VAE \cite{higgins2017beta}. The $\beta$-VAE training objective includes a hyperparameter, $\beta$, encapsulating the original VAE \cite{kingma2013auto} as a special case with choice $\beta = 1$. When $\beta$ is larger than unity, {\it conditional} independence of the learned representations at the bottleneck layer are enforced, corresponding to a conditional independence assumption on data generating latent variables, i.e. $p(\z|\x) = \prod_{i}p_i(z_i|\x)$ \cite{higgins2017beta}. However, as pointed above, a more natural assumption on latents is full statistical independence. Further, statistically independent latents are in general not conditionally independent. Given the popularity of VAEs in representation learning, it is important to understand the role of the $\beta$ hyperparameter in learning disentangled (statistically independent) latent variables.

Our main contributions are as follows:
\begin{enumerate}
    \item We provide general results about variational Bayesian inference in  $\beta$-VAE. Specifically, we prove that the $\beta$-VAE objective is non-increasing with increasing $\beta$, leading to worse reconstruction performance but more conditionally independent  representations.  Further, we argue that  latent variable inference performance generally tends to be non-monotonic in $\beta$.
    \item We introduce an analytically tractable model for $\beta$-VAE, specializing to statistically independent latent generative factors. 
    We analytically calculate the optimality conditions for this model, and numerically find that there is an optimal $\beta$ for the best inference of latent variables.
    \item We test our insights from the general theorems and the analytically tractable model using a realistic $\beta$-VAE architecture, using a synthetic MNIST dataset. Simulations agree well with our theory. 
\end{enumerate}

The rest of this paper is organized as follows. In Section \ref{bVAE}, we provide a review of variational inference and $\beta$-VAE. In Section \ref{general}, we  prove several theorems about variational inference in the context of $\beta$-VAE. In Section \ref{analytical}, we introduce our analytical results. In Section \ref{numerical}, we test our insights from the general theorems and the tractable models using a $\beta$-VAE architecture on a synthetic MNIST dataset. Finally, in Section VI we discuss our results and present our conclusions.

\section{Variational Inference and $\beta$-VAE}\label{bVAE}

Inference of latent variables in probabilistic models is often an intractable calculation \cite{kingma2013auto,kingma2019introduction}. 
%
%
Variational methods instead optimize over a set of tractable distributions, $q_{\bphi}(\z|\x)$, that best approximates $p_{\btheta}(\z|\x)$. We will refer to $q_{\bphi}(\z|\x)$ as the inference model. The difference between the two distributions can be quantified using the Kullback-Leibler (KL) divergence, which we call Model Inference Error (MIE):
\begin{align}
\label{eqn:model inference error}
   {\rm MIE} \equiv \mathbb{E}_{p(\mathbf{x})}\left[ D_{KL}(q_{\bphi}({\bf z}|{\bf x}) || p_{\theta}({\bf z}|{\bf x} ) )\right] . 
\end{align}
We distinguish between MIE and the True Inference Error (TIE),
\begin{align}
\label{eqn:true inference error}
   {\rm TIE} \equiv \mathbb{E}_{p(\mathbf{x})}\left[ D_{KL}(q_{\bphi}({\bf z}|{\bf x}) || p_{\text{g-t}}({\bf z}|{\bf x} ) )\right],
\end{align}
which can only be known when one has access to the underlying `ground-truth' data generative process and the ground-truth posterior, $p_{\text{g-t}}({\bf z}|{\bf x} )$.

VAEs fit the parameters of the probabilistic model and the variational distribution simultaneously.
A key identity in doing so is \cite{jordan1999introduction} 
\begin{align}
\label{eqn:main}
        \ln p_{\btheta}({\bf x}) &- D_{KL}(q_{\bphi}(\z|\x) || p_{\btheta}(\z|\x ) ) \nonumber \\
        &= \mathbb{E}_{q_{\bphi}(\z|\x)}\left[\ln p_{\btheta}(\x|\z)\right] -D_{KL}(q_{\bphi}(\z|\x) \| p(\z)).
\end{align}
 Model fitting is done by maximizing the data log-likelihood, $\ln p_{\btheta}(\x)$, under model parameters. Because the KL divergence is non-negative, the right hand side of \eqref{eqn:main} serves as a lower bound for $\ln p_{\btheta}(\x)$ and is called the Evidence Lower Bound (ELBO)
\begin{align}\label{cost}
    &{\rm ELBO}(\btheta,\bphi) \nonumber \\ &\qquad \equiv \mathbb{E}_{q_{\bphi}(\z | \x)}\left[\log p_{\btheta}(\x | \z)\right]- D_{KL}\left(q_{\bphi}(\z|\x) \| p(\z)\right).
\end{align}
%
%
%
VAE parameterizes the distributions $p_{\btheta}(\x|\z)$ and $q_{\bphi}(\z|\x)$ with neural networks, and maximizes ELBO as a proxy for maximizing the data likelihood.

The neural network realization of the $p_{\btheta}(\x|\z)$ is referred to as a decoder \cite{kingma2013auto}. Once the VAE is trained, the decoder can be used as to generate new samples from the model data distribution \cite{kingma2013auto,doersch2016tutorial}. The term $\mathbb{E}_{q_{\bphi}(\z | \x)}\left[\log p_{\btheta}(\x | \z)\right]$ measures the reconstruction performance of the generative model. We will refer to it as the reconstruction objective.

The neural network realization of the inference model is referred to as an encoder \cite{kingma2013auto}. Its outputs constitute a bottleneck layer and  represent inferred latent variables. Note that the MIE calculated from this representation appears on the left hand side of \eqref{eqn:main}. 

$\beta$-VAE is an extension of the traditional VAE, where an extra, adjustable hyperparameter $\beta$ is placed in the training objective: 
\begin{align}\label{bVAEobj}
    \mathcal{L}(\btheta,\bphi;\beta) = \mathbb{E}_{q_{\bphi}(\z | \x)}\left[\log p_{\btheta}(\x | \z)\right]-\beta D_{K L}\left(q_{\bphi}(\z | \x) \| p(\z)\right).
\end{align}
Specifically, when  $\beta = 1$, the  $\beta$-VAE is equivalent to VAE and ${\rm ELBO}(\btheta,\bphi) = \mathcal{L}(\btheta,\bphi;1)$. 

Higher values of $\beta$ emphasizes the KL divergence between the inference model $q_{\bphi}(\z|\x)$ and the independent prior $p(\z)$ in the objective \eqref{bVAEobj}. Smaller values of the KL divergence favor a conditionally independent inference model. This can be used to learn disentangled representations of conditionally independent latent variables, whose probability distributions factorize when conditioned on data .

However, as alluded to in our introduction, in many cases of interest and application \cite{Huang_2018_ECCV,lample2017fader,karras2019style}, latent variables are conditionally dependent while being independent \cite{hyvarinen2000independent},\cite{tipping1999probabilistic}. We will encounter an analytically tractable case in Section \ref{analytical}. In such cases, it is not clear if a $\beta$ different than 1 helps learning a disentangled representation which extracts statistically independent latent factors. Our goal in the remaining of this paper is to examine this case analytically and numerically.

For convenience, we also attach a table of terms and corresponding mathematical expressions used throughout the paper (Table \ref{table:1}).

\begin{table}[h!]
\centering
\scalebox{1.02}{
\def\arraystretch{1.85}
\begin{tabular}{ |c|c|c| } 
\hline
\textbf{Term} & \textbf{Mathematical Expression} \\
\hline
Prior & $p(\z)$ \\ 
\hline
Model Posterior & $p_{\btheta}(\z|\x)$ \\ 
\hline
Ground-Truth Posterior & $p_{\text{g-t}}(\z|\x)$ \\ 
\hline
Inference Model & $q_{\bphi}(\z|\x)$ \\ 
\hline
Data Log-Likelihood & $\log p_{\btheta}(\x)$ \\ 
\hline
Reconstruction Objective & $\mathbb{E}_{q_{\bphi}(\z | \x)}\left[\log p_{\btheta}(\x | \z)\right]$ \\ 
\hline
\shortstack{Conditional Independence Loss} & $D_{K L}\left(q_{\bphi}(\z | \x) \| p(\z)\right)$  \\ 
\hline
$\rm{MIE}$ & $\mathbb{E}_{p(\mathbf{x})}[ D_{KL}(q_{\bphi}({\bf z}|{\bf x}) || p_{\theta}({\bf z}|{\bf x} ) )]$  \\ 
\hline
$\rm{TIE}$ & $\mathbb{E}_{p(\mathbf{x})}[ D_{KL}(q_{\bphi}({\bf z}|{\bf x}) || p_{\text{g-t}}({\bf z}|{\bf x} ) )]$  \\ 
\hline
\shortstack{Evidence Lower Bound \\ (ELBO)}  & \shortstack{$\mathbb{E}_{q_{\bphi}(\z | \x)}\left[\log p_{\btheta}(\x | \z)\right]$ \\ $- D_{K L}\left(q_{\bphi}(\z | \x) \| p(\z)\right)$ }\\ 
\hline
\end{tabular}}
\vskip 0.1in
\caption[Table caption text]{Table of terms and corresponding mathematical expressions.}
\label{table:1}
\end{table}

\section{How $\beta$ Affects Model Performance and Inference of Latent Variables}\label{general}

In this section, we provide general statements on the effect of the $\beta$ parameter on the representation learning and the generative functions of  $\beta$-VAE. We do this by proving propositions about how various terms in the identity \eqref{eqn:main} change as a function of $\beta$. Our first two propositions imply that increasing $\beta$ worsens the quality of reconstructed samples while improving conditional disentangling. While these points have been shown in simulations \cite{higgins2017beta,burgess2018understanding}, here we provide analytical statements. Our last proposition gives a handle on understanding behavior of MIE through ELBO.


In the following, we will denote optimal parameters of a $\beta$-VAE that maximizes the objective \eqref{bVAEobj} by $\btheta^*$ and $\bphi^*$. They are given as a solution to 
\begin{align}\label{opt}
    \frac{\partial \mathcal{L}}{\partial \btheta} = \bm{0}, \qquad \frac{\partial \mathcal{L}}{\partial \bphi} = \bm{0}.
\end{align}
We denote the value of the optimal objective by
\begin{align}
    \mathcal{L}^*(\beta) \equiv \mathcal{L}(\btheta^*(\beta),\bphi^*(\beta),\beta),
\end{align}
and the value of ELBO at the optimal point by
\begin{align}
{\rm ELBO}^*(\beta) \equiv {\rm ELBO}  (\btheta^*(\beta),\bphi^*(\beta)).  
\end{align}

Our first proposition concerns the behavior of $\mathcal{L}^*(\beta)$ as a function of $\beta$.

\begin{proposition}\label{prop1}
The optimal value of the $\beta$-VAE objective, $\mathcal{L^*}(\beta)$, is non-increasing with increasing $\beta$:
\begin{align}
    \frac{\partial \mathcal{L^*}(\beta) }{\partial \beta} = - D_{K L}\left(q_{\bphi^*}(\mathbf{z} | \mathbf{x}) \| p(\mathbf{z})\right) \leq 0.
\end{align}
\end{proposition}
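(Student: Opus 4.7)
The plan is to prove Proposition \ref{prop1} by a direct application of the envelope theorem. The objective $\mathcal{L}(\btheta,\bphi;\beta)$ depends on $\beta$ both \emph{explicitly} through the coefficient multiplying the KL term in \eqref{bVAEobj}, and \emph{implicitly} through the $\beta$-dependence of the optimizers $\btheta^*(\beta),\bphi^*(\beta)$. The envelope theorem tells us that when we differentiate the value function $\mathcal{L}^*(\beta)=\mathcal{L}(\btheta^*(\beta),\bphi^*(\beta);\beta)$ with respect to $\beta$, the contributions from the implicit dependence vanish at the optimum, leaving only the explicit partial derivative.

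Concretely, my first step is to apply the chain rule:
\begin{equation}
\frac{d\mathcal{L}^*(\beta)}{d\beta} = \frac{\partial \mathcal{L}}{\partial \btheta}\bigg|_{(\btheta^*,\bphi^*,\beta)}\!\!\cdot\frac{d\btheta^*}{d\beta} + \frac{\partial \mathcal{L}}{\partial \bphi}\bigg|_{(\btheta^*,\bphi^*,\beta)}\!\!\cdot\frac{d\bphi^*}{d\beta} + \frac{\partial \mathcal{L}}{\partial \beta}\bigg|_{(\btheta^*,\bphi^*,\beta)}.
\end{equation}
Next, I would invoke the first-order optimality conditions \eqref{opt}, which force the first two terms to vanish identically. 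The third term is obtained by differentiating \eqref{bVAEobj} directly with respect to $\beta$, yielding $-D_{KL}(q_{\bphi^*}(\z|\x)\|p(\z))$. Finally, non-positivity follows since the KL divergence is non-negative, which establishes the stated inequality and therefore the non-increasing monotonicity.

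The only real obstacle I anticipate is a regularity caveat: the envelope-theorem argument requires that $\btheta^*(\beta),\bphi^*(\beta)$ be differentiable in $\beta$, which in turn demands a smooth, isolated maximizer and a non-degenerate Hessian (so that the implicit function theorem applies). I would either assume these standard smoothness conditions explicitly, or note that one can instead give a coordinate-free proof by comparing $\mathcal{L}^*(\beta_1)$ and $\mathcal{L}^*(\beta_2)$ through the inequalities $\mathcal{L}^*(\beta_2) \geq \mathcal{L}(\btheta^*(\beta_1),\bphi^*(\beta_1);\beta_2)$ and its counterpart, which together yield
\begin{equation}
\mathcal{L}^*(\beta_1) - \mathcal{L}^*(\beta_2) \geq (\beta_2-\beta_1)\, D_{KL}(q_{\bphi^*(\beta_1)}(\z|\x)\|p(\z)) \geq 0 \quad \text{for } \beta_2 \geq \beta_1,
\end{equation}
sidestepping any differentiability assumption while still recovering the derivative formula in the limit. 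This two-pronged presentation (envelope identity plus the finite-difference version) should cover both the infinitesimal statement and the global monotonicity claim.
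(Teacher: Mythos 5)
Your main argument is exactly the paper's proof: chain rule on $\mathcal{L}^*(\beta)=\mathcal{L}(\btheta^*(\beta),\bphi^*(\beta);\beta)$, kill the implicit-dependence terms via the first-order optimality conditions \eqref{opt}, and conclude from the nonnegativity of the KL divergence. The regularity caveat is a reasonable thing to note, and your coordinate-free fallback is a nice complement (the paper uses an argument of this flavor to prove Proposition~\ref{prop2}), but the displayed finite-difference inequality as written has the wrong subscript. Using optimality of $(\btheta^*(\beta_1),\bphi^*(\beta_1))$ at $\beta_1$ applied to the candidate $(\btheta^*(\beta_2),\bphi^*(\beta_2))$ gives
\begin{equation}
\mathcal{L}^*(\beta_1) - \mathcal{L}^*(\beta_2) \;\geq\; (\beta_2-\beta_1)\, D_{KL}\!\left(q_{\bphi^*(\beta_2)}(\z|\x)\,\|\,p(\z)\right) \;\geq\; 0 \qquad (\beta_2\geq\beta_1),
\end{equation}
i.e., the KL term should be evaluated at the optimizer for $\beta_2$, not $\beta_1$; the bound you wrote (with $\bphi^*(\beta_1)$) comes from the other direction of optimality and gives an \emph{upper} bound on $\mathcal{L}^*(\beta_1)-\mathcal{L}^*(\beta_2)$, which does not by itself yield nonnegativity.
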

\begin{proof}
Follows from an application of the chain rule, the optimality conditions \eqref{opt}, and the nonegativity of the KL-divergence:
\begin{align}
    \frac{\partial \mathcal{L^*} }{\partial \beta} &= \left.\left(\frac{\partial \mathcal{L} }{\partial \btheta}\cdot \frac{\partial \btheta }{\partial \beta} + \frac{\partial \mathcal{L} }{\partial \bphi}\cdot \frac{\partial \bphi }{\partial \beta} + \frac{\partial \mathcal{L} }{\partial \beta}\right)\right|_{\btheta =\btheta^*, \bphi =\bphi^* } \nonumber \\
    &= -D_{KL}(q_{\bphi^*}({\bf z}|{\bf x}) || p({\bf z})) \leq 0.
\end{align}
\end{proof}

The next proposition shows how the two terms in $\mathcal{L}^*$ change with $\beta$. 

\begin{proposition}\label{prop2}  The KL divergence between the inference model and the prior is non-increasing with increasing $\beta$:
\begin{align}\label{eq:qp}
\frac{d}{d\beta} D_{KL}(q_{\phi^*}({\bf z}|{\bf x}) || p({\bf z})) \leq 0.
\end{align}
Together with Proposition \eqref{prop1}, this implies that
\begin{align}\label{eq:re}
\frac{d \, \mathbb{E}_{q_{\bphi^*}(\z | \x)}\left[\log p_{\btheta^*}(\x | \z)\right]}{d\beta} \leq 0
\end{align}
\end{proposition}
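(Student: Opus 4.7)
The plan for Proposition 2 is to prove the two statements in sequence, leveraging Proposition 1 as the linking step for the second.

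For the first inequality, the natural strategy is a revealed-preference / monotone-selection argument that avoids any differentiability assumption. Pick two values $\beta_1 < \beta_2$ and denote the optimizers by $(\btheta_i^*,\bphi_i^*)$, with associated reconstruction values $R_i^* = \mathbb{E}_{q_{\bphi_i^*}}[\log p_{\btheta_i^*}(\x|\z)]$ and KL values $K_i^* = D_{KL}(q_{\bphi_i^*}(\z|\x)\|p(\z))$. By optimality, $R_1^* - \beta_1 K_1^* \ge R_2^* - \beta_1 K_2^*$ and $R_2^* - \beta_2 K_2^* \ge R_1^* - \beta_2 K_1^*$. Adding these two inequalities cancels the $R$'s and gives $(\beta_2-\beta_1)(K_1^* - K_2^*) \ge 0$, hence $K_2^* \le K_1^*$. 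Taking $\beta_2 \downarrow \beta_1$ (assuming enough regularity for a derivative to exist) yields $\tfrac{d}{d\beta}D_{KL}(q_{\bphi^*}(\z|\x)\|p(\z)) \le 0$, which is \eqref{eq:qp}.

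For the second inequality, the key is a simple envelope-type identity. Write $\mathcal{L}^*(\beta) = R^*(\beta) - \beta K^*(\beta)$, so by the ordinary chain rule
\begin{equation}
\frac{d\mathcal{L}^*}{d\beta} = \frac{dR^*}{d\beta} - K^*(\beta) - \beta\,\frac{dK^*}{d\beta}.
\end{equation}
Proposition 1 computes the same derivative via the optimality conditions $\partial_{\btheta}\mathcal{L} = \partial_{\bphi}\mathcal{L} = 0$ and gives $\frac{d\mathcal{L}^*}{d\beta} = -K^*(\beta)$. Equating the two expressions kills the $-K^*(\beta)$ terms and yields the clean identity
\begin{equation}
\frac{dR^*}{d\beta} = \beta\,\frac{dK^*}{d\beta}.
\end{equation}
Combined with \eqref{eq:qp} and $\beta \ge 0$, this gives $\tfrac{dR^*}{d\beta} \le 0$, which is \eqref{eq:re}.

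I do not expect a serious technical obstacle: both parts are short and the second is essentially algebraic once Proposition 1 is in hand. The only mild subtlety is justifying the differentiability of $\btheta^*(\beta)$, $\bphi^*(\beta)$, and hence of $R^*, K^*, \mathcal{L}^*$, in $\beta$; this is already implicitly assumed in the statement of Proposition 1, and the monotonicity in the first step in fact does not need it (it follows from the inequality $(\beta_2-\beta_1)(K_1^*-K_2^*)\ge 0$ alone). Thus the overall proof should be about half a page.
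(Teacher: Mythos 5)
Your proof is correct and follows essentially the same route as the paper: both arguments compare the optimizers at two values $\beta_1 < \beta_2$ and exploit optimality at each point to conclude monotonicity of the KL term. The only cosmetic difference is that the paper phrases this as a proof by contradiction (assuming $B(\bkappa_2)>B(\bkappa_1)$ and deriving $O(\bkappa_1,\beta_1)<O(\bkappa_2,\beta_1)$) while you add the two optimality inequalities directly; for \eqref{eq:re} you also write out explicitly the envelope identity $dR^*/d\beta=\beta\,dK^*/d\beta$ that the paper leaves implicit in the remark that \eqref{eq:re} ``follows from \eqref{eq:qp} and Prop.~\ref{prop1}.''
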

\begin{proof}
See Section \ref{pprop2}.
\end{proof}

The next proposition is about the behavior of ${\rm ELBO}^*$.

\begin{proposition}\label{prop3}
${\rm ELBO}^*$ is maximized at $\beta = 1$. 
%
\end{proposition}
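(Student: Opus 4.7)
The plan is to observe that the proposition is essentially a tautology once the definitions are carefully unpacked. By comparing the $\beta$-VAE objective in (\ref{bVAEobj}) with the ELBO in (\ref{cost}), one sees that $\mathcal{L}(\btheta,\bphi;\beta=1) \equiv {\rm ELBO}(\btheta,\bphi)$ identically on the parameter space. Consequently, the pair $(\btheta^*(1),\bphi^*(1))$, which is defined as the maximizer of $\mathcal{L}(\cdot\,;\,1)$ via the first-order conditions (\ref{opt}), is by construction a maximizer of ${\rm ELBO}$ itself.

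The second step is to compare ${\rm ELBO}^*(\beta)$ with ${\rm ELBO}^*(1)$ for arbitrary $\beta$. For any $\beta \neq 1$, the parameters $(\btheta^*(\beta),\bphi^*(\beta))$ are chosen to optimize the \emph{different} functional $\mathcal{L}(\cdot\,;\,\beta)$ and therefore are in general not optimal for ${\rm ELBO}$. Applying the defining property of the global maximizer $(\btheta^*(1),\bphi^*(1))$,
\begin{equation}
    {\rm ELBO}^*(\beta) = {\rm ELBO}\bigl(\btheta^*(\beta),\bphi^*(\beta)\bigr) \;\leq\; {\rm ELBO}\bigl(\btheta^*(1),\bphi^*(1)\bigr) = {\rm ELBO}^*(1),
\end{equation}
which is exactly the claim. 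As a complementary sanity check, one may differentiate ${\rm ELBO}^*(\beta)$ via the chain rule: at $\beta = 1$, the partial derivatives $\partial {\rm ELBO}/\partial \btheta$ and $\partial {\rm ELBO}/\partial \bphi$ evaluated at $(\btheta^*(1),\bphi^*(1))$ vanish by the optimality conditions, confirming that $\beta = 1$ is a critical point.

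There is no genuine obstacle in this proof; the only point that deserves care is to state clearly that $(\btheta^*(\beta),\bphi^*(\beta))$ is assumed to be a global maximizer of $\mathcal{L}(\cdot\,;\,\beta)$ rather than merely a critical point, so that the inequality comparing ${\rm ELBO}^*(\beta)$ to ${\rm ELBO}^*(1)$ is valid. If one only has stationarity, then the conclusion is that $\beta = 1$ is a critical value of ${\rm ELBO}^*(\beta)$, but not necessarily its global maximizer; this caveat is implicit in the Propositions' use of $(\btheta^*,\bphi^*)$.
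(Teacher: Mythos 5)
Your proof is correct, but it takes a genuinely different and more elementary route than the paper. The paper establishes Proposition~\ref{prop3} by computing the derivative $\frac{d\,{\rm ELBO}^*(\beta)}{d\beta}$ via the identity $\mathcal{L} = {\rm ELBO} + (1-\beta)D_{KL}(q_{\bphi}(\z|\x) \| p(\z))$, applying the chain rule together with Proposition~\ref{prop1}, and obtaining $\frac{d\,{\rm ELBO}^*(\beta)}{d\beta} = (\beta-1)\frac{d}{d\beta}D_{KL}(q_{\bphi^*}(\z|\x)\|p(\z))$, whose sign flips at $\beta=1$ by Proposition~\ref{prop2}. Your argument bypasses both earlier propositions and all calculus: it simply observes that $\mathcal{L}(\cdot\,;\,1) \equiv {\rm ELBO}(\cdot)$ as functionals, so the global maximizer $(\btheta^*(1),\bphi^*(1))$ of $\mathcal{L}(\cdot\,;\,1)$ is the global maximizer of ${\rm ELBO}$ over the whole parameter space, and restricting attention to the one-parameter family $\{(\btheta^*(\beta),\bphi^*(\beta))\}_\beta$ only loses value when $\beta \neq 1$. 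This is a clean ``max-of-max'' argument. What the paper's derivative route buys that yours does not is a sharper qualitative picture: it shows that ${\rm ELBO}^*(\beta)$ is nondecreasing on $\beta < 1$ and nonincreasing on $\beta > 1$, i.e., it establishes unimodality, not merely that $\beta=1$ is the global argmax. Your caveat about global versus first-order optimality is well placed and applies to the paper's proof too: Proposition~\ref{prop2} is proved via the $\argmax$ property, so the paper also implicitly requires $(\btheta^*(\beta),\bphi^*(\beta))$ to be global maximizers of $\mathcal{L}(\cdot\,;\,\beta)$ even though Eq.~\eqref{opt} only asserts stationarity.
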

\begin{proof} Note that by definition 
\begin{align}\label{lelbo}
\mathcal{L} = {\rm ELBO} + (1-\beta)D_{KL}(q_{\bphi}(\z|\x) || p({\bf z})).
\end{align}
By evaluating \eqref{lelbo} at $\btheta=\btheta^*$ and $\bphi=\bphi^*$, and the chain rule, we get:
\begin{align}
\frac{d\,{\rm ELBO^*(\beta)}}{d\beta}&= \frac{d}{d\beta} \left[\mathcal{L}^* -(1-\beta)D_{KL}(q_{\bphi^*}(\z|\x) || p({\bf z}))\right] \nonumber \\
&= (\beta-1) \frac{d}{d\beta}D_{KL}(q_{\bphi^*}({\bf z}|{\bf x}) || p({\bf z})).
\end{align}
The proposition follows from this result and \eqref{eq:qp}. 
\end{proof}

For simplicity of notation, we presented most of our formulas and propositions for a single data point. All our results generalize to the case where one averages over the data distribution $p(\x)$, or a finite training set. 

Inference of latent variables, measured by MIE, is affected by $\beta$ as well. In the $\beta \to \infty$ limit the inference model becomes more and more conditionally independent, deviating from the model posterior. Is the behavior monotonic? While MIE is not explicitly calculable, we can get a hint of its behavior by rearranging \eqref{eqn:main}, and evaluating it at the optimal $\beta$-VAE parameters:
\begin{align}\label{MIEELBO}
{\rm MIE}(\beta) = \Ep\left[\ln p_{\btheta^*(\beta)}(\x)-{\rm ELBO}^*(\beta)\right].
\end{align}
As reconstruction performance worsens with $\beta$, it is reasonable to expect that the data likelihood decreases with $\beta$. Because ELBO is non-monotonic with a maximum, even if the data log-likehood was monotonic with $\beta$, we can expect a non-monotonic behavior of MIE with an optimal value.  In the next section, we will see two specific examples of this.

\section{Analytical Results}\label{analytical}

In this section we demonstrate our general theory for two different analytically tractable cases. 

\subsection{$\beta$-VAE with a fixed decoder does not lead to better disentangling}
A simple case is when the decoder of the $\beta$-VAE is not trained. In our notation, this amounts to $\btheta$ being fixed. Then the $\beta$-VAE objective \eqref{bVAEobj} only trains the encoder network and the inference model, $q_{\bphi}(\z\|\x)$. We can deduce the behavior of MIE as a function of $\beta$ from \eqref{MIEELBO}. The data likelihood, $p_{\btheta}(\x)$, does not change as a function of training. ${\rm ELBO}^*$ is maximized at $\beta = 1$ from Proposition \ref{prop3}, which can be seen to apply to fixed $\btheta$. This means MIE is minimum at $\beta = 1$. In this case, $\beta=1$, or the original VAE is best at learning the true latent variables.

\subsection{Optimal $\beta$ values in an analytically tractable model}

Next, we present a tractable VAE model, in which we can explicitly calculate the $\beta$-dependence in every term in eq. \eqref{eqn:main}. 

We assume that our data $\x$ comes from mixing of ground truth latent variables (or sources) $\s \in \mathbb{R}^{k}$ through a mixing matrix $\A \in \mathbb{R}^{N \times k}$, then corrupted by noise $\bm{\eta} \in \mathbb{R}^{N}$, 
\begin{equation}
\label{source}
    \mathbf{x} = \mathbf{A} \mathbf{s} + \bm{\eta}.
\end{equation}
We assume $\s \sim \N(\bm{0},\I_k)$, $ \bm{\eta} \sim \N(\bm{0},\I_N)$. The data distribution is found to be,
\begin{equation}
\label{eqn:data}
    p(\x) = \N(\bm{0},\A\A^{\top}+\I_N) \equiv \N(\bm{0},\bSigma_{\x}).
\end{equation}
We denote a $d \times d$ identity matrix as $\I_d$. In this model we can calculate the ground-truth posterior exactly (see Section \ref{app:gtposterior} for details): 
    \begin{align}\label{eqn:true posterior}
        p_{\text{g-t}}(\s |\x) &= \mathcal{N}(\bmu_{\s|\x}, \bSigma_{\s|\x}), \nonumber \\
      \text{with}\;\;  \bmu_{\s|\x} &= (\A^{\top}\A + \mathbf{I}_k)^{-1}\A^{\top}\x \nonumber \\
      \text{and}\;\;  \bSigma_{\s|\x} &= (\A^{\top}\A + \mathbf{I}_k)^{-1}.
    \end{align}
Note that the covariance matrix of the posterior is non-diagonal. Even though the latent factors are statistically independent, when conditioned on data they are dependent. Therefore, we expect a non-trivial dependence of MIE and TIE on the hyperparameter $\beta$.


Our encoder $q_{\bphi}(\z|\x)$ contains a fully-connected layer $\{\W^{\mu},\bb^{\mu}\}$ with linear activation that codes for the mean $\bmu_{\z}$ of the latent variables $\z$, and a fully-connected layer $\{\W^{\sigma},\bb^{\sigma}\}$ with exponential activation that codes for the diagonal part of the covariance matrix $\bSigma_{\z}$. Given an input $\x \in \mathbb{R}^N$, we generate latent variables $\z\sim \N(\bmu_{\z},\bSigma_{\z}) \in \mathbb{R}^k$ by
\begin{align}
\label{eqn:musigma}
     \bmu_{\z} = \W^{\mu} \x + \bb^\mu, \quad \bSigma_{\z} = \diag ( \exp(\W^{\sigma} \x + \bb^\sigma)),
\end{align}
where the $\diag$ operation maps vectors in $\mathbb{R}^{k}$ to the diagonal of a diagonals matrix in $\mathbb{R}^{k\times k}$. The exponential nonlinearity in the definition of the covariance matrix acts elementwise and prevents negative covariances.

Our decoder consists of a single fully-connected layer $\{\D,\bb^{D}\}$ with linear activations. We assume the output $\y \in \mathbb{R}^N$ is normally distributed, $\y \sim \N (\D \z+\bb^D,\sigma_y^2 \I_N)$, where $\sigma_y^2$ is a hyperparameter. Without loss of generality, from now on we choose $\sigma^2_y =1$. 

The decoder defines  $p_{\btheta}(\x | \z)$. The full data likelihood can be calculated using the prior $p(\z) = \mathcal{N}(\bm{0},\I_k)$ through $p_{\btheta}(\x) = \int d\z\, p_{\btheta}(\x | \z)p(\z)$. With this setup, our decoder is fully capable of modeling the data generative process \eqref{eqn:data}, by choosing $\D = \A$, $\bb^D = \bm{0}$ and $\sigma^2_y = 1$. Any deviation from these parameters will be due to the encoder, or the inference model, deviating from the ground-truth distribution.

In order to solve this model, we integrate out data (i.e., performing $\mathbb{E}_{\x\sim p(\x)}[..]$, using eq. \eqref{eqn:data}) in the $\beta$-VAE objective in eq. \eqref{cost} to arrive at (see Section \ref{app:intoutdata} for details)
\begin{align}
\label{eqn:objective_nox}
	\mathcal{L}_{\beta} = &-\frac{1}{2} \bigg\{ \text{Tr} \bigg[ (\D \W^{\mu} - \mathbf{I}_N) \bSigma_{\x} (\D \W^{\mu} - \mathbf{I}_N)^{\top} \bigg] \nonumber \\
	&+ \beta \text{Tr} \bigg[ \W^{\mu} \bSigma_{\x} (\W^{\mu})^{\top} \bigg] \nonumber \\
	&+ \sum_i^k \left(\left[\D^{\top}\D\right]_{ii} + \beta\right) e^ {\frac{1}{2}\left[\W^{\sigma}\bSigma_{\x} (\W^{\sigma})^{\top}\right]_{ii} + b_i^{\sigma}} \nonumber \\
	&+ (\D \bb^{\mu} + \bb^D)^2 + \beta (\bb^{\mu})^2 - \beta \sum_i^k b^{\sigma}_i  \bigg\}.
\end{align}
We optimize over the network parameters, which amounts to setting the partial derivative of $\mathcal{L}_{\beta}$ with respect to $\{\W^{\mu},\bb^{\mu}, \W^{\sigma},\bb^{\sigma}, \D,\bb^{D}\}$ to zero. Upon simplifying, we find (see Section \ref{app:derivatives} for details)
\begin{equation}
\label{eqn:bmubD}
    \bb^{\mu} = \bb^{D}=0, 
\end{equation}{}
and the remaining equations are ($a=1,...,N;\; b=1,...,k$):
\begin{align}\label{eqn:equations}
	0 &= \left[\left( \D^{\top} (\D \W^{\mu} - \mathbf{I}_N) + \beta \W^{\mu} \right)\bSigma_{\x} \right]_{ab}, \nonumber \\
	0 &= \left[ (\D \W^{\mu} - \I_N) \bSigma_{\x} (\W^{\mu})^{\top} \right]_{ab} + D_{ab} e^{ \frac{1}{2}\left[\W^{\sigma}\bSigma_{\x} (\W^{\sigma})^{\top}\right]_{bb} + b^{\sigma}_b }, \nonumber \\
	\hspace{-1cm} 0 &= \left(\left[\D^{\top}\D\right]_{aa} + \beta \right) e^{ \frac{1}{2}\left[\W^{\sigma}\bSigma_{\x}\right]_{ab}W^{\sigma}_{ab} + b^{\sigma}_a} \left[\W^{\sigma}\bSigma_{\x}\right]_{ab}, \nonumber \\
	0 &= \left( \left[\D^{\top}\D\right]_{aa} + \beta \right) e^{  \frac{1}{2}\left[\W^{\sigma}\bSigma_{\x} (\W^{\sigma})^{\top}\right]_{aa} + b^{\sigma}_a } - \beta.
\end{align}
We can calculate the model posterior distribution $p_{\theta}(\z |\mathbf{x})$ at the network optimum, eqs. \eqref{eqn:bmubD} and \eqref{eqn:equations}. Using Bayes' rule we find (see Section \ref{app:modelposterior})
\begin{align}\label{eqn:model posterior}
        p_{\theta}(\z |\x) &= \mathcal{N}(\bmu_{\z|\x}, \bSigma_{\z|\x}), \nonumber \\
      \text{with}\;\;  \bmu_{\z|\x} &= (\D^{\top}\D + \I_k)^{-1}\D^{\top}\x  \nonumber \\
      \text{and}\;\;  \bSigma_{\z|\x} &= (\D^{\top}\D + \I_k)^{-1}.
\end{align}
Note that when $\A = \D$, eq. \eqref{eqn:model posterior} reduces to  eq. \eqref{eqn:true posterior}, and the model posterior matches with the ground-truth posterior.
We are interested in the inference errors MIE and TIE, eqs. \eqref{eqn:model inference error} and \eqref{eqn:true inference error}.
Upon integrating out the data, we find (see Section \ref{app:modelposterior} for derivations)
\begin{align}
\label{eqn:mie/tie}
    &{\rm MIE/TIE} \nonumber \\
    = &\frac{1}{2} \bigg\{ \sum_i^k E_{ii}^{-1}\exp\bigg[ \frac{1}{2}\bigg(\W^{\sigma}\bSigma_{\x}(\W^{\sigma})^{\top} \bigg)_{ii} + b^{\sigma}_i \bigg] - \sum_i^k b^{\sigma}_i \nonumber \\
    & + \text{Tr}\log \mathbf{E} + \text{Tr}\bigg[(\mathbf{F} - \W^{\mu})^{\top}\mathbf{E}^{-1}(\mathbf{F} - \W^{\mu})\bSigma_{\x} \bigg] -k \bigg\},
\end{align}
where for MIE
\begin{align}
         \E = (\D^{\top}\D+\I_k)^{-1}, \quad
     \F = (\D^{\top}\D+\I_k)^{-1}\D^{\top},
\end{align}
and for TIE
\begin{align}
    \E = (\A^{\top}\A + \mathbf{I}_k)^{-1},\quad
     \F = (\A^{\top}\A + \mathbf{I}_k)^{-1}\A^{\top}.
\end{align}
%

%
\begin{figure}[t]
  \centering
  \includegraphics[width=12cm]{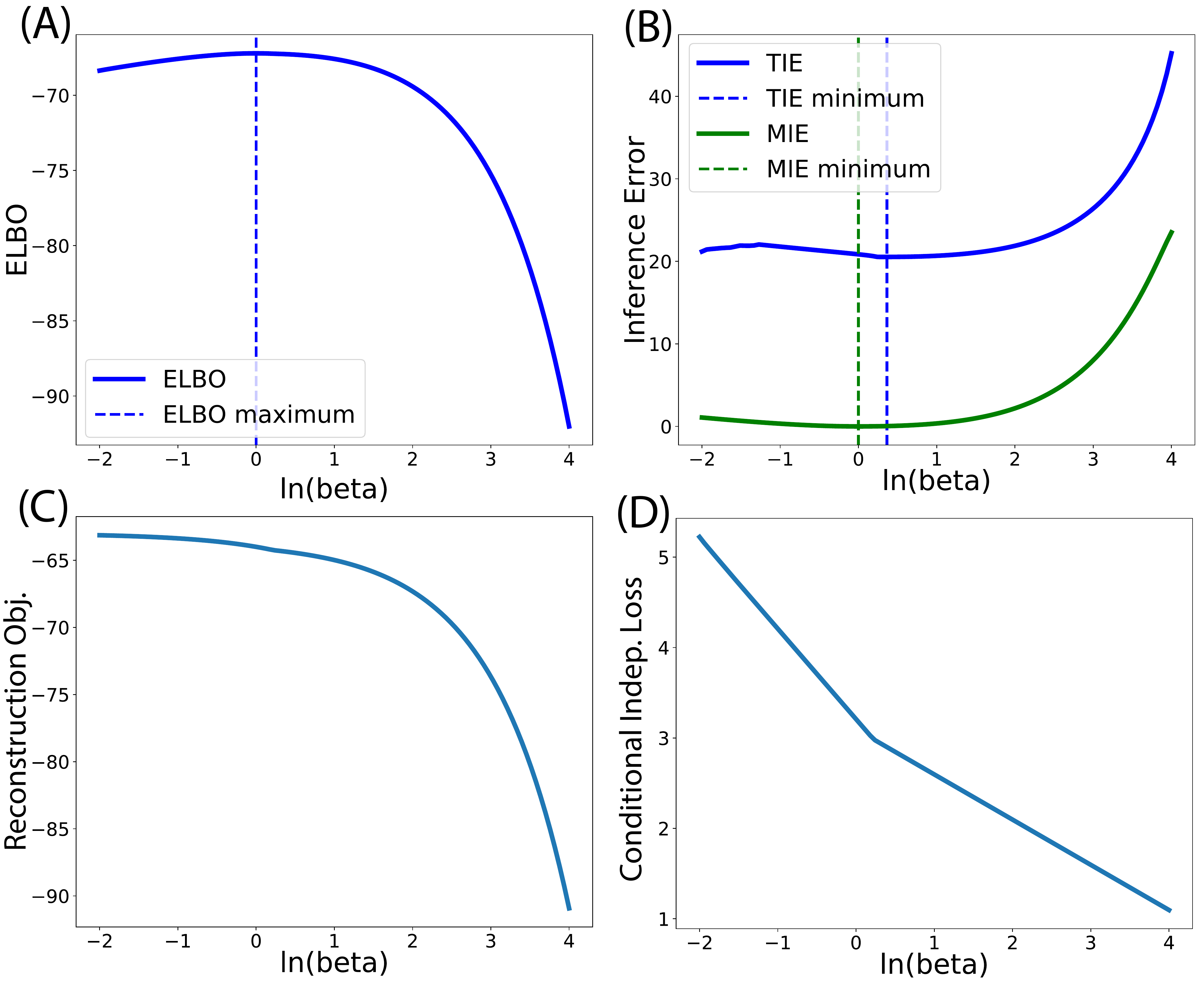}
  \caption{$\beta$-dependence of various quantities at the optimal parameter configuration of $\beta$-VAE. (A) ELBO as a function of $\beta$. (B) MIE/TIE as a function of $\beta$. (C) Reconstruction objective as a function of $\beta$. (D) Conditional Independence Loss as a function of $\beta$. In these plots, we averaged the plotted quantities over the data distribution.}
  \label{fig:panel}
\end{figure}

As an example, we numerically solve eq. \eqref{eqn:equations} for $N=128, k=2$, $A_{ij}=1/2(1+\delta_{ij})$, and use the optimal network parameters $\{\W^{\mu*},\bb^{\mu*}, \W^{\sigma*},\bb^{\sigma*}, \D^*,\bb^{D*}\}$ to calculate ELBO (Fig. \ref{fig:panel}(A)) and inference errors (Fig. \ref{fig:panel}(B)). We  see that ELBO is maximized at $\beta = 1$, while the inference error is not monotonically decreasing and has a minimum at some $\beta$. This confirms the theory we outlined earlier. Also, data log-likelihood is monotonically decreasing with $\beta$ (not shown). We further calculate individual terms in the ELBO: the reconstruction objective (Fig. \ref{fig:panel}(C)), $\mathbb{E}_{q_{\bphi}(\z | \x)}\left[\log p_{\btheta}(\x | \z)\right]$, and the conditional Independence Loss (Fig. \ref{fig:panel}(D)), $D_{KL}\left(q_{\bphi}(\z|\x) \| p(\z)\right)$. Indeed both terms are monotonically decreasing with $\beta$, confirming our propositions.

\section{Numerical Simulations}\label{numerical}

In this section, we examine a deep, nonlinear $\beta$-VAE on a synthetic dataset. The dataset is generated according to eq. \eqref{source} by mixing 10 MNIST digits, arranged as columns of a matrix $\A$, with ground truth sources, $\s \sim \N(\bm{0},\I_k)$, and subsequently adding a noise $\bm{\eta} \sim \N(\bm{0},\I_N)$. Other experimental setups and corresponding datasets that were explored are included in Section \ref{app_sims} (Fig. \ref{fig:panel3}).


The encoder, $q_{\bphi}(\z|\x)$, consists of three feed-forward fully-connected layers with tanh activations, ending in two separate output layers encoding the mean of the latent variables $\z$, $\bmu_{\z}$, and the variance, $\bSigma_{\z}$. These are each parameterized by $k$ encoding units. The decoder, $p_{\btheta}(\x | \z)$, consists of three feed-forward fully-connected layers with tanh activation functions, which takes its input from the encoder, and outputs the reconstructed image. Model details are included in Section C.  


After training, we  calculate individual terms in the $\beta$-VAE objective and demonstrate their dependence on  $\beta$. These terms correspond to the Reconstruction Objective, (Fig. \ref{fig:panel2}(C)), and the conditional Independence Loss, (Fig. \ref{fig:panel2}(D)). As we observed in the analytically tractable case, and predicted by our theory, these terms are decreasing with $\beta$. Correspondingly, after being maximized around $\beta = 1$ the entire ELBO term decreases with $\beta$ (Fig. \ref{fig:panel2}(A)). We also calculate the TIE for the $\beta$-VAE at various $\beta$, which follows a non-monotonic trend and has an optimal $\beta$ (Fig. \ref{fig:panel2}(B)).

\begin{figure}
  \centering
  \includegraphics[width=12cm]{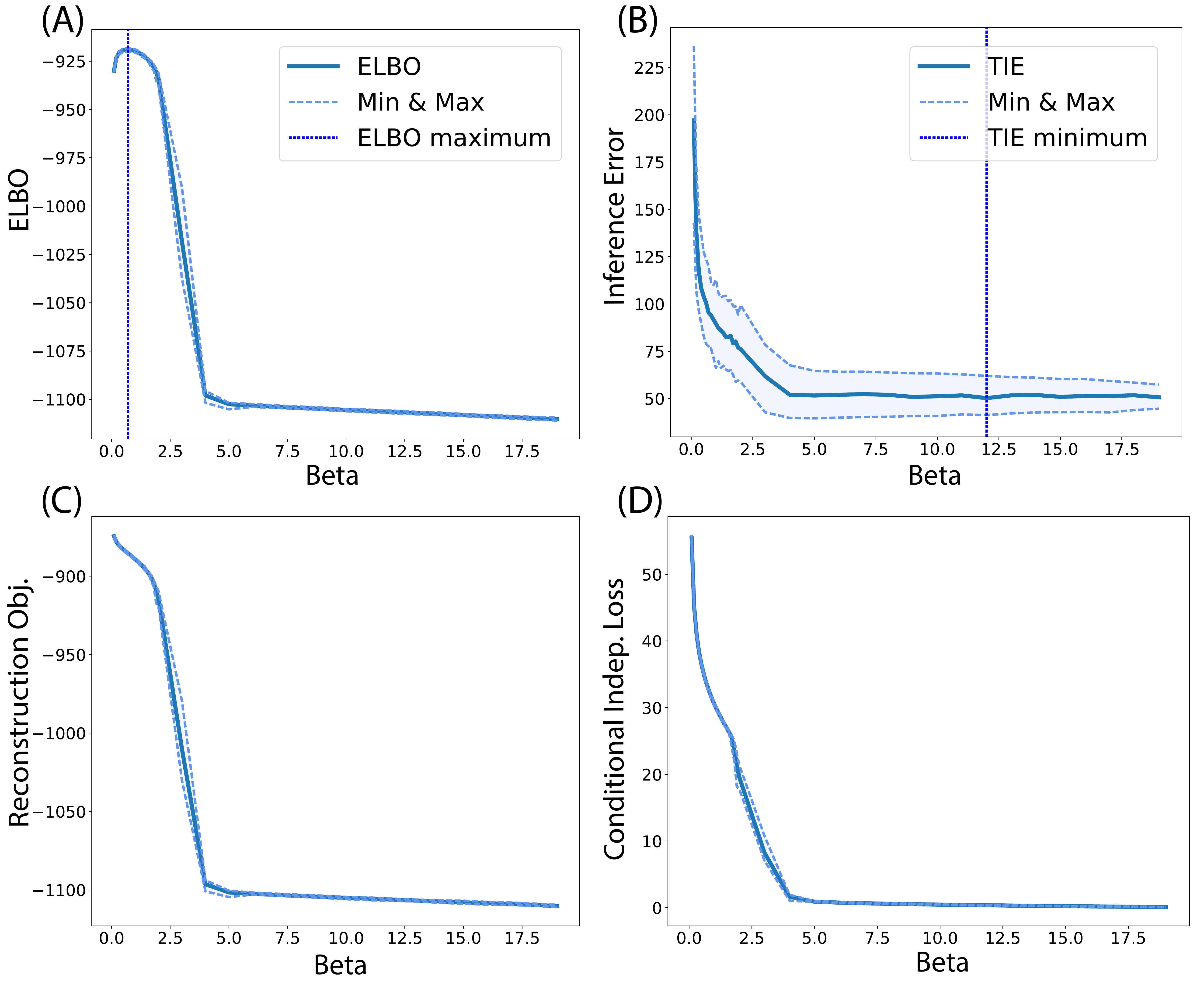}
 \caption{Values for error terms across 100 random initializations of the network. Solid line represents the average. Dashed lines around the solid line represent the minimum and maximum values, and vertical dashed line represent the extremum. (A) ELBO as a function of $\beta$. (B) TIE as a function of $\beta$. (C) Reconstruction Objective as a function of $\beta$. (D) Conditional Independence Loss as a function of $\beta$.  }
  \label{fig:panel2}
\end{figure}

\counterwithin{figure}{section}
\section{Proof of Proposition \ref{prop2}}\label{pprop2}

We prove a more general version of eq. \eqref{eq:qp} given in Prop. \ref{prop2}. Eq. \eqref{eq:re} follows from eq. \eqref{eq:qp} and Prop. \ref{prop1}.

\begin{proposition}
Consider an objective function given by a sum of two terms,
\begin{align}\label{Odef}
    O(\bkappa;\beta) = A(\bkappa) - \beta B(\bkappa),
\end{align}
to be maximized over parameters $\bkappa$, and $\beta$ is a hyperparameter. Let $\bkappa^*(\beta) = \underset{\bkappa}{\argmax} \, O(\bkappa,\beta)$. As $\beta$ increases $B(\kappa^*(\beta))$ is nonincreasing.
\end{proposition}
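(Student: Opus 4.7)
The plan is to use a standard revealed-preference / monotone-comparative-statics argument: compare the objective at two different values of $\beta$, exploit optimality at each, and add the resulting inequalities to cancel the $A$-terms. No calculus or regularity assumptions on $A$, $B$, or $\bkappa^*$ are needed beyond the existence of maximizers.

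Concretely, I would pick $\beta_1 < \beta_2$ and write the two optimality inequalities
\begin{align*}
A(\bkappa^*(\beta_1)) - \beta_1 B(\bkappa^*(\beta_1)) &\geq A(\bkappa^*(\beta_2)) - \beta_1 B(\bkappa^*(\beta_2)),\\
A(\bkappa^*(\beta_2)) - \beta_2 B(\bkappa^*(\beta_2)) &\geq A(\bkappa^*(\beta_1)) - \beta_2 B(\bkappa^*(\beta_1)),
\end{align*}
which come directly from the definition of $\bkappa^*(\beta)$ as the maximizer of $O(\cdot;\beta)$. Adding the two inequalities cancels the $A(\bkappa^*(\beta_1))$ and $A(\bkappa^*(\beta_2))$ contributions, leaving
\[
(\beta_2 - \beta_1)\bigl[B(\bkappa^*(\beta_1)) - B(\bkappa^*(\beta_2))\bigr] \geq 0.
\]
Since $\beta_2 - \beta_1 > 0$, this gives $B(\bkappa^*(\beta_1)) \geq B(\bkappa^*(\beta_2))$, i.e.\ $B(\bkappa^*(\beta))$ is nonincreasing. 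Applying this with $A(\bkappa) = \mathbb{E}_{q_{\bphi}(\z|\x)}[\log p_{\btheta}(\x|\z)]$, $B(\bkappa) = D_{KL}(q_{\bphi}(\z|\x)\,\|\,p(\z))$, and $\bkappa = (\btheta,\bphi)$ yields the KL inequality \eqref{eq:qp}, and combining with Proposition \ref{prop1} (which gives monotonicity of $\mathcal{L}^*$) yields the reconstruction inequality \eqref{eq:re}.

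The argument is essentially obstacle-free because it avoids any appeal to smoothness of $\bkappa^*(\beta)$ or uniqueness of the maximizer; the only subtlety is that if maximizers are not unique one should interpret the statement as holding for any selection $\bkappa^*(\beta)$ from the argmax correspondence, and the inequalities above remain valid for any such selection. If one prefers a differentiable flavor, an alternative (and slightly weaker) route is to assume differentiability and use the envelope theorem: $\tfrac{d}{d\beta}O^*(\beta) = -B(\bkappa^*(\beta))$, which combined with the evident convexity of $O^*(\beta)$ in $\beta$ (as a supremum of affine functions of $\beta$) implies that $-B(\bkappa^*(\beta))$ is nondecreasing, hence $B(\bkappa^*(\beta))$ is nonincreasing; but the two-inequality argument above is cleaner and fully general.
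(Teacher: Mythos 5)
Your proof is correct and rests on exactly the same revealed-preference idea as the paper's: invoke optimality of $\bkappa^*(\beta_1)$ at $\beta_1$ and of $\bkappa^*(\beta_2)$ at $\beta_2$, then exploit the affine dependence of $O$ on $\beta$ to cancel the $A$-terms. The only difference is presentation — you add the two optimality inequalities directly, whereas the paper chains them through an identity and phrases the conclusion as a contradiction — so the two proofs are logically interchangeable.
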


\begin{proof}
The proof uses contradiction. Let $\beta_2 > \beta_1$ and 
\begin{align} \label{kappa}
\bkappa_1 \equiv \bkappa^*(\beta_1),\qquad  \bkappa_2 \equiv \bkappa^*(\beta_2).
\end{align}
Then 
\begin{align}\label{Ok}
    O(\bkappa_1,\beta_1) &= O(\bkappa_1,\beta_2) + (\beta_2-\beta_1)B(\bkappa_1) \nonumber \\ 
    & \leq O(\bkappa_2,\beta_2) + (\beta_2-\beta_1)B(\bkappa_1),
\end{align}
where the first line is an identity, and the second line follows from the optimality of $\bkappa_2$ at $\beta=\beta_2$.

Now we assume $B(\bkappa_2) > B(\bkappa_1)$, and see that this leads to a contradiction.
\begin{align}
&O(\bkappa_2,\beta_2) + (\beta_2-\beta_1)B(\bkappa_1)  \nonumber \\
&\qquad <  O(\bkappa_2,\beta_2) + (\beta_2-\beta_1)B(\bkappa_2)  = O(\bkappa_2,\beta_1).
\end{align}     
The inequality follows from our assumption, and the equality from \eqref{Odef}. Combined with \eqref{Ok}, this implies
\begin{align}
    O(\bkappa_1,\beta_1) < O(\bkappa_2,\beta_1)
\end{align}
which contradicts \eqref{kappa}. Therefore if $\beta_2>\beta_1$, then $B(\kappa_2) \leq  B(\kappa_1)$.
\end{proof}

\section{Details of the analytically tractable $\beta$-VAE model}

\subsection{Integrating out data from the objective}
\label{app:intoutdata}
The full $\beta$-VAE objective is \eqref{bVAEobj} averaged with respect to the data distribution $p(\x)$:
\begin{align}
& L(\bm{\theta},\bm{\phi};\beta) \equiv \mathbb{E}_{p(\x)}\left[ \mathcal{L}(\bm{\theta},\bm{\phi};\beta)\right]  \nonumber \\
&=\mathbb{E}_{p(\x)} \left[\mathbb{E}_{q_{\bphi}(\z | \x)}\left[\log p_{\btheta}(\x | \z)\right]-\beta D_{K L}\left(q_{\bphi}(\z | \x) \| p(\z)\right) \right].
\end{align}
We first calculate $\mathbb{E}_{q_{\bphi}(\z | \x)}\left[\log p_{\btheta}(\x | \z)\right]$. We use the reparametrization trick: For $\z  \sim q_{\bphi}(\z | \x) = \N(\bmu_{\z},\bSigma_{\z})$, we can write $\z = \bmu_{\z} + \bSigma_{\z}^{1/2} \bm{\epsilon}$ with $\bm{\epsilon}\sim \N(\mathbf{0},\mathbf{I}_k)$. Then,
\begin{align*}
    &\mathbb{E}_{\z \sim q_{\bphi}(\mathbf{z} | \mathbf{x})} [ \log p_{\btheta}(\mathbf{x} | \mathbf{z})] \\
    &\quad=\mathbb{E}_{\z \sim q_{\bphi}(\mathbf{z} | \mathbf{x})} \bigg[ \log \mathcal{N}(\mathbf{x}; \mathbf{D}\z + \mathbf{b}^{D}, \mathbf{I}_N) \bigg] \\ 
    &\quad= \mathbb{E}_{\bm{\epsilon} \sim \mathcal{N}(0,1)} \bigg[\log \mathcal{N}(\mathbf{x}; \mathbf{D}(\bmu_{\z} +  \bSigma_{\z}^{1/2} \bm{\epsilon}) + \mathbf{b}^{D},  \mathbf{I}_N ) \bigg] \\
    &\quad= -\frac{N}{2}\log(2\pi) - \frac{1}{2}(\D \bmu_{\z} + \mathbf{b}^{D} - \mathbf{x})^2 \\
    &\qquad\quad- \frac{1}{2} \mathbb{E}_{\bm{\epsilon} \sim \mathcal{N}(0,1)}\bigg[\bm{\epsilon}^{\top} (\D \bSigma_{\z}^{1/2})^{\top} (\D \bSigma_{\z}^{1/2})\bm{\epsilon} \bigg].
\end{align*}
The last term can be calculated by the following useful trick. Let's introduce a source term $\J$ into the generating functional,
\begin{align}
   Z[\J] = \int \frac{d\z}{(2\pi)^{n/2} \sqrt{\det \bSigma_{\z}}} \exp \bigg(-\frac{1}{2}\z^{\top} \bSigma_{\z}^{-1}\z + \J^{\top} \A \z \bigg),
\end{align}
then differentiating with respect to the source, 
\begin{align}
 \bigg(\frac{\delta}{\delta \J} \bigg)^{\top} \bigg(\frac{\delta}{\delta \J} \bigg) Z[\J] \bigg\vert_{\J=0} =     \mathbb{E}_{\z \sim \mathcal{N}(0,\bSigma_{\z})}\left[\z^{\top} \A^{\top} \A \z \right] 
\end{align}
On the other hand, we can perform the Gaussian integral in $Z[\J]$ to obtain,
\begin{align}
    Z[\J] = \exp \bigg\{\frac{1}{2}(\J\A)^{\top} \bSigma_{\z} (\J\A)  \bigg\}.
\end{align}
Then we arrive at 
\begin{align}
\label{eqn:correlation}
    \mathbb{E}_{\z \sim \mathcal{N}(0,\bSigma_{\z})}\left[\z^{\top} \A^{\top} \A \z \right] = \text{Tr} (\A \bSigma_{\z} \A^{\top})
\end{align}
Eq. \eqref{eqn:correlation} is central to the calculations of many results presented in the text.

Going back to the reconstruction objective, using eq. \eqref{eqn:correlation} we have (up to constants)
\begin{align}
    \mathbb{E}_{\z \sim q_{\bphi}(\mathbf{z} | \mathbf{x})} [ \log p_{\btheta}(\mathbf{x} | \mathbf{z})] &= - (\D \bmu_{\z} + \mathbf{b}^{D} - \mathbf{x})^2 \nonumber \\
    &\qquad\qquad- \text{Tr}(\D^{\top}\D \bSigma_{\z}).
\end{align}{}
Similarly we can calculate the conditional independence loss,
\begin{align}
    &D_{K L}(q_{\bphi}(\mathbf{z} | \mathbf{x}) \| p(\mathbf{z})) \nonumber \\
    &\qquad\qquad = -\frac{1}{2}\left(k + \text{Tr}\log \bSigma_{\z} - \bmu_{\z}^{\top} \bmu_{\z} - \text{Tr} \bSigma_{\z} \right).
\end{align}
Putting everything together, the objective function we want to maximize is (neglecting constant terms)
\begin{align}
    &L(\bm{\theta},\bm{\phi};\beta) =  \frac{1}{2}\mathbb{E}_{p(\mathbf{x})}\left[ - (\D \bmu_{\z} + \mathbf{b}^{D} - \mathbf{x})^{\top}(\D \bmu_{\z} + \mathbf{b}^{D} - \mathbf{x})\right. \nonumber \\
    &\quad - \beta \bmu_{\z}^{\top}\bmu_{\z} - \text{Tr}(\D^{\top}\D \bSigma_{\z}) + \beta \text{Tr}\log \bSigma_{\z} - \beta\text{Tr}\bSigma_{\z} ].
\end{align}
The expectation with respect to $\x$ amounts to performing Gaussian integrals in $\x$, as $\x \sim \N(\mathbf{0},\bSigma_{\x})$, and thus can be done exactly. After plugging in the definition of $\bmu_{\z}, \bSigma_{\z}$ from eq. \eqref{eqn:musigma}, and performing the $\x$ integrals, the result is given in eq. \eqref{eqn:objective_nox}.

\subsection{Taking derivatives of the objective}
\label{app:derivatives}
In order to take derivatives of eq. \eqref{eqn:objective_nox}, we unpack the indices (to ease the notation, we denote $\bSigma_{\x}$ as $\bSigma$, and follow the Einstein summation convention, repeated indices are to be summed over unless the summation is explicitly specified)
\begin{align}
	L = &-\frac{1}{2} \bigg\{ (D_{ij} W^{\mu}_{jk} - \delta_{ik}) \Sigma_{kl} (W^{\mu}_{ml} D_{im} - \delta_{il}) - \beta \sum_i b^{\sigma}_i \nonumber \\
	&+ \beta (W^{\mu})_{ij} \Sigma_{jk} W^{\mu}_{ik} +(D_{ij} b^{\mu}_j + b^D_i)^2 + \beta (b^{\mu}_i)^2 \nonumber \\
	&+ \sum_i \bigg( D^2_{li} + \beta \bigg) \exp \bigg(\frac{1}{2} W^{\sigma}_{ij} \Sigma_{jk} W^{\sigma}_{ik} + b_i^{\sigma} \bigg)  \bigg\}
\end{align}
Then,
\begin{align}
    0 &= \frac{\partial L}{\partial W^{\mu}_{ab}} = \left [ \left( \D^{\top} (\D \W^{\mu} - \mathbf{I}_N) + \beta \W^{\mu} \right) \bSigma \right]_{ab}, \\
    0 &= \frac{\partial L}{\partial b^{\mu}_{a}} = \left[ (\D \bb^{\mu} + \bb^D) \D + \beta \bb^{\mu} \right]_a, \\
    0 & = \frac{\partial L}{\partial D_{ab}} = \left[ (\D \W^{\mu} - \I_N) \bSigma (\W^{\mu})^{\top} \right]_{ab} \nonumber \\
    & \qquad \qquad  + \left[\D \bb^{\mu} + \bb^D\right]_a b^{\mu}_b + D_{ab} e^{ \frac{1}{2}\left[\W^{\sigma}\bSigma (\W^{\sigma})^{\top}\right]_{bb} + b^{\sigma}_b }, \\
    0 &= \frac{\partial L}{\partial b^{D}_{a}} = \left[\D \bb^{\mu} + \bb^D\right]_a \\
    0 &= \frac{\partial L}{\partial W^{\sigma}_{ab}} = \left(\left[\D^{\top}\D\right]_{aa} + \beta \right) e^{ \frac{1}{2}\left[\W^{\sigma}\bSigma\right]_{ab}W^{\sigma}_{ab} + b^{\sigma}_a } \left[\W^{\sigma}\bSigma\right]_{ab}, \\
    0 &= \frac{\partial L}{\partial b^{\sigma}_{a}} = \left( \left[\D^{\top}\D\right]_{aa} + \beta \right) e^ { \frac{1}{2}\left[\W^{\sigma}\C (\W^{\sigma})^{\top}\right]_{aa} + b^{\sigma}_a } - \beta.
\end{align}
From the $b^{\mu}_a$ and $b^D_a$ equations we can immediately see $\bb^{\mu} = \bb^D = 0$.  

\subsection{Derivation of the ground-truth posterior}
\label{app:gtposterior}
We observe that since both $\s$ and $\bm{\eta}$ are independently normally distributed in \eqref{source}, $\s$ and $\bm{\eta}$ are jointly normal, i.e., $p(\s,\bm{\eta})$ is a normal distribution. However, note that $p(\s,\bm{\eta})$ is just $p(\s,\x)$ up to a coordinate transformation, so $p(\s,\x)$ is also normal. Also, as $\s \in \mathbb{R}^{k}$, $\x \in \mathbb{R}^N$, $(\s, \x) \in \mathbb{R}^{N+k}$. We can think of $\s$ and $\x$ partition a $(N+k)$-dimensional normal distribution $p((\s,\x))$. Therefore, to find the conditional probability $p_{\text{g-t}}(\s |\x)$, we can just use the formula for conditioning multivariate normal distribution: 
\begin{equation}
    p_{\text{g-t}}(\s |\x) = \mathcal{N}(\bmu_{\s|\x}, \bSigma_{\s|\x}), 
\end{equation}
where 
\begin{equation}
    \begin{split}
        \bmu_{\s|\x} &= \bmu_{\s} + \text{Cov}(\x,\s)^{\top}(\bSigma_{\x})^{-1}(\x - \bmu_{\x}) \\
        \bSigma_{\s|\x} &= \bSigma_{\s} - \text{Cov}(\x,\s)^{\top}(\bSigma_{\x})^{-1}\text{Cov}(\x,\s).
    \end{split}{}
\end{equation}{}
Now specializing to our case \eqref{source}, 
\begin{equation}
\text{Cov}(\x,\s) = \text{Cov}(\A \s + \bm{\eta}, \s)= \A \bSigma_{\s} + \text{Cov}(\s,\bm{\eta})= \A.
\end{equation}
Note that $\bSigma_{\x} = \A \A^{\top} + \mathbf{I}_N$, then 
\begin{equation}
        \bmu_{\s|\x} = \A^{\top}(\A \A^{\top} + \mathbf{I}_N)^{-1}\x = (\A^{\top}\A + \mathbf{I}_k)^{-1}\A^{\top}\x,
\end{equation}
where in the second equality we have used the \textit{matrix push-through identity}:
For any matrices $\mathbf{U} \in \mathbb{R}^{N\times k}$,$\mathbf{V} \in \mathbb{R}^{k \times N}$,
\begin{equation}
\label{eqn:push}
    (\mathbf{I}_N + \mathbf{U}\mathbf{V})^{-1}\mathbf{U} = \mathbf{U}(\mathbf{I}_k + \mathbf{V}\mathbf{U})^{-1}.
\end{equation}
Now the covariance,
\begin{equation}
    \begin{aligned}
        \bSigma_{\s|\x} &= \mathbf{I}_k - \A^{\top}(\A \A^{\top} + \mathbf{I}_N)^{-1} \A \\
        &= \mathbf{I}_k - \A^{\top}\A (\A^{\top} \A + \mathbf{I}_k)^{-1} \\
        &= \mathbf{I}_k - \A^{\top}\A [(\A^{\top}\A)^{-1} - (\A^{\top}\A )^{-1}(\A^{\top}\A + \mathbf{I}_k)^{-1}] \\
        &= (\A^{\top}\A + \mathbf{I}_k)^{-1},
    \end{aligned}{}
\end{equation}
where in the third equality we have used the \textit{Woodbury matrix identity}: For any invertible matrix $\B \in \mathbb{R}^{N \times N}$ and size compatible matrices $\mathbf{U} \in \mathbb{R}^{N\times k}$ and $\mathbf{V} \in \mathbb{R}^{k \times N}$:
\begin{equation}
\label{eqn:woodbury}
    (\B+\U\V)^{-1} = \B^{-1} - \B^{-1}\U(\mathbf{I}_k+\V\B^{-1}\U)^{-1}\V\B^{-1}.
\end{equation}{}

\subsection{Derivation of the model posterior}
\label{app:modelposterior}
Our goal is to use the Bayes rule to calculate the model posterior, $p_{\btheta}(\x|\z) = p_{\btheta}(\x|\z)p(\z)/p_{\btheta}(\x).$

In order to do so, we first need to calculate the \textit{evidence} $p_{\btheta}(\x)$,
\begin{align}
    p_{\btheta}(\x) &= \int_{\mathbb{R}^{k}} d\z p_{\btheta}(\x|\z) p(\z) \\
    &= \int_{\mathbb{R}^{k}} d\z \N (\D \z, \mathbf{I}_N)\N(\mathbf{0},\mathbf{I}_k) \\
    &= \N(\mathbf{0},(\D\D^{\top}+\mathbf{I}_N)),
\end{align}
where in the third equality we have used eq.s \eqref{eqn:push} and \eqref{eqn:woodbury} to simplify. 
Therefore, 
\begin{equation}
    p_{\btheta}(\x|\z) = \frac{\N (\D \z, \mathbf{I}_N)\N(\mathbf{0},\mathbf{I}_k)}{\N(\mathbf{0},(\D\D^{\top}+\mathbf{I}_N))}
\end{equation}
After some simplifications  using eq.s \eqref{eqn:push} and \eqref{eqn:woodbury}, we arrived at
\begin{align}
    p_{\btheta}(\x|\z) &= \N((\D\D^{\top}+\mathbf{I}_N)^{-1}\D^{\top}, (\D\D^{\top}+\mathbf{I}_N)^{-1}) \nonumber \\
    &\equiv \mathcal{N}(\bmu_{\z|\x}, \bSigma_{\z|\x}).
\end{align}
\subsection{Derivation of $\rm{MIE/TIE}$}
First let's consider $\rm{MIE}$. Let
\begin{align}
    \bmu_{\z|\x} \equiv \F \x, \qquad 
    \bSigma_{\z|\x} \equiv \E.
\end{align}{}
Then, we can write $\rm{MIE}$ as 
\begin{align}
    \rm{MIE} =& \mathbb{E}_{p(\x)}\big[ D_{K L}(q_{\bphi}(\mathbf{z} | \mathbf{x}) \| p_{\btheta}(\x|\z)) \big] \nonumber \\
    = &\frac{1}{2}\mathbb{E}_{p(\x)}\bigg[(\F\x - \bmu_{\z})^{\top}\E^{-1}(\F\x - \bmu_{\z}) \nonumber \\
    & + \text{Tr}(\E^{-1}\bSigma_{\z}) - \log \bigg(\frac{\det\bSigma_{\z}}{\det\E} \bigg) - k \bigg].
\end{align}
Plugging in eq. \eqref{eqn:musigma} and performing the $\x$ Gaussian integrals as in Section \ref{app:intoutdata}, we arrive at eq. \eqref{eqn:mie/tie}.

Note that at network optimum, our model posterior $p_{\btheta}(\z|\x)$ equals to the ground-truth posterior $p(\s|\x)$ upon changing $\D$ to $\A$. Therefore, we just need to replace $\D$ by $\A$ in the above derivation to obtain the results for $\rm{TIE}$.

\section{Simulation Details}\label{app_sims}
The deep neural network models used for the numerical experiments  task used the same overall architecture. The encoder is a feed forward network with 3 hidden layers, with 256, 200, and 200 units. 2 parallel hidden layers with 2 neurons parameters the mean and variance for $k=2$ latent variables. The decoder consists of 3 feed-forward hidden layers with 200, 200, and 256 units, then outputs the reconstructed image. 
The network was trained for 1000 epochs over the entire synthetic dataset, comprising of 1000 examples. We used a tanh activation function used along with Adam Optimization \cite{kingma2014adam} with a learning rate of 1e-3. Experiments were repeated across 300 realizations for each $\beta$ value. Results shown were averaged over the whole set of realizations. 

The Reconstruction Objective was calculated for each trained model through generating 1000 samples from the encoder, passing them to the decoder to approximately calculate  $\mathbb{E}_{q_{\bphi}(\z | \x)}\left[\log p_{\btheta}(\x | \z)\right]$,
and averaging over the data $\x$. The Conditional Independence Loss was calculated directly using the Tensorflow Distributions library's native KL Divergence method. The ELBO was calculated by numerically taking the difference of these two terms, and the $\beta$-VAE objective was an extension of this with the hyperparameter $\beta$ included. The Inference Error was calculated numerically using the modelled $\bmu_{\z}$ and $\bSigma_{\z}$ and estimating $p(\x)$ from mini-batches.

In Fig. \ref{fig:panel3}, we show results on another simulation consistent with our findings.


\begin{figure}
  \centering
  \includegraphics[width=12cm]{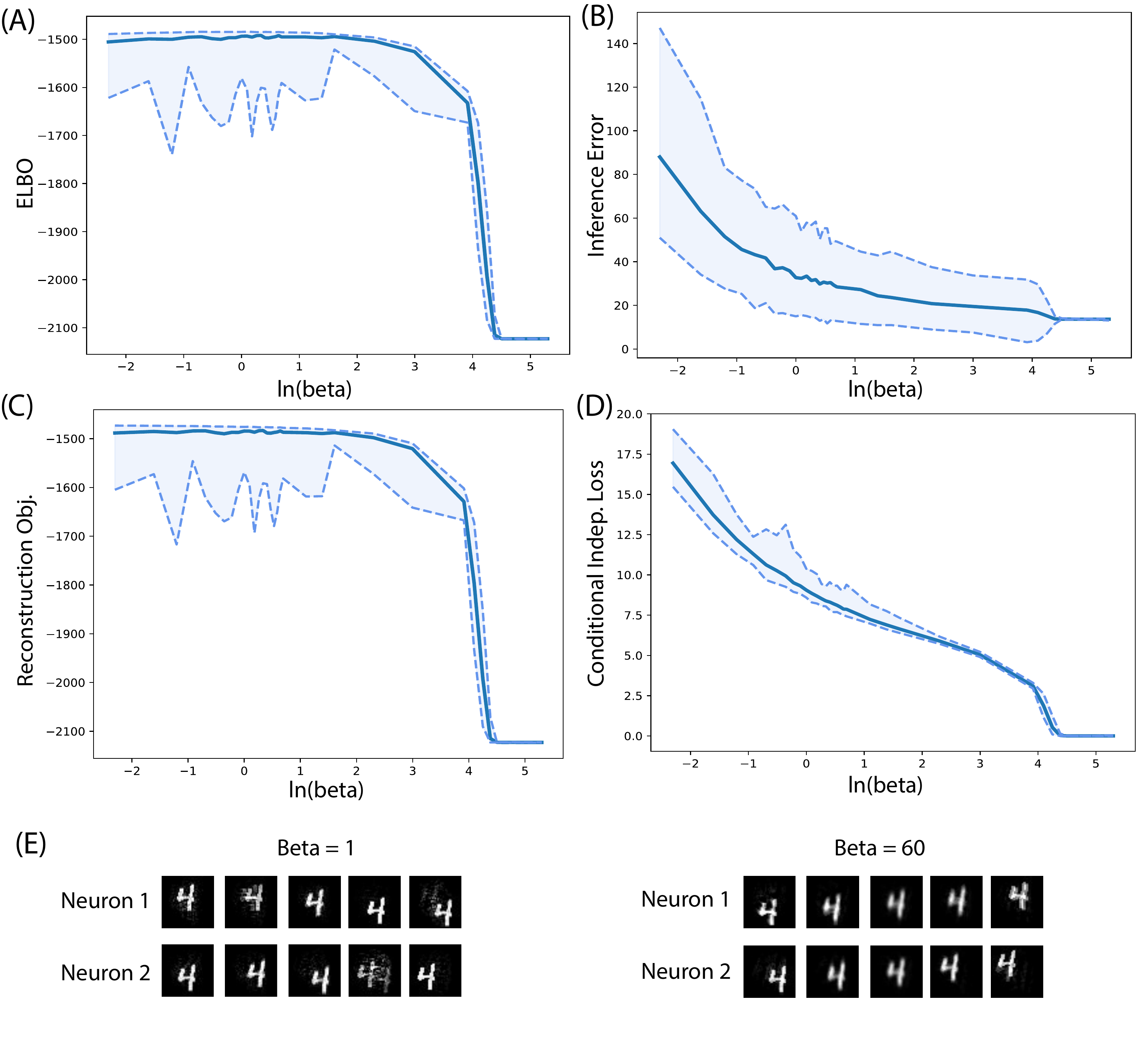}
\caption{Values for error terms across 300 random initializations of the network for a synthetic dataset, which comprises of a single MNIST digit localized at different locations on a blank canvas. The cartesian coordinate of the digit in a sample from our data, $\x$, is determined by eq. \eqref{source}, with $A_{ij} = 2\delta_{ij} + 0.73$,
$\s \sim \N(\bm{0},\I_k)$, $ \bm{\eta} \sim \N(\bm{0},\I_N)$,  $N=k=2$.  Dashed lines represent the minimum and maximum values, and solid line represents the average. (A) ELBO as a function of $\beta$. (B) TIE as a function of $\beta$. Its minima over various random initialization follow a non-monotonic trend. (C) Reconstruction objective as a function of $\beta$. (D) Conditional independence loss as a function of $\beta$.  (E) Traversal of latent encoding in bottleneck neurons for small and large $\beta$. One neuron is held fixed while the other is modulated to generate reconstructions. Reconstruction of the digits noticeable worsens with higher $\beta$, while units in the bottleneck encode for structured, orthogonal axes of motion. } 
\label{fig:panel3}
\end{figure}

\section{Discussion and Conclusion}\label{discussion}

In this chapter, we examined the learning of disentangled representations by extracting statistically independent latent variables in $\beta$-VAE. We proved general theorems on variational Bayesian inference in the context of $\beta$-VAE and introduced an analytically tractable $\beta$-VAE model. We also performed experiments on synthetic datasets to test our insights from the general theorems and the tractable model, and found good agreements.

$\beta$-VAE enforces conditional independence of its units at the bottleneck layer. This preference is not compatible with independence of latent variables, and therefore may lead to an optimal value of $\beta$ for latent variable inference.


There are other perspectives on what constitutes a disentangled representation not addressed in this paper\cite{bengio2013representation,burgess2018understanding}, including definitions not statistical in nature, instead taking into account the manifold structure and symmetry transformations in data  \cite{bengio2013representation,dicarlo2007untangling,higgins2018towards}. Other deep learning approaches to disentangling include the adversarial setting \cite{denton2017unsupervised, tran2017disentangled,john2018disentangled}. Disentangled representations have also been studied in supervised and semi-supervised contexts \cite{siddharth2017learning}.
\chapter{Non-equilibrium dynamics of recurrent neural networks}
\label{attractor}

\section*{Introduction}
Dynamical attractors have found much use in neuroscience as models for carrying out computation and signal processing \citep{Poucet2005-nm}. While point-like neural attractors and analogies to spin glasses have been widely explored \citep{Hopfield1982-fb,Amit1985-ls}, an important class of experiments are explained by `continuous attractors' where the collective dynamics of strongly interacting neurons stabilizes a low-dimensional family of activity patterns. Such continuous attractors have been invoked to explain experiments on motor control based on path integration \citep{Seung1996-ck,Seung2000-bk}, head direction \citep{Kim2017-zs} control, 
spatial representation in grid or place cells \citep{Yoon2013-nl,OKeefe1971-jt,Colgin2010-rd,Wills2005-hz,Wimmer2014-og, Pfeiffer2013-qn}, amongst other information processing tasks \citep{Hopfield2015-wt,Roudi2007-pm,Latham2003-fr,Burak2012-bu}. 

These continuous attractor models are at the fascinating intersection of dynamical systems and neural information processing. The neural activity in these models of strongly interacting neurons is described by an emergent collective coordinate \citep{Yoon2013-nl,Wu2008-iw,Amari1977-hf}. This  collective coordinate stores an internal representation \citep{Sontag2003-lk,Erdem2012-uf} of the organism's state in its external environment, such as position in space \citep{Pfeiffer2013-qn,McNaughton2006-xq} or head direction \citep{Seelig2015-uu}.

However, such internal representations are useful only if they can be driven and updated by external signals that provide crucial motor and sensory input  \citep{Hopfield2015-wt, Pfeiffer2013-qn,Erdem2012-uf,Hardcastle2015-as,Ocko2018-gv}.  Driving and updating the collective coordinate using external sensory signals opens up a variety of capabilities, such as path planning \citep{Ponulak2013-op,Pfeiffer2013-qn}, correcting errors in the internal representation or in sensory signals \citep{Erdem2012-uf,Ocko2018-gv}, and the ability to resolve ambiguities in the external sensory and motor input \citep{Hardcastle2015-as,Evans2016-pr,Fyhn2007-ys}.

In all of these examples, the functional use of attractors requires interaction between external signals and the internal recurrent network dynamics. However, with a few significant exceptions
 \citep{fung2015fluctuation,mi2014spike,Wu2008-iw,Wu2005-sw,Monasson2014-nu,Monasson2013-kx,Burak2012-bu}, most theoretical work has either been in the limit of no external forces and strong internal recurrent dynamics, or in the limit of strong external forces where the internal recurrent dynamics can be ignored \citep{Moser2017-dj,Tsodyks1999-px}.

Here, we study continuous attractors in neural networks subject to external driving forces that are neither small relative to internal dynamics, nor adiabatic. We show that the physics of the emergent collective coordinate sets limits on the maximum speed at which  internal representations can be updated by external signals.

Our approach begins by deriving simple classical and statistical laws satisfied by the collective coordinate of many neurons with strong, structured interactions that are subject to time-varying external signals, Langevin noise, and quenched disorder. Exploiting these equations, we demonstrate two simple principles; (a) an `equivalence principle' that 
predicts how much the internal representation lags a rapidly moving external signal, (b)  under externally driven conditions, quenched disorder in network connectivity can be modeled as a state-dependent effective temperature. 
Finally, we apply these results to place cell networks and derive a non-equilibrium driving-dependent memory capacity, complementing numerous earlier works on memory capacity in the absence of external driving.

\section*{Collective coordinates in continuous attractors}
We study $N$ interacting neurons following the formalism presented in \citep{Hopfield2015-wt},
\begin{equation}
\label{eqn:eom_zero}
\frac{di_n}{dt} = -\frac{i_n}{\tau} + \sum_{k = 1}^{N} J_{nk}f(i_k) + I^{ext}_n(t) +\eta_{int}(t),
\end{equation}
where $f(i_k) = (1+e^{-i_k/i_0})^{-1}$ is the neural activation function that represents the firing rate of neuron $k$, and $i_n$ is an internal excitation level of neuron $n$ akin to the membrane potential. We consider synaptic connectivity matrices with two distinct components, 
\begin{equation}
J_{ij} = J_{ij}^{0}+ J_{ij}^{d}.
\label{eqn:JijBreakdown}
\end{equation}

As shown in Fig.\ref{fig:schematics}, $J_{ij}^{0}$ encodes the continuous attractor. We will focus on $1$-D networks with $p$-nearest neighbor excitatory interactions to keep bookkeeping to a minimum: $J^{0}_{ij} = J(1 - \epsilon)$ if neurons $|i-j| \leq p$, and $J_{ij}^{0} = - J \epsilon$ otherwise. The latter term, $- J \epsilon$, with $0\leq \epsilon \leq 1$, represents long-range, non-specific inhibitory connections 
as frequently assumed in models of place cells \citep{Monasson2013-kx,Hopfield2010-lf}, head direction cells \citep{Chaudhuri2016-lh} and other continuous attractors \citep{Seung2000-bk,Burak2012-bu}. 

The disorder matrix $J_{ij}^{d}$ represents random long-range connections, a form of quenched disorder \citep{Sebastian_Seung1998-vm,Kilpatrick2013-go}. 
Finally, $I_n^{ext}(t)$ represents external driving currents from e.g. sensory and motor input possibly routed through other regions of the brain. The Langevin noise $\eta_{int}(t)$ represents private noise internal to each neuron \citep{Lim2012-ek,Burak2012-bu}. 

\begin{figure}
\begin{center}
    \includegraphics[width=0.8\linewidth]{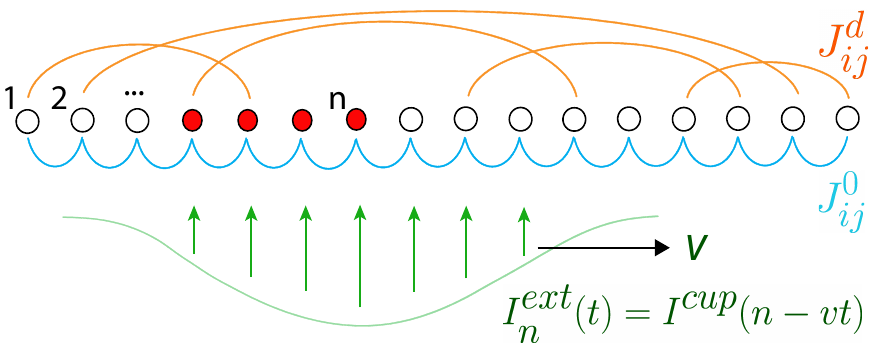}
\end{center}
\caption{ 
The effective dynamics of neural networks implicated in head direction and spatial memory is described by a continuous attractor. Consider $N$ neurons connected in a 1-D topology, with local excitatory connections between $p$ nearest neighbors (blue), global inhibitory connections (not shown), and random long-range disorder (orange). Any activity pattern quickly condenses into a `droplet' of contiguous firing neurons (red) of characteristic size; the droplet center of mass $\bar{x}$ is a collective coordinate parameterizing a continuous attractor. The droplet can be driven by space and time-varying external currents $I^{ext}_{n}(t)$ (green). \label{fig:schematics}} 
\end{figure}

A neural network 
like Eqn.\eqref{eqn:eom_zero} qualitatively resembles a similarly connected network of Ising spins 
at fixed magnetization \citep{Monasson2013-pn}. 
At low noise, the activity in such a system will condense \citep{Monasson2013-kx,Hopfield2010-lf} to a localized `droplet', since interfaces between firing and non-firing neurons are penalized by $J(1-\epsilon)$. 
The center of mass of such a droplet, $\bar{x} \equiv \frac{\sum_n n f(i_n) }{\sum_n f(i_n)}$ is an emergent collective coordinate that approximately describes the stable low-dimensional neural activity patterns of these $N$ neurons. Fluctuations about this coordinate have been extensively studied \citep{Wu2008-iw,Burak2012-bu,Hopfield2015-wt,Monasson2014-nu}.

\section*{Space and time dependent external signals}

We focus on how space and time-varying external signals, modeled here as external currents $I^{ext}_n(t)$ can drive and reposition the droplet along the attractor. We will be primarily interested in a cup-shaped current profile that moves at a constant velocity $v$, i.e., $I^{ext}_n(t) = I^{cup}(n - vt)$ where 
$I^{cup}(n) = d(w-|n|), n \in [-w,w]$, $I^{cup}(n) = 0$ otherwise. Such a localized time-dependent drive could represent landmark-related sensory signals \citep{Hardcastle2015-as}; see Discussions. 





The effective dynamics of the collective coordinate $\bar{x}$ in the presence of currents $I^{ext}_n(t)$ can be obtained by computing the effective force on the droplet of finite size. We find that
\begin{equation}
\gamma \dot{\bar{x}} = - \partial_{\bar{x}} V^{ext}(\bar{x},t), 
\end{equation}
where $V^{ext}(\bar{x},t)$ is a piecewise quadratic potential $V^{cup}(\bar{x}-vt)$ for currents $I^{ext}_n(t) = I^{cup}(n - vt)$, and $\gamma$ is the effective drag coefficient of the droplet. (Here, we neglect rapid transients of timescale $\tau$ \citep{Wu2008-iw}.) 

The strength of the external signal is set by the depth $d$ of the cup $I^{cup}(n)$. Previous studies have explored the $d = 0$ case, i.e., undriven diffusive dynamics of the droplet \citep{Burak2012-bu,Monasson2014-nu, Monasson2013-pn,Monasson2015-nl} or the large $d$ limit \citep{Hopfield2015-wt} when the internal dynamics can be ignored. Here we focus on an intermediate regime, $d < d_{max}$ where internal representations are updated continuously by the external currents, without any jumps \citep{Ponulak2013-op,Pfeiffer2013-qn,Erdem2012-uf}.

In fact, as shown in the Appendix, we find a threshold signal strength $d_{max}$ beyond which the external signal destabilizes the droplet,  instantly `teleporting' the droplet from any distant location to the cup without continuity along the attractor, erasing any prior positional information held in the internal representation.

We focus here on $d < d_{max}$, a regime with continuity of internal representations. Such continuity is critical for many applications such as path planning \citep{Ponulak2013-op,Pfeiffer2013-qn,Erdem2012-uf} and resolving local ambiguities position within the global context \citep{Hardcastle2015-as,Evans2016-pr,Fyhn2007-ys}. In this regime, the external signal updates the internal representation with finite `gain' \citep{Fyhn2007-ys} and can thus fruitfully combine information in both the internal representation and the external signal. Other applications that simply require short-term memory storage of a strongly fluctuating variable may not require this continuity restriction.

\begin{figure}
\begin{center}
\includegraphics[width=0.8\linewidth]{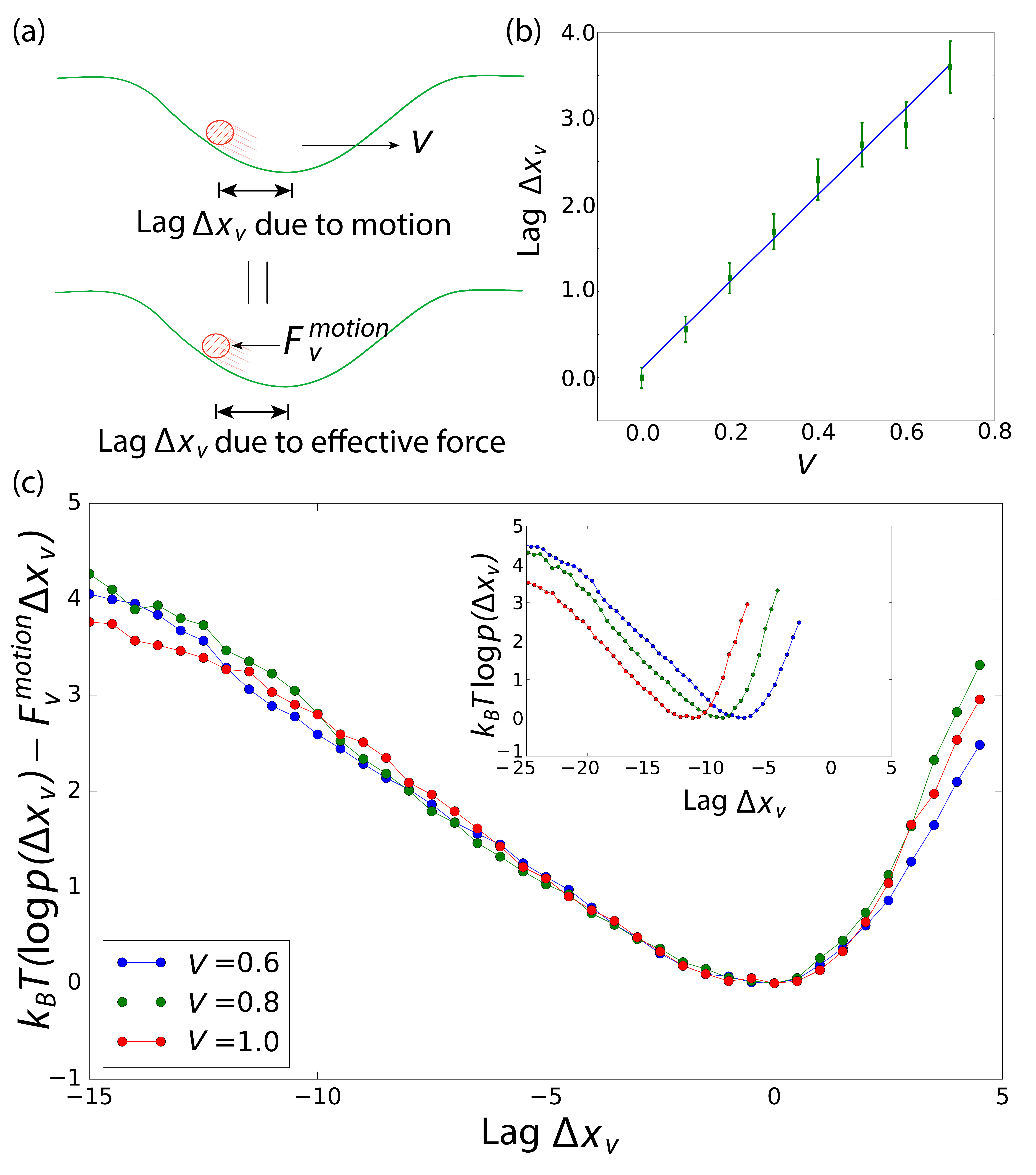}
\end{center}
\caption{(a) The mean position and fluctuations of the droplet driven by currents $I_{n}^{ext} = I^{cup}(n-vt)$ are described by an `equivalence' principle; in a frame co-moving with $I_n^{cup}(t)$ with velocity $v$, we simply add an effective force $F^{motion}_{v} = \gamma v$ where $\gamma$ is a drag coefficient. (b) This prescription correctly predicts that the droplet lags the external driving force by an amount linearly proportional to velocity $v$, as seen in simulations. (c) Fluctuations of the driven droplet's position, due to internal noise in neurons, are also captured by the equivalence principle. If $p(\Delta x_v)$ is the probability of finding the droplet at a lag $\Delta x_v$, we find that $k_B T \log p(\Delta x_v) - k_B T F^{motion}_{v} \Delta x_v$ is independent of velocity and can be collapsed onto each other (with fitting parameter $T$).  (Inset: $\log p(\Delta x_v)$ before subtracting $F^{motion}_{v} x$.)
\label{fig:equiv principle}}
\end{figure}

\subsection*{Equivalence principle}
We first consider driving the droplet in a network at constant velocity $v$ using an external current $I^{ext}_n = I^{cup}(n - vt)$. We allow for Langevin noise but no disorder in the couplings $J^{d} = 0$ in this section. For very slow driving ($v \to 0$), the droplet will settle into and track the bottom of the cup. When driven at a finite velocity $v$, the droplet cannot stay at the bottom since there is no net force exerted by the currents $I^{ext}_n$ at that point.  

Instead, the droplet must lag the bottom of the moving external drive by an amount $ \Delta x_v  = \bar{x} - v t$ such that the slope of the potential $V^{cup}$ provides an effective force $F_v^{motion} \equiv \gamma v$ needed to keep the droplet in motion at velocity $v$. That is, 
\begin{equation} 
-\partial_{\bar{x}} V^{cup}(\langle \Delta x_v \rangle)  = F_v^{motion} \equiv \gamma v.
\label{eqn:equivalance}
\end{equation}
This equation, which we call an `equivalence principle' in analogy with inertial particles in an accelerated frame, is verified by simulations in Fig.~\ref{fig:equiv principle}b.  Similar results on a lag between driving forces and the response were obtained in earlier works \citep{fung2015fluctuation,mi2014spike}.


In fact, we find that the the above `equivalence' principle predicts the entire distribution $p(\Delta x_v)$ of fluctuations of the lag $\Delta x_v$ due to Langevin noise; see Fig.\ref{fig:equiv principle}c. By binning the lag $\Delta x_v(t)$ for trajectories of the droplet obtained from repeated numerical simulations, we determined $p(\Delta x_v)$, the occupancy of the droplet in the moving frame of the drive. As detailed in the Appendix, data for different velocities collapses using an effective temperature scale $T$, verifying that 
\begin{equation}
\label{eqn:fluc lag}
k_B T \log p(\Delta x_v) = -(V^{{cup}}(\Delta x_v) - F^{{motion}}_v \Delta x_v),
\end{equation}


Our results here are consistent with the fluctuation-dissipation result obtained in \citep{Monasson2013-kx} for driven droplets. In summary, in the co-moving frame of the driving signal, the droplet's position $\Delta x_v$ fluctuates as if it were in thermal equilibrium in the modified potential $V^{eff} = V^{cup} - F^{motion}_v \Delta x_v$.  

\section*{Speed limits on updates of internal representation}
The simple `equivalence principle' implies a striking bound on 
the update speed of internal representations. A driving signal cannot drive the droplet at velocities greater than some $v_{crit}$ if the predicted lag for $v > v_{crit}$ is larger than the cup. In the Appendix, we find $v_{crit} = 2d(w+R)/3\gamma$, where $2R$ is the droplet size.

Larger driving strength $d$ increases $v_{crit}$, but as was previously discussed, we require $d < d_{max}$ in order to retain continuity and stability of the internal representation. 
Hence, we find an absolute upper bound on the fastest external signal that can be tracked by the internal representation,

\begin{equation}
v^* = \kappa p J\gamma^{-1},
\label{eqn:vfund}
\end{equation}
where $p$ is the range of interactions, $J$ is the synaptic strength, $\gamma^{-1}$ is the mobility or inverse drag coefficient of the droplet, and $\kappa$ is a dimensionless $\mathcal{O}(1)$ number.


\begin{figure}
\begin{center}
     \includegraphics[width=0.8\linewidth]{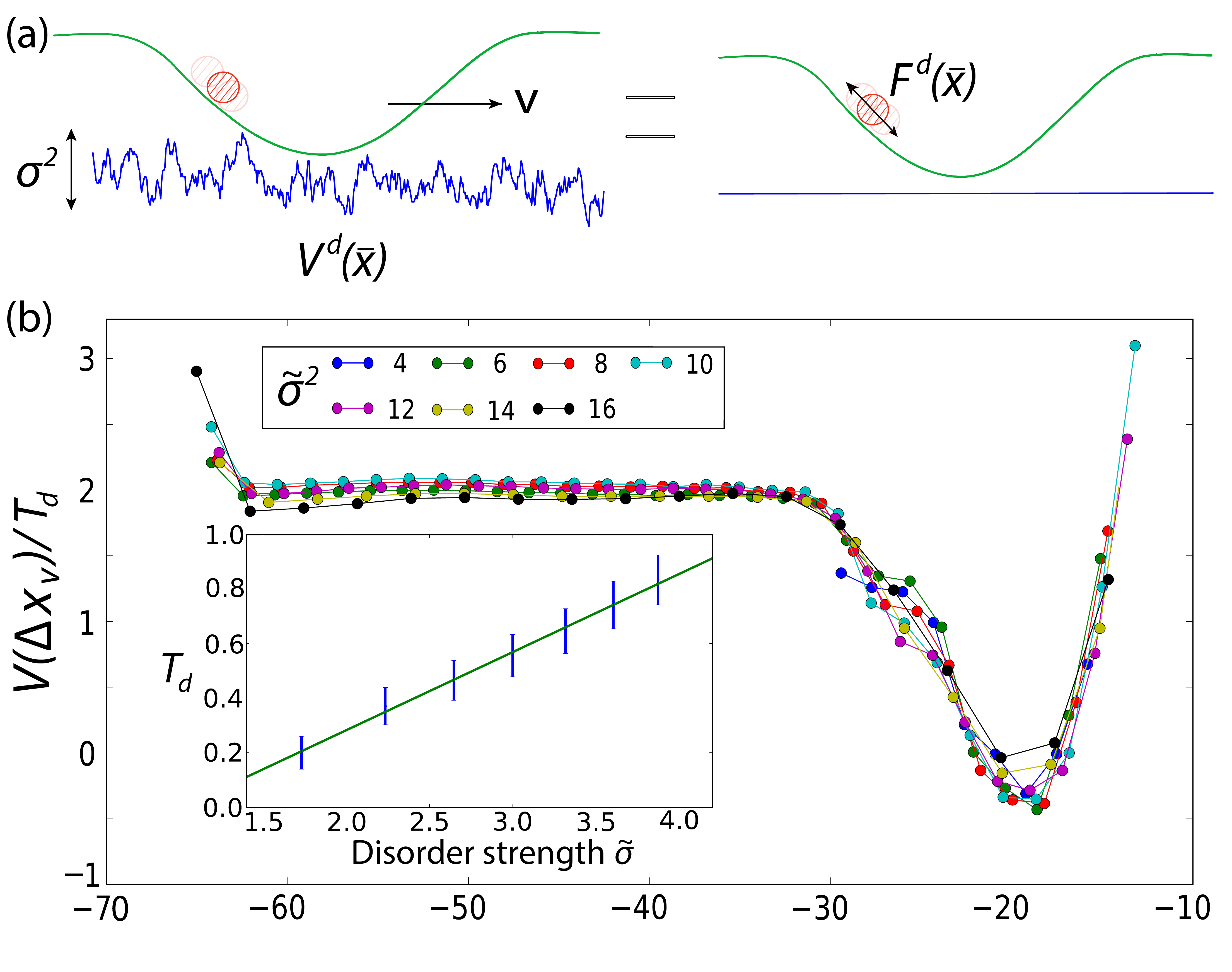}
\end{center}
\caption{Disorder in neural connectivity is well-approximated by an effective temperature $T_d$ for a moving droplet. (a) Long-range disorder breaks  the degeneracy of the continuous attractor, creating a rough landscape. A droplet moving at velocity $v$ in this rough landscape experiences random forces. (b) The fluctuations of a moving droplet's position, relative to the cup's bottom, can be described by an effective temperature $T_{d}$. We define a potential $V(\Delta x_v) = - k_B T_{d} \log p(\Delta x_v)$ where $p(\Delta x_v)$ is the probability of the droplet's position fluctuating to a distance $\Delta x_v$ from the peak external current. We find that $V(\Delta x_v)$ corresponding to different amounts of disorder $\tilde{\sigma}^2$ (where $\tilde{\sigma}^2$ is the average number of long-ranged disordered connections per neuron in units of $2p$), can be collapsed by the one fitting parameter $T_{d}$. (inset) $T_{d}$ is linearly proportional to the strength of disorder $\tilde{\sigma}$. \label{fig:Tbumpy}}
\end{figure}

\begin{figure}
\begin{center}
    \includegraphics[width=0.8\linewidth]{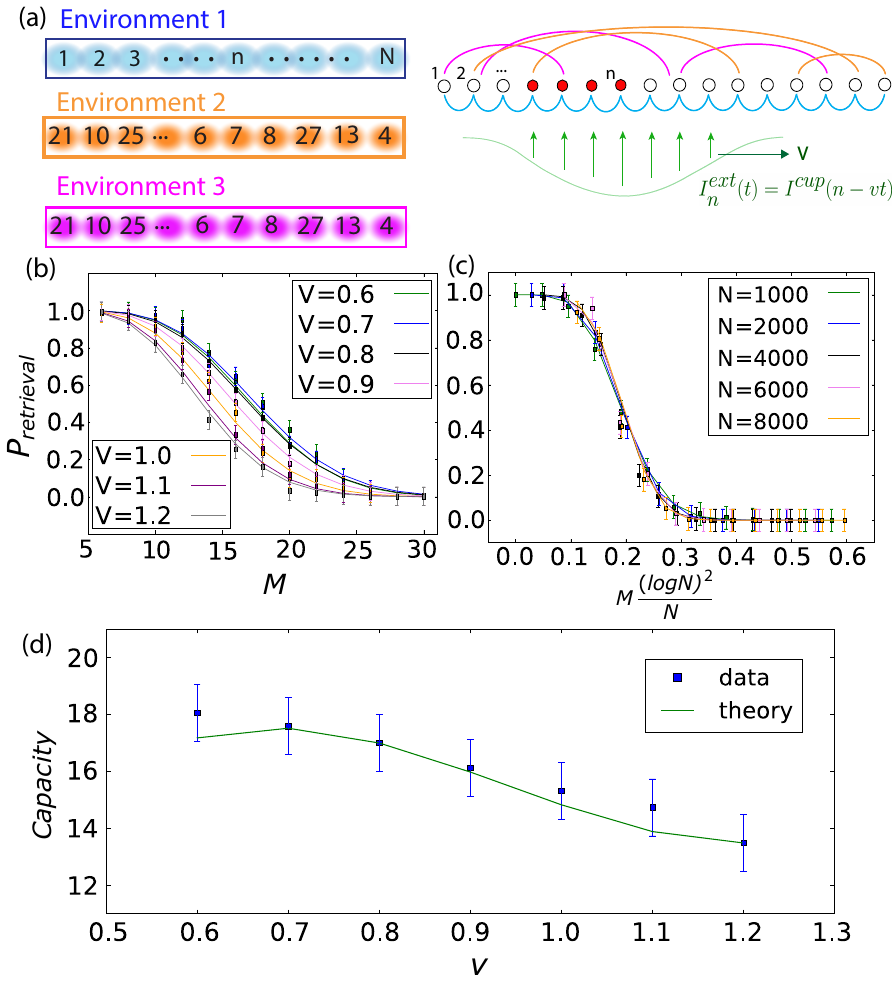}
\end{center}
\caption{Non-equilibrium capacity of place cell networks limits retrieval of spatial memories at finite velocity. (a) Place cell networks model the storage of multiple spatial memories in parts of the hippocampus by coding multiple continuous attractors in the same set of neurons. Neural connections encoding spatial memory 2,3,\ldots act like long range disorder for spatial memory 1. Such disorder, through an increased effective temperature, reduces the probability of tracking a finite velocity driving signal. (b)  The probability of successful retrieval, $P_{retrieval}$, decreases with the number of simultaneous memories $M$ and velocity $v$ (with $N=4000,p=10,\epsilon=0.35,\tau=1,J=100,d=10,w=30$ held fixed). (c) $P_{retrieval}$ simulation data collapses when plotted against $M/(N/(\log N)^2)$ (parameters same as (b) with $v=0.8$ held fixed and $N$ varies).
(d) The non-equilibrium capacity $M_c$ as a function of retrieval velocity $v$.\label{fig:capacity}}
\end{figure}

\section*{Disordered connections and effective temperature}

We now consider the effect of long-range quenched disorder $J_{ij}^{d}$ in the synaptic matrix \citep{Sebastian_Seung1998-vm,Kilpatrick2013-go}, which breaks the exact degeneracy of the continuous attractor, creating an effectively rugged landscape, $V^{d}(\bar x)$, 
as shown schematically in Fig. \ref{fig:Tbumpy} and computed in the Appendix. 
When driven by a time-varying external signal, $I^{ext}_i(t)$, the droplet now experiences a net potential $V^{ext}(\bar{x},t) + V^{d}(\bar{x})$. The first term causes motion with velocity $v$ and a lag predicted by the equivalence principle, 
and for sufficiently large velocities $v$, the effect of the second term can be modeled as effective Langevin white noise. To see this, note that $V^{d}(\bar{x})$ is uncorrelated on length scales larger than the droplet size; hence for large enough droplet velocity $v$, the forces $F^{d}(t) \equiv -\partial_{\bar{x}}V^d\vert_{\bar{x} = \bar{x}(t)}$ due to disorder are effectively random and uncorrelated in time. More precisely, let $\sigma^2 = \text{Var}(V^d(\bar{x}))$. In the Appendix, we compute $F^{d}(t)$ and show that $F^d(t)$ has an auto-correlation time, 
$\tau_{cor} = 2R/v$ due to the finite size of the droplet.  

Thus, on longer timescales, $F^{d}(t)$ is uncorrelated and can be viewed as Langevin noise for the droplet center of mass $\bar{x}$, associated with a disordered-induced temperature $T_d$. Through repeated simulations with different amounts of disorder $\sigma^2$, we inferred the distribution $p(\Delta x_v)$ of the droplet position in the presence of such disorder-induced fluctuations; see Fig. \ref{fig:Tbumpy}. The data collapse in Fig. \ref{fig:Tbumpy}b confirms that the effect of disorder (of size $\sigma^2$) on a rapidly moving droplet can indeed be modeled by an effective disorder-induced temperature $T_{d} \sim \sigma \tau_{cor}$.
(For simplicity, we assume that internal noise $\eta_{int}$ in Eqn.\eqref{eqn:eom_zero} is absent here.Note that in general $\eta_{int}$ will also contribute to $T_d$. Here we focus on the contribution of disorder to an effective temperature $T_d$ since internal noise $\eta_{int}$ has been considered in prior works \citep{fung2015fluctuation}.)  

Thus, the disorder $J_{ij}^d$ effectively creates thermal fluctuations about the lag predicted by the equivalence principle; such fluctuations may carry the droplet out of the driving cup $I^{cup}(n-vt)$ and prevent successful update of the internal representation. We found that this effect can be quantified by a simple Arrhenius-like law, \begin{equation}
\label{eqn:DeltaE}
r \sim \exp(- \Delta E(v,d) / k_{B} T_{d})
\end{equation}

where $\Delta E(v,d)$ is the energy gap between where the droplet sits in the drive and the escape point, predicted by the equivalence principle, and $T_d$ is the disorder-induced temperature.  Thus, given a network of $N$ neurons, the probability of an external drive moving the droplet successfully across the network is proportional to $\exp(- r N)$. (Note that $r$ depends on $N$ in a way such that $\exp(-r N)$ becomes a step function as $N\to \infty$: i.e., always successful below a critical amount of disorder (capacity), and always failing beyond this capacity.)

\section*{Implications: Memory capacity of driven place cell networks}

The capacity of a neural network to encode multiple memories has been studied in numerous contexts since Hopfield's original work \citep{Hopfield1982-fb}. While specifics differ \citep{Amit1985-as,Battaglia1998-eg,Monasson2013-pn,Hopfield2010-lf}, the capacity is generally set by the failure to retrieve a specific memory because of the effective disorder in neural connectivity due to other stored memories.

However, these works on capacity do not account for non-adiabatic external driving. Here, we use our results to determine the capacity of a place cell network \citep{OKeefe1971-jt,Battaglia1998-eg,Monasson2013-pn} to both encode and manipulate memories of multiple spatial environments at a finite velocity. Place cell networks \citep{ Tsodyks1999-px,Monasson2013-kx, Monasson2015-nl, Monasson2014-nu, Monasson2013-pn} encode memories of multiple spatial environments as multiple continuous attractors in one network. Such networks have been used to describe recent experiments on place cells and grid cells in the hippocampus \citep{Yoon2013-nl, Hardcastle2015-as,Moser2014-lw}.

In experiments that expose a rodent to different spatial environments $\mu = 1,\ldots M$ \citep{Alme2014-qc, Moser2017-dj, Kubie1991-ge}, the same place cells $i=1,\ldots N$ are seen having `place fields' in different spatial arrangements $\pi^\mu(i)$ as seen in Fig.\ref{fig:capacity}A, where $\pi^\mu$ is a permutation specific to environment $\mu$. Consequently, Hebbian plasticity suggests that each environment $\mu$ would induce a set of synaptic connections $J_{ij}^{\mu}$ that corresponds to the place field arrangement in that environment; i.e., $J_{ij}^{\mu} = J(1-\epsilon)$ if $|\pi^{\mu}(i) - \pi^{\mu}(j)| < p $. That is, each environment corresponds to a $1$-D network when the neurons are laid out in a specific permutation $\pi^\mu$. The actual network has the sum of all these connections $J_{ij} = \sum_{\mu=1}^M J_{ij}^\mu$ over the $M$ environments the rodent is exposed to.

While $J_{ij}$ above is obtained by summing over $M$ structured environments, from the perspective of, say, $J_{ij}^{1}$, the remaining $J_{ij}^{\mu}$ look like long-range disordered connections. We will assume that the permutations $\pi^{\mu}(i)$ corresponding to different environments are random and uncorrelated, a common modeling choice with experimental support \citep{Hopfield2010-lf,Monasson2015-nl, Monasson2014-nu,Alme2014-qc,Moser2017-dj}. Without loss of generality, we assume that $\pi^{0}(i) = i$ (blue environment in Fig.\ref{fig:capacity}.) Thus, $J_{ij} = J_{ij}^{0} + J_{ij}^{d}, J_{ij}^{d} = \sum_{\mu = 1}^{M-1} J_{ij}^{\mu}$. The disordered matrix $J^{d}_{ij}$ then has an effective variance $\sigma^2 \sim (M-1)/N$. Hence, we can apply our previous results to this system. 
Now consider driving the droplet with velocity $v$ in Environment 1 using external currents. The probability of successfully updating the internal representation over a distance $L$ is given by $P_{retrieval} = e^{- rL/v}$, where $r$ is given by Eqn.\eqref{eqn:DeltaE}.

In the thermodynamic limit $N \to \infty$, with $w, p, L/N$ held fixed, $P_{retrieval}$ becomes a Heaviside step function $\Theta (M_c-M)$ at some critical value $M_c$ given by
\begin{equation}
\label{eqn:capacity}
M_c \sim \bigg[v\Delta E(v,d)\bigg]^2 \frac{N}{(\log N)^2}
\end{equation}
for the largest number of memories that can be stored and retrieved at velocity $v$. $\Delta E(v,d) = (4dw-3\gamma v-2dR)(-v\gamma+2dR)/4d $. Fig.\ref{fig:capacity} shows that our numerics agree well with this formula, showing a novel dependence of the capacity of a neural network on the speed of retrieval and the strength of the external drive.Note that the fact that Eqn.\eqref{eqn:capacity} scales sublinearly in $N$ reflects our choice of `perfect' retrieval in the definition of successful events. As in earlier works \cite{Hopfield1982-fb,hertz1991introduction,Amit1985-ls,Amit1985-as}, the precise definition of capacity can change capacity by $\log$ factors.


\section*{Conclusion}

Thus we have considered continuous attractors in neural networks driven by localized time-dependent currents $I^{cup}(n-vt)$;  in recent experiments, such currents can represent landmark-related sensory signals \citep{Hardcastle2015-as} when a rodent is traversing a spatial environment at velocity $v$, or signals that update the internal representation of head direction \citep{Seelig2015-uu}. Several recent experiments have controlled the effective speed of visual stimuli in virtual reality environments \citep{Meshulam2017-ue, Aronov2017-jy,Kim2017-zs, Turner-Evans2017-ch}. Other experiments have probed crosstalk between memories of multiple spatial environments \citep{Alme2014-qc}. Our results predict an error rate that rises with speed and with the number of  environments.

While our analysis used specific functional forms for, e.g., the current profile $I^{cup}(n-vt)$, our bound simply reflects the finite response time in moving emergent objects, much like moving a magnetic domain in a ferromagnet using space and time varying fields. Thus we expect our bound to hold qualitatively for other related forms \citep{Hopfield2015-wt}.

In addition to positional information considered here, continuous attractors are known to also receive velocity information \citep{Major2004-ku,McNaughton2006-xq,Seelig2015-uu,Ocko2018-gv}. We do not consider such input in the main text but extend our analysis to velocity integration in the Appendix. 

In summary, we found that the non-equilibrium statistical mechanics of a strongly interacting neural network can be captured by a simple equivalence principle and a disorder-induced temperature for the network's collective coordinate. Consequently, we were able to derive a velocity-dependent bound on the number of simultaneous memories that can be stored and retrieved from a network. We discussed how these results, based on general theoretical principles on driven neural networks, allow us to connect robustly to recent time-resolved experiments in neuroscience\citep{ Kim2017-zs,Turner-Evans2017-ch,Hardcastle2015-as,hardcastle2017multiplexed,campbell2018principles} on the response of neural networks to dynamic perturbations.

\section*{Appendix}

\section*{Equations for the collective coordinate}
\label{app:eqn_cc}
As in the main text, we model $N$ interacting neurons as,

\begin{equation}
\begin{split}
\label{eqn:eom}
\frac{di_n}{dt} &= -\frac{i_n}{\tau} + \sum_{k = 1}^{N} J_{nk}f(i_k) + I_n^{ext}(t) + \eta^{int}_n(t),
\\
&\text{where} \; f(i) = \frac{1}{1+e^{-i/i_0}}.
\end{split}
\end{equation}
The synaptic connection between two different neurons $i,j$ is $J_{ij} = J(1 - \epsilon)$ if neurons $i$ and $j$ are separated by a distance of at most $p$ neurons, and $J_{ij}= - J \epsilon$ otherwise, and note that we set the self-interaction to zero. The internal noise is a white noise, $\langle \eta^{int}_n(t) \eta^{int}_n(0) \rangle = C_{int}\delta(t)$ with an amplitude $C_{int}$. $I_n^{ext}(t)$ are external  driving currents discussed below.

Such a quasi 1-d network with $p$-nearest neighbor interactions resembles a similarly connected network of Ising spins at fixed magnetization in its behavior; the strength of inhibitory connections $\epsilon$ constrains the total number of neurons $2R$ firing at any given time to $2R \sim p \epsilon^{-1}$. It was shown \citep{Hopfield2010-lf,Monasson2013-kx, Monasson2014-nu, Monasson2013-pn} that below a critical temperature $T$, the $w$ firing neurons condense into a contiguous droplet of neural activity, minimizing the total interface between firing and non-firing neurons. Such a droplet was shown to behave like an emergent quasi-particle that can diffuse or be driven around the continuous attractor. We define the center of mass of the droplet as,
\begin{equation}
\bar{x} \equiv \sum_n n f(i_n).
\end{equation}
The description of neural activity in terms of such a collective coordinate $\bar{x}$ greatly simplifies the problem, reducing the configuration space from the $2^N$ states for the $N$ neurons to $N$-state consists of the center of mass of the droplet along the continuous attractor \citep{Wu2008-iw}. Computational abilities of these place cell networks, such as spatial memory storage, path planning and pattern recognition, are limited to parameter regimes in which such a collective coordinate approximation holds (e.g., noise levels less than a critical value $T < T_c$) .

The droplet can be driven by external signals such as sensory or motor input or input from other parts of the brain. We model such external input by the currents $I_{n}^{ext}$ in Eqn.\ref{eqn:eom}; for example, sensory landmark-based input \citep{Hardcastle2015-as} when an animal is physically in a region covered by place fields of neurons $i, i+1,\ldots, i+z$, currents $I^{ext}_{i}$ through $I^{ext}_{i+z}$ can be expected to be high compared to all other currents $I^{ext}_j$. Other models of driving in the literature include adding an anti-symmetric component $A_{ij}$ to synaptic connectivities  $J_{ij}$ \citep{Ponulak2013-op}; we consider such a model in Appendix \ref{Aij}.

Let $\{i_k^{\bar{x}}\}$ denote the current configuration such that the droplet is centered at location $\bar{x}$. The Lyapunov function of the neural network is given by\citep{Hopfield2015-wt}, 

\begin{equation}
	\label{eqn:lyapunov}
	\begin{split}
		\mathcal{L}[\bar{x}] &\equiv \mathcal{L}[f(i_k^{\bar{x}})] \\
		&= \frac{1}{\tau}\sum_{k} \int_0^{f(i_k^{\bar{x}})}f^{-1}(x)dx\\
		& -\frac{1}{2} \sum_{n,k}J_{nk}f(i_k^{\bar{x}})f(i_n^{\bar{x}})- \sum_k f(i_k^{\bar{x}})I^{ext}_k(t).
	\end{split}
\end{equation}

In a minor abuse of terminology, we will refer to terms in the Lyapunov function as energies, even though energy is not conserved in this system. For future reference, we denote the second term $V_J(\bar{x}) = -1/2\sum_{nk}J_{nk}f(i_k^{\bar{x}})f(i_n^{\bar{x}})$, which captures the effect of network synaptic connectivities. Under the `rigid bump approximation' used in \citep{Hopfield2015-wt},i.e., ignoring fluctuations of the droplet, we find,

\begin{eqnarray}
V_J(\bar{x}) &= -\frac{1}{2}\sum_{n,k} f(i_n^{\bar{x}}) J_{nk} f(i_k^{\bar{x}}) \\ &\approx -\frac{1}{2}\sum_{\substack{|n-\bar{x}| \le R, \\ |k-\bar{x}| \le R}} f(i_n^{\bar{x}}) J_{nk} f(i_k^{\bar{x}}).
\end{eqnarray}

For a quasi 1-d network with $p$-nearest neighbor interactions and no disorder, $V_J(\bar{x})$ is constant, giving a smooth continuous attractor. However, as discussed later, at the presence of disorder, $V_J(\bar{x})$ has bumps (i.e. quenched disorder) and is no longer a smooth continuous attractor.

To quantify the effect of the external driving, we write the third term in Eqn.\eqref{eqn:lyapunov},
 \begin{eqnarray}
 V^{ext}(\bar{x},t) &=& -\sum_k I^{ext}_k(t) f(i_k^{\bar{x}}) \\ 
 &\approx & -\sum_{|k-\bar{x}| < R} I^{ext}_k(t) f(i_k^{\bar{x}})
 \end{eqnarray}
 
Thus, the external driving current $I_n^{ext}(t)$ acts on the droplet through the Lyapunov function $V^{ext}(\bar{x},t)$. Hence we define
\begin{equation}
 F^{ext}(\bar{x},t) = -\partial_{\bar{x}} V^{ext}(\bar{x},t)  \\
\end{equation}
to be the external force acting on the droplet center of mass. 

\subsection*{Fluctuation and dissipation}
\label{fluctuation-dissipation}

We next numerically verify that the droplet obeys a fluctuation-dissipation-like relation by driving the droplet using external currents $I^{ext}$ and comparing the response to diffusion of the droplet in the absence of external currents.


We use a finite ramp as the external driving, $I^{ext}_n = n$ with $n < n_{max}$, and $I^{ext}_n = 0$ otherwise (see Fig.\ref{fig:fluctuation-dissipation}(a)). We choose $n_{max}$ to be such that it takes considerable time for the droplet to relax to its steady-state position at the end of the ramp. We notice that for different slopes of the $I^{ext}_n$, the droplet have different velocities, and it is natural to define a mobility of the droplet, $\mu$, by $v = \mu f$, where $f$ is the slope of $I^{ext}_n$.
Next, we notice that on a single continuous attractor the droplet can diffuse because of internal noise in the neural network. Therefore, we can infer the diffusion coefficient $D$ of the droplet from $\langle x^2 \rangle = 2Dt$ for a collection of diffusive trajectories (see Fig.\ref{fig:fluctuation-dissipation}(b)), where we have used $x$ to denote the center of mass $\bar{x}$ for the droplet to avoid confusion.

In Fig.\ref{fig:fluctuation-dissipation}(c) we numerically verify that $\mu$ and $D$ depend on parameters $\tau$ and $R$ in the same way, i.e. $D$ and $\mu$ are both proportional to $1/\tau$ and independent of $R$. This suggest that $D \propto \mu$, if we call the proportionality constant to be $k_BT$, then we have a fluctuation-dissipation-like relation,
\begin{equation}
\label{eqn:FDT}
D = \mu k_B T.
\end{equation} 

Note that Eqn.\eqref{eqn:FDT} has also been derived for the case of binary neurons with a hard constraint on the number of firing population \citep{Monasson2013-kx}.

\begin{figure}
\begin{center}
    \includegraphics[width=0.8\linewidth]{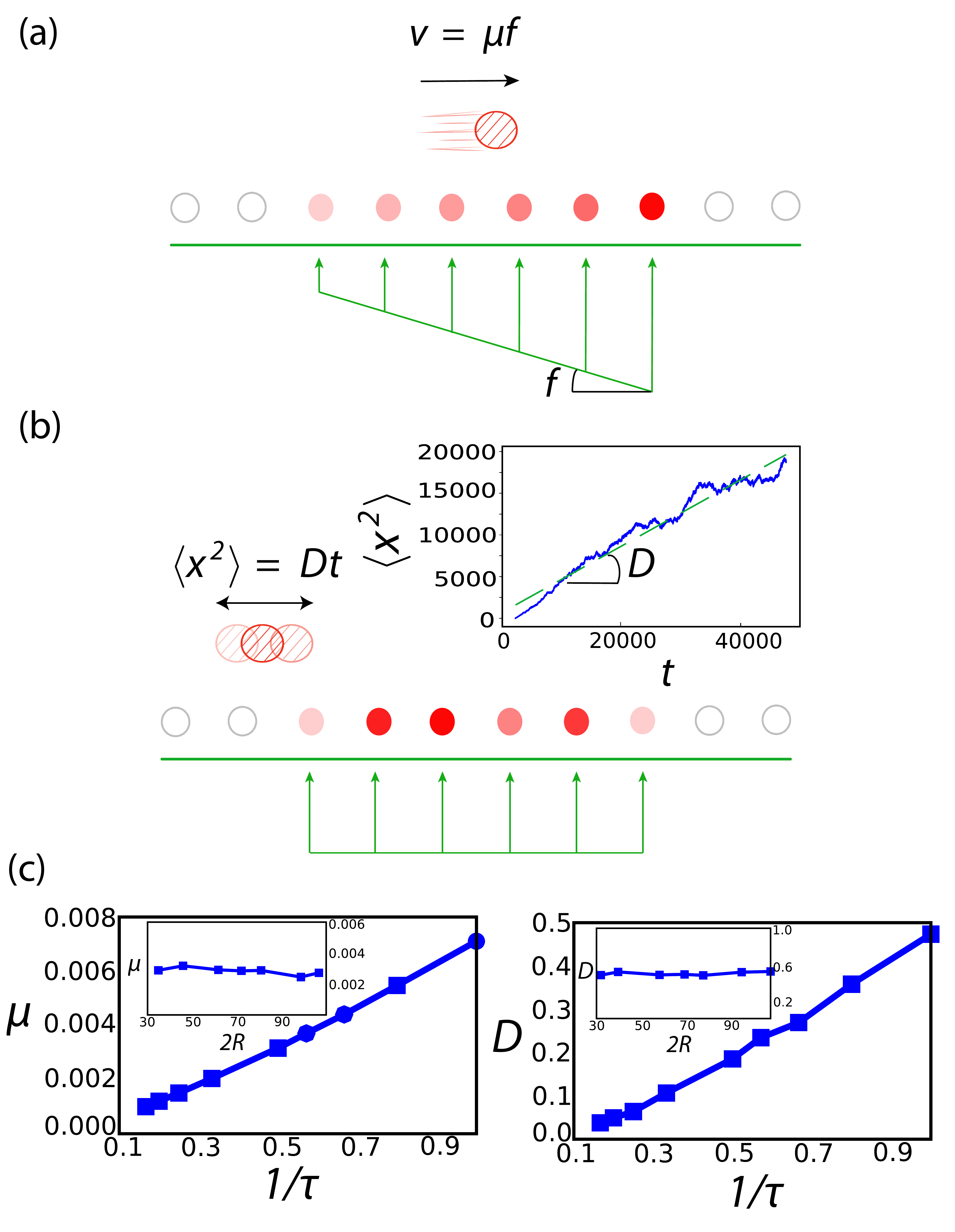}
\end{center}
\caption{(a) Schematics of the droplet being driven by a linear potential (ramp), illustrating the idea of mobility. Green lines are inputs, red dots are active neurons, the more transparent ones represent earlier time.
(b) Schematics of the droplet diffusing under an  input with no gradient, 
giving rise to diffusion. Inset is the plot of mean-squared distance vs time, clearly showing diffusive behavior. Note here we have changed the droplet c.o.m. position $\bar{x}$ as $x$ to avoid confusion with the mean-position.
(c) Comparison between mobility $\mu = \gamma^{-1}$ and diffusion coefficient $D$. Both $\mu$ and $D$ depend on blob size and $\tau$ in the same way, and thus $D$ is proportional to $\mu$. \label{fig:fluctuation-dissipation}}
\end{figure}

\section{Space and time dependent external driving signals}


We consider the model of sensory input used in the main text: $I^{cup}(n) = d(w-|n|), n \in [-w,w]$, $I^{cup}(n) = 0$ otherwise.
We focus on time-dependent currents $I^{ext}_n(t) = I^{cup}(n - vt)$.
Such a drive was previously considered in \citep{Wu2005-sw}, albeit without time dependence. Throughout the paper, we refer to $w$ as the linear size of the drive, $d$ as the depth of the drive, and set the drive moving at a constant velocity $v$. From now on, we will go to the continuum limit and denote $I^{ext}_n(t) = I^{ext}(n,t) \equiv I^{ext}(x,t)$.

As an example, for $v=0$ (in this case, $\Delta x_v = \bar{x}$) we can write down the potential $V^{ext}$ for the external driving signal $I^{cup}(x) = d(w-|x|)$ by evaluating it at a stationary current profile $f(i_k^{\bar{x}}) = 1\; \text{if} \; |k-\bar{x}| \leq R, =0 \; \text{otherwise}$,
\begin{equation}
\label{eqn:Vext}
	V^{ext}(\bar{x}) = \begin{cases} 
		V_1(\bar{x}), & |\bar{x}| \leq R\\
		V_2(\bar{x}), & |\bar{x}| > R,
	\end{cases}
\end{equation}

where
\begin{equation}
	\begin{split}
		V_1(\bar{x}) &= -d\bigg [(R-\bar{x})(w-\frac{R-\bar{x}}{2}) + (R+\bar{x})(w-\frac{w+\bar{x}}{2})\bigg ]
		\\
		V_2(\bar{x}) &= -\frac{d}{2}(R+w-\bar{x})^2.
	\end{split}
\end{equation}

We plot $V^{ext}$ given by Eqn.\eqref{eqn:Vext} vs the c.o.m. position of droplet in Fig.\ref{fig:Vext panel}(a).

\begin{figure}
\begin{center}
    	\includegraphics[width=0.8\linewidth]{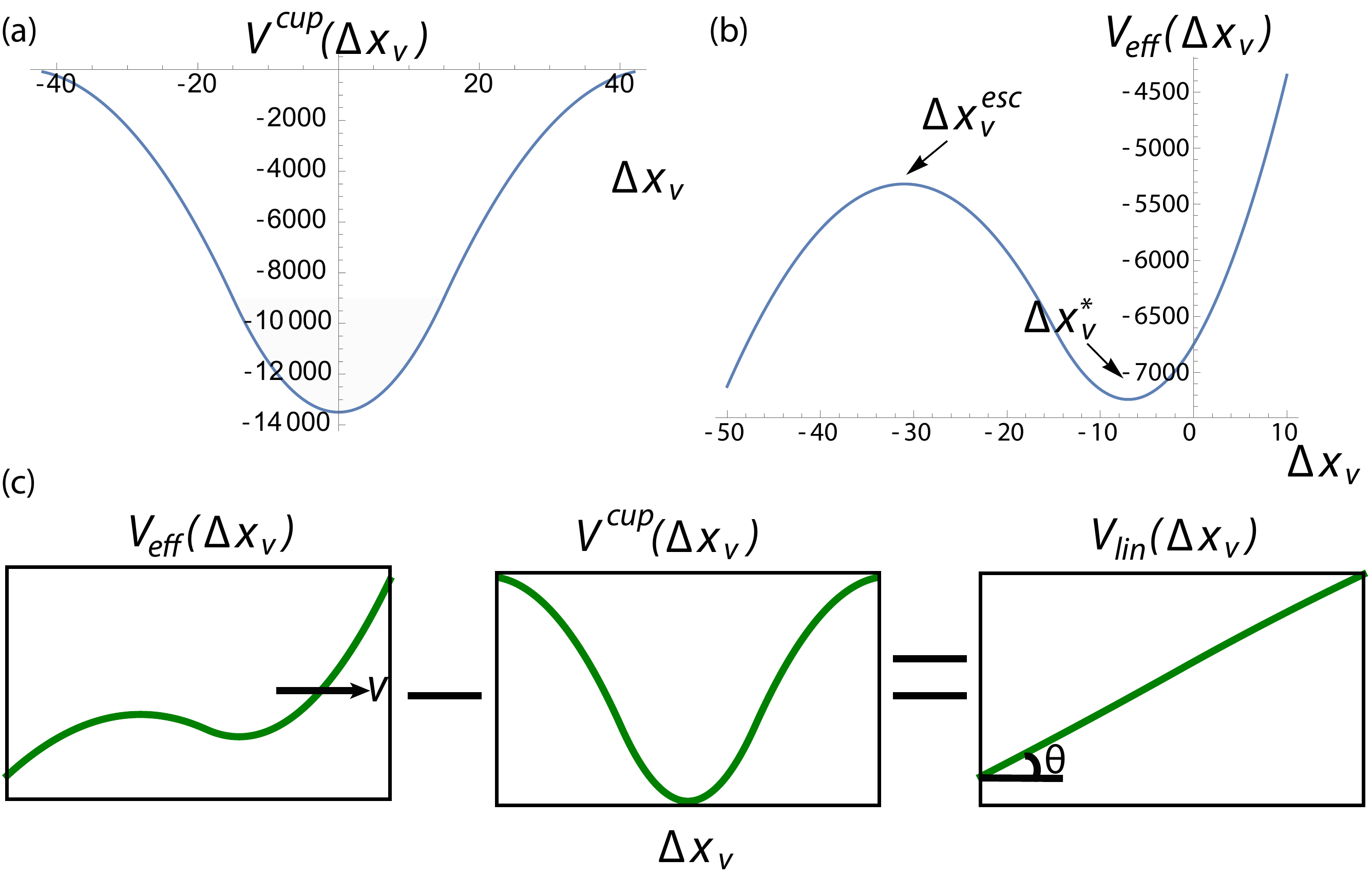}
\end{center}
	\caption{(a) $V^{ext}$ for external driving signal $I^{cup}(x,t)$ with $v=0$, plotted from Eqn.\eqref{eqn:Vext} with $d=20$, $R = 15$, $w = 30$. (b) Effective potential $V_{eff}$ experienced by the droplet for a moving cup-shaped external driving signal, plotted from Eqn.\eqref{eqn:Veff} with $d = 10$, $R= 15$, $w=30$, $\gamma v = 140$. (c) Schematic illustrating the idea of the equivalence principle (main text Eqn.(4)). The difference between the effective potential, $V_{eff} \equiv -k_BT\log p(\Delta x_v)$, experienced by a moving droplet, and that of a stationary droplet, $V^{cup}$, is a linear potential, $V_{lin} = -F_v^{motion} \Delta x_v$. The slope $\theta$ of the linear potential $V_{lin} = - F_v^{motion}\Delta x_v$ is proportional to velocity as $F_v^{motion} = \gamma v$. \label{fig:Vext panel}}
\end{figure}

\subsection*{A thermal equivalence principle}
The equivalence principle we introduced in the main text allows us to compute the steady-state position and the effective new potential seen in the co-moving frame. Crucially, the fluctuations of the collective coordinate are described by the potential obtained through the equivalence principle. The principle correctly predicts both the mean (main text Eqn.(4)) and the fluctuation (main text Eqn.(5)) of the lag $\Delta x_v$. Therefore, it is actually a statement about the equivalence of effective dynamics in the rest frame and in the co-moving frame. Specializing to the drive $I^{cup}(x,t)$, the equivalence principle predicts that the effective potential felt by the droplet (moving at constant velocity $v$) in the co-moving frame equals the effective potential in the stationary frame shifted by a linear potential, $V_{lin} = -F_v^{mot} \Delta x_v$, that accounts for the fictitious forces due to the change of coordinates (see Fig.\ref{fig:Vext panel}(c)).

Since we used \eqref{eqn:Vext} for the cup shape and the lag $\Delta x_v$ depends linearly on $v$, we expect that the slope of the linear potential $V_{lin}$ also depends linearly on $v$. Here the sign convention is chosen such that $V_{lin}<0$ corresponds to droplet moving to the right. 

\section{Speed limit for external driving signals}

In the following, we work in the co-moving frame with velocity $v$ at which the driving signal is moving. We denote the steady-state c.o.m. position in this frame to be $\Delta x^*_v$, and a generic position to be $\Delta x_v$.  

When $v>0$, the droplet will sit at a steady-state position $\Delta x_v^*<0$, equivalence principle says we should subtract a velocity-dependent linear potential $F^{mot}_v \Delta x_v = \gamma v \Delta x_v$ from $V^{ext}$ to account for the motion,
\begin{equation}
	\label{eqn:Veff}
	V_{eff}(\Delta x_v) = V^{cup}(\Delta x_v) - \gamma v \Delta x_v.
\end{equation}

We plot $V_{eff}$ vs $\Delta x_v$ in Fig.\ref{fig:Vext panel}(b). Notice that there are two extremal points of the potential, corresponding to the steady-state position, $\Delta x^*_v$, and the escape position, $\Delta x^{esc}_v$,

\begin{equation}
	\begin{split}
		\label{eqn:positions}
		\Delta x^*_v &= \gamma v/2d
		\\
		\Delta x^{esc}_v &= (dw-\gamma v + dR)/d.
	\end{split}
\end{equation}

We are now in position to derive $v_{crit}$ presented in the main text. We observe that as the driving velocity $v$ increases, $\Delta x^*_v$ and $\Delta x^{esc}_v$ will get closer to each other, and there will be a critical velocity such that the two coincide. 

By simply equating the expression for $x_{esc}$ and $x^*$ and solve for $v$, we found that
\begin{equation}
	\label{eqn:vcrit}
	v_{crit} = \frac{2d(w+R)}{3\gamma}.
\end{equation}

\subsection*{Steady-state droplet size}

Recall that the Lyapunov function of the neural network is given by \eqref{eqn:lyapunov},

\begin{equation}
\begin{split}
\mathcal{L}[\bar{x}]
&= \frac{1}{\tau}\sum_{k} \int_0^{f(i_k^{\bar{x}})}f^{-1}(x)dx\\
& + V_J(\bar{x})  + V^{ext}(\bar{x},t),
\end{split}
\end{equation}

Compared to the equation of motion \eqref{eqn:eom}, we see that the first term corresponds to the decay of neurons in the absence of interaction from neighbors (decay from 'on' state to 'off' state), and the second term corresponds to the interaction $J_{nk}$ term in the e.o.m, and the third term corresponds to the $I_n^{ext}$ in the e.o.m. Since we are interested in the steady-state droplet size, and thus only interested in the neurons that are 'on', the effect of the first term can be neglected (also note that $1/\tau \ll J_{ij}$, when using the Lyapunov function to compute steady-state properties, the first term can be ignored).

To obtain general results, we also account for long-ranged disordered connections $J^d_{ij}$ here. We assume $J^d_{ij}$ consists of random connections among all the neurons. We can approximate these random connections as random permutations of $J^0_{ij}$ and the full $J_{ij}$ is the sum over $M-1$ such permutations plus $J^0_{ij}$.

For the cup-shaped driving and its corresponding effective potential, Eqn.\eqref{eqn:Veff}, we are interested in the steady-state droplet size under this driving, so we first evaluate $V_{eff}$ at the steady-state position $\Delta x^*_v$ in Eqn.\eqref{eqn:positions}. To make the $R$-dependence explicit in the Lyapunov function, we evaluate $\mathcal{L}(\bar{x})$ under the 'rigid bump approximation' used in \citep{Hopfield2015-wt}, i.e., assuming $f(i_k^{\bar{x}}) = 1$ for $|k-\bar{x}| \leq R$, and $=0$ otherwise. 

We find that for $M-1$ sets of disorder interactions, the Lyapunov function is

\begin{equation}
	\begin{split}
		\label{eqn:blobenergy}
		\mathcal{L}[f(i_k^{\bar{x}})] &= J\bigg [ (\epsilon R^2 - (\epsilon + 2p)R + \frac{p(p+1)}{2} 
		\\
		&- pm(2R-p)^2    \bigg ] + \frac{(\gamma v)^2}{4d} + Rd(R-2w),
	\end{split}
\end{equation}
where we have defined the reduced disorder parameter $m = (M-1)/N$ and have used the equivalence principle in main text Eqn.(4) to add an effective linear potential to take into account the motion of the droplet.

Next, we note that the steady-state droplet size corresponds to a local extremum of the Lyapunov function. Extremizing Eqn.\eqref{eqn:blobenergy} with respect to droplet radius $R$, we obtain the steady-state droplet radius as a function of the external driving parameters $d,w$, and the reduced disorder parameter $m$, 

\begin{equation}
	\label{eqn:blobsize}
	R(d,w,m) = \frac{2p-4p^2m+2wd/J+\epsilon}{2d/J - 8pm + 4\epsilon},
\end{equation}

where we observe that in the formula the only dimensionful parameters $d$ and $J$ appears together to ensure the overall result is dimensionless. Our result for $R$ reduces to $	R_0 = \frac{p}{2\epsilon} + \frac{1}{4}$ by setting $M=1$ and $d=w=0$.



\subsection*{Upper limit on external signal strength}

Here we present the calculation for maximal driving strength $I^{ext}$ 
beyond which the activity droplet will 'teleport' -- i.e., disappears at the original location and re-condense at the location of the drive, even if these two locations are widely separated. From now on, we refer to this maximal signal strength as the 'teleportation limit'. We can determine this limit by finding out the critical point where the energy barrier of breaking up the droplet at the original location is zero.

For simplicity, we assume that initially the cup-shaped driving signal is some distance $x_0$ from the droplet, and not moving (the moving case can be solved in exactly the same way by using equivalence principle and going to the co-moving frame of the droplet). We consider the following three scenarios during the teleportation process: $(1)$ the initial configuration: the droplet have not yet teleported, and stays at the original location with radius $R(0,0,m)$; $(2)$ the intermediate configuration: where the activity is no longer contiguous, giving a droplet with radius $\delta(d,w,m)$ at the center of the cup, and another droplet with radius $R(d,w,m)-\delta(d,w,m)$ at the original location (when teleportation happens, the total firing neurons changes from $R(0,0,m)$ to $R(d,w,m)$); $(3)$ the final configuration: the droplet has successfully teleported to the center of the cup, with radius $R(d,w,m)$. The three scenarios are depicted schematically in Fig.\ref{fig:teleportation}.

\begin{figure}
\begin{center}
    	\includegraphics[width=0.8\linewidth]{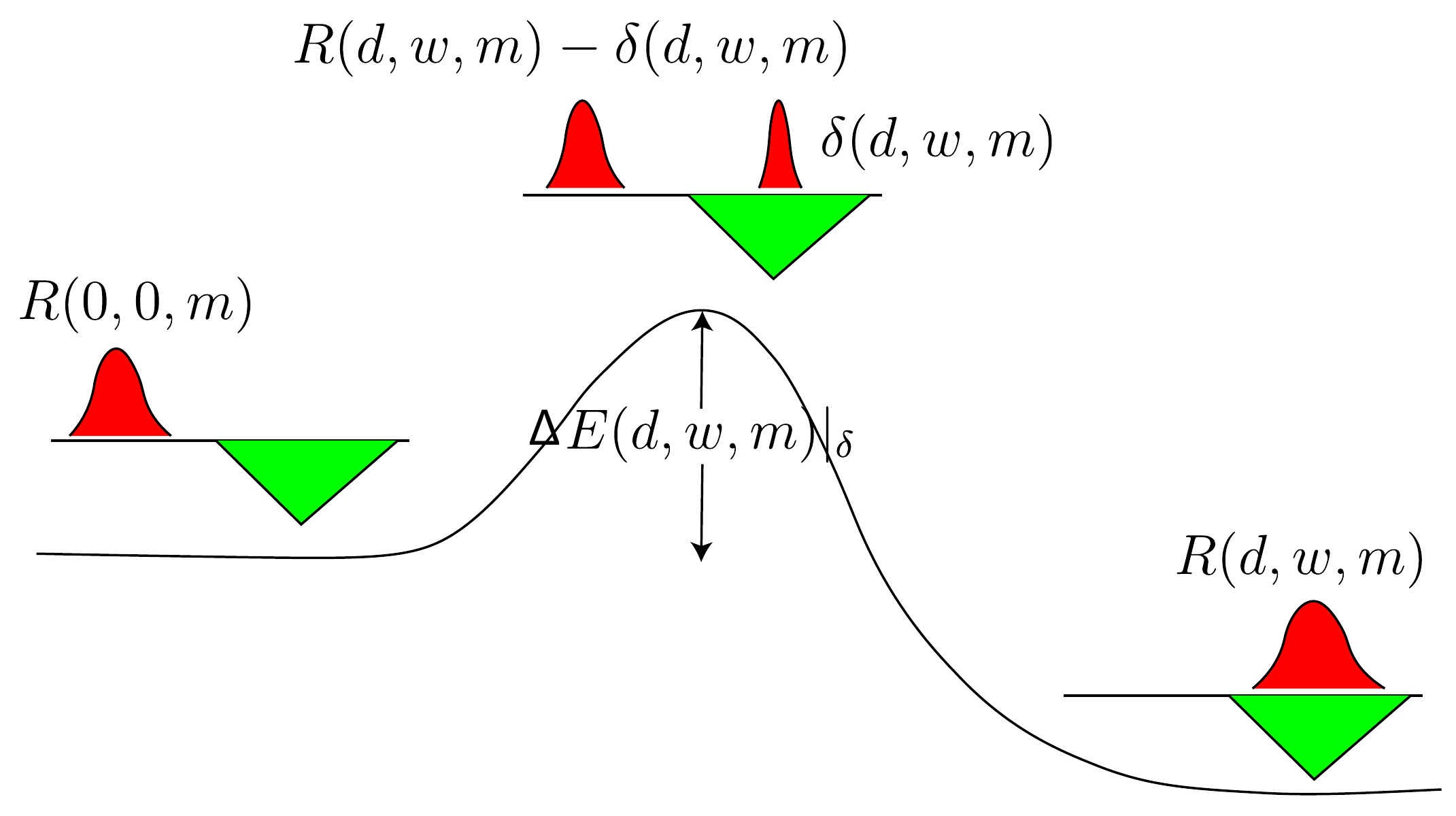}
\end{center}
	\caption{Schematics of three scenarios during a teleportation process. A initial configuration: the droplet is outside of the cup. A energetically unfavorable intermediate configuration that is penalize by $\Delta E$: the droplet breaks apart into two droplets, one outside the cup and one inside the cup; a final configuration with lowest energy: the droplet inside the cup grows to a full droplet while the droplet outside shrinks to zero size. Above each droplet is its corresponding radius $R$.\label{fig:teleportation}}
\end{figure}

The global minimum of the Lyapunov function corresponds to scenario $(3), $ However, there is an energy barrier between the initial configuration $(1)$ and final configuration $(3)$, corresponding to the $V_{eff}$ difference between initial configuration $(1)$ and intermediate configuration $(2)$. We would like to find the critical split size $\delta_c(d,w,m)$ that maximize the difference in $V_{eff}$, which corresponds to the largest energy barrier the network has to overcome in order to teleporte from $(1)$ to $(3)$. For the purpose of derivation, in the following we would like to rename $\mathcal{L}[f(i_k^m)]$ in Eqn.\eqref{eqn:blobenergy} as $E_0(d,w,m)\rvert_{R(d,w,m)}$ to emphasize its dependence on the external driving parameters and disordered interactions. The subscript $0$ stands for the default one-droplet configuration, and it is understood that $E_0(d,w,m)$ is evaluated at the network configuration of a single droplet at location $m$ with radius $R(d,w,m)$.

The energy for $(1)$ is simply $E_0(0,0,m)$, and the energy for $(3)$ is $E_0(d,w,m)$. However, the energy for $(2)$ is not just the sum of $E_0$ from the two droplets. Due to global inhibitions presented in the network, when there are two droplets, there will be an extra interaction term, when we evaluate the Lyapunov function with respect to this configuration. The interaction energy between two droplets in Fig.\ref{fig:teleportation} is

\begin{equation}
	E_{int}(m)\rvert_{R,\delta} = 4JR\delta(\epsilon - 2pm).
\end{equation}

Therefore, the energy barrier for split size $\delta$ is
\begin{equation}
	\begin{split}
		\Delta E&(d,w,m)\rvert_{\delta}  \\
		&= E_0(0,0,m)\rvert_{R(d,w,m)-\delta} + E_0(d,w,m)\rvert_{\delta}
		\\
		&+ E_{int}(m)\rvert_{R(d,w,m),\delta} - E_0(0,0,m)\rvert_{R(0,0,m)}.
	\end{split}
\end{equation}

Therefore, maximizing $\Delta E$ with respect to $\delta$, we find
\begin{equation}
	\begin{split}
		\delta_c &= \frac{dw}{d-8Jpm + 4J\epsilon}
	\end{split}
\end{equation}

Now we have obtained the maximum energy barrier during a teleportation process, $\Delta E\rvert_{\delta_c}$. A spontaneous teleportation will occur if $\Delta E\rvert_{\delta_c} \leq 0$, and this in turn gives a upper bound on external driving signal strength $d \leq d_{max}$ one can have without any teleportation spontaneous occurring. 

We plot the numerical solution of $d_{max}$ obtained from solving $\Delta E(d_c,w,m)\rvert \delta_c = 0$, compared with results obtained from simulation in Fig.\ref{fig:dmax}, and find perfect agreement. 


We also obtain an approximate solution by observing that the only relevant scale for that the critical split size $\delta_c$ is the radius of the droplet, $R$. We set $\delta_c = cR$ for some constant $0 \leq c \leq 1$. In general, $c$ can depend on dimensionless parameters like $p$ and $\epsilon$. Empirically we found the constant to be about 0.29 in our simulation. 

The droplet radius $R$ is a function of $d,w,m$ as we see in Eqn.\eqref{eqn:blobsize}, but to first order approximation we can set $R$ = $R^*$ for some steady-state radius $R^*$. Then we can solve

\begin{equation}
	\label{eqn:dmax}
	d_{max}(M) = \frac{4J(\epsilon-2pm)}{w/cR^* -1}.
\end{equation}

Note that the denominator is positive because $w > R$ and $0 \leq c \leq 1$. The simulation result also confirms that the critical split size $\delta_c$ stays approximately constant. 
We have checked that the dependence on parameters $J,w, m$ in Eqn.\eqref{eqn:dmax} agrees with the numerical solution obtained from solving $E_{bar}(d_c,w,m)\rvert \delta_c = 0$, up to the undetermined constant $c$.

\begin{figure}
\begin{center}
    	\includegraphics[width=0.8\linewidth]{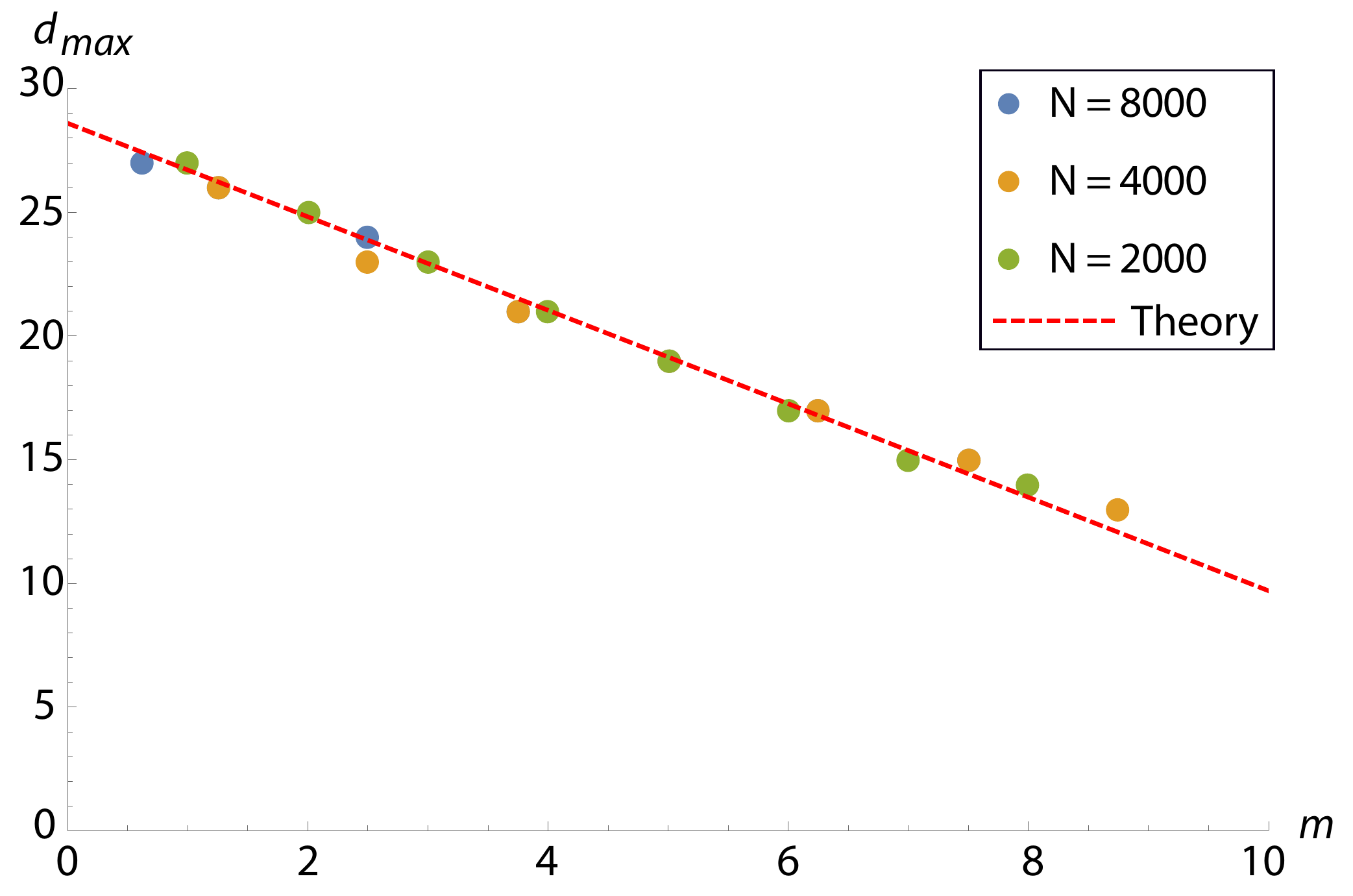}
\end{center}
	\caption{Teleportation depth $d_{max}$ plotted against disorder parameter $m$. The dots are data obtained from simulations for different $N$ but with $p=10$, $\epsilon=0.35$, $\tau=1$, $J=100$, and $w=30$ held fixed. The dotted line is the theoretical curve plotted from solving $\Delta E(d_c,w,m)\rvert \delta_c = 0$ for $d_{c}$ numerically. \label{fig:dmax}}
\end{figure}


\subsection*{Speed limit on external driving}
\label{app:v_fund}
Recall that given a certain signal strength $d$, there is an upper bound on how fast the driving can be, Eqn.\eqref{eqn:vcrit}. Then in particular, for $d_{max}$, we obtain an upper bound on how fast external signal can drive the network,
\begin{equation}
	v_{max} = \frac{8J(w+R^*)(\epsilon-2pm)}{3\gamma(w/cR^* -1)}.
\end{equation}

For $w \gg R^*$, we can approximate
\begin{equation}
	v_{max} \approx \frac{16JcR^*(\epsilon/2-pm)}{3\gamma},
\end{equation}

In the absence of disorder, $m = 0$, the maximum velocity is bounded by

\begin{equation}
	\label{eqn:vbound_1}
	v_{max} \leq \frac{8c}{3}\frac{\epsilon JR^*}{\gamma} \leq \frac{8c}{3}\frac{\epsilon JR_{max}}{\gamma}.
\end{equation}

Recall that in Eqn.\eqref{eqn:dmax}, we have
%
%
\begin{equation}
	\begin{split}
		R(d,w\gg R,0) &\leq R(d_{max}, w\gg R, 0)
		\\
		&= \frac{p}{2\epsilon} + \frac{1}{4} + 2cR^* + \mathcal{O}(\frac{R}{w})
		\\
		& \lessapprox \frac{p}{2\epsilon} + 2cR_{max},
	\end{split}
\end{equation}

where in the second line we have used \eqref{eqn:blobsize} for $d=d_{max}$, $m=0$, and $w \gg R$. 
Upon rearranging, we have

\begin{equation}
R_{max} \lessapprox \frac{1}{1-2c}\frac{p}{2\epsilon}.
\end{equation}

Plugging in Eqn.\eqref{eqn:vbound_1}, we have

\begin{equation}
	v_{max} \leq \frac{8c}{3}\frac{\epsilon JR_{max}}{\gamma} \lessapprox \frac{8}{3(c^{-1}-2)} \frac{Jp}{\gamma}.
\end{equation}

Therefore, we have obtained an fundamental limit on how fast the droplet can move under the influence of external signal, namely,
\begin{equation}
	v_{fund} = \kappa Jp\gamma^{-1},
\end{equation}

where $\kappa = 8/3(c^{-1}-2)$ is a dimensionless $\mathcal{O}(1)$ number.

\section{Path integration and velocity input}

\label{Aij}
Place cell networks \citep{Ocko2018-gv} and head direction networks \citep{Kim2017-zs} are known to receive information both about velocity and landmark information. Velocity input can be modeled by adding an anti-symmetric part $A_{ij}$ to the connectivity matrix $J_{ij}$, which effectively 'tilts' the continuous attractor. 

Consider now 
\begin{equation}
J_{ij} = J^0_{ij} + J^d_{ij} + A^0_{ij}, 
\end{equation}
where $A^0_{ij} = A$, if $0< i-j \leq p$; $-A$, if $0< j-i \leq p$; and $0$ otherwise.  

The anti-symmetric part $A^0_{ij}$ will provide a velocity $v$ that is proportional to the size $A$ of $A^0_{ij}$ for the droplet (See Fig.\ref{fig:vA}). In the presence of disorder, we can simply go to the co-moving frame of velocity $v$ and the droplet experiences an extra disorder-induced noise $\eta_A$ in addition to the disorder induced temperature $T_d$. 

We found that $\langle \eta_{A}(t)\eta_{A}(0)\rangle \propto \tilde{\sigma}\delta(t)$ (See Fig.\ref{fig:asym}), where $\tilde{\sigma}^2$ is the average number of disordered connection per neuron in units of $2p$.

Therefore, all our results in the main text applies to the case when both the external drive $I^{ext}(x,t)$ and the anti-symmetric part $A^0_{ij}$ exists. Specifically, we can just replace the velocity $v$ used in the main text as the sum of the two velocities corresponding to $I^{ext}(x,t)$ and $A^0_{ij}$.

\begin{figure}
\begin{center}
    	\includegraphics[width=0.6\linewidth]{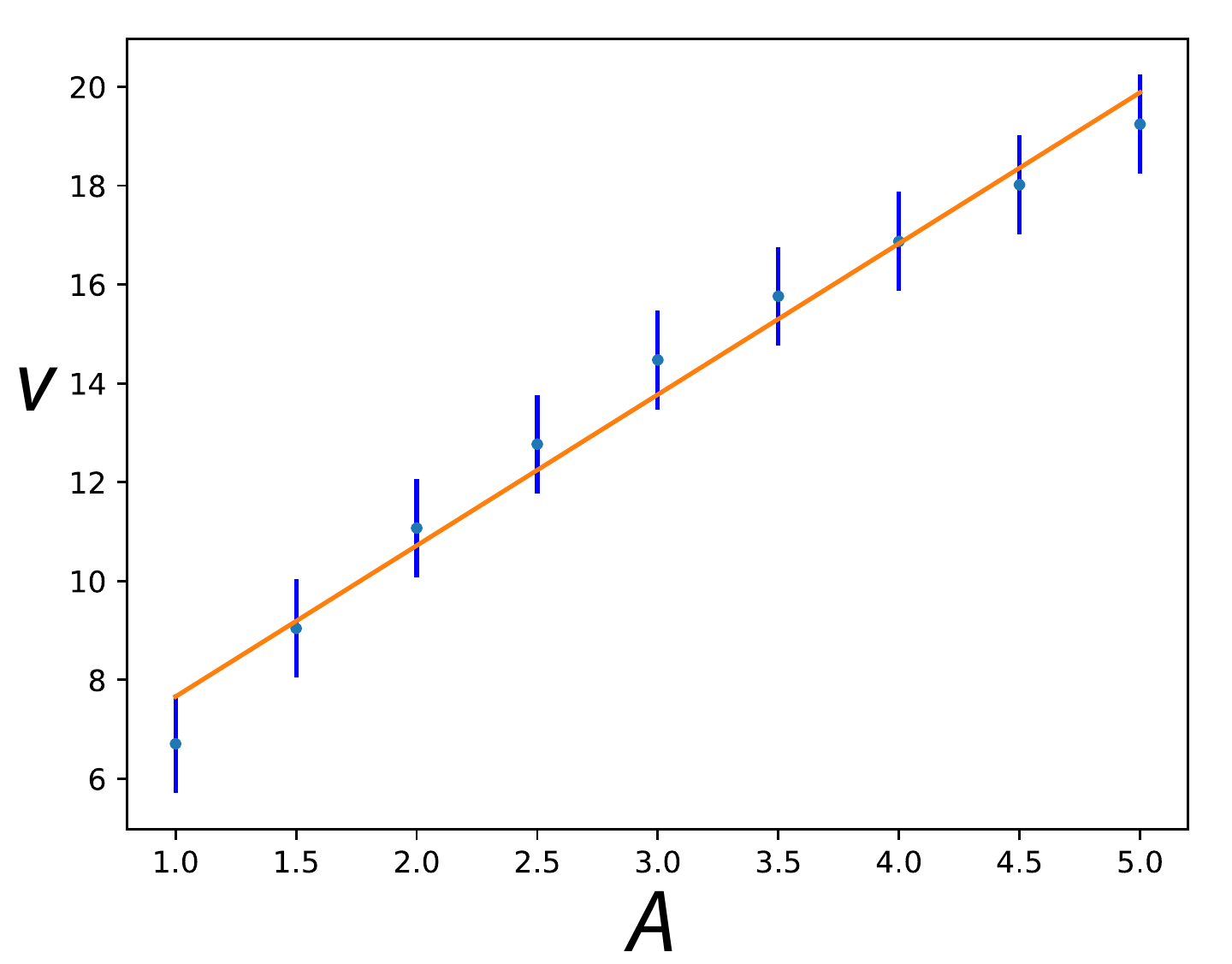}
\end{center}
	\caption{Velocity of droplet $v$ plotted against the size $A$ of the anti-symmetric matrix. We hold all other parameters fixed with the value same as in Fig.\ref{fig:dmax}. \label{fig:vA}}
\end{figure}

\begin{figure}
\begin{center}
    	\includegraphics[width=0.8\linewidth]{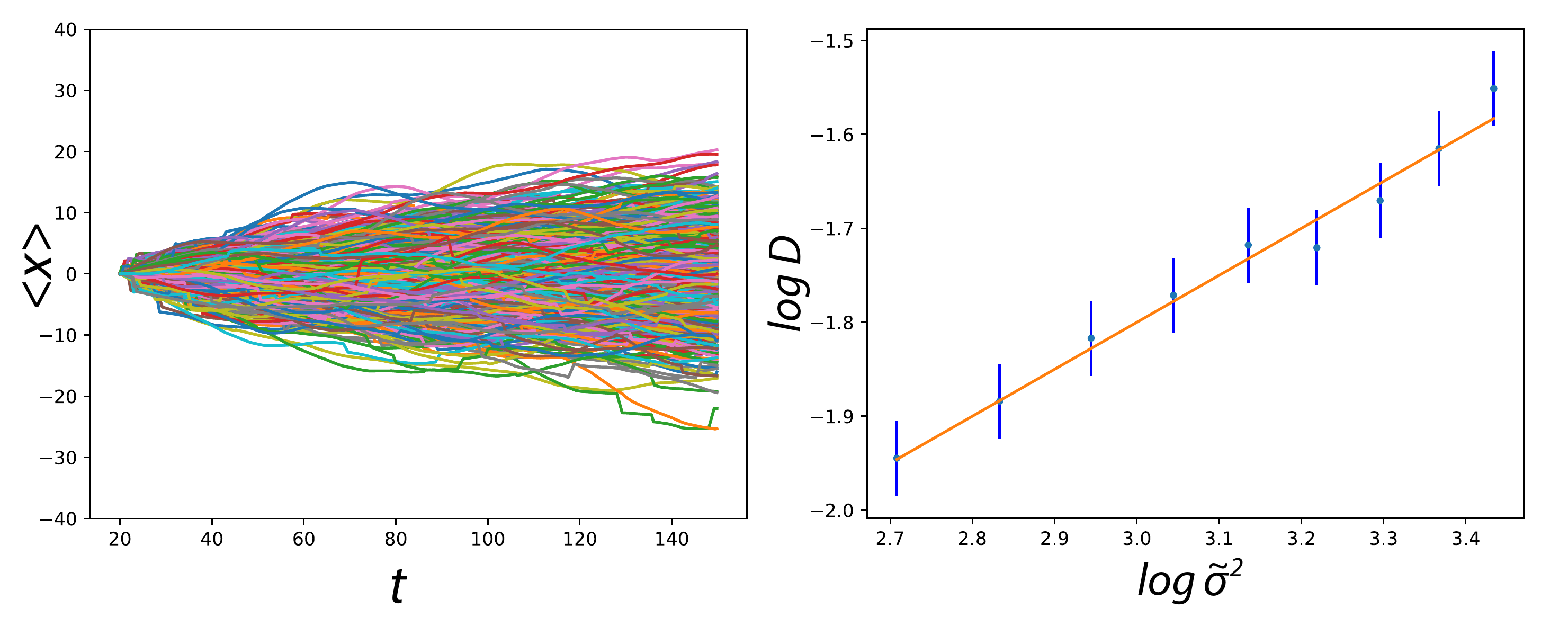}
\end{center}
	\caption{\textbf{Left}: At fixed $A=5$, a collection of 500 diffusive trajectories in the co-moving frame at velocity $v$, where $v$ is taken to be the average velocity of all the trajectories. We can infer the diffusion coefficient $D$ from the variance of these trajectories as Var$(x)= 2Dt$. \textbf{Right}: log$D$ plotted against log$\tilde{\sigma}^2$. The straight line has slope $1/2$, corresponding to $D \propto \tilde{\sigma}$. \label{fig:asym}}
\end{figure}

\section{Quenched Disorder - driving and disorder-induced temperature}
\label{Tbumpy}

\subsection{Disordered connections and disordered forces}

From now on, we start to include disorder connections $J^d_{ij}$ in addition to ordered connections $J^0_{ij}$ that corresponds to the nearest $p$-neighbor interactions. We assume $J^d_{ij}$ consists of random connections among all the neurons. These random connections can be approximated as random permutations of $J^0_{ij}$, such that the full $J_{ij}$ is the sum over $M-1$ such permutations plus $J^0_{ij}$. 

We `clip' the $J_{ij}$ matrix according to the following rule for each entry when summing over $J^0_{ij}$ and $J^d_{ij}$,
\begin{equation}
\label{eqn:clip}
\begin{split}
J&(1-\epsilon) + J(1-\epsilon) \to J(1-\epsilon) \\
J&(1-\epsilon) + J(-\epsilon) \to J(1-\epsilon)  \\
J&(-\epsilon) + J(-\epsilon) \to J(-\epsilon).
\end{split}
\end{equation}
Therefore, adding more disorder connections to $J_{ij}$ amounts to changing the inhibitory $-J\epsilon$ entries to the exitory $J(1-\epsilon)$. 

We would like to characterize the effect of disorder on the system. Under the decomposition $J_{ij} = J^0_{ij} + J^d_{ij}$, we can define a (quenched) disorder potential 
\begin{equation}
V^{d}(\bar{x}) \equiv V^{d}[f(i_k^{\bar{x}})] = -\frac{1}{2} \sum_{nk}J^{d}_{nk}f(i_k^{\bar{x}})f(i_n^{\bar{x}}),
\end{equation}

that captures all the disorder effects on the network.

Its corresponding disorder-induced force is then given by
\begin{equation}
\label{eqn:Fd}
 F^d(\bar{x}) = -\partial_{\bar{x}} V^{d}(\bar{x}).
\end{equation}

\subsection{Variance of disorder forces}

We compute the distribution of $V^d(\bar{x})$ using a combinatorial argument as follows. 

Under the rigid droplet approximation, calculating $V^d(\bar{x})$ amounts to summing all the entries within a $R$-by-$R$ diagonal block sub-matrix $J^{(\bar{x})}_{ij}$ within the full synaptic matrix $J_{ij}$ (recall that $V^d(\bar{x}) \propto \sum_{nk}f(i_n^{(\bar{x})})J_{nk}f(i_k^{(\bar{x})})$). Each set of disorder connection is a random permutation of $J^0_{ij}$, and thus has the same number of excitatory entries as $J^0_{ij}$, namely $2pN$. Since the inhibitory connections do not play a role in the summation by the virtue of \eqref{eqn:clip}, it suffices to only consider the effect of adding excitatory connections in $J^d_{ij}$ to $J^0_{ij}$.
 
There are $M-1$ sets of disordered connections in $J^d_{ij}$, and each has $2pN$ excitatory connections. Now suppose we add these $2pN(M-1)$ excitatory connections one by one to $J^0_{ij}$. Each time an excitatory entry is added to an entry $y$ in the $R$-by-$R$ block $J^{(\bar{x})}_{ij}$, there are two possible situations depending on the value of $y$ before addition: if $y = J(1-\epsilon)$ (excitatory), the addition of an excitatory connection does not change the value of $y$ because of the clipping rule in \eqref{eqn:clip}; if $y = -J\epsilon$ (inhibitory), the addition of an excitatory connection to $y$ changes $y$ to $J(1-\epsilon)$. In the latter case the value of $V^d(\bar{x})$ is changed because the summation of entries within $J^{(\bar{x})}_{ij}$ has changed, while in the former case $V^d(\bar{x})$ stays the same. (Note that if the excitatory connection is added outside $J^{(\bar{x})}_{ij}$, it does not change $V^d(\bar{x})$ and thus can be neglected.)

We have in total $2pN(M-1)$ excitatory connections to be added, and in total $(2R-p)^2$ potential inhibitory connections in the $R$-by-$R$ block $J^{(\bar{x})}_{ij}$ to be `flipped' to an excitatory connection. We are interested in, after adding all the $2pN(M-1)$ excitatory connections how many inhibitory connections are changed to excitatory connections, and the corresponding change in $V^d(\bar{x})$. 


We can get an approximate solution if we assume that the probability of flipping an inhibitory connection does not change after subsequent addition of excitatory connections, and stays constant throughout the addition of all the $2pN(M-1)$ excitatory connections. This requires $2pN(M-1) \ll N^2$, i.e., $M \ll N$, which is a reasonable assumption since the capacity can not be $\mathcal{O}(N)$. 

For a single addition of exitatory connection, the probability of successfully flipping an inhibitory connection within $J^{(\bar{x})}_{ij}$ is proportional to the fraction of the inhibitory connections within $J^{(\bar{x})}_{ij}$ over the total number of entires in $J^0_{ij}$, 
\begin{equation}
q(\text{flip}) = \frac{(2R-p)^2}{N^2}.
\end{equation}

So the probability of getting $n$ inhibitory connections flipped is
\begin{equation}
P(n) = {2pN(M-1)\choose n} q^n (1-q)^{2pN(M-1)-n}.
\end{equation}

In other words, the distribution of flipping $n$ inhibitory connections to excitatory connections after adding $J^d_{ij}$ to $J^0_{ij}$ obeys $n \sim B(2pN(M-1),q)$. The mean is then
\begin{equation}
\begin{split}
\langle n \rangle &= 2pN(M-1)q = 2p(2R-p)^2 \bigg(\frac{M-1}{N}\bigg) \\
&= (2R-p)^2 2pm,
\end{split}
\end{equation}

where we have defined the reduced disorder parameter $m \equiv (M-1)/N$. The variance is 

\begin{equation}
\begin{split}
\langle n^2 \rangle &= 2pN(M-1)q(1-q) \\
&= 2pN(M-1) \frac{(2R-p)^2}{N^2} \bigg(1-\frac{(2R-p)^2}{N^2}\bigg) \\
&\approx (2R-p)^2 2pm,
\end{split}
\end{equation}

where in the last line we have used $N \gg 2R-p$.

Since changing $n$ inhibitory connections to $n$ exitory connections amounts to changing $V^d(\bar{x})$ by $-1/2 (J(1-\epsilon) - J(-\epsilon)) = -J/2$, we have

\begin{equation}
\label{eqn:Vd}
\text{Var}(V^d(\bar{x})) \equiv \sigma^2 = J^2(R-p/2)^2 pm.
\end{equation}

\subsection{Disorder temperature from disorder-induced force}

We focus on the case where $I^{ext}_n$ gives rise to a constant velocity $v$ for the droplet (as in the main text). In the co-moving frame, the disorder-induced force $F^d(\bar{x})$ acts on the c.o.m. like random kicks with correlation within the droplet size. For fast enough velocity those random kicks are sufficiently de-correlated and become a white noise at temperature $T_d$. 

To extract this disorder-induced temperature $T_d$, we consider the autocorrelation of $F^{d}[\bar{x}(t)]$ between two different c.o.m. location $\bar{x}(t)$ and $\bar{x}'(t')$ (and thus different times $t$ and $t'$),

\begin{equation}
C(t,t') \equiv \langle F^{d}[\bar{x}(t)] F^{d}[\bar{x}(t')]\rangle,
\end{equation}

where the expectation value is averaging over different realizations of the quenched disorder. 

Using \eqref{eqn:Fd}, we have
\begin{eqnarray}
C(t,t') &=  \langle \partial_{\bar{x}} V^d(\bar{x}) \partial_{\bar{x}'}V^d(\bar{x}') \rangle \\
&=\partial_{\bar{x}}\partial_{\bar{x}'} \langle  V^d(\bar{x}) V^d(\bar{x}') \rangle.
\end{eqnarray}

Within time $t-t'$, if the droplet moves a distance less than its size $2R$, then $V^{d}$ computed at $t$ and $t'$ will be correlated because $f(i_k^{\bar{x}})$ and $f(i_k^{\bar{x'}})$ have non-zero overlap. Therefore, we expect the autocorrelation function $\langle  V^d(\bar{x}) V^d(\bar{x}') \rangle$ behaves like the 1-$d$ Ising model with finite correlation length $\xi = 2R$ (up to a prefactor to be fixed later), 

\begin{equation}
\langle  V^d(\bar{x}) V^d(\bar{x}') \rangle \sim \exp (-\frac{|\bar{x}-\bar{x}'|}{\xi}).
\end{equation}



Hence, $C(t,t') \sim \exp \bigg(-\frac{|\bar{x}-\bar{x}'|}{\xi}\bigg)$. Now going to the co-moving frame, we can write the c.o.m. location as before, $\Delta x_v = \bar{x} - vt$, so the autocorrelation function becomes


\begin{equation}
\begin{split}
C(t,t') &\sim  \exp \bigg(-\frac{|(\Delta x_v + vt) - (\Delta x'_v +vt')|}{\xi}\bigg) \\
&=  \exp \bigg(-\frac{|v(t-t') + (\Delta x_v - \Delta x'_v)|}{\xi}\bigg) \\
&\approx \exp \bigg(-\frac{v|t-t'|}{\xi}\bigg),
\end{split}
\end{equation}

where in the last line we have used that the droplet moves much faster in the stationary frame than the c.o.m. position fluctuates in the co-moving frame, so $v(t-t') \gg \Delta x_v - \Delta x'_v$. 

Now let us define the correlation time to be $\tau_{cor} = \xi/v = 2R/v$. Then

\begin{equation}
C(t,t') \sim \exp \bigg(-\frac{|t-t'|}{\tau_{cor}}\bigg).
\end{equation}

For $T \equiv |t-t'|\gg \tau_{cor}$, we want to consider the limiting behavior of $C(t,t')$ under an integral. Note that 

\begin{equation}
\begin{split}
&\int_{0}^{T} dt  \int_{0}^{T} dt' \exp \bigg(-\frac{|t-t'|}{\tau_{cor}}\bigg) \\
&= \tau_{cor}[2(T-\tau_{cor})+2\tau_{cor}e^{-T/\tau_{cor}}] \\
&\approx 2\tau_{cor}T \;\;\;\;\quad (\mbox{if } T \gg \tau_{cor}).
\end{split}
\end{equation}

Therefore, we have for $T \gg \tau_{cor}$, 

\begin{equation}
\begin{split}
&\int_{0}^{T} dt  \int_{0}^{T} dt' \exp \bigg(-\frac{|t-t'|}{\tau_{cor}}\bigg) \\
&= 2\tau_{cor} \int_{0}^{T} dt  \int_{0}^{T} dt' \delta (t-t').
\end{split}
\end{equation}

So we can write
\begin{equation}
\exp \bigg(-\frac{|t-t'|}{\tau_{cor}}\bigg) \to 2\tau_{cor} \delta(t-t'),
\end{equation}

and it is understood that this holds in the integral sense. Therefore, for $T \gg \tau_{cor}$, we expect $F^{d}(x)$ to act like uncorrelated white noise and we can write,

\begin{equation}
C(t,t') = T_d \delta(t-t') \propto \tau_{cor} \delta(t-t')
\end{equation}
where $T_d$ is a measure of this disorder-induced white noise. 

To deduce the form of disorder temperature $T_d$, we present the uncollapsed occupancies $- \log p(\Delta x_v) = V(\Delta x_v)/k_B T_d$ (described in the caption of main text Fig.3) in Fig.\ref{fig:T_bumpy_uncollapsed}. Compare with main text Fig.3, we can see that $T_d$ successfully captures the effect of disorder on the statistics of the emergent droplet if, 

\begin{equation}
\label{eqn:Td}
T_d = \tilde{k} \tau_{cor} \sigma,
\end{equation}
where $\sigma$ is given in \eqref{eqn:Vd} and $\tilde{k}$ is a fitting constant.

\begin{figure}
\begin{center}
    	\includegraphics[width=0.8\linewidth]{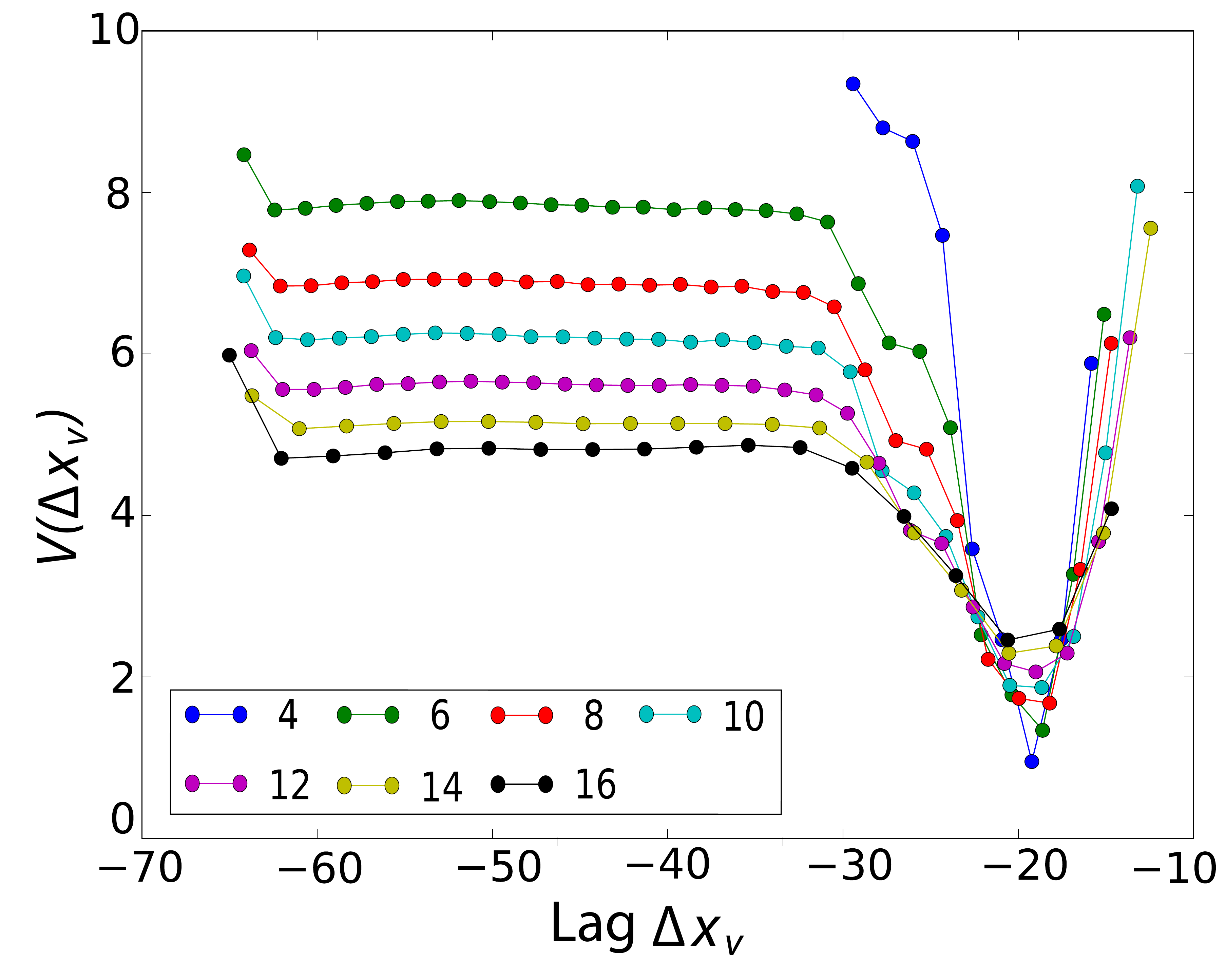}
\end{center}
	\caption{Uncollapsed data for the occupancies $- \log p(\Delta x_v)$ for different amounts of long ranged disordered connections.  Parameters same as in main text Fig.3 (see the last section of SI for further details). \label{fig:T_bumpy_uncollapsed}}
\end{figure}

\section{Derivation of the memory capacity for driven place cell network}
\label{pret}
In this section, we derive the memory capacity for driven place cell network described in the last section of the paper, namely, main text Eqn.(8).

Our continuous attractor network can be applied to study the place cell network. We assume a 1-dimensional physical region of length $L$. We study a network with $N$ place cell neurons and assume each neuron has a place field of size $d = 2 p L/N$ that covers the region $[0,L]$ as a regular tiling. The $N$ neurons are assumed to interact as in the leaky integrate-and-fire model of neurons. The external driving currents $I^{ext}(x,t)$ can model sensory input when the mouse is physically in a region covered by place fields of neurons $i, i+1,\ldots, i+z$, currents $I^{ext}_{i}$ through $I^{ext}_{i+z}$ can be expected to be high compared to all other currents $I^{ext}_j$, which corresponds to the cup-shape drive we used throughout the main text.

It has been shown in past work that the collective coordinate in the continuous attractor survives to multiple environments provided the number of stored memories $m < m_c$ is below the capacity $m_c$ of the network. Under capacity, the neural activity droplet is multistable; that is, neural activity forms a stable contiguous droplet as seen in the place field arrangement corresponding to any one of the $m$ environments. Note that such a contiguous droplet will not appear contiguous in the place field arrangement of any other environment. Capacity was shown to scale as $m_c = \alpha(p/N, R) N$ where $\alpha$ is an $O(1)$ number that depends on the size of the droplet $R$ and the range of interactions $p$.  However, this capacity is about the intrinsic stability of droplet and does not consider the effect of rapid driving forces.


When the droplet escapes from the driving signal, it has to overcome certain energy barrier. This is the difference in $V_{eff}$ between the two extremal points $\Delta x^*_v$ and $\Delta x^{esc}_v$. Therefore, we define the barrier energy to be $\Delta E = V_{eff}(x^{esc}_v) - V_{eff}(\Delta x^*_v)$, and we evaluate it using Eqn.\eqref{eqn:Veff} and Eqn.\eqref{eqn:positions},
\begin{equation}
\label{eqn:deltaE}
\Delta E(v,d) = \frac{(4dw-3\gamma v -2dR)(-\gamma v+2dR)}{4d}.
\end{equation}

Note this is the result we used in main text Eqn.(8).

As in the main text, the escape rate $r$ is given by the Arrhenius law,

\begin{equation}
r \sim \exp(-\frac{\Delta E(v,d)}{k_B T_{d}}).
\end{equation}

The total period of time of an external drive moving the droplet across a distance $L$ ($L\leq N$, but without loss of generality, we can set $L = N$) is $T = L/v$. We can imagine chopping $T$ into infinitesimal intervals $\Delta t$ st the probability of successfully moving the droplet across $L$ without escaping is,

\begin{equation}
\begin{split}
P_{retrieval} &= \lim_{\Delta t \to 0} (1-r \Delta t)^{\frac{T}{\Delta t}} 
\\
&= e^{-rT} = e^{-rN/v}
\\
&= \exp(-\frac{N}{v}e^{-\Delta E(v,d)/k_B T_{d}}).
\\
\end{split}
\end{equation}

$T_{d}$ is given by Eqn.\eqref{eqn:Td}

\begin{equation}
\begin{split}
T_d &=  \frac{2\tilde{k}R J(R-p/2)\sqrt{pm}}{v} \\
&\equiv k\sqrt{m}v^{-1},
\end{split}
\end{equation}

where in the last step we have absorbed all the constants (assuming $R$ is constant over different $m$'s) into the definition of $k$. 
Now we want to find the scaling behavior of $m$ s.t. in the thermodynamic limit ($N\to \infty$), $P_{retrieval}$ becomes a Heaviside step function $\Theta (m_c-m)$ at some critical memory $m_c$. With the aid of some hindsight, we try

\begin{equation}
m = \frac{\alpha^2}{(\log N) ^2},
\end{equation}

then in the thermodynamic limit,

\begin{equation}
\begin{split}
\lim_{N\to \infty} P_{retrieval} &= \lim_{N\to \infty} \exp(-\frac{N}{v}e^{-\log N v\Delta E(v,d)/\alpha k_B k})
\\
&= \lim_{N\to \infty} \exp (-\frac{N}{v} N^{-v\Delta E(v,d)/\alpha k_B k})
\\
&= \lim_{N\to \infty} \exp(-\frac{1}{v}N^{1-v\Delta E(v,d)/\alpha k_B k})
\\
&= \begin{cases} 
      1, & \alpha < v \Delta E(v,d)/k_B k \\
      0, & \alpha > v \Delta E(v,d)/k_B k
\end{cases}
\end{split}
\end{equation}

Therefore, we have arrive at the expression for capacity $m_c$, or in terms of $M = m_c N +1 \approx m_c N (N\gg 1)$,

\begin{equation}
M_c = \bigg[\frac{v\Delta E(v,d)}{k_B k}\bigg]^2 \frac{N}{(\log N)^2},
\end{equation}

or

\begin{equation}
M_c \sim \bigg[v\Delta E(v,d)\bigg]^2 \frac{N}{(\log N)^2}.
\end{equation}

\subsection*{Numerics of the place cell network simulations}
In this section, we explain our simulations in main text Fig.4 in detail. 

Recall that we only determine the Arrhenius-like escape rate $r$ up to an overall constant, we can absorb it into the definition of $\Delta E(v,d)$ (given by Eqn.\eqref{eqn:deltaE}) as an additive constant $a$,

\begin{equation}
r = \exp\bigg \{{-\frac{\Delta E(v,d)+a}{k_B k v\sqrt{(M-1)/N}}}\bigg \}.
\end{equation}

Then the theoretical curves corresponds to
\begin{equation}
\label{eqn:pretrieval}
P_{retrieval} = e^{-Nr/v}
\end{equation}

Therefore, our model Eqn.\eqref{eqn:pretrieval} has in total three parameters to determine $\gamma$, $k$, and $a$. In Fig.\ref{fig:barrier_collapse} we determine the parameters by collapsing data (see details of the collapse in below and in caption), and find that the best fit is found provided $\gamma = 240.30, k = 5255.0k_B^{-1}, a = -0.35445$. Henceforth we fix these three parameters to these values.

\begin{figure}
\begin{center}
    	\includegraphics[width=0.8\linewidth]{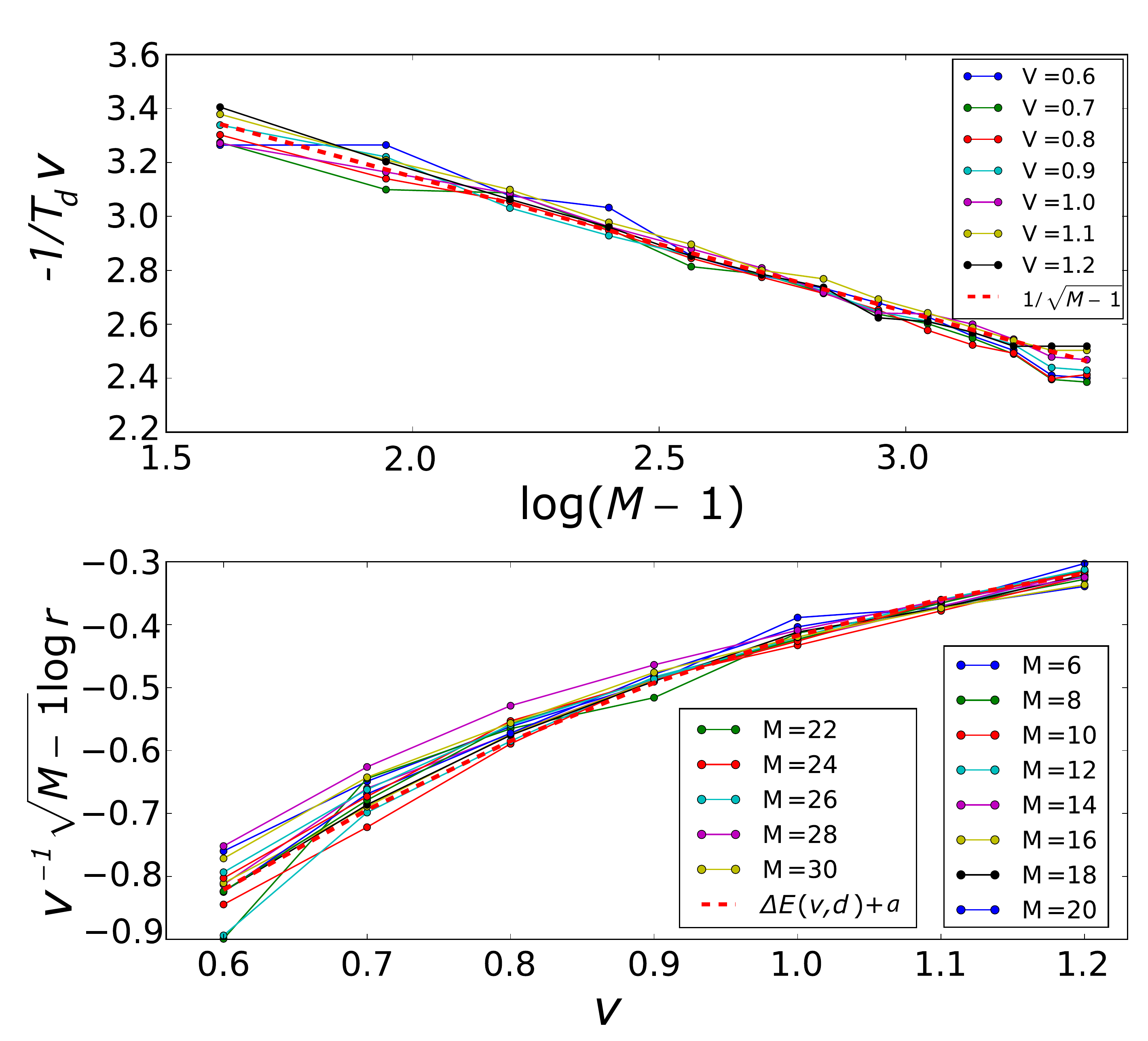}
\end{center}
\caption{\textbf{Top}: Plotting $-1/T_d v = \log\{ v^{-1} \log r /[\Delta E(v,d) + a] \}$ against $\log (M-1)$. Different solid lines corresponds to data with different $v$, and the dashed line corresponds to the $(M-1)^{-1/2}$ curve. \textbf{Bottom}: Plotting $v^{-1} \log r \sqrt{M-1} \propto \Delta E(v,d)$ against $v$. Different solid lines corresponds to data with different $M$, and dashed line corresponds to the $\Delta E(v,d)+a$ curve. \label{fig:barrier_collapse}}. 
\end{figure}

In Fig.\ref{fig:barrier_collapse} bottom, we offset the effect of $M$ by multiplying $v^{-1}\log r$ by $\sqrt{M-1}$, and we see that curves corresponding to different $M$ collapse to each other, confirming the $\sqrt{M-1}$ dependence in $T_d$. The collapsed line we are left with is just the $v$-dependence of $\Delta E(v,d)$, up to overall constant.

In Fig.\ref{fig:barrier_collapse} top, we offset the effect of $v$ in $T_d$ by multiplying $v^{-1}$ to $ \log r /[\Delta E(v,d) + a]$. We see that different curves corresponding to different $v$'s collapse to each other, confirming the $v^{-1}$ dependence in $T_d$. The curve we are left with is the $M$ dependence in $T_d$, which we see fits nicely with the predicted $\sqrt{M-1}$.  

In main text Fig.4(b) we run our simulation with the following parameters held fixed: $N=4000, \; p=10, \; \epsilon=0.35, \; \; \tau=1, \; J=100, \; d=10, \; w=30$. Along the same curve, we vary $M$ from $6$ to $30$, and the series of curves corresponds to different $v$ from $0.6$ to $1.2$. 

In  main text Fig.4(c) we hold the following parameters fixed: $p=10,\;  \epsilon=0.35, \; \tau=1, \; J=100, \; d=10, \; w=30, \; v=0.8$. Along the same curve, we vary $M/\frac{N}{(\log N)^2}$ from $0.1$ to $0.6$, and the series of curves corresponds to different $N$ from $1000$ to $8000$. 

In both  main text Fig.4(b)(c) the theoretical model we used is Eqn.\eqref{eqn:pretrieval} with the same parameters given above.

In main text Fig.4(d) we re-plot the theory and data from main text Fig.4(b) in the following way: for the theoretical curve, we find the location where $P_{retrieval} = 0.5$, and call the corresponding $M$ value theoretical capacity; for the simulation curve, we extrapolate to where $P_{retrieval} = 0.5$, and call the corresponding $M$ value, the simulation capacity.

For all simulation curves above, we drag the droplet from one end of the continuous attractor to the other end of the attractor, and run the simulation for 300 times. We then measure the fraction of successful events (defined as the droplet survived in the cup throughout the entire trajectory of moving) and failed events (defined as the droplet escape from the cup at some point before reaching the other end of the continuous attractor). We then define the simulation $P_{retreival}$ as the fraction of successful events.

\chapter{Discriminative learning by driven spin glasses}
\label{spinglass}
\section{Introduction}
Systems given many degrees of freedom
can learn and remember patterns of forces
that propel them far from equilibrium.
Such behaviors have been predicted and observed in many settings,
from charge-density waves~\cite{Coppersmith_97_Self,Povinelli_99_Noise}
to non-Brownian suspensions~\cite{Keim_11_Generic,Keim_13_Multiple,Paulsen_14_Multiple}, 
polymer networks~\cite{Majumdar_18_Mechanical}, 
soap-bubble rafts~\cite{Mukherji_19_Strength},
and macromolecules~\cite{Zhong_17_Associative}.
Such learning holds promise for engineering materials
capable of memory and computation.
This potential for applications, with experimental accessibility and ubiquity,
have earned these classical nonequilibrium many-body systems much attention recently~\cite{Keim_19_Memory}.
We measure many-body learning 
using a neural network (NN) that undergoes representation learning,
a type of machine learning.
Our toolkit detects and quantifies many-body learning
more thoroughly and precisely
than thermodynamic measures used to date.

One of the 
best-characterized instances of learning by driven matter
involves a  spin glass.
The spins are classical and interact randomly.
Consider applying fields from a set $\{ \vec{A}, \vec{B}, \vec{C} \}$,
which we call a \emph{drive}.
As the driving proceeds, the spins flip, absorbing work.
In a certain parameter regime,
the absorbed power shrinks adaptively:
The spins migrate toward a corner of configuration space 
where their configuration withstands the drive's insults.
Consider then imposing fields absent from the original drive.
Subsequent spin flips will absorb more work
than if the field belonged to the original drive.
Insofar as a simple, low-dimensional property of the material 
can be used to discriminate between 
drive inputs that fit a pattern and drive inputs that do not, 
we say that the material has learned the drive.

Learning behavior has been quantified with 
properties commonplace in thermodynamics.
Examples include work, magnetization, and strain.
This thermodynamic characterization has provided insights
but suffers from two shortcomings.
First, the types of thermodynamic properties vary from system to system.
For example, work absorption characterizes the spin glass's learning;
strain characterizes polymer networks'.
A more general approach would facilitate comparisons and standardize analyses.
Second, thermodynamic properties are useful 
for characterizing macroscopic equilibrium states.
But such properties are not necessarily the best
for describing the far-from-equilibrium systems that learn.

Over the past several years, machine learning has revolutionized 
the quantification of learning~\cite{Nielsen_15_Neural,Goodfellow_16_Deep}.
Machine learning calls for application
to the learning of drive patterns by many-body systems.

Parallels between statistical mechanics and 
certain machine-learning components 
have been known for decades~\cite{Engel_01_Statistical,Nielsen_15_Neural}.
For example, Boltzmann machines resemble
particles exchanging heat with thermal baths.
Parallels between \emph{representation learning} and statistical mechanics
have enjoyed less attention
(though one parallel was proposed in~\cite{Alemi_18_TherML}).
Figure~\ref{fig_VAE_SM_Parallel}(a) illustrates 
representation learning~\cite{Bengio_12_Representation}:
A high-dimensional variable $X$ is inputted into a NN.
The NN compresses relevant information 
into a low-dimensional variable $Z$.
The NN then decompresses $Z$ into a prediction $\hat{Y}$
of a high-dimensional variable $Y$.
If $Y = X$, the NN is an autoencoder,
mimicking the identity function.
The latent variable $Z$ acts as a bottleneck.
The bottleneck's size controls a tradeoff
between the memory consumed and the prediction's accuracy.
We call the NNs that perform representation learning
\emph{bottleneck NNs}.

\begin{figure}[hbt]
\centering
\includegraphics[width=.5\textwidth, clip=true]{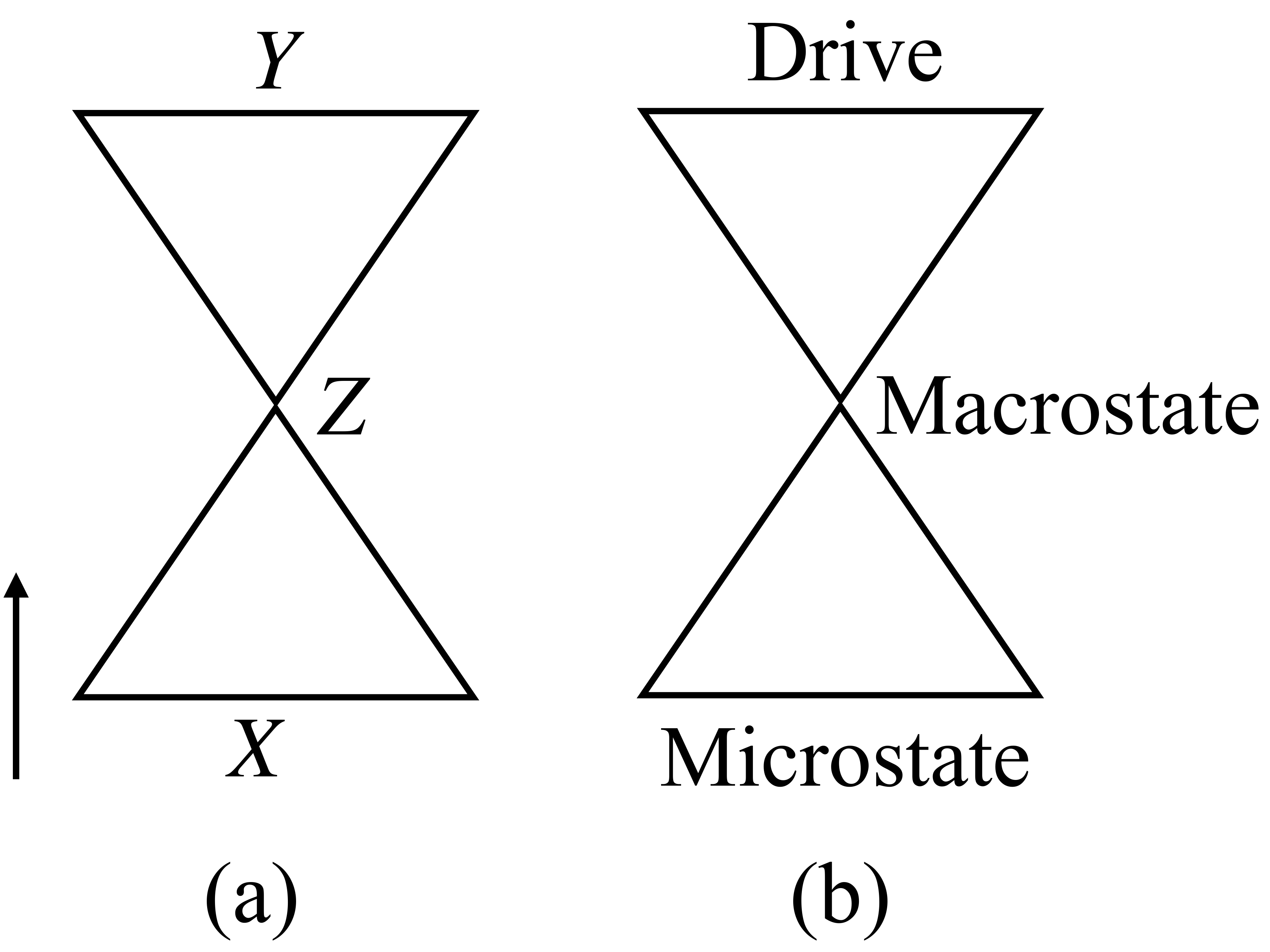}
\caption{\caphead{Parallel between two structures:}
(a) Structure of a bottleneck neural network, which performs representation learning.
(b) Structure of a far-from-equilibrium-statistical-mechanics problem.}
\label{fig_VAE_SM_Parallel}
\end{figure}

Representation learning, we argue, shares its structure with
problems in which a strong drive forces a many-body system
[Fig.~\ref{fig_VAE_SM_Parallel}(b)].
The system's microstate, like $X$, occupies a high-dimensional space.
A macrostate synopsizes the microstate in a few numbers, 
such as particle number and magnetization.
This synopsis parallels $Z$.
If the system has learned the drive, the macrostate encodes the drive.
One may reconstruct the drive from the macrostate,
as a bottleneck NN reconstructs $Y$ from $Z$.

Applying this analogy, we use representation learning 
to measure how effectively a far-from-equilibrium many-body system
learns a drive.
We illustrate with numerical simulations of the spin glass, 
whose learning has been characterized with work absorption~\cite{Gold_19_Self}.
However, our methods generalize to other platforms.
Our measurement scheme offers three advantages:
\begin{enumerate}
   \item 
   Bottleneck NNs register learning behaviors 
   more thoroughly, reliably, and precisely
   than work absorption.

   \item 
   Our framework applies to a wide class of
   strongly driven many-body systems.
   The framework does not rely on 
   strain, work absorption, susceptibility, etc.
   Hence our toolkit can characterize spins, suspensions, polymers, etc.

   \item 
   Our approach unites a machine-learning sense of learning
   with the statistical mechanical sense.
   This union is conceptually satisfying.
   
\end{enumerate}
We apply representation learning to measure 
 classification, memory capacity, discrimination, and novelty detection.
Our techniques can be extended to other facets of learning,
such as prediction and the decomposition of a drive into constituents.

Most of our measurement schemes have the following structure:
The many-body system is trained with 
a drive (e.g., fields $\vec{A}$, $\vec{B}$, and $\vec{C}$).
Then, the system is tested (e.g., with a field $\vec{D}$).
Training and testing are repeated in many trials.
Configurations realized 
are used to train a bottleneck NN.
In some cases, the NN then receives data from
the statistical mechanical testing.
Finally, we analyze the NN's latent space and/or predictions.

The rest of this paper is organized as follows.
Section~\ref{sec_Setup} introduces the bottleneck NN that we use
and the spin-glass example.
In Sec.~\ref{sec_Results},
we prescribe how to quantify, using representation learning, 
the learning of a drive by a many-body system.
Section~\ref{sec_Discussion} closes with a discussion:
We decode our NN's latent space in terms of thermodynamic variables,
argue for our techniques' feasibility,
and detail opportunities engendered by this study.

%
%
%
%
%
\section{Setup: Representation-learning model
and spin-glass example}
\label{sec_Setup}

This section introduces two toolkits applied in Sec.~\ref{sec_Results}:
(i) Section~\ref{sec_Intro_NNs} details the bottleneck NN we use.
(ii) Section~\ref{sec_Spin_Glass} details the spin glass
with which we illustrate statistical mechanical learners.

\subsection{Representation-learning model}
\label{sec_Intro_NNs}


This section overviews our architecture;
details appear in App.~\ref{sec_NN_Details}.
This paper's introduction identifies a parallel between
thermodynamic problems and bottleneck NNs (Fig.~\ref{fig_VAE_SM_Parallel}).
In the thermodynamic problem, $Y \neq X$ represents the drive.
We could design a bottleneck NN that predicts drives from configurations $X$.
But the NN would need to undergo supervised learning,
if built according to today's standards.
During supervised learning, the NN would receive tuples 
(configuration, label of drive that generated the configuration).
Receiving drive labels would give the NN
information not directly accessible to the many-body system.
The NN's predictive success would not necessarily reflect
only learning by the many-body system.
Hence we design a bottleneck NN that performs unsupervised learning,
receiving just configurations.

This NN is a \emph{variational autoencoder} (VAE)~\cite{Kingma_13_Auto,JR_14_Stochastic,Doersch_16_Tutorial}.
A VAE is a generative model:
It receives samples $x$ from a distribution over
the possible values of $X$,
learns about the distribution, and generates new samples.
The NN approximates the distribution,
using Bayesian variational inference (App.~\ref{sec_NN_Details}).
The parameters are optimized during training facilitated by backpropagation.

Our VAE has five fully connected hidden layers,
with neuron numbers 200-200-(number of $Z$ neurons)-200-200.
We usually restrict $Z$ to 2-4 neurons.
This choice facilitates the visualization of the latent space
and suffices to quantify our spin glass's learning.
Growing the number of degrees of freedom,
and the number of drives, may require more dimensions.
But our study suggests that the number of dimensions needed
$\ll$ the system size.

The latent space is visualized in Fig.~\ref{fig_Latent_Space}.
Each neuron corresponds to one axis
and represents a continuous-valued real number.
The VAE maps each inputted configuration
to one latent-space dot. 
Close-together dots correspond to
configurations produced by the same field,
if the spin glass and NN learn well.
We illustrate this clustering in Fig.~\ref{fig_Latent_Space}
by coloring each dot according to the drive that produced it.

%
%
\begin{figure}[hbt]
\centering
\includegraphics[width=.6\textwidth, clip=true]{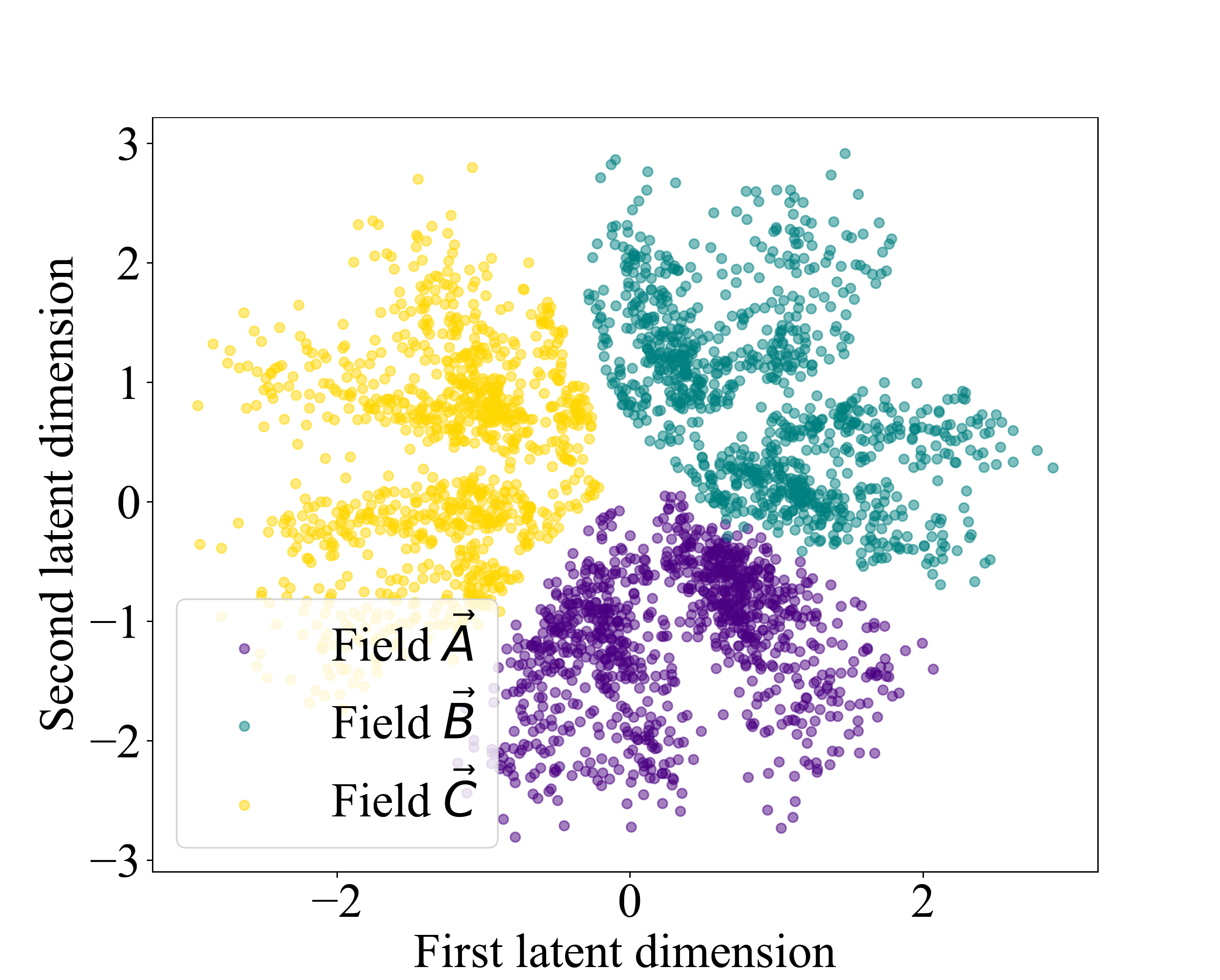}
\caption{\caphead{Visualization of latent space:}
The latent space $Z$ consists of two neurons, $Z_1$ and $Z_2$.
A variational autoencoder (VAE) formed this latent space
while training on configurations assumed by a 256-spin glass
during repeated exposure to three fields, $A$, $B$, and $C$. 
The VAE mapped each configuration to a dot in latent-space.
We color each dot in accordance with
the field that produced the configuration.
Same-color dots cluster together:
The VAE identified which configurations resulted from the same field.}
\label{fig_Latent_Space}
\end{figure}
\subsection{Spin glass}
\label{sec_Spin_Glass}


A spin glass exemplifies the statistical mechanical learner.
We introduce the spins, Hamiltonian, and heat exchange below.
We model the time evolution, define work and heat,
and describe the initialization procedure.
Finally, we describe a parameter regime
in which the spin glass learns effectively.
Section~\ref{sec_Not_Enslaved_Or_Frozen}
distinguishes robust learning from superficially similar behaviors.

We adopt the model in~\cite{Gold_19_Self}.
Simulations are of $\Sites = 256$ classical spins.
The $j^\th$ spin occupies one of two possible states:
$s_j = \pm 1$.

The spins couple together and experience an external magnetic field.
Spin $j$ evolves under a Hamiltonian
\begin{align}
   \label{eq_Hamiltonian_j}
   H_j(t)
   =  \sum_{k \neq j}  J_{jk}  s_j  s_k
   +  A_j(t) s_j ,
\end{align}
and the spin glass evolves under
\begin{align}
   \label{eq_Hamiltonian}
   H(t)  
   =  \frac{1}{2} \sum_{j = 1}^\Sites  H_j(t) 
\end{align}
at time $t$.
We call the first term in Eq.~\eqref{eq_Hamiltonian_j} the \emph{interaction energy} 
and the second term the \emph{field energy}.
The couplings $J_{j k}  =  J_{kj}$ are defined in terms of 
an Erd\"{o}s-R\'enyi random network:
Nodes $j$ and $k$ have some probability $p$ 
of sharing an edge, for all $j$ and $k \neq j$.
We identify nodes with spins and identify edges with couplings.
Each spin couples to eight other spins, on average.
The nonzero couplings $J_{j k}$ are selected according to 
a normal distribution of standard deviation 1.

The $A_j(t)$ in Eq.~\eqref{eq_Hamiltonian_j} 
is defined as follows.
At time $t$, the spin glass experiences a field $\{ A_j(t) \}_j$.
$A_j(t)$ represents the magnitude and sign of the field at spin $j$.
All fields point along the same direction (conventionally labeled as the $z$-axis), 
so we simplify the vector notation $\vec{A}_j$ to $A_j$.
Elsewhere in the text, we simplify $\{ A_j (t) \}_j$
to the capital Latin letter $A$ (or $B$, or $C$, etc.). 
Each $A_j(t)$ is selected according to 
a normal distribution of standard deviation 3.
The field changes every 100 seconds. 
To train the spin glass, we construct a drive
by forming several random fields $\{ A_j \}_j$.
We randomly select a field from the set, then apply the field.
We repeat these two steps 299 times, unless otherwise noted
(Fig.~\ref{fig_Drive_Protocol}).

\begin{figure}[hbt]
\centering
\includegraphics[width=.75\textwidth, clip=true]{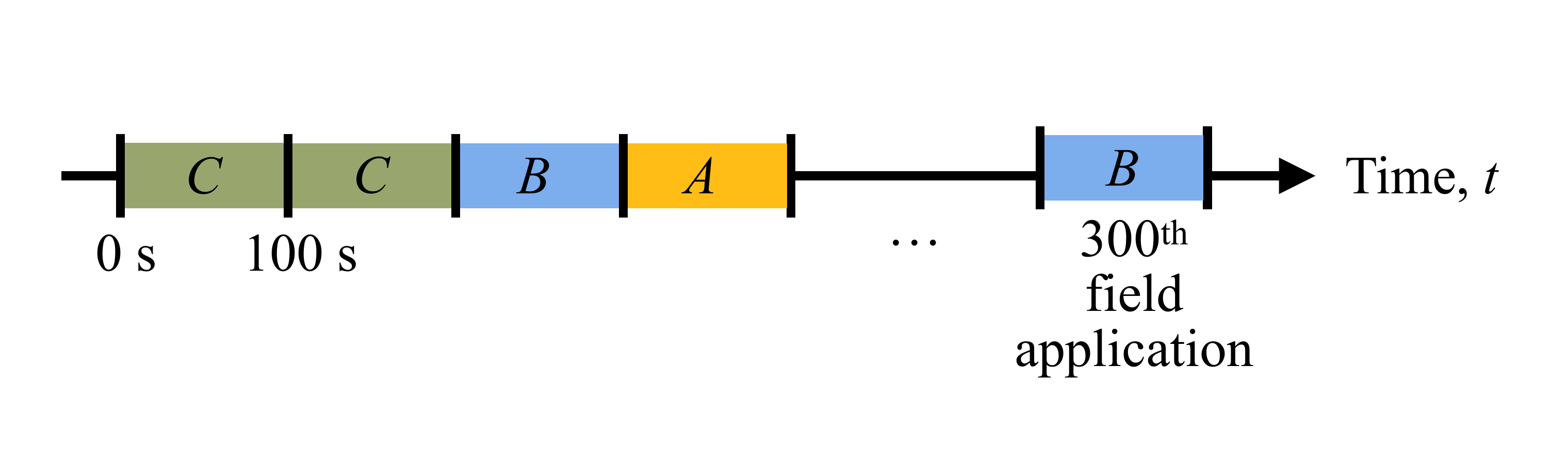}
\caption{\caphead{Driving protocol:} 
The drive consists of the set $\{A, B, C\}$ of fields.
A field is selected randomly from the drive and applied for 100 s,
and then this process is repeated.}
\label{fig_Drive_Protocol}
\end{figure}

The spin glass exchanges heat with
a bath at a temperature $T = 1 / \beta$.
We set Boltzmann's constant to one: $\kB = 1$.
Energies are measured in units of Kelvins (K).
To flip, a spin must overcome an energy barrier of height $B$.
Spin $j$ tends to flip at a rate 
\begin{align}
   \label{eq_Flip_Rate}
   \omega_j 
   =  e^{\beta [ H_j(t) - B]}
   / (1 \text{ second})  \, .
\end{align}
Equation~\eqref{eq_Flip_Rate} has the form of Arrhenius's law
and obeys detailed balance.
Each spin flips once every $10^7$ s, on average.
We model the evolution with discrete 100-s time intervals,
using the Gillespie algorithm.

The spins absorb work when the field changes,
and they dissipate heat while flipping,
as we now detail.
Consider changing the field from $\{ A_j (t) \}$ to $\{ A'_j(t) \}$. 
The change in the spin glass's energy equals
the work absorbed by the spin glass:
\begin{align}
   W  :=  \sum_{j = 1}^\Sites  
   \left[  A'_j(t)  -  A_j(t)  \right]  s_j .
\end{align}
To define heat, we suppose that spin $k$ flips at time $t$:
$s_k  \mapsto  s'_k  =  - s_k$. 
The spin glass dissipates an amount $Q$ of heat equal to 
the negative of the change in the spin glass's energy:
\begin{align}
   Q  
   & :=  -  \frac{1}{2}  \sum_{j \neq k}  \left[
   J_{jk}  s_j  (s'_k - s_k)
   +  A_k(t)  (s'_k  -  s_k)  \right]  \\
   & =  \sum_{j \neq k}  J_{j k}  s_j  s_k
   +  2 h_k (\alpha_t)  s_k .
\end{align}
Our discussion is cast in terms of the absorbed power,
$W / ( \text{100 s} )$. 

The spin glass is initialized in a uniformly random configuration $C$.
Then, the spins relax in the absence of any field for 100,000 seconds.
The spin glass navigates to near a local energy minimum.
If a protocol is repeated in multiple trials, 
all the trials begin with the same $C$.

In a certain parameter regime, the spin glass learns its drive effectively,
even according to the absorbed power~\cite{Gold_19_Self}.
Consider training the spin glass on a drive $\{ A, B, C \}$.
The spin glass absorbs much work initially.
If the spin glass learns the drive, the absorbed power declines.
If a dissimilar field $D$ is then applied, the absorbed power spikes.
The spin glass learns effectively when
$\beta = 3$ K$^{-1}$ and $B = 4.5$ K~\cite{Gold_19_Self}.
These parameters define a Goldilocks regime:
The temperature is high enough,
and the barriers are low enough,
that the spin glass can explore phase space.
But $T$ is low enough, and the barriers are high enough,
that the spin glass is not hopelessly peripatetic.

\section{How to detect and quantify \\ a many-body system's \\ learning of a drive,
using representation learning}
\label{sec_Results}

This section shows how to quantify four facets of learning.
Section~\ref{sec_Classify} concerns
the many-body system's ability to classify drives; 
Sec.~\ref{sec_Capacity}, memory capacity;
Sec.~\ref{sec_Discriminate}, discrimination of similar fields;
and Sec.~\ref{sec_ROC}, novelty detection.
At the end of each section, we synopsize 
the technique introduced in boldface.
These four techniques illustrate how representation learning 
can be applied to quantify features of learning.
Other features may be quantified along similar lines.
Code used can be found at the online repository~\cite{Github_repo}.

\subsection{Classification: Which drive is this?}
\label{sec_Classify}




A system classifies a drive by identifying the drive as 
one of many possibilities.
A VAE, we find, reflects more of a spin glass's classification ability
than absorbed power does.

We illustrate with the spin glass.
We generated random fields $A$, $B$, $C$, $D$, and $E$.
From 4 of the fields, we formed the drive $\mathcal{D}_1  :=  \{A, B, C, D\}$.
On the drive, we trained the spin glass in each of 1,000 trials.
In each of 1,000 other trials, we trained a fresh spin glass on
a drive $\mathcal{D}_2  :=  \{A, B, C, E\}$.
We repeated this process for each of the 5 possible 4-field drives.
Ninety percent of the trials were randomly selected for training our NN.
The rest were used for testing.

We used the VAE to gauge the spin glass's classification of the drives:
We identified the configurations 
occupied by the spin glass at a fixed time $t$ in the training trials.
On these configurations, we trained the VAE.
The VAE populated the latent space with dots
(as in Fig.~\ref{fig_Latent_Space})
whose density formed a probability distribution.

We then showed the VAE a time-$t$ configuration
from a test trial.
The VAE compressed the configuration into a latent-space point.
We calculated which drive most likely, 
according to the probability density,
generated the latent-space point.
The calculation was \emph{maximum} a posteriori \emph{estimation} 
(MAP estimation) (see~\cite{Bishop_06_Pattern} and App.~\ref{app_MAP}).
Here, the MAP estimation is equivalent to maximum-likelihood estimation.
We performed this testing and estimation for each trial in the test data.
The fraction of trials in which MAP estimation succeeded
forms the \emph{score}.
We scored the classification at each of many times $t$.
The score is plotted against $t$ in Fig.~\ref{fig_Classification},
as the blue, upper curve.

\begin{figure}[hbt]
\centering
\includegraphics[width=.6\textwidth, clip=true]{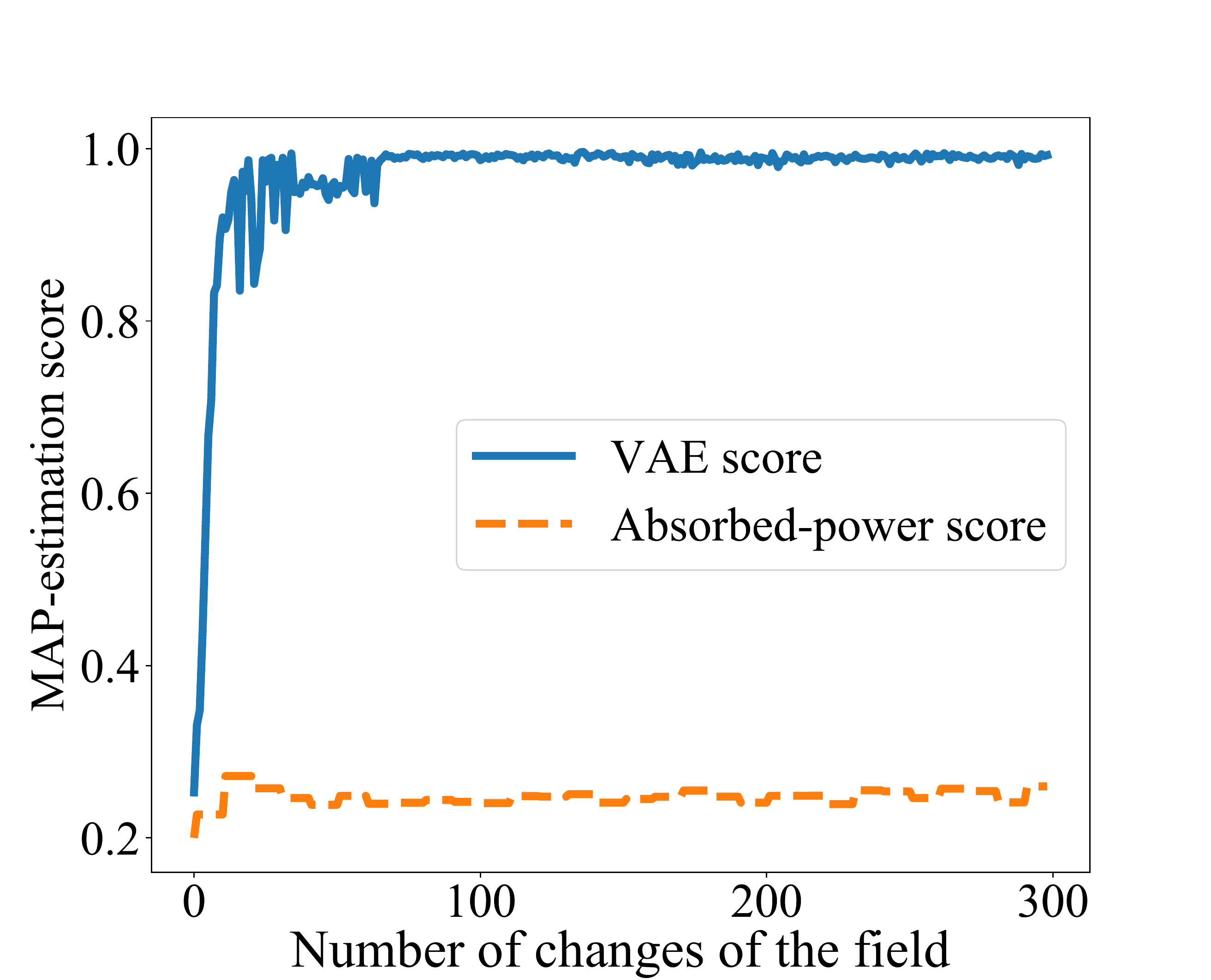}
\caption{\caphead{Quantification of a many-body system's classification ability:}
A spin glass classified a drive as one of five possibilities.
The system's classification ability was defined as
the score of the maximum \emph{a posteriori} (MAP) estimation
performed with a variational autoencoder (VAE) (blue, upper curve).
We compare with the score of MAP estimation performed with
absorbed power (orange, lower curve).
The VAE score rises to near the maximal value, 1.
The thermodynamic score remains slightly above 
the random-guessing score, $0.20$.
Hence the VAE detects more of the spin glass's classification ability.
}
\label{fig_Classification}
\end{figure}

The absorbed power reflects 
the spin glass's classification of the drives as follows.
For each drive $\mathcal{D}$ and each time $t$,
we histogrammed the power absorbed
while $\mathcal{D}$ was applied at $t$
in a VAE-training trial.
Then, we took a trial from the test set
and identified the power $\mathcal{P}$ absorbed at $t$.
We inferred which drive most likely, 
according to the histograms, produced $\mathcal{P}$. 
The guess's score appears as the orange, lower curve
in Fig.~\ref{fig_Classification}.

The score maximizes at 1.00 if the drive is always guessed accurately.
The score is lower-bounded by $1 / (\text{number of drives}) = 0.20$,
which results from random guessing.
In Fig.~\ref{fig_Classification}, each score grows 
over a time scale of tens of field switches.
The absorbed-power score begins at 0.20\footnote{
\label{foot_Why_Not_0.2}
The VAE's score begins close to 0.20. The slight distance from 0.20, 
we surmise, comes from stochasticity of three types:
the spin glass's initial configuration, the MAP estimation,
and stochastic gradient descent. 
Stochasticity of only the first two types affects the absorbed power's score.
}
and comes to fluctuate around 0.25.
The VAE's score comes to fluctuate slightly below 1.00.
Hence the VAE reflects more of the spin glass's classification ability
than the absorbed power does.

\textbf{A many-body system's ability to classify drives is quantified with 
the score of MAP estimates calculated from a VAE's latent space.}

\subsection{Memory capacity: How many drives can be remembered?}
\label{sec_Capacity}


How many fields can a many-body system remember?
A VAE, we find, registers a greater capacity 
than absorbed power registers.\footnote{
We use the term ``memory capacity'' in the physical sense of~\cite{Keim_19_Memory}.
A more specific, technical definition of ``memory capacity''
is used in reservoir computing~\cite{Jaeger_02_Short}.
}
Hence the VAE reflects statistical mechanical learning,
at high field numbers,
that the absorbed power does not.

We illustrated by constructing 50 random fields.
We selected 40 to form a drive $\mathcal{D}_1$,
selected 40 to form a drive $\mathcal{D}_2$,
and repeated until forming 5 drives.
We trained the spin glass on $\mathcal{D}_j$
in each of 1,000 trials, for each of $j = 1, 2, \ldots 5$.
Ninety percent of the trials were designated as VAE-training trials;
and 10\%, as VAE-testing trials.

The choice of 50 fields is explained in App.~\ref{app_Capacity_Work}:
Fifty fields exceed the spin-glass capacity registered by the absorbed power.
We aim to show that 50 fields do not exceed the capacity
registered by the VAE:
The VAE identifies spin-glass learning missed by the absorbed power.

We used representation learning to quantify the spin glass's capacity as follows.
For a fixed time $t$, we collected 
the configurations occupied by the spin glass at $t$
in the VAE-training trials.
On these configurations, the VAE performed unsupervised learning.
The VAE populated its latent space with dots
that formed five clusters.
Then, we fed the VAE the configuration occupied at $t$ during a test trial.
The VAE formed a new dot in latent space.
We MAP-estimated the drive that, according to the VAE,
most likely generated the dot (Sec.~\ref{sec_Classify}).
The fraction of test trials in which the VAE guessed correctly
constitutes the VAE's score.
The score is plotted against $t$ in Fig.~\ref{fig_Capacity},
as the blue, upper curve.

\begin{figure}[hbt]
\centering
\includegraphics[width=.6\textwidth, clip=true]{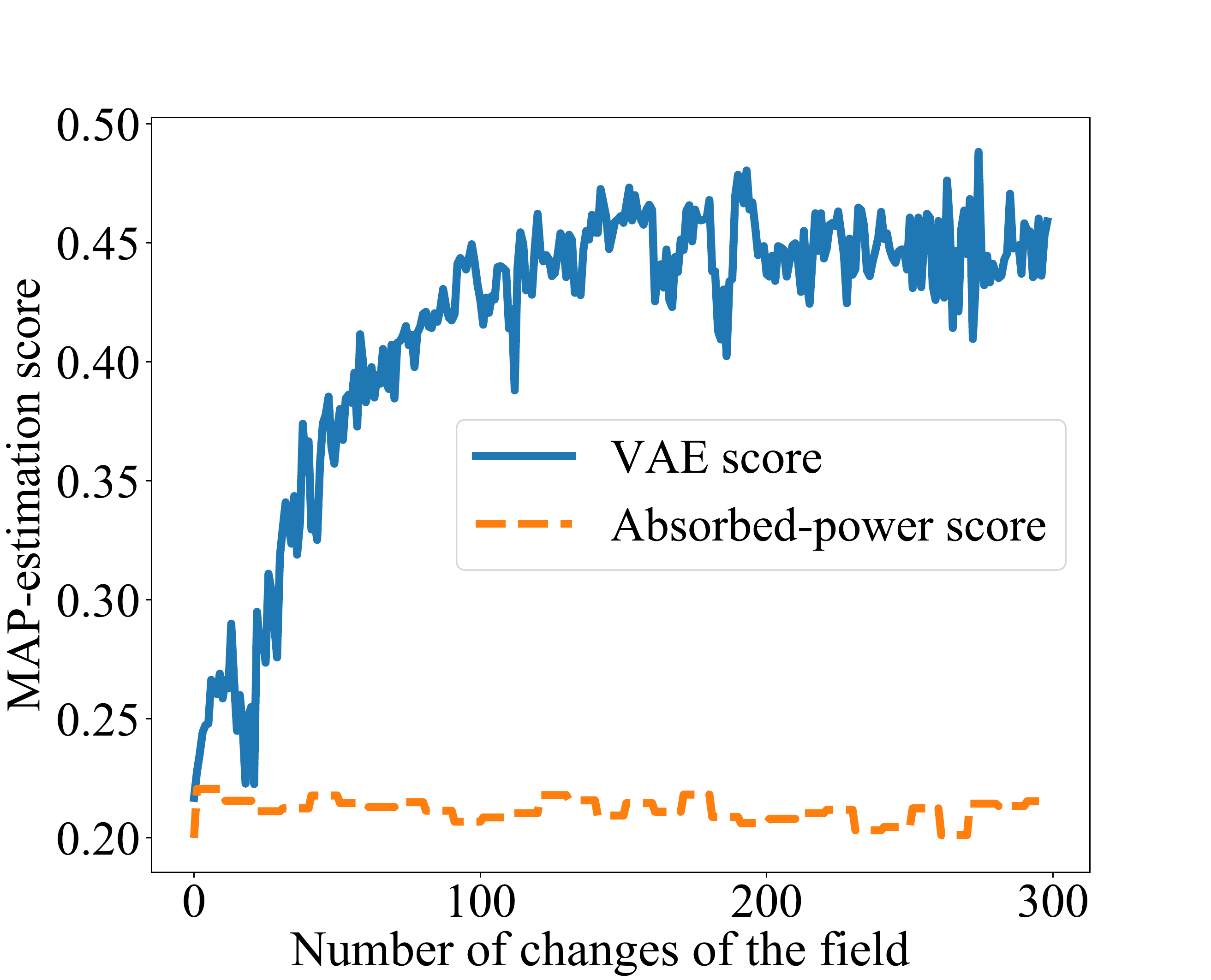}
\caption{\caphead{Quantification of memory capacity:}
A spin glass was trained on one of five drives in each of many trials.
Each drive was formed from 40 fields selected from 50 fields.
We quantified the spin glass's ability to classify the drives
with the score of maximum \emph{a priori} (MAP) estimation
performed with a variational autoencoder (upper, blue line).
The score of MAP estimation performed with absorbed power
is shown for comparison (lower, orange line).}
\label{fig_Capacity}
\end{figure}

The VAE's score is compared with the absorbed power's score,
calculated as follows.
For a fixed time $t$, we identified the power absorbed at $t$
in each VAE-testing trial.
We histogrammed the power absorbed
when $\mathcal{D}_j$ was applied at $t$,
for each $j = 1, 2, \ldots, 5$.
We then identified the power absorbed at $t$ in a test trial.
Comparing with the histograms, we inferred which drive
was most likely being applied.
We repeated this inference with each other test trial.
In which fraction of the trials did the absorbed power identify the drive correctly?
This number forms the absorbed power's score.
The score is plotted as the lower, orange curve in Fig.~\ref{fig_Capacity}.

The higher the score, the greater the memory capacity
attributed to the spin glass.
The absorbed power identifies the drive 
in approximately $20\%$ of the trials,
as would random guessing.
The score remains approximately constant,
because the number of fields exceeds the spin-glass capacity
measured by the absorbed power.
In contrast, the VAE's score grows over
$\approx 150$ changes of the field,
then plateaus at $\approx 0.450$.
The VAE points to the wrong drive most of the time
but succeeds significantly more often than the absorbed power.
Hence representation learning uncovers
more of the spin glass's memory capacity
than absorbed power measure does.

\textbf{A many-body system's memory capacity is quantified with
the greatest number of fields in any drive 
on which MAP estimation, based on a VAE's latent space,
scores better than random guessing.}


%
%
%
\subsection{Discrimination: How new is this field?}
\label{sec_Discriminate}


Suppose that a many-body system learns fields $A$ and $B$,
then encounters a field that interpolates between them.
Can the system recognize that 
the new field contains familiar constituents?
Can the system discern how much $A$ contributes
and how much $B$ contributes?
The answers characterize the system's discrimination ability,
which we quantify with a MAP-estimation score (Sec.~\ref{sec_Classify}).
Estimates formed from a VAE's latent space
reflect more of the system's discriminatory ability
than do estimates formed from absorbed power.

We illustrate with the spin glass, forming a drive $\{A, B, C\}$.
In each of 300 time intervals,
a field was selected randomly from the drive and applied.
The spin glass was then tested with
a linear combination $D_w = w A + (1 - w) B$.
The weight $w$ varied from 0 to 1, in steps of $1/6$, across trials.

We measured the spin glass's discrimination using the VAE as follows.
The final configuration assumed by the spin glass 
in each test trial was collected.
The configurations were split into VAE-training data
and VAE-testing data.
On the configurations generated by $D_w$ in the VAE-training data,
the VAE was trained.
Then, the VAE received a configuration generated by $D_w$
in a VAE-testing trial.
The VAE mapped the configuration to a latent-space point.
We inferred which field most likely generated that point,
using MAP estimation on the latent space.
We tested the VAE many times, 
then calculated the fraction of MAP estimates that were correct,
the VAE's score.

Similarly, we measured the spin glass's discrimination
using the absorbed power.
For each trial in the VAE-training data,
we calculated the power $\mathcal{P}$ absorbed by the spin glass
after the application of $D_w$.
We histogrammed $\mathcal{P}$, inferring the probability that,
if shown $D_w$ for a given $w$,
the spin glass will absorb an amount $\mathcal{P}$ of power.
Then, we calculated the power absorbed
during a VAE-testing trial.
We inferred which field most likely generated that point,
using MAP estimation on the latent space.
Repeating MAP estimation with all the VAE-testing trials,
we calculated the absorbed power's score.

The VAE's score equals about double the absorbed power's score,
for latent spaces of dimensionality 2 to 20.
The VAE scores between 0.448 and 0.5009, 
whereas the absorbed power scores 0.2381.
Hence the representation-learning model picks up on 
more of the spin glass's discriminatory ability
than the absorbed power does.

\textbf{A many-body system's ability to discriminate
combinations of familiar fields
is quantified with the score of MAP estimates
formed from a VAE's latent space.}

\subsection{Novelty detection: Has this drive been encountered before?}
\label{sec_ROC}


\begin{figure}[hbt]
\centering
\includegraphics[width=.6\textwidth, clip=true]{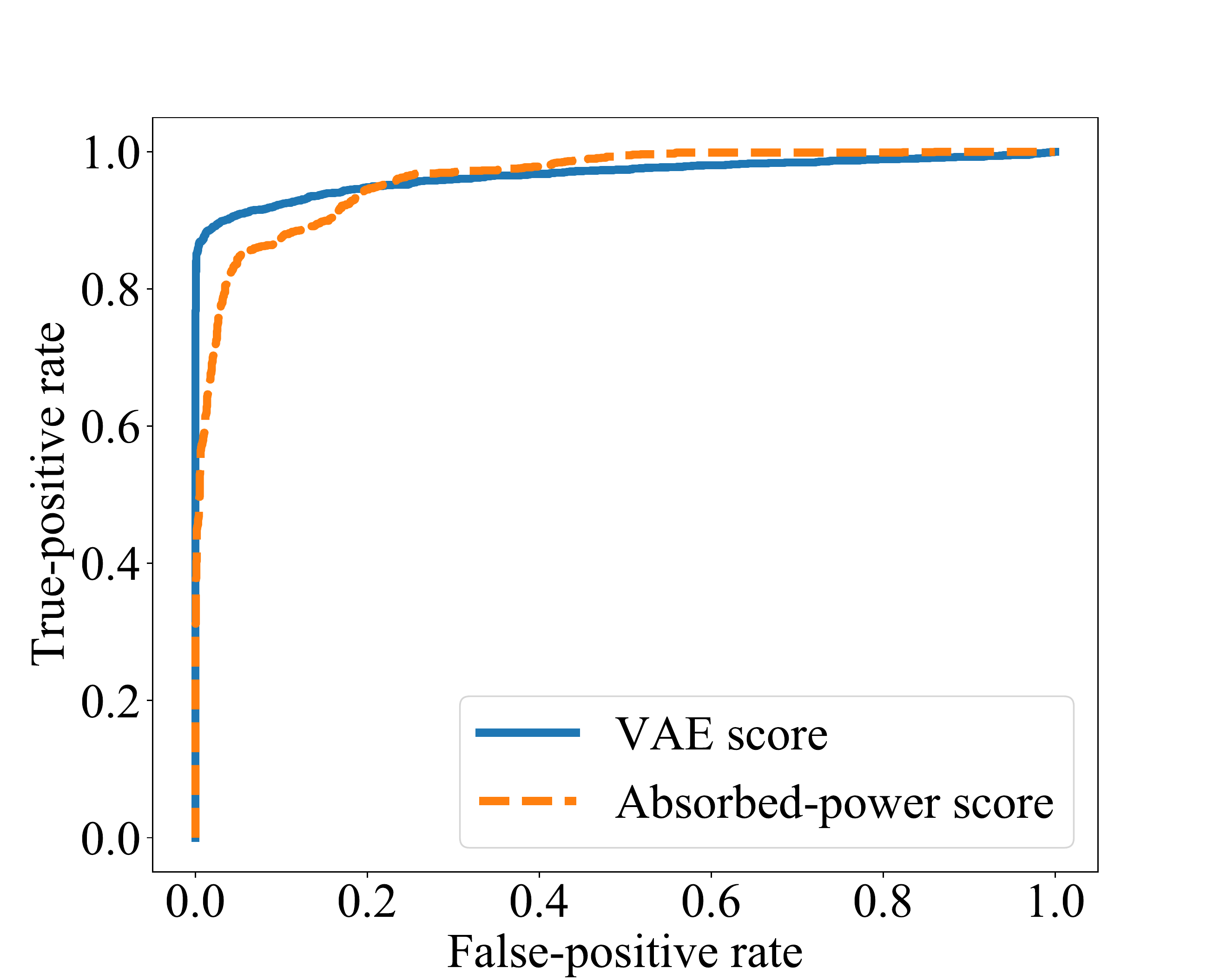}
\caption{\caphead{Receiver-operating-characteristic (ROC) curve:}
The spin glass was trained with three drives,
then tested with a familiar drive or with a novel drive.
From a response of the system's, an ROC curve can be defined.
The blue, solid curve is defined in terms of a variational autoencoder;
and the orange, dashed curve is defined in terms of absorbed power.
}
\label{fig_ROC}
\end{figure}

At the start of the introduction, we described
how absorbed power has been used to identify novelty detection.
A system detects novelty when labeling a stimulus as familiar or unfamiliar.
The stimulus produces a response that exceeds a threshold or lies below.
If the stimulus exceeds the threshold, 
an observer should guess that the stimulus is novel.
Otherwise, the observer should guess that the stimulus is familiar.

The observer can err in two ways:
One commits a \emph{false positive} by believing a familiar drive to be novel.
One commits a \emph{false negative} by believing a novel drive to be familiar.
The errors trade off:
Raising the threshold lowers the probability
$p( \text{pos.} | \text{neg.} )$,
suppressing false positives at the cost of false negatives.
Lowering the threshold lowers the probability 
$p( \text{neg.} | \text{pos.} )$,
suppressing false negatives at the cost of false positives.

The \emph{receiver-operating-characteristic} (ROC) curve
depicts the tradeoff's steepness 
(see~\cite{Brown_06_Receiver} and Fig.~\ref{fig_ROC}).
Each point on the curve corresponds to one threshold value.
The false-positive rate $p( \text{pos.} | \text{neg.} )$ 
runs along-the $x$-axis; and the true-positive rate, 
$p( \text{pos.} | \text{pos.} )$, along the $y$-axis.
The greater the area under the ROC curve,
the more sensitively the response reflects accurate novelty detection.

We measure a many-body system's 
novelty-detection ability using an ROC curve.
Let us illustrate with the spin glass.
We constructed two random drives,
$\{A, B, C \}$ and $\{D, E, F \}$.
We trained the spin glass on $\{A, B, C \}$.
In each of 3,000 trials, we then tested the spin glass with
$A$, $B$, or $C$.
In each of 3,000 other trials, we tested with $D$, $E$, or $F$.
We defined one response in terms of a VAE, as detailed below;
measured the absorbed power, a thermodynamic response;
and, from each response, drew an ROC curve
(Fig.~\ref{fig_ROC}).
The curves show that representation learning offers 
greater sensitivity to the spin glass's novelty detection.

We defined the representation-learning response in terms of a VAE as follows.
We trained the VAE on the configurations 
assumed by the spin glass during its training.
The VAE populated latent space with three clumps of dots.
We modeled the clumps with a hard mixture 
$p_{ABC} (z_1,  z_2)$ of three Gaussians.\footnote{
A mixture is hard if it models each point as
belonging to only one Gaussian.}
We then fed the VAE the configuration that resulted from testing the spin glass.
The VAE mapped the configuration to a latent-space point 
$(z_1^{\rm test},  z_2^{\rm test})$.
We calculated the probability 
$p_{ABC} (z_1^{\rm test},  z_2^{\rm test}) \, dz_1 dz_2$ 
that the $ABC$ distribution produced the new point.
This protocol led to the blue, solid curve in Fig.~\ref{fig_ROC}.

We defined a thermodynamic ROC curve in terms of absorbed power.
Consider the trials in which the spin glass is tested with field $A$.
We histogrammed the power absorbed by the spin glass 
at the end of the testing.
We form another histogram from the $B$-test trials;
and a third histogram, from the $C$-test trials.
To these histograms was compared
the power $\mathcal{P}$ that the spin glass absorbed
during a test with an arbitrary field.
We inferred the likelihood that $\mathcal{P}$ resulted from
a familiar field.
The results form the orange, dashed curve in Fig.~\ref{fig_ROC}.

The two ROC curves enclose regions of approximately the same area:
The VAE curve encloses an area-0.9633 region;
and the thermodynamic curve, an area-0.9601 region.
On average across all thresholds, therefore,
the responses register novelty detection approximately equally.
Yet the responses excel in different regimes:
The VAE achieves greater true-positive rates at low false-positive rates,
and the absorbed power achieves greater true-positive rates 
at high false-positive rates.
This two-regime behavior persisted across batches of trials,
though the enclosed areas fluctuated a little.
Hence anyone paranoid about avoiding false positives
should measure a many-body system's novelty detection with a VAE,
while those more relaxed might prefer the absorbed power.

Why should the VAE excel at low false-positive rates?
Because of the VAE's skill at generalizing, we conjecture.
Upon training on cat pictures, a VAE generalizes from the instances.
Shown a new cat, the VAE recognizes its catness.
Perturbing the input a little perturbs the VAE's response little.
Hence changing the magnetic field a little,
which changes the spin-glass configuration little,
should change latent space little,
obscuring the many-body system's novelty detection.
This obscuring disappears when the magnetic field changes substantially.

\textbf{A many-body system's novelty-detection ability is quantified with
an ROC curve formed from a VAE's latent space
or a thermodynamic response, depending on the false-positive threshold.}





%
%
%
%
%
\section{Details about the variational autoencoder}
\label{sec_NN_Details}


We briefly motivate and review VAEs,
then describe the VAE applied in the main text.
Further background about VAEs can be found in~\cite{Kingma_13_Auto,JR_14_Stochastic,Doersch_16_Tutorial}.
We denote vectors with boldface in this section.

Denote by $\mathbf{X}$ data that has a probability 
$p_{ \bm{\theta} }(\mathbf{x})$
of assuming the value $\mathbf{x}$.
$\bm{\theta}$ denotes a parameter,
and $p_{ \bm{\theta} } (\mathbf{x})$ is called the \emph{evidence}.
We do not know the form of $p_{ \bm{\theta} }(\mathbf{x})$,
when using representation learning.
We model $p_{ \bm{\theta} }(\mathbf{x})$ by identifying 
latent variables $\mathbf{Z}$ 
that assume the possible values $\mathbf{z}$.
Let $p_{ \bm{\theta} }(\mathbf{x} | \mathbf{z})$ 
denote the conditional probability that
$\mathbf{X} = \mathbf{x}$, given that $\mathbf{Z} = \mathbf{z}$.
We model the evidence, using the latent variables, with
\begin{align}
 p_{ \bm{\theta} }(\mathbf{x}) 
 = \int d\mathbf{z} \; 
 p_{ \bm{\theta} }( \mathbf{x} | \mathbf{z} ) 
 p( \mathbf{z} ) .
\end{align}

$p_{ \bm{\theta} }(\mathbf{x} | \mathbf{z} )$ 
can be related to the posterior distribution 
$p_{ \bm{\theta} } (\mathbf{z} | \mathbf{x})$.
The posterior is the probability that, if 
$\mathbf{X} = \mathbf{x}$, then $\mathbf{Z} = \mathbf{z}$. By Bayes' rule,
$p_{ \bm{\theta} }( \mathbf{z} | \mathbf{x} ) 
= p_{ \bm{\theta} }( \mathbf{x} | \mathbf{z} )
p( \mathbf{z} )  /  p_{ \bm{\theta} }( \mathbf{x} )$. 
Calculating the posterior is usually impractical,
as $p_{ \bm{\theta} }(\mathbf{x})$ 
is typically intractable (cannot be calculated analytically).
Hence we approximate the posterior with a variational model 
$q_{ \bm{\phi} }( \mathbf{z} | \mathbf{x} )$.
The optimization parameter $\bm{ \phi }$ denotes 
the NN's weights and biases.

The approximation introduces an inference error,
quantified with the Kullback-Leibler (KL) divergence.
Let $P(\mathbf{u})$ and $Q(\mathbf{u})$ denote distributions over
the possible values $\mathbf{u}$ of a variable.
The KL divergence quantifies the distance between the distributions:
\begin{align}
   D_\KL \LParen P(\mathbf{u}) || Q(\mathbf{u}) \RParen
   & :=  \mathbb{E}_{ P(\mathbf{u}) }
        \left[  \ln P( \mathbf{u} ) \right]
        -  \mathbb{E}_{ P(\mathbf{u}) }
        \left[ \ln  Q(\mathbf{u})  \right] \\
   \label{eq_D_KL_Nonneg}
   & \geq  0 .
\end{align}
We denote by $\mathbb{E}_{ P(\mathbf{u}) } [ f( \mathbf{u} ) ]$
the expectation value of a function $f(\mathbf{u})$.
Operationally, the KL divergence equals the maximal efficiency 
with which the distributions can be distinguished, 
on average, in a binary hypothesis test.
We quantify our inference error with the KL divergence between
the variational model and the posterior,
$D_\KL \LParen  
q_{ \bm{\phi} }( \mathbf{z} | \mathbf{x} ) || 
                         p_{ \bm{\theta} }( \mathbf{z} | \mathbf{x} )  \RParen$. 

Recall that we wish to estimate $p_{\bm \theta} ( \mathbf{x} )$:
An accurate estimate lets us predict $\mathbf{x}$ accurately.
We wish also to estimate the latent posterior distribution, 
$q_{\bm \phi} (\mathbf{z} | \mathbf{x} )$.
We therefore write out the KL divergence's form,
apply Bayes' rule to rewrite the $p_{ \bm{\theta} }( \mathbf{z} | \mathbf{x} )$,
rearrange terms,
and repackage terms into a new KL divergence:
\begin{align}
   \label{eq_Log_Like_1}
   \ln p_{ \bm{\theta} } ( \mathbf{x} )
   =  D_\KL \LParen  q_{\bm \phi} ( \mathbf{z} | \mathbf{x} )
                                                     || p_{\bm \theta} ( \mathbf{z} | \mathbf{x} )
                   \RParen
   +  \mathbb{E}_{ q_{\bm \phi} ( \mathbf{z} | \mathbf{x} )}
   \left[ \ln p_{\bm \theta} ( \mathbf{x} | \mathbf{z} )  \right]
   -  D_\KL  \LParen  q_{\bm \phi} ( \mathbf{z} | \mathbf{x} )
                                                       || p ( \mathbf{z} )  \RParen .
\end{align}
The penultimate term encodes our first goal;
and the final term, our second goal.

Recall that the KL divergence is nonnegative.
The sum of the final two terms therefore lower-bounds the log-likelihood,
$\ln p_{\bm \theta}( \mathbf{x} )$.
$\mathbf{x}$ denotes the event observed,
$\bm{\theta}$ denotes a possible cause,
and $p_{\bm \theta}$ denotes the likelihood that 
$\bm{\theta}$ caused $\mathbf{x}$.
Maximizing each side of Eq.~\eqref{eq_Log_Like_1},
and invoking Ineq.~\eqref{eq_D_KL_Nonneg}, yields
\begin{align}
   \label{eq_Log_Like_2}
   \max_\theta  \left\{   \ln p_{ \bm{\theta} } ( \mathbf{x} )   \right\}
   \geq \max_\theta  \left\{
   \mathbb{E}_{ q_{\bm \phi} ( \mathbf{z} | \mathbf{x} )}
   \left[ \ln p_{\bm \theta} ( \mathbf{x} | \mathbf{z} )  \right]
   -  D_\KL  \LParen  q_{\bm \phi} ( \mathbf{z} | \mathbf{x} )
                                                       || p ( \mathbf{z} )  \RParen 
   \right\} .
\end{align}
The RHS is called the \textit{evidence lower bound} (ELBO). 

A VAE is a neural network that implements the ELBO.
$q_{\bm \phi} ( \mathbf{z} | \mathbf{x} )$ encodes the input $\mathbf{X}$, 
and $p_{\bm \theta} ( \mathbf{x} | \mathbf{z} )$ decodes.
The VAE has the cost function
\begin{equation}
   \label{eq_VAE_Cost}
   \mathcal{L}_{\text{VAE}} 
   := \mathbb{E}_{  p_{\rm emp}( \mathbf{x} )  }
   \left[ \mathbb{E}_{  q_{\bm \phi} ( \mathbf{z} | \mathbf{x} )  }
   \left[  \ln p_{\bm \theta} ( \mathbf{x} | \mathbf{z} )  \right] 
   -  D_\KL  \LParen  q_{\bm \phi} ( \mathbf{z} | \mathbf{x} ) 
                                                        \| p( \mathbf{z} )  \RParen \right].
\end{equation}
$p_{\rm emp}( \mathbf{x} )$ denotes the distribution inferred from the empirical dataset.
Given input values $\mathbf{x}$, the VAE generates a latent distribution 
$q_{\bm \phi} ( \mathbf{z} | \mathbf{x} ) 
= \mathcal{N}( \bm{\mu}_{ \mathbf{z} | \mathbf{x} }, 
                                           \bm{\Sigma}_{ \mathbf{z} | \mathbf{x} } )$.
We denote by $\mathcal{N} ( \bm{\mu}, \bm{\Sigma} )$
the standard multivariate normal distribution
whose vector of means is $\bm{\mu}$ 
and whose covariance matrix is $\bm{\Sigma}$.
Neural-network layers parameterize the VAE's
$\bm{\mu}_{ \mathbf{z} | \mathbf{x} }$ and 
$\bm{\Sigma}_{ \mathbf{z} | \mathbf{x} }$.
Latent vectors are sampled according to 
$q_{\bm \phi}( \mathbf{z} | \mathbf{x} )$,
then decoded into outputs distributed according to
$p_{\bm \theta} ( \mathbf{x} | \mathbf{z} ) 
= \mathcal{N}( \bm{\mu}_{ \mathbf{x} | \mathbf{z} }, 
                        \sigma^2_{ \mathbf{x} | \mathbf{z} }  \id  )$.
Neural-network layers parameterize the mean vector 
$\bm{\mu}_{ \mathbf{x} | \mathbf{z} }$.
The variance $\sigma^2_{ \mathbf{x} | \mathbf{z} }$ is a hyperparameter.

A VAE with the following architecture produced the results in the main text.
The style was borrowed from~\cite{Hafner_18_Building}.
Two fully connected $200$-neuron hidden layers
process the input data.
One fully connected two-neuron hidden layer parameterizes each of
$\bm{\mu}_{ \mathbf{z} | \mathbf{x} }$ and 
$\bm{\Sigma}_{ \mathbf{z} | \mathbf{x} }$.
Two fully connected $200$-neuron hidden layers process the latent variables.
An output layer reads off the outputs. 
We choose $\sigma^2_{ \mathbf{x} | \mathbf{z} }=1$ 
and use Rectified Linear Unit (ReLU) activations
for all hidden layers.

\section{Distinction between robust learning \\
and two superficially similar behaviors}
\label{sec_Not_Enslaved_Or_Frozen}


Learning contrasts with two other behaviors
that the spin glass could exhibit,
entraining to the field and near-freezing.

\subsection{Entraining to the field}
\label{sec_Enslave}

Imagine that most spins align with any field $A$. 
The configuration reflects the field
as silly putty reflects the print of a thumb
pressing on the silly putty.
Smoothing the silly putty's surface wipes the thumbprint off.
Similarly, applying a field $B \neq A$ to the spin glass
wipes the signature of $A$ from the configuration.
From the perspective of the end of the application of $B$,
the spin glass has not learned $A$.
The spin glass lacks a long-term memory of the field;
the field is encoded in no robust, deep properties of the configuration.

We can distinguish learning from entraining
by calculating the percentage of the spins 
that align with the field at the end of training.
If the spins obeyed the field, 100\% would align.
If the spins ignored the field, $50\%$ would align, on average.
Hence the spin glass's entraining is quantified with
\begin{align}
   2 ( \text{Percentage of spins aligned with the field} )  - 100 .
\end{align}
(This measure does not apply to alignment percentages $< 50$,
which are unlikely to be realized.)

\begin{figure}[hbt]
\centering
\includegraphics[width=.6\textwidth, clip=true]{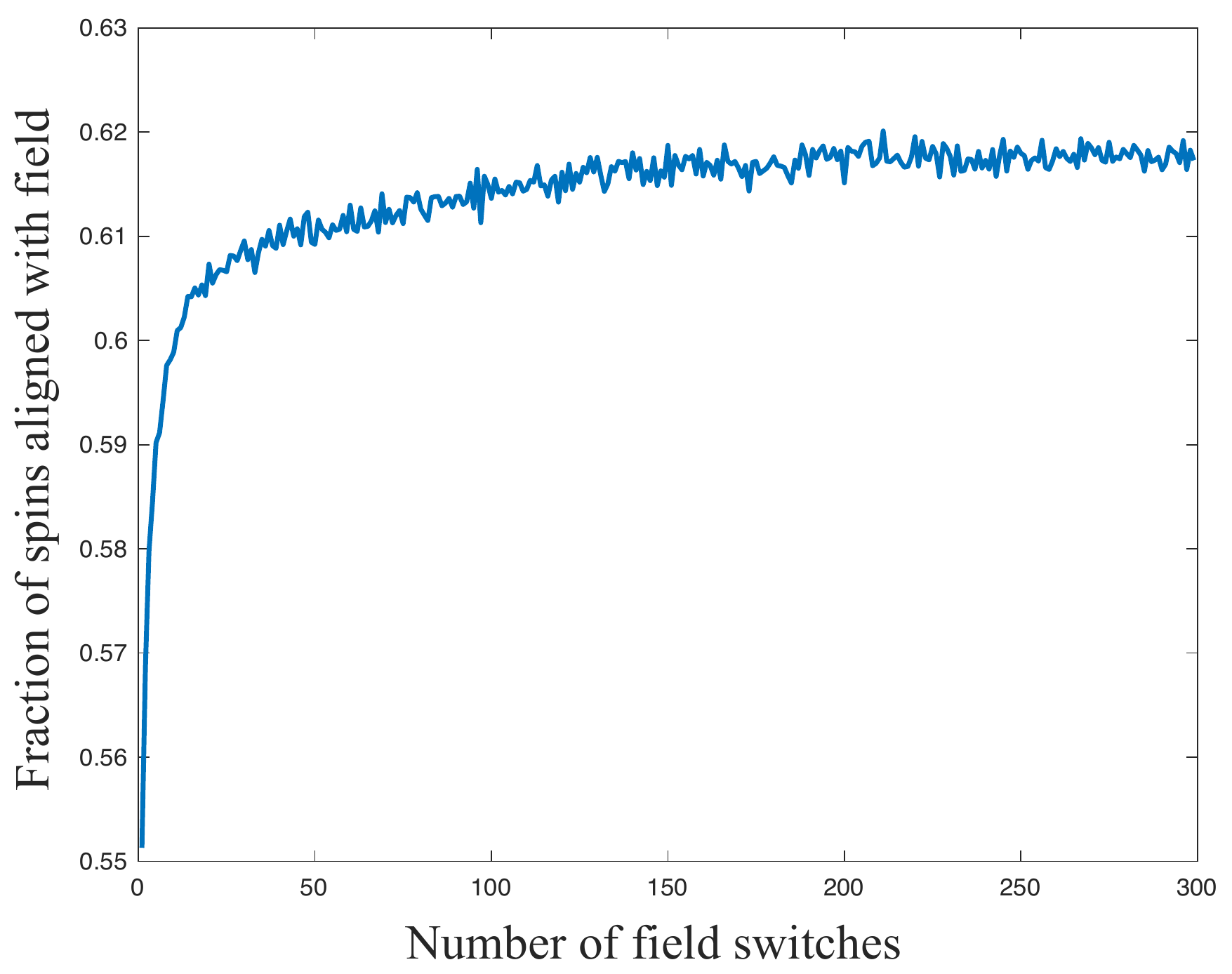}
\caption{\caphead{Fraction of the spins aligned with the field,
as a function of time:}
If a fraction $\approx 1$ of the spins align, the spin glass resembles silly putty,
which shallowly reflects the print of a thumb that presses on it.
Robust learning stores information deep in a system's structure.}
\label{fig_Frac_Aligned}
\end{figure}

Figure~\ref{fig_Frac_Aligned} shows data collected 
about the spin glass in the good-learning regime (Sec.~\ref{sec_Spin_Glass}).
The number of aligned spins is plotted against
the amount $t$ of time for which the spin glass has trained.
After the application of one field, 55\% of the spins align with the field.
At the end of training, 62\% align. 
Hence the spins' entraining grows from 10\% to 24\%.
Growth is expected, as the spin glass learns the training drive.
But $24\%$ is an order of magnitude less than $100\%$, 
so the spin glass is not entrained to the field.

\subsection{Near-freezing}
\label{sec_Near_Freeze}

Suppose that the spin glass is nearly frozen.
Most spins cannot flip, but a few jiggle under most fields.
The spin glass does not learn any field effectively,
being mostly immobile.
But the few flippable spins reflect the field.
A bottleneck NN could guess the field from those few spins.
The NN's low loss function would induce a false positive, 
leading us to believe that the spin glass had learned.


We can avoid false positives by measuring two properties.
First, we measure the percentage of the spins
that antialign with the field.
If the percentage consistently $\gg 0$,
many of the spins are not frozen.
Figure~\ref{fig_Frac_Aligned} confirms that many are not.

\begin{figure}[hbt]
\centering
\includegraphics[width=.6\textwidth, clip=true]{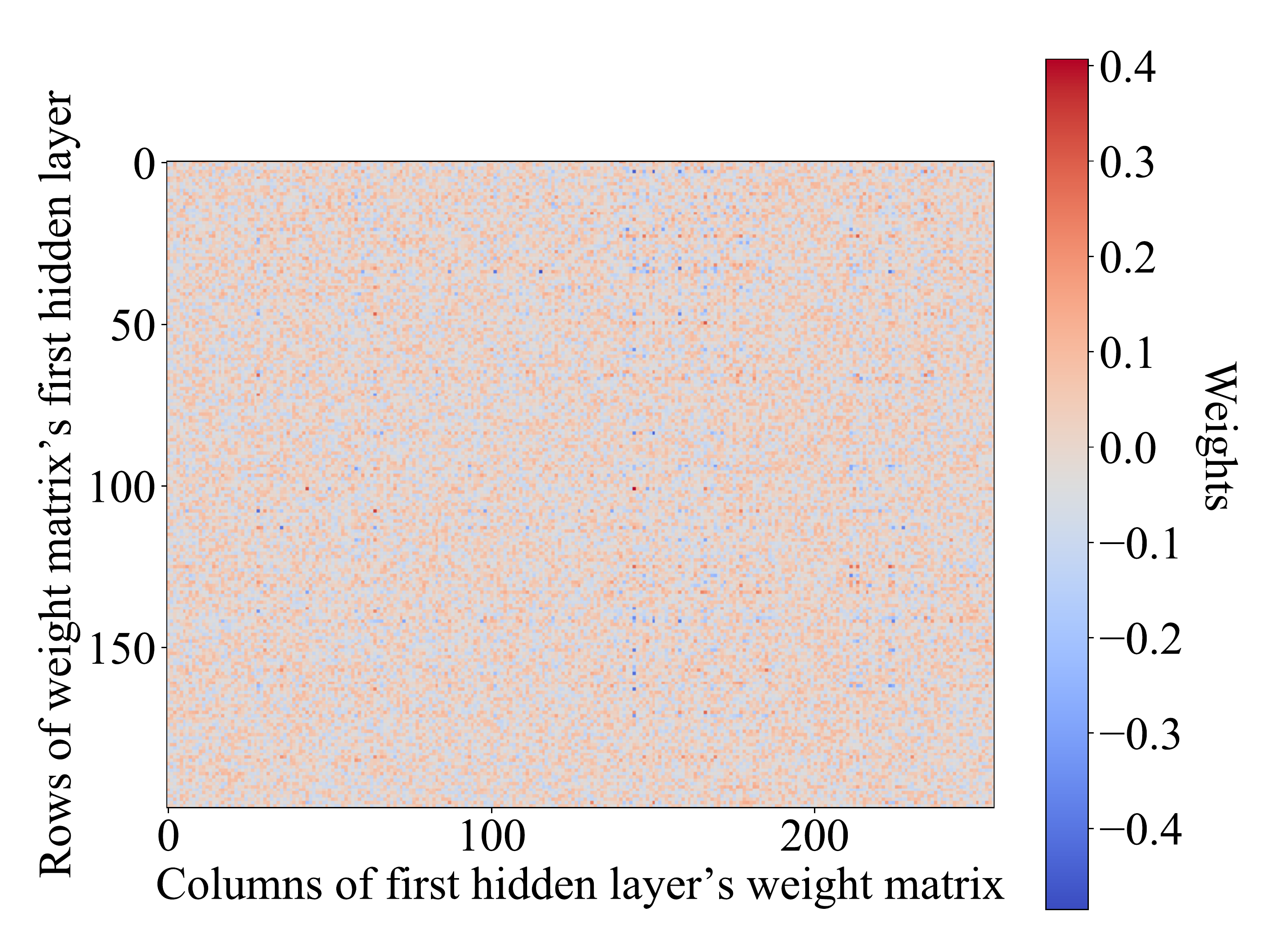}
\caption{\caphead{How much information about each spin 
the variational autoencoder compresses:}
This figure represents the first hidden layer's weight matrix.
The weight matrix transforms the input layer, which consists of 256 neurons,
into the first hidden layer, which consists of 200 neurons.
The matrix's elements are replaced with colors.
Each vertical line corresponds to one spin.
The farther leftward a stripe, the lesser the spin's field energy 
[Eq.~\eqref{eq_Hamiltonian_j}].
} 
\label{fig_Which_Spins_Used_By_AE_b}
\end{figure}

Second, we check that the VAE compresses information about
spins that have many different field energies $A_j(t) s_j$
[Eq.~\eqref{eq_Hamiltonian_j}].
We illustrate with the protocol used to generate Fig.~\ref{fig_Latent_Space_Visualize}:
We trained the spin glass on a drive $\{A, B, C\}$
in each of many trials.
On the end-of-trial configurations, the VAE was trained.

A configuration is represented in the VAE's input layer, a column vector.
A weight matrix transforms the input layer
into the first hidden layer, another column vector.
The weight matrix is depicted in Fig.~\ref{fig_Which_Spins_Used_By_AE_b}.
The matrix's numerical entries have been replaced with colors.
Each vertical stripe corresponds to one spin.
The farther leftward a stripe, the lesser the spin's field energy.
The darker a stripe, the more information about the spin 
the VAE uses when forming $Z$.
The plot is approximately invariant, at a coarse-grained level,
under translations along the horizontal.
(On the order of ten exceptions exist.
These vertical stripes contain several dark dots.
An example appears at $x \approx 150$.
But the number of exceptions is much less than the number of spins:
$\approx 10 \ll 256$.)
Hence the NN uses information about spins of many field energies.
The spins do not separate into
low-field-energy flippable spins
and high-field-energy frozen spins.

\section{Maximum \emph{a posteriori} estimation (MAP estimation)}
\label{app_MAP}



This Section details the MAP estimation
applied in Sections~\ref{sec_Classify}-\ref{sec_Discriminate}.
MAP estimates help answer the question
``How accurately can the drive be identified from the spin configuration?''
We return to the notation used in the introduction,
denoting the drive by $Y$ and the configuration by $X$.

In information theory, we answer this question using the conditional entropy,
\begin{align}
   \label{eq_Cond_Ent}
   H(Y | X) := - \sum_{x, y} p(x, y)  \log \frac{ p(x, y) }{ p(x) } .
\end{align}
$p(x, y)$ denotes a joint distribution; and $p(x)$, a marginal.
The conditional entropy quantifies
the uncertainty about the drive, given the configuration. 
Equation~\eqref{eq_Cond_Ent} does not refer to
any estimator of $Y$.
Rather, $H(Y | X)$ underlies a bound on the accuracy with which
any estimator can reconstruct the drive from the configuration,
by Fano's inequality.
Estimating $H(Y | X)$ proves difficult, due to undersampling:
An enormous amount of data is needed to estimate the distribution
$p(y | x)$ accurately enough to estimate $H(Y | X)$
(Sec.~\ref{sec_Opportunities}).

Undersampling plagues also the mutual information,
a sister of the conditional entropy:
$I(X; Y)  :=  H(Y) - H(Y | X)$.
The Shannon entropy, $H(Y) := - \sum_y p(y) \log p(y)$,
quantifies the randomness in the drive variable.
The mutual information quantifies 
the information about the drive in the configuration and vice versa.

$H(Y | X)$ and $I(X; Y)$ offer one answer to our question.
Another comes from using MAP estimation 
to predict drives from configurations,
then scoring the predictions.
MAP estimation proceeds as follows.
One approximates the conditional probability distribution $p(y|x)$ 
from the data via any possible strategy.
(We detail one strategy below.)
Let $\tilde{p} (y | x)$ denote the approximation.
Given a configuration $x$, one predicts that it resulted from the drive
\begin{equation}
   \label{eq_How_To_MAP}
    \hat{y} = \arg\max_{y}  \Set{  \tilde{p} ( y | x )  }
\end{equation}
that has the greatest conditional probability.
Equation~\eqref{eq_How_To_MAP} is the MAP estimator.
We use it to map all the configurations $x$
to drive predictions $\hat{y}$.
The frequency with which $\hat{y} = y$ is the estimator's score.

To use the MAP estimator~\eqref{eq_How_To_MAP},
we must approximate the conditional probability distribution $p(y | x)$.
Our approximation suffers from undersampling.
Hence we invoke the map $f(x) = z$ from configurations $x$
to the low-dimensional latent-space variable $z$.
Approximating $p \LParen y | f(x) \RParen$ proves easier than
approximating $p(y | x)$. By Bayes' rule, 
$p \LParen y | f(x) \RParen
=  \frac{  p \LParen f(x) | y \RParen  p(y)  }{  p \LParen f(x) \RParen  }$.
The approximation $\tilde{p} \LParen y | f(x) \RParen$ factors analogously.
We redefine our estimator as
\begin{align}
   \hat{y}
   & =  \arg  \max_y  \Set{  \tilde{p} \LParen y | f(x) \RParen  } 
   =  \arg \max_y  \Set{
   \frac{  \tilde{p} \LParen f(x) | y \RParen  \tilde{p}(y)  }{  
             \tilde{p} \LParen f(x) \RParen  }  }
   =  \arg \max_y  \Set{   \tilde{p} \LParen f(x) | y \RParen  \tilde{p}(y)  } .
\end{align}
The final equality holds because the arg-max over $y$ cannot depend on 
the $y$-independent $\tilde{p} \LParen f(x) \RParen$.
The fields $y$ are chosen uniformly randomly from the drive.
Hence $p(y) \approx \tilde{p}(y)$ is constant, and
\begin{align}
   \hat{y}
   & \approx  \arg  \max_y  \Set{  \tilde{p} \LParen y | f(x) \RParen  } .
\end{align}
This MAP estimate equals the maximum-likelihood estimate.
Generally, MAP estimation with a uniform prior
amounts to maximum-likelihood estimation.
We use only uniform priors.
Other applications of our toolkit, however,
can benefit from alternative priors, if extra information is available.
Hence we present the MAP generalization of maximum-likelihood estimation.
A Gaussian distribution approximates $p \LParen y | f(x) \RParen$ well,
so $\hat{y}$ can be approximated easily.

\section{Memory capacity attributed to the many-body system \\
by the absorbed power}
\label{app_Capacity_Work}


In Sec.~\ref{sec_Capacity}, we compared the memory capacity
registered by the VAE
to the capacity registered by the absorbed power.
The study involved MAP estimation on 
drives of 40 fields selected from 50 fields.
The choice of 50 is explained here:
Fifty fields exceed the spin-glass capacity
registered by the absorbed power.

Recall how memory has been detected thermodynamically~\cite{Gold_19_Self}:
Let a many-body system be trained with
a drive that includes a field $A$.
Consider testing the system, afterward, 
with an unfamiliar field $B$, and then with $A$.
Suppose that the absorbed power jumps substantially when $B$ is applied
and less when $A$ is reapplied.
The many-body system identifies $B$ as novel
and remembers $A$,
according to the absorbed power.

We sharpen this analysis.
First, we divide the trial into time windows.
During each time window, the field switches 10 times.
(The 10 eliminates artificial noise and is not critical.
Our qualitative results are robust with respect to changes in such details.)
We measure the absorbed power at the end of each time window
and at the start of the subsequent window.
We define ``the absorbed power jumps substantially'' as
``the absorbed power jumps, on average over trials, 
by much more than the noise
(by much more than the absorbed power fluctuates across a trial)'':
\begin{align}
   \label{eq_Capacity_Condn_Work}
   & \langle ( \text{Power absorbed at start of later time window} ) 
    \\ \nonumber & \quad
   - ( \text{Power absorbed at end of preceding time window} )
   \rangle_{ \text{trials} } 
   \\ \nonumber & 
   \gg  \text{Standard deviation in } 
   [ ( \text{Power absorbed at start of later window} )
   \\ \nonumber & \quad
   -  ( \text{Power absorbed at end of preceding window} ) ] .
\end{align}
Consider including only a few fields in the training drive, 
then growing the drive in later trials.
The drive will tax the spin glass's memory until exceeding the capacity.
The LHS of~\eqref{eq_Capacity_Condn_Work}
will come to about equal the RHS.

Figure~\ref{fig_Capacity_Work} illustrates with the spin glass.
On the $x$-axis is the number of fields in the training drive.
On the $y$-axis is the ratio of 
the left-hand side of Ineq.~\eqref{eq_Capacity_Condn_Work}
to the right-hand side (LHS/RHS).
Where LHS/RHS $\approx 1$, the spin glass reaches its capacity.
This spin glass can remember $\approx 15$ fields,
according to the absorbed power.

\begin{figure}[hbt]
\centering
\includegraphics[width=.6\textwidth, clip=true]{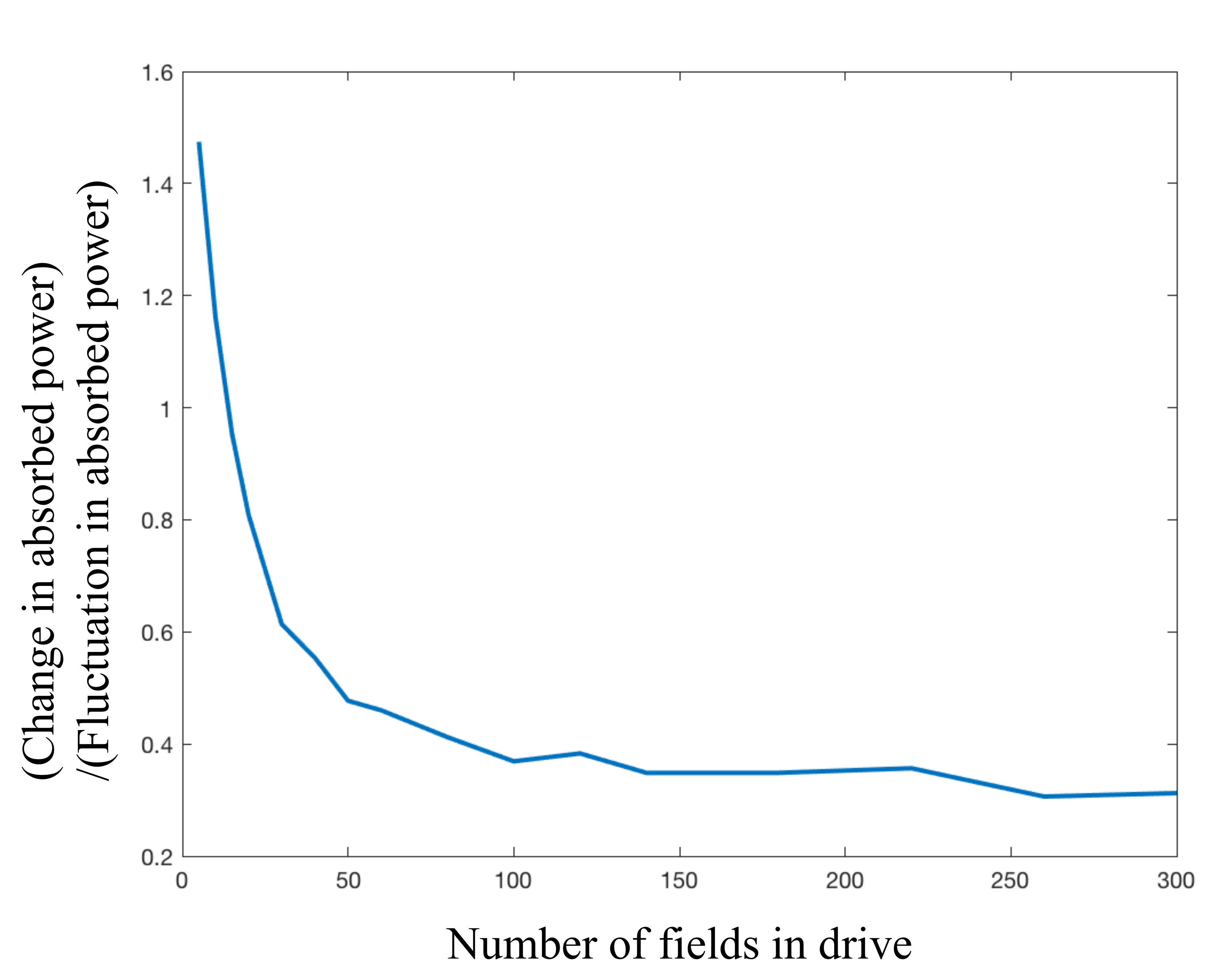}
\caption{\caphead{Estimate of memory capacity by absorbed power:}
A many-body system reaches its capacity, 
according to the absorbed power, when
[left-hand side of Ineq.~\eqref{eq_Capacity_Condn_Work}]
/ (right-hand side) $\approx 1$.
The curve $\approx 1$, and a 256-spin glass reaches its capacity,
when the training drive contains $\approx 15$ fields.}
\label{fig_Capacity_Work}
\end{figure}
\section{Justification of use of machine learning}
\label{sec_Justify_ML}


Deep learning is a powerful tool.
Is it necessary for recovering our results?
Could simpler algorithms detect and quantify many-body learning as sensitively?
Comparable simpler algorithms tend not to, we find.
Two competitors suggest themselves:
single-layer linear autoencoders,
related to PCA~\cite{Bourland_88_Auto}, 
and clustering algorithms.
Alternatives include generalized linear models~\cite{Bishop_06_Pattern}
and supervised linear autoencoders.
These models, however, perform supervised learning.
They receive more information than the VAE
and so enjoy an unfair advantage.
We analyze the two comparable competitors sequentially.

\subsection{Comparison with single-layer linear autoencoder}

The linear autoencoder is a single-layer NN.
The input, $X$, undergoes a linear transformation:
$Y = mX + b$.
We compare, as follows, the linear autoencoder's 
detection of field classification 
with the VAE's detection:
We trained the spin glass on a drive in each of 
3,000-5,000 trials.
Ninety percent of the trials were designated as NN-training data;
and 10\%, as NN-testing data.
For each training trial, we identified the spin glass's final configuration.
On these configurations, each NN performed unsupervised learning.
Each NN then received the configuration with which
the spin glass ended a NN-testing trial.
We inferred the field most likely to have produced this configuration,
using MAP estimation.
The fraction of trials in which the NN points to the correct field 
constitutes the NN's score.
On a three-field drive, the linear autoencoder scored 0.771,
while the VAE scored 0.992.
On a five-field drive, the linear autoencoder scored 0.3934,
while the VAE scored 0.829.
Hence the VAE picks up on more of 
the spin glass's ability to classify fields.

\subsection{Comparison with clustering algorithm}

A popular, straightforward-to-apply algorithm is 
\emph{$k$-means clustering}~\cite{Bishop_06_Pattern}.
$k$ refers to a parameter inputted into the algorithm,
the number of clusters expected in the data.
We inputted the number of drives imposed on the spin glass,
in addition to inputting configurations.
The VAE receives just configurations and so less information.
We could level the playing field by automating the choice of $k$,
using the Bayesian information criterion (BIC)~\cite{Bishop_06_Pattern}.
But clustering with the BIC-chosen $k$ 
would perform no better than
clustering performed with the ideal $k$,
and the ideal clustering performs worse than the VAE.

The protocol run on the spin glass is described 
at the beginning of Sec.~\ref{sec_Capacity}.
Five thousand trials were performed.
The configuration occupied by the spin glass 
at the end of each trial was collected.
Splitting the data into testing and training data
did not alter results significantly.
Hence we fed all the configurations,
with the number $k = 5$ of drives,
to the clustering algorithm.
The algorithm partitioned the set of configurations into subsets.
Each subset contained configurations 
likely to have resulted from the same drive.

Clustering algorithms are assessed with the Rand index,
denoted by RI~\cite{Rand_71_Objective}.
The Rand index differs from the MAP-estimation score (Sec.~\ref{sec_Classify}).
How to compare the clustering algorithm
with the VAE, therefore, is ambiguous.
However, the Rand index quantifies 
the percentage of the algorithm's classifications that are correct.
Hence the Rand index and the MAP-estimation score
have similar interpretations, despite their different definitions.

The clustering algorithm's Rand index began at $\text{RI} = 0$, at $t = 0$.
RI rose during the first $\approx 200$ changes of the drive,
then oscillated around 0.125.
Figure~\ref{fig_Capacity} shows the VAE's performance.
The VAE's score rose during the first $\approx 150$ changes of the drive,
then oscillated around $0.450 > 0.125$.
Hence the VAE outperformed the clustering algorithm.

\section{Discussion}
\label{sec_Discussion}

We have detected and quantified a many-body system's learning of its drive,
using representation learning,
with greater sensitivity than absorbed power affords.
The scheme relies on a parallel that we identified
between statistical mechanical problems and VAEs.
Uniting statistical mechanical learning with machine learning,
the definition is conceptually satisfying.
The definition also has wide applicability,
not depending on whether 
the system exhibits magnetization or strain or another thermodynamic response.
Furthermore, our representation-learning toolkit signals many-body learning
more sensitively than does
the seemingly best-suited thermodynamic tool.

The rest of this section is organized as follows.
In Sec.~\ref{sec_Decode_Latent}, 
we decode latent space in terms of thermodynamic variables.
In Sec.~\ref{sec_Feasibility}, we argue for the feasibility of applying our toolkit.
In Sec.~\ref{sec_Opportunities}, 
we discuss problems that our toolkit can illuminate.
We also motivate the development of new representation-learning tools.

\subsection{Decoding latent space}
\label{sec_Decode_Latent}

Thermodynamicists parameterize macrostates with
volume, energy, magnetization, etc.
Thermodynamic macrostates parallel latent space,
as illustrated in Fig.~\ref{fig_VAE_SM_Parallel}.
What variables parameterize the VAE's latent space?
Latent space could suggest definitions of new thermodynamic variables,
or hidden relationships amongst known thermodynamic variables.
We begin decoding latent space in terms of thermodynamic quantities,
leaving the full decoding for future research.

We illustrate with part of the spin-glass protocol 
in Sec.~\ref{sec_Classify}:
Train the spin glass with a drive $\{ A, B, C \}$ in each of many trials.
On the end-of-trial configurations, train the VAE.

\begin{figure}[h]
\centering
\begin{subfigure}{\textwidth}
\centering
\includegraphics[width=0.45\textwidth]{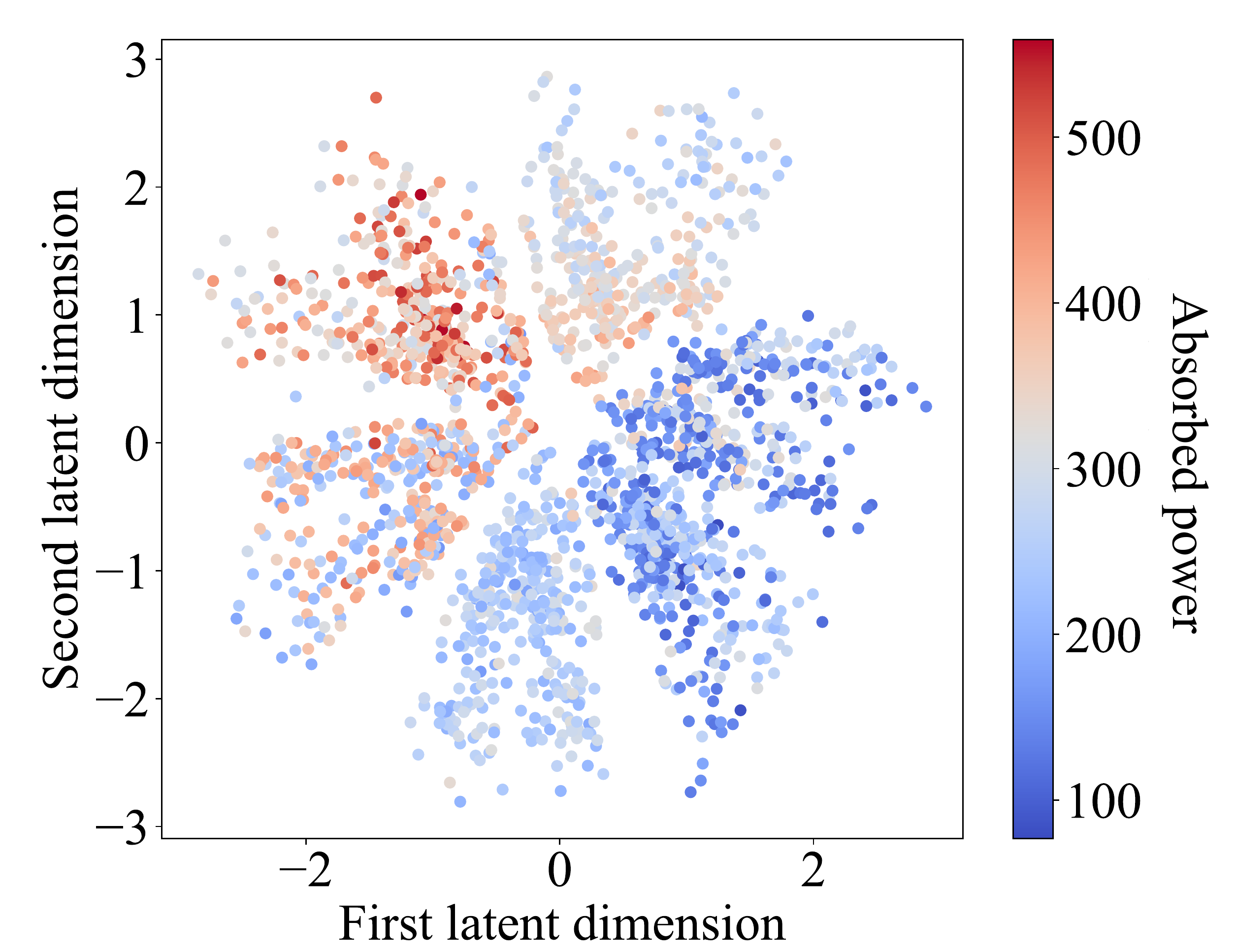}
\caption{\caphead{Correspondence of absorbed power to
the bottom-right-to-upper-left diagonal}}
\label{fig_Latent_Space_Dissipation}
\end{subfigure}
\begin{subfigure}{\textwidth}
\centering
\includegraphics[width=0.45\textwidth]{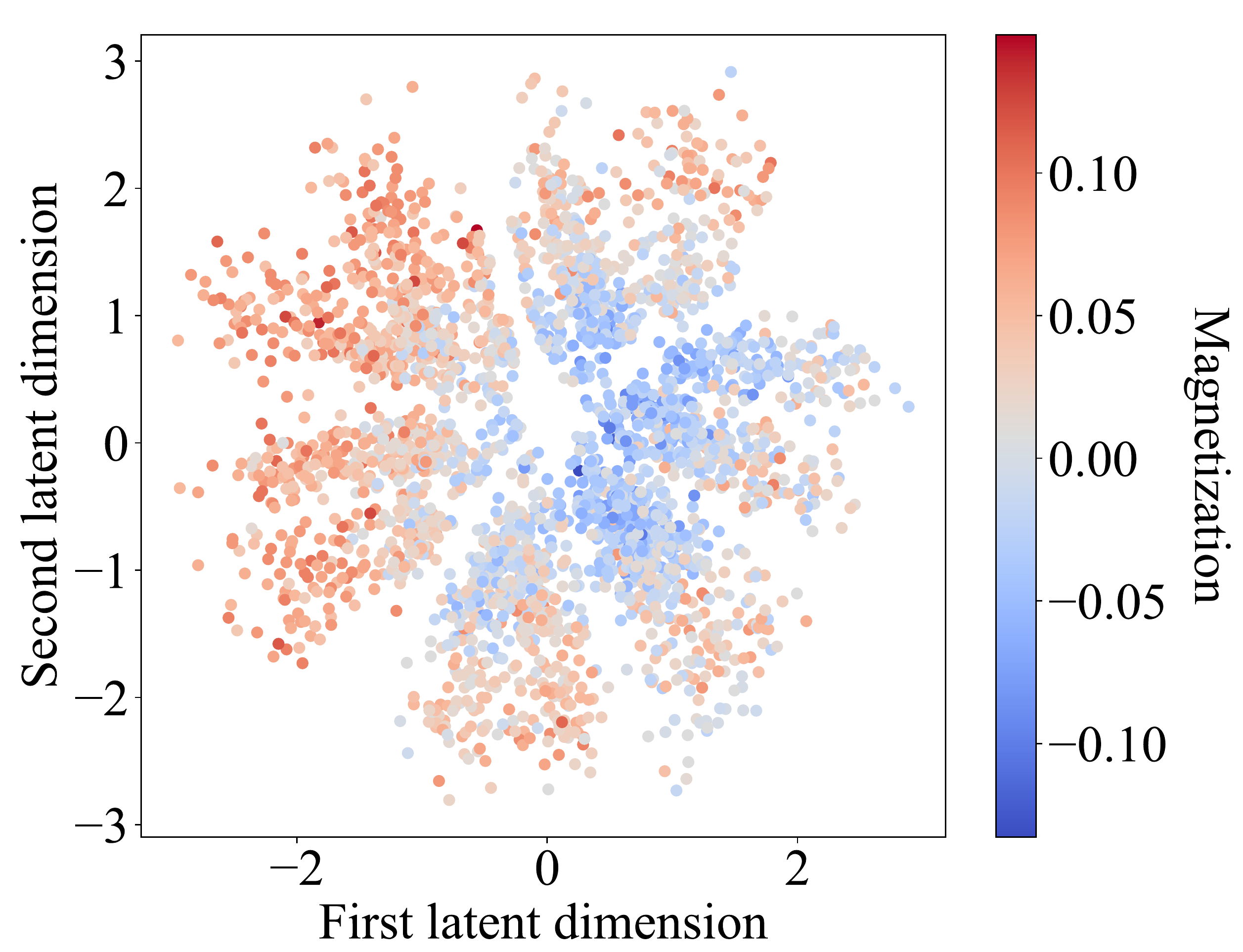}
\caption{\caphead{Correspondence of magnetization to the radial direction.}}
\label{fig_Latent_Space_Magnetization}
\end{subfigure}
\caption{\caphead{Correspondence of latent-space directions
to thermodynamic quantities:}
Each plot depicts the latent space constructed by
a variational autoencoder (VAE).
The VAE trained on the configurations assumed by a spin glass 
during its training with fields $A$, $B$, and $C$.
We have color-coded each plot to highlight
how a thermodynamic property changes 
along some direction.
According to Fig.~\ref{fig_Latent_Space_Dissipation}, 
the absorbed power grows
from the bottom righthand corner to the upper lefthand corner.
According to Fig.~\ref{fig_Latent_Space_Magnetization}, 
the magnetization grows along the radial direction.}
\label{fig_Latent_Space_Visualize}
\end{figure}

Figure~\ref{fig_Latent_Space_Visualize} shows 
two visualizations of the latent space.
Each visualization spotlights a correlation between 
a latent-space direction and a thermodynamic variable.
In Fig.~\ref{fig_Latent_Space_Dissipation},
blue dots represent configurations in which
the spin glass absorbs little work.
Red dots highlight high-absorbed-power configurations.
The dots change from blue to red along the diagonal
from the lower right-hand corner to the upper left-hand.
Hence a point's distance along the diagonal
correlates with the absorbed power.

In Fig.~\ref{fig_Latent_Space_Magnetization},
blue represents low magnetizations,
and red represents high.
Blue dots cluster near the latent space's center,
and red dots occupy the outskirts.
Hence magnetization correlates with a dot's radial coordinate.
Magnetization correlates, to some extent, also with 
distance along the bottom-right-to-upper-left diagonal.
After all, magnetization is related to the absorbed power.

In summary, the diagonal corresponds roughly to the absorbed power,
and the radial direction corresponds roughly to magnetization.
The directions are nonorthogonal, suggesting 
a nonlinear relationship between the thermodynamic variables.
We leave the parameterization of the relationship,
and the possible decoding of other latent-space directions into
new thermodynamic variables, for future work.

%
%
%
\subsection{Feasibility}
\label{sec_Feasibility}

Applying our toolkit might appear impractical,
since microstates must be inputted into the NN.
Measuring a many-body system's microstate may daunt experimentalists.
Yet the use of microstates hinders our proposal little, for three reasons.

First, microstates can be calculated in numerical simulations,
which inform experiments.
Second, many key properties of many-body microstates 
have been measured experimentally.
High-speed imaging has been used to monitor soap bubbles' positions~\cite{Mukherji_19_Strength}
and colloidal suspensions~\cite{Cheng_11_Imaging}.
Similarly wielded tools, such as high magnification, have advanced
active-matter~\cite{Sanchez_12_Spontaneous}
and gene-expression~\cite{Lonsdale_13_Genotype} studies.

One might worry that the full microstate
cannot be measured accurately or precisely.
Soap bubbles' positions can be measured with finite precision,
and other microscopic properties might be inaccessible.
But, third, some bottleneck NNs denoise their inputs~\cite{Vincent_08_Extracting,Goodfellow_16_Deep}:
The NNs learn the distribution from which samples are generated ideally,
not systematic errors.
Denoising by VAEs is less established but is progressing~\cite{Im_15_Denoising}.


Furthermore, one might wonder whether 
our study requires deep learning.
Could simpler algorithms detect and measure many-body learning
as sensitively?
Section~\ref{sec_Justify_ML} addresses this question.
We compare the VAE with simpler competitors
that perform unsupervised learning:
a single-layer linear NN,
related to principal-component analysis (PCA)~\cite{Bourland_88_Auto},
and a clustering algorithm.
The VAE outperforms both competitors.

\subsection{Opportunities}
\label{sec_Opportunities}

Several opportunities emerge from 
this combination of statistical mechanical learning
and bottleneck NNs.
First, our toolkit may resolve open problems in the field
of statistical mechanical learning.
One example concerns the soap-bubble raft in~\cite{Mukherji_19_Strength}.
Experimentalists trained a raft of soap bubbles with
an amplitude-$\gamma_{\rm t}$ strain.
The soap bubbles' positions were tracked,
and variances in positions were calculated.
No such measures distinguished trained rafts
from untrained rafts;
only stressing the raft and reading out the strain could~\cite{Mukherji_19_Strength,Miller_19_Raft}.
Bottleneck NNs may reveal what microscopic properties distinguish
trained from untrained rafts.

Similarly, representation learning might facilitate 
the detection of active matter.
Self-organization is detected now through 
simple, large-scale, easily visible signals~\cite{Heylighen_02_Science}.
Bottleneck NNs could identify patterns
invisible in thermodynamic measures.

Second, our framework calls for extensions to quantum systems.
Far-from-equilibrium many-body systems have been realized with 
many quantum platforms, including ultracold atoms~\cite{Langen_15_Ultracold},
trapped ions~\cite{Friis_18_Observation,Smith_16_Many}, 
and nitrogen vacancy centers~\cite{Kucsko_18_Critical}.
Applications to memories have been proposed~\cite{Abanin_19_Colloquium,Turner_18_Weak}.
Yet quantum memories that remember 
\emph{particular coherent states} have been focused on.
The learning \emph{of strong drives} by quantum many-body systems
calls for exploration,
as the learning of strong drives by polymers, soap bubbles, etc.
has proved so productive in classical statistical mechanics.
Our framework can guide this exploration.

Third, we identified a parallel between representation learning and
statistical mechanics. 
The parallel enabled us to use representation learning
to gain insight into statistical mechanics.
Recent developments in information-theoretic far-from-equilibrium statistical mechanics 
(e.g.,~\cite{Still_12_Thermodynamics,Parrondo_15_Thermodynamics,Crutchfield_17_Origins,Kolchinsky_17_Dependence})
might, in turn, shed new light on representation learning.

Fourth, the mutual information between configuration and drive
can be calculated as a function of time.
Let $p(x, y)$ denote the joint probability that
the configuration $X = x$ and the drive $Y = y$.
Let $p(x)  :=  \sum_y  p(x, y)$ and 
$p(y)  :=  \sum_x  p(x, y)$ denote the marginal distributions.
The mutual information quantifies the information 
about the drive in the configuration and vice versa:
$I(X; Y)  
=  \sum_{x, y}  p(x, y)
\log  \left(  \frac{ p(x, y) }{ p(x)  p(y) }  \right)$.
The mutual information is expected to grow
as the many-body system learns.
Estimating $I(X; Y)$ proved difficult due to undersampling;
hence our use of the MAP-estimate score (Sec.~\ref{sec_Classify}),
a cousin of the mutual information (App.~\ref{app_MAP}).
This work motivates the development of 
techniques for estimating $I(X; Y)$ from little data.

Such techniques could be complemented by
a sampling strategy based on our VAE, fifth.
The VAE populates latent space, analogous to the space of macrostates, 
as in Fig.~\ref{fig_Latent_Space}.
Consider choosing an unpopulated point,
analogous to an unfamiliar macrostate,
and having the VAE decompress the point.
The VAE will construct a configuration.
Such configurations could improve $p(x, y)$ estimates
and so $I(X; Y)$ estimates.
Rough initial studies suggest that the constructed configurations
resemble the true samples that they should mimic.

Sixth, given $I(X; Y)$, one can benchmark the many-body system
against the \emph{information curve}~\cite{Tishby_00_Information}.
The information curve quantifies the tradeoff in representation learning:
Recall the general bottleneck NN described in the introduction.
The NN compresses $X$ into $Z$,
then decompresses $Z$ into $Y$ [Fig.~\ref{fig_VAE_SM_Parallel}(a)].
The more the NN compresses $X$,
the less space $Z$ requires.
Hence shrinking $I(X; Y)$ is desirable.
Yet $Z$ must carry enough information about $X$
to generate an accurate $Y$ prediction $\hat{Y}$.
Hence $I(Z; \hat{Y})$ should be large.
One can tune the mutual informations' relative importance,
using a parameter $\beta$.
One chooses a $\beta \in [0, 1]$, then maximizes the objective function
$I(Z; \hat{Y} )  -  \beta I (Z; X)$.
This strategy is called the \emph{information bottleneck}~\cite{Tishby_99_Information}.
Consider varying $\beta$.
At each $\beta$ value, the optimal $I(X; Y)$ can be 
plotted against the optimal $I(X; Z)$.
The resulting \emph{information curve} represents an ideal:
Physical systems can reach the points inside the curve,
not points outside.
Consider plotting a many-body system's 
$\LParen  I(X; Z),  I(X; Y)  \RParen$ as a point.
The point's distance from the information curve
will quantify how close the many-body system approaches to the ideal.

Seventh, we partially decoded the VAE's latent space 
in terms of thermodynamic variables (Sec.~\ref{sec_Decode_Latent}).
Further analysis merits exploration.
Convention biases thermodynamicists toward measuring
volume, magnetization, heat, work, etc.
The VAE might identify new macroscopic variables
better-suited to far-from-equilibrium statistical mechanics,
or hidden nonlinear relationships amongst thermodynamic variables.
A bottleneck NN could uncover new theoretical physics,
as discussed in, e.g.,~\cite{Carleo_19_Machine,Wu_19_Toward,Iten_20_Discovering}.

\chapter{Generative modeling by disordered quantum spins}
\label{born}
\section{Introduction}

The computational power of quantum processors is the subject of considerable amount of recent research, in particular with regard to scaling and a potential quantum advantage \cite{IBM_eagle,arute2019quantum,IBM2019,Zhong2020,Gong2021,wu2021strong,ebadi2021quantum}. While the advent of a fully error corrected quantum computer requires yet another milestone, the immediate application of noisy quantum hardware with a clear advantage over classical computation becomes even more crucial. In this regard, the interface of quantum computing and machine learning has been increasingly brought into focus. 
For example, the rise of hybrid variational algorithms, such as variational quantum eigensolvers (VQE) \cite{VQE} and the quantum approximate optimization algorithm (QAOA) \cite{QAOA}, which use a parametrized quantum circuit as variational ans"atze and optimize the parameters classically, has been considered particularly promising as they aim to obtain heuristic and approximate solutions.

However, the exponential dimension of the Hilbert space and the random characteristics of parametrized quantum circuits makes their training very challenging  due to the existence of barren plateaus\cite{Jarrod_QNN}. More recently, yet another approach to quantum machine learning has emerged, which is known as  brain-inspired\cite{QNC_Grollier,QMemresistor,Gonzalez-Raya1,Gonzalez_Raya2,Torrontegui}. One interesting category consists of quantum reservoir computing (QRC) where a fixed reservoir geometry scrutinizing the unitary dynamics of an interacting quantum system allows versatile machine learning tasks \cite{QNC_Fuji1,QNC_Fuji2,QRNN_Ryd,xia2022reservoir}. While QRCs have shown many advantages, they are mainly appropriate for discriminative tasks such as classification or regression. 

The goal of generative models, however, is to learn an unknown data probability distribution $p_{\text{data}}$ in order to subsequently sample from $p_{\text{data}}$ and thus generate new and previously unseen data. Such tasks can, for example, be performed by the recently introduced Born machines\cite{Born_MPS,Born_TTN}. Born machines for many-body problems have early on shown to be successful in conjunction with tensor network state ans\"atze.  The elements of these matrix product states or tree tensor networks and their bond dimensions can be optimized during training to effectively approximate $p_{\text{data}}$\cite{Born_MPS,Born_TTN,BEBM_Abigail}. While Born machines have also been tested with parameterized quantum circuits \cite{Born_PQC}, we address here the question of whether there are other quantum many-body states that can be used as anasatz for Born machine to any advantage.  

Quantum many-body systems display many phases in the presence of disorder, in particular, the break-down of thermalization and thus localization of the wavefunction in the so-called \emph{many-body localized} (MBL) phase. Here, emergent integrals of motion can be utilized as quantum memories\cite{Huse2013}. The failure of such systems to anneal\cite{altshuler2010anderson} has inspired their use in QRC\cite{xia2022reservoir} for learning tasks, with particular enhancement close to the phase transition\cite{Martinez2021}.

Here, we extend quantum inspired generative models into the MBL phase, and introduce a hidden architecture to increase the representation power of our generative model. While recent work has also studied Born machines in the MBL phase\cite{tangpanitanon2020expressibility}, using a similar quenched approach, our work differs in the hidden architecture and the characterization of learnability and expressibility. 
%
    \begin{figure*}[ht]
     \centering
            \includegraphics[scale=0.17]{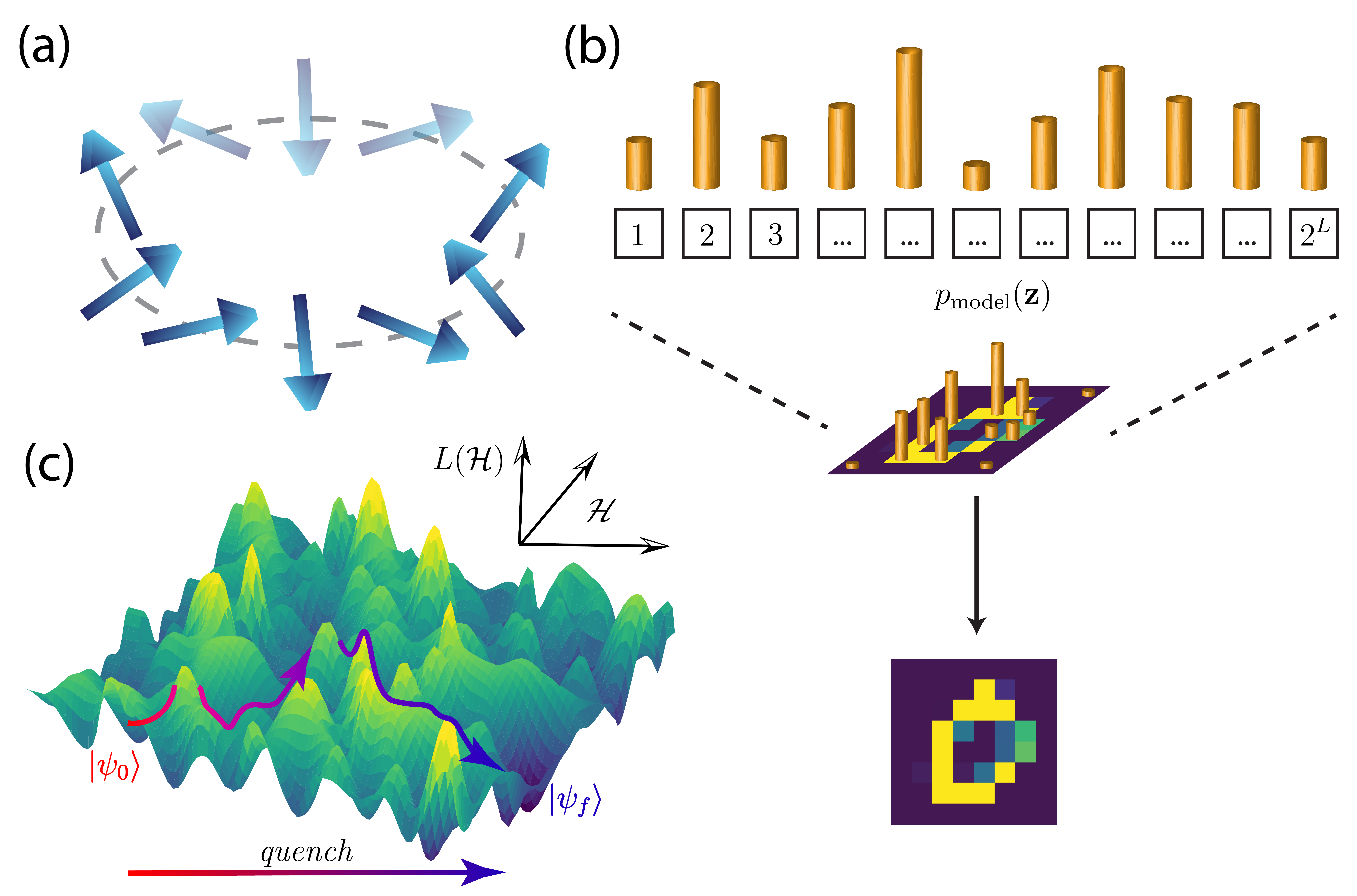}
            \caption{Illustration of the MBL hidden Born machine. (a) XXZ spin chain in 1D with periodic boundary condition. The faded color spins are the hidden units $h_i$, and the solid color spins are the visible units $v_i$. (b) The probability distribution of finding individual states in the z-basis represents the model distribution for the generative model, which are coded as normalized pixel values of an image. (c) An illustration of the loss landscape defined by our hidden MBL Born machine. The training is done by optimizing disorder configurations in the Hamiltonian
            during each quantum quench, which is then used to evolve the initial state $\ket{\psi_0}$ over successive layers of quenches toward a final state $\ket{\psi_f}$ which gives rise to the desired model distribution. }
            \label{fig:model schematics}
        \end{figure*}    
%
In this article, we first introduce the hidden Born machine in section \ref{Hidden-BM}, and prove that including hidden units into Born machine leads to expressive power advantage with respect to the basic architecture. Then, in section \ref{Hidden-BM-expressible}, we comment on the expressiblty of the MBL-Born machine, leveraging the fact that our model can be mapped into existing models with proven advantage in expressibility over classical models \cite{Gao_2017}. We describe our training algorithm in section \ref{sec:Training} and introduce the randomly driven Born machine in section \ref{sec:RDBM} and compare its performance with the hidden Born machine by learning patterns of MNIST hand written digits. Next, in section \ref{sec:different_phases} we investigate the learning power of the hidden Born machine both in the thermal phase and the MBL phase, and numerically show that the thermal phase fails to learn data obtained from quantum systems either in MBL or in thermal phase. Tracking von Neumann entanglement entropy and Hamming distance during training suggests that localization is crucial to learning. In section \ref{sec:parity} we further show that while the hidden Born machine trained in the MBL phase is able to capture the underlying structure of the parity data, a hidden Born machine trained in the Anderson localized phase fails to do so, shedding light on the fact that the interplay between interaction and disorder plays an important role in learning. Finally, we conclude and discuss possible direction for future works. 

\section{Hidden Born machines}
\label{Hidden-BM}
Born machine \cite{liu2018differentiable,cheng2018information,benedetti2019generative,benedetti2019parameterized,coyle2020born} is a generative model that parameterized the probability distribution of observing a given configuration $\z$ of the system according to the probabilistic interpretation of its associated quantum wavefunction $\psi(\z)$,
\begin{equation}
\label{eqn:born}
    p_{\text{Born}}(\z) = \frac{|\psi(\z)|^2}{\mathcal{N}},
\end{equation}
where $\mathcal{N}=\sum_{\z}|\psi(\z)|^2$ is the overall normalization of the wavefunction. Note that $\mathcal{N}$ is only required in tensor network ans\"atze but not in physical systems. Training of the Born machine is done by minimizing the discrepancy between the model distribution $p_{\text{Born}}(\z)$ and the data distribution $q_{\text{data}}(\z)$.


Born machine \cite{liu2018differentiable,cheng2018information,benedetti2019generative,benedetti2019parameterized,coyle2020born} is a generative model that parameterized the probability distribution of observing a given configuration $\z$ of the system according to the probabilistic interpretation of its associated quantum wavefunction $\psi(\z)$,
\begin{equation}
\label{eqn:born}
    p_{\text{Born}}(\z) = \frac{|\psi(\z)|^2}{\mathcal{N}},
\end{equation}
where $\mathcal{N}=\sum_{\z}|\psi(\z)|^2$ is the overall normalization of the wavefunction. Note that $\mathcal{N}$ is only required in tensor network ans\"atze but not in physical systems. Training of the Born machine is done by minimizing the discrepancy between the model distribution $p_{\text{Born}}(\z)$ and the data distribution $q_{\text{data}}(\z)$.


In the language of Boltzmann machine\cite{smolensky1986information,hinton2006reducing,ackley1985learning}, the units that are used for generating configurations are called `visible'. Meanwhile, adding `hidden' units prove to be a powerful architecture for the Boltzmann machine as it provides a way to decouple the complex interaction among the visible units at the expense of introducing interaction between the hidden and the visible units\cite{mehta2019high,gao2017efficient}.
In Eqn.\eqref{eqn:born}, all units of the system are used to generate configurations that are compared against data and therefore all units are visible. In a similar spirit, we introduce hidden units to the Born machine by defining the probability distribution of observing a given visible spin configuration $\z$ to be its expectation value in $z$-basis after tracing out the hidden units, 
\begin{equation}
\label{eqn:hidden}
    p_{\text{hidden}}(\z) = \Tr\rho_{\text{vis}} \Pi_Z,    
\end{equation}
where
\begin{equation}
\label{eqn:rho_vis}
     \rho_{\text{vis}} = \Tr_{h} \ket{\psi}\bra{\psi},
\end{equation}
is the reduced density matrix for the visible units, and $\Pi_Z = \ket{\z}\bra{\z}$ is the projection operator onto the $z-$basis of the visible part of the system (see Fig.\ref{fig:model schematics}(a) for an illustration of our model). Note that normalization is implicit in Eqn.\eqref{eqn:rho_vis} for $\rho_{\text{vis}}$ to be a density matrix. 

We argue that the hidden Born machine in Eqn.\eqref{eqn:hidden} offers expressive power advantage over the basic Born machine in Eqn.\eqref{eqn:born}. We demonstrate this by first proving a proposition about the hidden architecture, showing that adding hidden units generalizes the basic Born machine and therefore the achievable generalization error is at least as good as the original one. While our proof is independent of the particular choice of Hamiltonian, we support our claim with numerical evidence of a hidden Born machine realized with a XXZ spin chain in section \ref{sec:Training}.

\subsection{The hidden advantage}
In this section, we prove that the hidden Born machine Eqn.\eqref{eqn:hidden} generalizes the basic Born machine (BM) defined by Eqn.\eqref{eqn:born}, in the sense that the class of probability distributions expressible by the basic Born machine is a subset of that of the hidden Born machine. In the following, we assume only that the visible and hidden part couple through an interaction term in the Hamiltonian.

Let's consider a basic Born machine consisting of only visible units $\vis=\{v_i\}$, with Hamiltonian $\hat{\mathcal{H}}_v$. Now consider adding hidden units $\h=\{h_j\}$ to the system with Hamiltonian $\hat{\mathcal{H}}_h$ and the hidden units couple with the visible ones through an interaction Hamiltonian $\hat{\mathcal{H}}_{\text{int}}$. The full Hamiltonian can be written as 
\begin{equation}
\label{eqn:fullH}
    \hat{\mathcal{H}}_{vh}[\vis,\h] = \hat{\mathcal{H}}_v[\vis] + \hat{\mathcal{H}}_h[\h] + \hat{\mathcal{H}}_{\text{int}}[\vis,\h],
\end{equation}
where all the $\hat{\mathcal{H}}$'s in general can be time-dependent. Let's assume that the basic Born machine model is described by just the visible part of Hamiltonian in Eqn.\eqref{eqn:fullH}, $\hat{\mathcal{H}}_{\text{BM}} = \hat{\mathcal{H}}_v(\bTheta^{\text{BM}})$, and the hidden Born machine is described by the full Hamiltonian, $\hat{\mathcal{H}}_{\text{hBM}} = \hat{\mathcal{H}}_{vh}(\bTheta^{\text{hBM}})$, where $\bTheta^{\text{BM}}$ and $\bTheta^{\text{hBM}}$ denotes the parameters in the Hamiltonian to be optimized during learning.

\begin{proposition}
\label{prop_1}
For the same set of visible spins $\vis$, let $p_{\text{BM}}(\z)$ denote the model distribution realized by the basic Born machine, and $p_{\text{hBM}}(\z)$ denote the model distribution realized by the hidden Born machine, then $\{p_{\text{BM}}(\z)\}\subseteq \{p_{\text{hBM}}(\z)\}$.

\end{proposition}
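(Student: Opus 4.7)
The plan is to exhibit, for every distribution $p_{\text{BM}}$ attainable by the basic Born machine, an explicit choice of hidden-model parameters $\bTheta^{\text{hBM}}$ that reproduces it. The obvious reduction is to turn the hidden sector off: set the interaction Hamiltonian $\hat{\mathcal{H}}_{\text{int}}[\vis,\h]=0$ and let the visible sector of $\hat{\mathcal{H}}_{vh}$ coincide with $\hat{\mathcal{H}}_v(\bTheta^{\text{BM}})$. Because the visible-only sub-Hamiltonian of $\hat{\mathcal{H}}_{vh}$ has, by construction, the same functional form as $\hat{\mathcal{H}}_{\text{BM}}$, this identification is always admissible.

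With $\hat{\mathcal{H}}_{\text{int}}=0$ the total Hamiltonian $\hat{\mathcal{H}}_{vh}=\hat{\mathcal{H}}_v\otimes \I_h + \I_v\otimes\hat{\mathcal{H}}_h$ factorizes, so the time-evolution operator factorizes as $U(t)=U_v(t)\otimes U_h(t)$ (and likewise through any sequence of quenches). Choosing the initial state of the hidden Born machine to be a product $\ket{\psi_0}=\ket{\psi_0^v}\otimes\ket{\psi_0^h}$, where $\ket{\psi_0^v}$ is the initial state used by the basic Born machine and $\ket{\psi_0^h}$ is any normalized hidden state, the evolved state remains a product $\ket{\psi(t)}=\ket{\psi^v(t)}\otimes\ket{\psi^h(t)}$ at all times.

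Tracing out the hidden subsystem then yields a pure reduced state,
\begin{equation}
\rho_{\text{vis}}=\Tr_{h}\bigl(\ket{\psi^v(t)}\bra{\psi^v(t)}\otimes\ket{\psi^h(t)}\bra{\psi^h(t)}\bigr)=\ket{\psi^v(t)}\bra{\psi^v(t)},
\end{equation}
so that $p_{\text{hidden}}(\z)=\bra{\z}\rho_{\text{vis}}\ket{\z}=|\psi^v(\z,t)|^2$, which is (up to the same normalization $\mathcal{N}$) precisely the basic Born-machine distribution $p_{\text{BM}}(\z)$ defined in Eqn.\eqref{eqn:born}. Hence every $p_{\text{BM}}\in\{p_{\text{BM}}(\z)\}$ is contained in $\{p_{\text{hBM}}(\z)\}$, establishing the claimed inclusion.

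The only subtlety I anticipate is a bookkeeping one rather than a mathematical obstacle: one must verify that the ``zero interaction'' configuration is actually inside the parameter space $\bTheta^{\text{hBM}}$ one allows, and that the chosen initial state class admits the required product form. Both are natural assumptions in the setups used throughout the paper (for instance, a product $\ket{\psi_0}$ in the $z$-basis with tunable couplings including the interaction strength), so the inclusion is immediate and the hidden model's achievable generalization error is at least as small as the basic one.
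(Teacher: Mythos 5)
Your proposal takes essentially the same route as the paper — choose a product initial state $\ket{\psi_0^v}\otimes\ket{\psi_0^h}$ for the hBM, make the hidden sector dynamically irrelevant, observe that the product form is preserved and the partial trace returns a pure visible state, and conclude $p_{\text{hBM}}=p_{\text{BM}}$. The one meaningful difference is how the hidden sector is ``switched off.'' You set $\hat{\mathcal{H}}_{\text{int}}=0$ outright, which yields an exact factorization of the time-evolution operator and hence exact equality of distributions. The paper instead chooses $\bTheta^{\text{hBM}}$ so that $\hat{\mathcal{H}}^{\text{hBM}}_v=\hat{\mathcal{H}}^{\text{BM}}_v$ and $\lVert\hat{\mathcal{H}}^{\text{hBM}}_v\rVert\gg\lVert\hat{\mathcal{H}}^{\text{hBM}}_{\text{int}}\rVert$, then writes $\ket{\psi_f^{vh}}\approx\hat{\mathcal{U}}_v\ket{\psi_0^v}\otimes\hat{\mathcal{U}}_h\ket{\psi_0^h}$ — only an approximation, whereas set inclusion demands exact achievability. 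Your version is therefore logically tighter, provided the hypothesis you flag actually holds.

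That hypothesis is worth taking seriously, and it is a genuine point of tension, not just bookkeeping. The paper's framing (``we assume only that the visible and hidden part couple through an interaction term'') suggests that $\hat{\mathcal{H}}_{\text{int}}\neq 0$ is part of the model class, and in the concrete XXZ implementation the visible–hidden coupling is fixed by $J_{xy},J_{zz}$, while the only optimized parameters $\bTheta$ are the on-site disorder fields $h_i^m$. In that setting neither your construction (set $\hat{\mathcal{H}}_{\text{int}}=0$) nor the paper's (make $\lVert\hat{\mathcal{H}}_v\rVert$ dominate while keeping $\hat{\mathcal{H}}_v$ equal to the BM Hamiltonian) is reachable from the allowed parameter family, so both proofs are really proofs of the abstract proposition where $\bTheta^{\text{hBM}}$ ranges over the full Hamiltonian, not the trained disorder fields alone. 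Given that, your approach buys exactness at no extra cost and is the preferable argument; the paper's approximation buys nothing it cannot get from yours. If you want to close the flagged gap, the cleanest fix is to state explicitly that $\bTheta^{\text{hBM}}$ includes the interaction coefficients (or, equivalently, that the model class contains the decoupled point $\hat{\mathcal{H}}_{\text{int}}=0$), which is a harmless enlargement since it only adds distributions to $\{p_{\text{hBM}}(\z)\}$.
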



\begin{proof}
Let's denote the initial state for the BM as $\ket{\psi_{0}^v}\in \mathcal{H}_{v}$. Let $\hat{\mathcal{U}}_{v} = \hat{\mathcal{T}}\exp \left(-i\int_0^T dt \hat{\mathcal{H}}_{v} \right)$. Then, the final state of BM is $\ket{\psi_f^v} = \hat{\mathcal{U}}_{v}\ket{\psi_0^v}$.
Choose an initial product state for the hBM, $\ket{\psi^{vh}_0} = \ket{\psi_0^v} \otimes \ket{\psi_0^h} \in \mathcal{H}_{v} \otimes \mathcal{H}_{h}$ for some $\ket{\psi_0^h} \in \mathcal{H}_h$. Choose $\bTheta^{\text{hBM}}$ to be such that $\hat{\mathcal{H}}^{\text{hBM}}_{v} = \hat{\mathcal{H}}^{\text{BM}}_{v}$, and $\lvert\lvert \hat{\mathcal{H}}^{\text{hBM}}_{v} \rvert\rvert \gg \lvert\lvert \hat{\mathcal{H}}^{\text{hBM}}_{\text{int}} \rvert\rvert$.

Then, we have 
\begin{align}
    \ket{\psi_f^{vh}} \approx \hat{\mathcal{U}}_{v} \ket{\psi_0^v} \otimes \hat{\mathcal{U}}_h \ket{\psi_0^h} = \ket{\psi_f^v} \otimes \ket{\psi_f^h}
\end{align}
where we have defined $\ket{\psi_f^h} \equiv \hat{\mathcal{U}}_h \ket{\psi_0^h}$. With this choice, now $\rho_{\text{vis}} = \Tr_h \ket{\psi_f^{vh}} = \ket{\psi_f^v}\bra{\psi_f^v}$, and $p_{\text{hBM}}(\z) = \Tr \rho_{\text{vis}}\Pi_Z = |\psi_f^v(\z)|^2 = p_{\text{BM}}(\z)$, where $p_{\text{BM}}$ is automatically normalized ($\mathcal{N}=1$) for physical systems as in our case. Therefore, the class of probability distributions described by BM is contained in hBM.

\end{proof}

\begin{corollary}
\label{cor2}
Comparing the minimum achievable loss $\mathcal{L}^*$ of the hidden Born machine and the basic Born machine on any given loss function, we have $\mathcal{L}^*_{\text{hBM}}\leq \mathcal{L}^*_{\text{BM}}$.
\end{corollary}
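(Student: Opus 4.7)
The plan is to obtain Corollary \ref{cor2} as an essentially immediate consequence of Proposition \ref{prop_1}, which has just been established. The key observation is that any loss function used to train a generative model is a functional of the model distribution alone, say $\mathcal{L} = \mathcal{L}[p_{\text{model}}; q_{\text{data}}]$ (e.g.\ KL divergence, negative log-likelihood, MMD, Wasserstein, etc.), and is computed for a fixed target $q_{\text{data}}$. The minimum achievable loss is therefore $\mathcal{L}^*_{\text{model}} = \inf_{p \in \mathcal{P}_{\text{model}}} \mathcal{L}[p; q_{\text{data}}]$, where $\mathcal{P}_{\text{model}}$ denotes the family of distributions the architecture can realize as its parameters $\bTheta$ are varied.

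First I would state explicitly that the minimum achievable loss for each Born machine is an infimum over the realizable family of distributions. Next I would invoke Proposition \ref{prop_1}, which gives the set inclusion $\mathcal{P}_{\text{BM}} \subseteq \mathcal{P}_{\text{hBM}}$. Finally, applying the elementary fact that the infimum of a function over a larger set is no greater than its infimum over a smaller subset, one obtains
\begin{equation*}
    \mathcal{L}^*_{\text{hBM}} = \inf_{p \in \mathcal{P}_{\text{hBM}}} \mathcal{L}[p; q_{\text{data}}] \;\leq\; \inf_{p \in \mathcal{P}_{\text{BM}}} \mathcal{L}[p; q_{\text{data}}] = \mathcal{L}^*_{\text{BM}},
\end{equation*}
which is the desired inequality.

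Since the proof is essentially set-theoretic, there is no substantial obstacle; the only point worth being careful about is the tacit assumption that the loss depends only on $p_{\text{model}}$ and the data distribution, rather than on internal details of the architecture (such as the size of the hidden register). This is standard for the losses used in generative modeling, but it is worth noting explicitly so that the corollary applies uniformly across the loss functions considered later in the paper. With that clarification, the corollary follows in one line from Proposition \ref{prop_1}.
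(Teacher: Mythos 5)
Your proof is correct and is precisely the argument the paper intends: the paper states the corollary immediately after Proposition~\ref{prop_1} without a separate proof, treating it as an immediate consequence of the set inclusion $\{p_{\text{BM}}\}\subseteq\{p_{\text{hBM}}\}$. Your explicit remark that the loss must depend only on $p_{\text{model}}$ and $q_{\text{data}}$ (not on internal architectural details) is a worthwhile clarification the paper leaves tacit.
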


Prop.\ref{prop_1} suggests that the hidden Born machine is able to represent a larger class of probability distributions and thus generalizes the basic Born machine.  Cor.\ref{cor2} indicates that the achievable training loss for the hidden Born machine is less than or equal to that of the regular Born machine, a property that we will confirm numerically in section \ref{sec:Training}.

\section{Expressibilty of MBL-Born machine}
\label{Hidden-BM-expressible}

Previously, different ans\"atze for $\ket{\psi}$ has been introduced for the Born machine, notably tensor networks states and states prepared by both digital quantum circuits and analog quantum many-body systems\cite{Born_MPS,Born_TTN,Born_PQC,tangpanitanon2020expressibility}. In this paper, we will be adopting the latter approach, and focus on a specific type of quantum many-body systems that admits a many-body localization (MBL) phase. 
In the following, we first discuss the simple model that give rise to the MBL phase. Then, leveraging on the fact that the XXZ model under appropriate choice of quench parameters can be mapped into a 2D Ising model that has quantum computational advantage\cite{Gao_2017}, we show that our MBL Born machine possesses more expressive power than classical models.

\subsection{Many-body localized ans\"atze}

It is generally believed that, thermalization in quantum system wipes out the microscopic information associated with the initial state. Even in the case of closed quantum system, the information of initial state quickly spreads throughout the entire system, implying that no local measurements can retrieve those information\cite{Deutsch,srednicki1994chaos}. However, it's known that strong disorder leads to localization, preventing the system to thermalize. Furthermore, the localization manifests itself in the form of memory associated with the lack of transport. While the localization in the presence of strong disorder was first introduced in non-interacting systems by Anderson\cite{Anderson}, more recently, it was shown that the localization and break down of thermalization can also happen in strongly interating systems, leading to new dynamical of phase of matter known as many-body localization (MBL)\cite{MBL1,MBL2}. 

In the MBL phase, eigenstates of the system do not satisfy Eigenstate Thermalization Hypothesis (ETH) and the wavefunctions become localized in the Hilbert space. Such ergodicity breaking renders the system to retain memory of its initial state, and offers advantage in controlling and preparing desired quantum many-body states and has been also realized experimentally\cite{MBL_exp}. The XXZ model of quantum spin chain is well-known to develop a MBL phase when the disorder strength exceeds the MBL mobility edge\cite{MBL_edge}. 

We perform numerical simulation with the XXZ-Hamiltonian
$\hat{\mathcal{H}}_{\text{XXZ}}$ defined as:
    \begin{align}
    \label{eqn:XXZ}
            &\mathcal{\hat{H}}_{\text{XXZ}} = \sum_i^{L-1} J_{xy}(\hat{S}^{x}_{i}\hat{S}^x_{i+1} + \hat{S}^y_{i}\hat{S}^y_{i+1}) + \sum_i ^{L-1} J_{zz} \hat{S}^z_i \hat{S}^z_{i+1},
    \end{align}
where $\hat{S}_i^{\alpha}(\alpha\in \{x,y,z\})$ are Pauli spin 1/2 operators acting on spins $i\in 1,..,L$, and $L=L_v+L_h$ consists of $L_v$ visible units and $L_h$ hidden units. $J_{xy}$ and $J_{zz}$ are couplings in the $xy$ plane and $z$ direction, respectively. Then, we consider a series of $M$ quenches $\hat{\mathcal{H}}_{\text{quench}}(\bTheta_m)$ in the $z$-direction:
    \begin{align}
    \label{eqn:H_total}
        \mathcal{\hat{H}}_{\text{total}}&=\mathcal{\hat{H}}_{\text{XXZ}}+\mathcal{\hat{H}}_{\text{quench}}(\bTheta_m),
    \end{align}
where $\mathcal{\hat{H}}_{\text{quench}}(\bTheta_m) = \sum_i h_i^m \hat{S}^z_i$ and we have denoted the tunable parameters in the system collectively as $\bTheta_m = \{h_i^m\}$. During each quench $m$, $h_i^m$ are drawn i.i.d. from the uniform distribution over the interval $[-h_{d},h_{d}]$, where $h_d$ is the disorder strength. Notice that when $J_{zz}=0$, the model reduces to non-interacting XY model with random transverse field exhibiting single particle localization. Once we turn on the $J_{zz}$ interaction, the spins couple via Heisenberg interaction and MBL phase emerges when $h_c\sim 3.5$ (for $J_{zz}=J_{xy}=1$) \cite{MBL_transition1,MBL_edge,MBL_transition2}. See more details in Section \ref{app:MBL_check}.

In section \ref{sec:Training}, we will explain the training algorithm under the time evolution implied by series of quenches in $\hat{\mathcal{H}}_{total}$, and learning through optimizing the values of disordered field $h_i^m$ at each site. 

\subsection{Mapping XXZ chain into Ising model}

There has been extensive studies on the expressive power of quantum models. In particular, quantum computational advantage for sampling problem has been proved (based on standard computational complexity assumptions) in a translation-invariant Ising model \cite{Gao_2017}. While our numeric are mostly restricted to the 1-dimension case as it can be studied by exact diagonalization, the XXZ model can be realized in any dimensions.

In this section, we show that the XXZ model in 2-dimension, with proper choice of disorder parameters, can be reduced to an Ising model that contains brickwork state that is classically intractable Ref.~\cite{Gao_2017}. This classically-hard instance implies that our model cannot be simulated in polynomial time by a classical computer and therefore offers an advantage in its expressive power. 

\begin{proposition}

The XXZ model in 2D subject to quench in z-direction can be reduced to an Ising model.
\end{proposition}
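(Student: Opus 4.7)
The plan is to reduce the 2D XXZ model under z-direction quench to a pure Ising model by working in a rotating frame set by the strong z-quench, and then invoke the construction of Ref.~\cite{Gao_2017} to establish the classical hardness. First, I would write $\hat{\mathcal{H}}_{\text{total}} = \hat{\mathcal{H}}_0 + \hat{V}$ with $\hat{\mathcal{H}}_0 = \sum_i h_i^m \hat{S}_i^z + J_{zz}\sum_{\langle ij\rangle}\hat{S}_i^z \hat{S}_j^z$ collecting all $\hat{S}^z$-diagonal terms and $\hat{V} = J_{xy}\sum_{\langle ij\rangle}(\hat{S}_i^x\hat{S}_j^x + \hat{S}_i^y\hat{S}_j^y) = \tfrac{J_{xy}}{2}\sum_{\langle ij\rangle}(\hat{S}_i^+\hat{S}_j^- + \mathrm{h.c.})$ being the off-diagonal exchange.

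Passing to the interaction picture generated by $\hat{\mathcal{H}}_0$, each hopping operator $\hat{S}_i^+\hat{S}_j^-$ acquires a rapidly oscillating phase of frequency $\omega_{ij}^m = h_i^m - h_j^m + J_{zz}\sum_{k\in\partial i\setminus j}\hat{S}_k^z - J_{zz}\sum_{k\in\partial j\setminus i}\hat{S}_k^z$. Choosing the quench parameters to satisfy $|h_i^m - h_j^m| \gg J_{xy}$ (which is consistent with the MBL regime already targeted by the ans\"atze), the rotating-wave approximation applied over the quench duration $T$ makes each exchange term contribute at order $J_{xy}/|h_i^m - h_j^m|$, so that to leading order the effective generator of the unitary for each quench is
\begin{equation}
\hat{\mathcal{H}}_{\text{eff}}^{(m)} = \sum_i h_i^m \hat{S}_i^z + J_{zz}\sum_{\langle ij\rangle}\hat{S}_i^z \hat{S}_j^z,
\end{equation}
which is precisely a 2D Ising model on the same lattice with tunable longitudinal fields $\{h_i^m\}$ at each quench layer.

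Having established that a single quench implements an Ising evolution, I would then interleave these quenches with a fixed initial preparation in the $x$-basis (or equivalently, view the initial state of Eqn.~\eqref{eqn:H_total} as evolving under a product of Ising unitaries on a transverse initial state) and a final measurement in the $z$-basis. This matches the template for IQP-type circuits, and by selecting the pattern of fields $\{h_i^m\}$ and lattice geometry to reproduce the brickwork Ising construction in \cite{Gao_2017}, the sampling distribution $p_{\text{hidden}}(\z)$ inherits the classical intractability result: simulating it to multiplicative (or in their refined statement, additive) error in polynomial time would collapse the polynomial hierarchy.

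The main obstacle I anticipate is controlling the residual error from the rotating-wave truncation. The RWA drops corrections of order $J_{xy}/|h_i^m - h_j^m|$ per unit time, and these corrections are operator-valued (they feed back into the Ising couplings through virtual spin flips), so a careful Schrieffer--Wolff or Magnus expansion is needed to certify that the effective dynamics remains within the complexity-hardness window of the Ising brickwork construction. A clean way to sidestep this would be to argue that the mapping need only hold \emph{exactly} in the limit $h_d / J_{xy} \to \infty$ with $J_{xy} T$ fixed; hardness then follows for the limiting distribution, and expressibility of the hidden Born machine only needs to contain this limit as a special case, not realize it for generic $h_d$.
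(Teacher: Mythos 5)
Your proof takes a genuinely different route from the paper's. You pass to the interaction picture generated by the $\hat{S}^z$-diagonal part of the Hamiltonian and invoke a rotating-wave approximation in the large-detuning regime $|h_i^m - h_j^m| \gg J_{xy}$ to suppress the XY exchange terms. The paper instead engineers an \emph{exact} cancellation: it partitions the bipartite lattice into sublattices $\mathcal{K}$ and $\mathcal{N}$, inserts a $\pi$-pulse ($\Pi_k\, i\hat{Z}_k$) on the $\mathcal{K}$ sublattice at the midpoint of each quench, and uses the conjugation $\hat{Z}_k \hat{S}^{x,y}_k \hat{Z}_k = -\hat{S}^{x,y}_k$ to flip the sign of every $\hat{S}^x\hat{S}^x + \hat{S}^y\hat{S}^y$ bond term for the second half of the quench, so that the XY contributions from the two halves cancel and only the Ising interaction plus longitudinal field survives in the time-averaged generator. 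That spin-echo construction buys exactness independent of disorder strength and bond-by-bond, which sidesteps the two weak points of your route: (i) your RWA must be controlled uniformly over all bonds, but for i.i.d.\ fields in $[-h_d,h_d]$ some pairs have accidentally small $|h_i^m - h_j^m|$, and the detuning also carries an operator-valued $J_{zz}$ contribution from neighboring spins that is comparable to $J_{xy}$ in the regime $J_{zz}=J_{xy}$ used in the paper; (ii) the limit you propose, $h_d/J_{xy}\to\infty$ with $J_{xy}T$ fixed, puts tension on dialing the \emph{specific} single-qubit Ising rotation angles $h_i^m T$ that the brickwork construction of Ref.~\cite{Gao_2017} requires while simultaneously keeping all detunings parametrically large. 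These are fixable with care (e.g., a Magnus/Schrieffer--Wolff bound plus a deterministic choice of fields avoiding near-degeneracies), but the $\pi$-pulse echo eliminates them outright. One further slip in your write-up: for the IQP-style hardness argument of Ref.~\cite{Gao_2017}, the final measurement must be in the \emph{$x$}-basis, not the $z$-basis (the paper flags exactly this in a footnote); Ising evolution applied to $\ket{+}^{\otimes n}$ followed by a $z$-basis measurement yields the uniform distribution and is trivially classically simulable.
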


\begin{proof}

In 2D, 

\begin{equation}
    \mathcal{\hat{H}}_{\text{XXZ}} = \sum_{\langle i,j \rangle} J_{xy}(\hat{S}^{x}_{i}\hat{S}^x_j + \hat{S}^y_{i}\hat{S}^y_j) + \sum_{\langle i,j \rangle} J_{zz} \hat{S}^z_i \hat{S}^z_{j},
\end{equation}
where the interactions are between nearest neighbours. During a quench $\hat{\mathcal{H}}_{\text{quench}}$ of duration $t_m$, we can divide the disorder into a time-dependent and a time-independent part, 
\begin{equation}
    h_i^m(t) = J_i^m(t) + B_i^m.
\end{equation}

In the case of bipartite lattice, we can partition the vertices into two partitions, and denote the sites in one partition as $\mathcal{K} = \{k_1, k_2,..,k_{L/2}\}$ and another partition as $\mathcal{N} = \{n_1,n_2,...,n_{N/2}\}$ (assuming $N$ even). For example, in the case of a square lattice, $\mathcal{K}$ and $\mathcal{N}$ correspond to the black and white sites of the checkerboard coloring. For the set of $\mathcal{K}$ spins, we turn on a $\pi-$pulse in the middle of the quench ($k\in \mathcal{K}$), such that: 
\begin{equation}
J_{k}^m(t)=\begin{cases}
          0 \quad &\text{if} \, 0 \leq t<\frac{t_m}{2} \\
          \frac{\pi}{2\Delta t} \quad &\text{if} \, \frac{t_m}{2} \leq t<\frac{t_m}{2} + \Delta t \\
          0 \quad &\text{if} \, \frac{t_m}{2} + \Delta t \leq t<t_m \\
     \end{cases}
\end{equation}
where $\Delta t \ll t_m$ is a short duration of time. With this choice of disorder, the time evolution operator reduces to 
\begin{equation}
    \hat{\mathcal{U}} = e^{-i\hat{\mathcal{H}}_{\text{XXZ}}\Delta t}(\Pi_{k} i \hat{Z}_{k})e^{-i\sum_k B_k^m \hat{Z}_k}.
\end{equation}
Now the Pauli $Z$'s effectively flip the signs of the $\hat{S}^x\hat{S}^x$ and $\hat{S}^y\hat{S}^y$ terms in the XXZ-Hamiltonian, and upon integrating over the duration of a quench cancels out with the corresponding terms in first half of the quench. Therefore, after time evolution of a quench, the effective Hamiltonian is left with only Ising interactions, 
\begin{equation}
    \bar{\mathcal{H}}_{\text{eff}}^m = \sum_{\langle i,j \rangle} J_{zz} \hat{S}^z_i \hat{S}^z_j + \sum_i B_i^m \hat{S}^z_{i}
\end{equation}

\end{proof}


As a result, we have recovered the case in Ref.~\cite{Gao_2017} and showed that our model's expressibility has quantum advantage \footnote{Note that for the proof in Ref.~\cite{Gao_2017} to work, one also need to initialize the system in all $\ket{+}$ states and subsequently perform all measurements in the $x-$basis}.

\section{Training of hidden MBL Born machine}
\label{sec:Training}

\subsection{Learning algorithm}
\label{sec:algorithm}
%
The basic idea behind the training of Hidden Born machine is the following:
given target distribution $q_{\text{data}}$, and a loss function $\mathcal{L}(p_{\text{model}},q_{\text{data}})$ that measures the discrepancy between model distribution and data distribution, training of the MBL Born machine is achieved through time-evolving the system with the Hamiltonian in Eqn.\eqref{eqn:XXZ}, then optimizing $\bTheta_m$ over $N$ different disorder realizations for each quench $m$. After obtaining the final state at the $M$-th quench, we evaluate the model distribution of the MBL hidden Born machine from Eqn.\eqref{eqn:hidden} and use it as our generative model (see Fig.\ref{fig:model schematics}(c) for an illustration of the learning process). We use Maximum Mean Discrepancy (MMD) loss as our loss function:
\begin{align}
\label{eqn:MMD}
    \mathcal{L}_{MMD} &= \norm{\sum_x p(x)\phi(x)  - \sum_x q(x)\phi(x)}^2,
\end{align}
where $\phi(x)$ are kernel functions that one can choose (see more details in Section \ref{app:MMD}).
The learning algorithm is summarized by the pseudo-code in Alg.\ref{alg:born} and illustrated in \ref{fig:alg_schem}.
Given the reduced density matrix $\rho_M$ of the $L_v$ visible spins at the final layer of the quench $m=M$, we compute the model distribution from Eqn.\eqref{eqn:hidden}, which gives the probability $p_{\text{model}}(\z)$ of finding each of the $2^{L_v}$ basis states $\z$ in the visible part of the system. For learning image data, we then interpret the probabilities as pixel values (normalized to be within 0 and 1), and reshape it into an image of size $2^{L_v/2} \times 2^{L_v/2}$ (see Fig.\ref{fig:model schematics}(b)). For quantum data, we interpret these probabilities as measurement outcomes obtained from the quantum state. For more details, see Section \ref{app:MNIST}.
\begin{algorithm}
\caption{Training of MBL hidden Born machine}\label{alg:born}
\begin{algorithmic}[1]
\State Initialize the system in some initial state $\ket{\psi(\bm{\Theta}_{\text{m}=0})} \equiv \ket{\psi_0}$ and choose $\bm{\Theta}_0 = \mathbf{0}$;
\While{$m < M$}
\While{$n<N$}
\State Sample $\bm{\Theta}_{m}^{(n)}$ uniformly from the interval $[-h_{d},h_{d}]$;
\State Time-evolve the state $\ket{\psi_{m+1}^{(n)}} = \hat{\mathcal{U}}(\bm{\Theta}_m^{(n)})\ket{\psi_m}$ with \\  \qquad \qquad $\hat{\mathcal{U}} = \hat{\mathcal{T}}\exp \left(-i\int_0^T dt \hat{\mathcal{H}}_{\text{total}} \right)$;
\State Trace out the hidden units $\rho_{m+1}^{(n)} = \Tr_{h} \ket{\psi_{m+1}^{(n)}}\bra{\psi_{m+1}^{(n)}}$;
\State Compute $\mathcal{L}(\bm{\Theta}_m^{(n)})$ from $ p_{\text{hidden}}^{(n)}(\z) = \Tr\rho_{m+1}^{(n)}\Pi_Z$;
\State $n \gets n+1$
\EndWhile
\State $\bm{\Theta}_{m} = \argmax_{\bm{\Theta}_{m}^{(n)}} \mathcal{L}(\bm{\Theta}_m^{(n)})$; 
\State $\ket{\psi_{m+1}} = \hat{\mathcal{U}}(\bm{\Theta}_m)\ket{\psi_m}$
\State $m \gets m + 1$
\EndWhile
\State Denote the training outcome as $p_{\text{model}}(\z) = \Tr\rho_{M}\Pi_Z$;
\end{algorithmic}
\end{algorithm}
%

\begin{figure}[h!]
 \centering
        \includegraphics[scale=0.2]{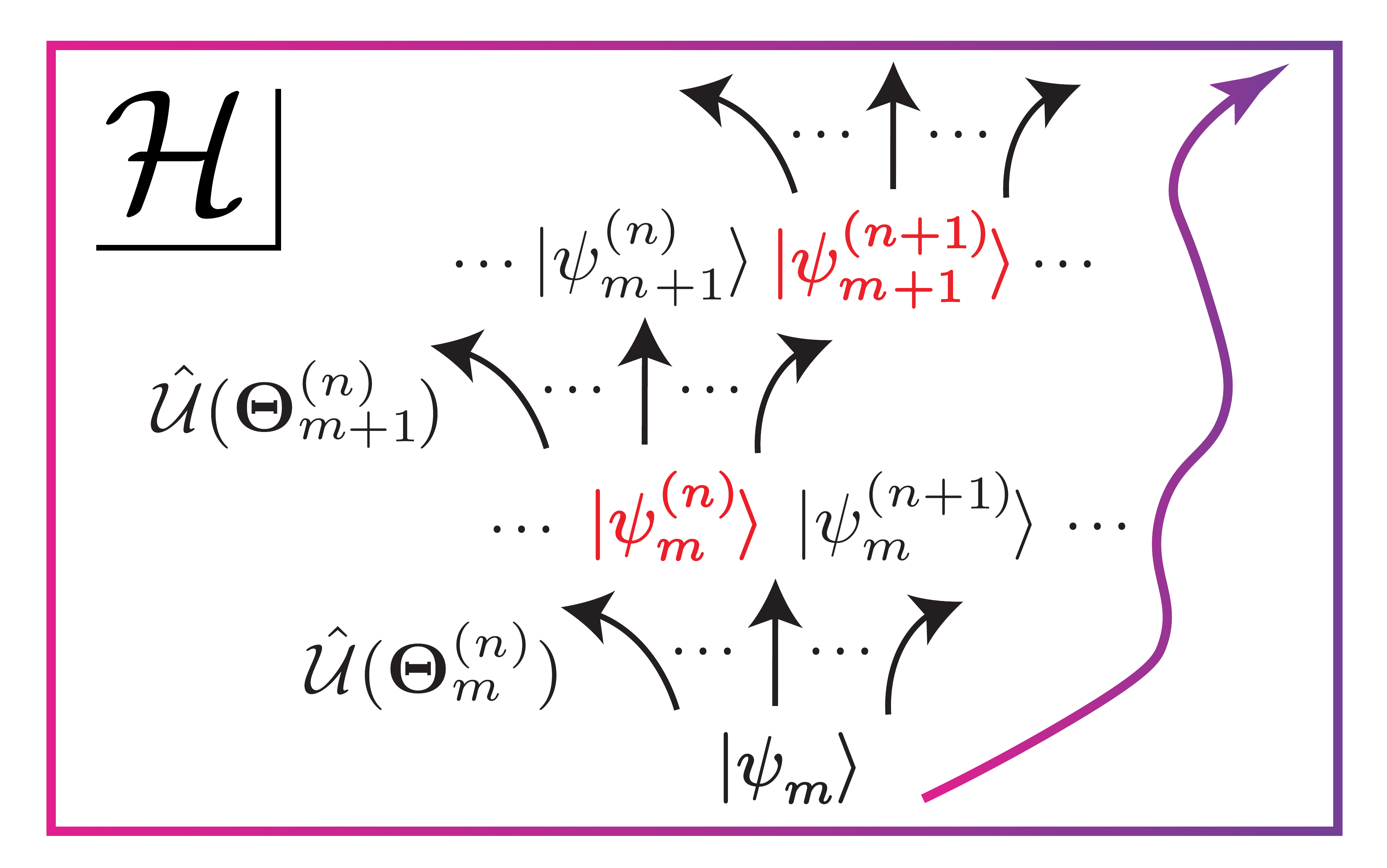}
        \caption{Schematics of the learning algorithm as in Alg.\ref{alg:born}. At the $m$-th quench, we independently evolve $N$ copies of the state $\ket{\psi_{m}}$ with different time-evolution operators $\bTheta_m^{(n)}$ sampled from the same distribution. At the $(m+1)$-th quench, we pick the $\ket{\psi_{m}^{(n)}}$ with the lowest loss value (based on the loss function Eqn.\eqref{eqn:MMD}) from the previous quench as our new starting point and evolve again. As we repeat this process, the learning resembles a directed random walk in the Hilbert space.}
        \label{fig:alg_schem}
\end{figure}    

\subsection{Randomly driven MBL Born machine}
\label{sec:RDBM}
In classical machine learning, stochasticity is found to have the effect of smoothing out loss landscape and helps to avoid local minima \cite{bottou1991stochastic,goodfellow2016deep,mehta2019high}. When introducing the hidden Born machine in Eqn.\eqref{eqn:hidden}, the hidden units are traced out and effectively act as a heat bath for the remaining visible units and provide a source for stochasticity. In order to understand the extent to which stochasticity aids learning in the hidden Born machine, in this section, we construct a Born machine with random drive that mimics the heat bath. In Fig.\ref{fig:loss_MNIST}, we numerically demonstrate that the randomly driven Born machine (RDBM) outperforms the basic Born machine, defined in Eqn.\eqref{eqn:born} without random drives, and approaches the performance of (yet still underperforms) the hidden Born machine trained with Alg.\ref{alg:born}.\\
Let's consider the Hamiltonian Eqn.\eqref{eqn:XXZ} with applied external random drives $\hat{\mathcal{H}}_{\text{RD}}$ in the $x-$direction (we can also apply random drives in the $xy-$plane and the result will be similar),
\begin{equation}
\label{eqn:randomdrive}
    \hat{\mathcal{H}}_{\text{RD}}(t) = \sum_i d_i^m(t) \hat{S}_i^x.
\end{equation}
To model the heat bath, we would like $\{d_i^m(t)\}$ to be like a white noise,
\begin{equation}
\label{eqn:whitenoise}
    \langle d_i^m (t) d_i^m(0) \rangle = 2D \delta(t),
\end{equation}
where $D$ is the amplitude of the white noise and is proportional to the temperature of the bath. 
In the simulation, we split the driven interval $T$ into intervals of auto-correlation time $\tau$, and require that Eqn.\eqref{eqn:whitenoise} holds for $t>\tau$. Outside of this correlation time, $d_i^m (t)$ is drawn i.i.d. from $\mathcal{N}(0,\sqrt{2D})$.


\begin{figure}[h!]
 \centering
        \includegraphics[scale=0.18]{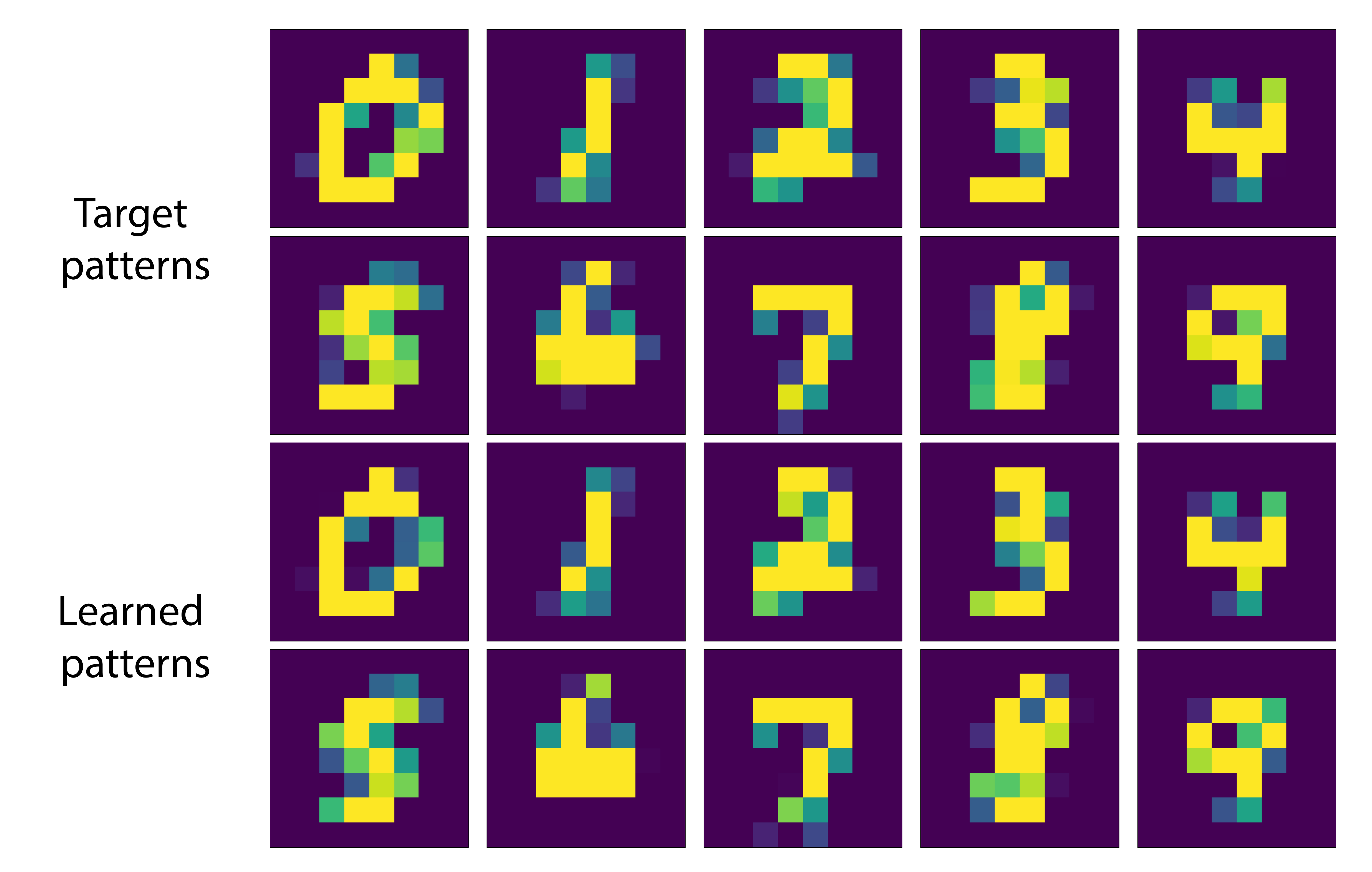}
        \caption{Learning toy MNIST digit patterns. The top two rows are different data instances $q_{\text{data}}$ in our toy MNIST digit patterns dataset. The bottom two rows are the corresponding learning outcome $p_{\text{model}}$ from our MBL hidden Born machine (each digit trained separately).}
        \label{fig:MNIST_0_9}
\end{figure}    


To illuminate on the learning power of the hidden Born machine, here, we compare the three models: the basic Born machine, the Randomly Driven Born machine, and the hidden Born machine. We task all three models with a toy dataset constructed from the images of MNIST dataset\cite{deng2012mnist} (downsampled to $2^{L_v}$ pixels). Our toy dataset consists of mean pixel values across all different styles within a single type of MNIST digit, see `target patterns' in Fig.\ref{fig:MNIST_0_9} (also see Section \ref{app:MNIST}).

We perform the training of the hidden Born machine using the algorithm described in Alg.\ref{alg:born}, and show the corresponding learning outcomes in Fig.\ref{fig:MNIST_0_9}. Our results indicate our hidden model is able to learn different patterns of MNIST digits accurately (the result of basic BM and RDBM are omitted). 

\begin{figure}[h!]
\centering
\includegraphics[scale=0.3]{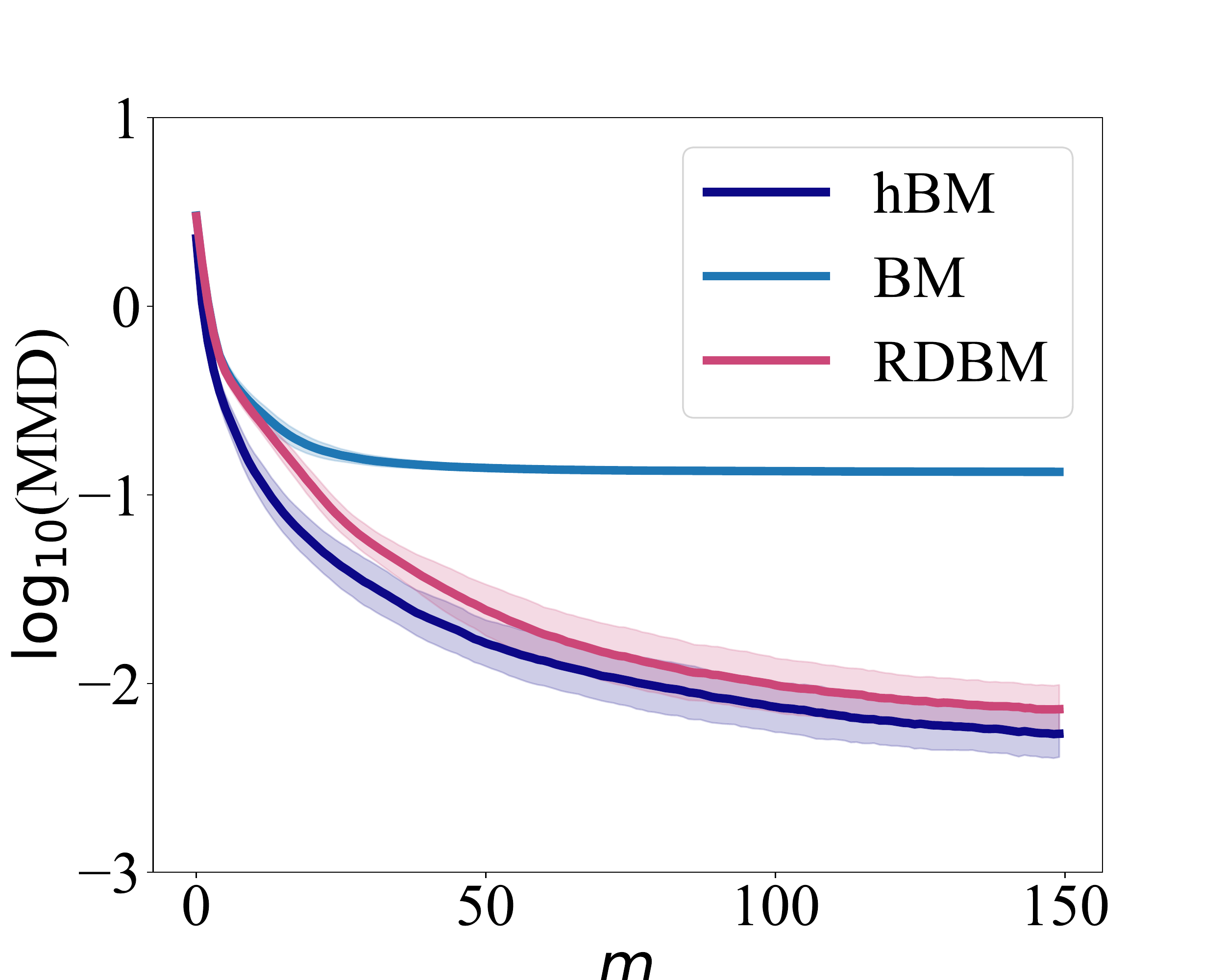}
\caption{Model comparisons. (a) Basic Born machine(BM), Randomly Driven Born machine(RDBM), and hidden Born machine(hBM). Log-MMD loss as a function of quench layer number $m$. The solid curves are averaged over 100 different realizations, with one standard deviation included as the shades. The hidden Born machine achieves the lowest MMD loss throughout and at the end of the training.}
\label{fig:loss_MNIST}
\end{figure}
%
We plot the loss as a function of quenches $m$ in Fig.\ref{fig:loss_MNIST}, and we can see that the hidden model performs best out of the three both in terms of final MMD loss on the dataset, with the hierarchy being hBM $\succeq$ RDBM $\succeq$BM. \\ 


\begin{figure*}[ht]
    \centering
    \includegraphics[scale=0.23]{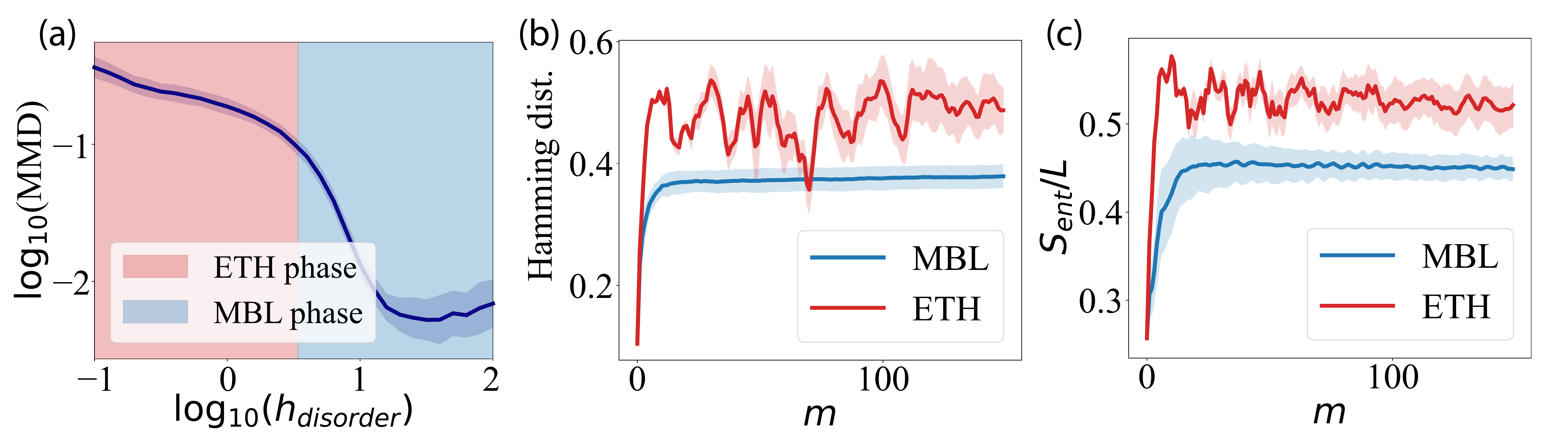} 
    \caption{Training hidden Born machine in thermal and MBL phases. (a) The terminal (at the final layer of quench) MMD loss of hidden Born machine on the toy MNIST task is plotted as a function of disorder strength $h_{d}$. The results are averaged over 100 realizations and one standard deviation is included as shade. (b) Hamming distance with respect to the initial state (normalized by $L$) as a function of quenches $m$. In the thermal phase, states change discontinuously over successive quenches, whereas in the MBL phase states change gradually toward the target state that gives rise to the desired distribution. (c) Entanglement entropy per site as a function of quenches $m$, confirming that our system evolves under dynamics distinctive in the thermal/MBL phases.}
    \label{fig:compare models}
\end{figure*}    

\section{Learnability in different phases}
\label{sec:different_phases}
We have already seen that the hidden Born machine in the MBL phase can properly learn the toy MNIST dataset (Fig.\ref{fig:MNIST_0_9}). An important question arises that whether learning can happen in the thermal phase. In the thermal phase, information spreads throughout the system, which makes it difficult to extract. In the quenched approach as in Eqn.\eqref{eqn:H_total}, the state of the system in the thermal phase changes wildly between successive quenches and effectively only parameters in the last layer of the quench would be trained. In contrast, as the system become more disorderd and enters the MBL phase,  the breakdown of thermalization and emergence of local integrals of motion leads to local memory, which is useful for directing the state toward a target corner of the Hilber space (Fig.\ref{fig:model schematics}(c)). We aim to understand the effect of disorder in learning by comparing the learning ability of the hidden Born machine in the MBL and thermal phases.

In Fig.\ref{fig:compare models}(a), we show a log-log plot of the final quench layer MMD loss on the toy MNIST dataset as a function of disorder strength $h_d$. By varying the disorder strength, the system in Eqn.\eqref{eqn:XXZ} can exhibit both a thermal phase (denoted as ETH) and an MBL phase depending on whether the disorder strength exceeds the critical value $h_c\sim 3.5$ (for $J_{zz}=J_{xy}=1$). We observe that the loss value has a significant change at the transition from the thermal phase (corresponding to $h_d < 3.5$ indicated by pink shade) into the MBL phase (indicated by blue shade). The relatively high value of MMD loss in the thermal phase indicates that the hidden Born machine fails to learn. As we increase the disorder, the MMD loss deep in the MBL phase decreases significantly, indicating better learning power of the MBL phase. We can attribute the better learning power in the MBL phase to the quantum memory and the emergent local integral of motions. In contrary to the thermal phase, the thermalization mechanism wipes out all the information from the initial conditions, as observed similarly in the case of quantum reservoir computing in the MBL phase \cite{xia2022reservoir}.

To better quantify the learning mechanism in the MBL phase, we investigate the time evolution of quantities underlying MBL physics during the quenched steps. First, we investigate the Hamming distance (HD) defined as
\begin{equation}
\label{eqn:Hamming_distance}
\mathcal{D}(t)= \frac{1}{2}-\frac{1}{2L}\sum_i \langle \psi_0|\sigma_i^{z}(t)\sigma_i^{z}(0)|\psi_0\rangle,
\end{equation}
which gives a measure of number of spin flips with respect to the initial state $\psi_0$ normalized by the length of chain $L$. It's expected that in the long time the HD approaches 0.5 in the thermal phase and decreases as one increases the disorder\cite{Hauke_2015}. In Fig.\ref{fig:compare models}(b), we show the trajectory of HD at the end of each quench $\mathcal{D}^m(t=T)$. We observe that, evolving in the thermal phase the HD fluctuates around the value of 0.5 as expected, while in the MBL phase the HD reaches a lower value about 0.33. The more significant fluctuations in the thermal phase indicate that the system retains little information about the most recent quench, and therefore is difficult to be manipulated toward a target state that gives desired probability distribution. In contrast, the relatively small fluctuations in the MBL phase suggest that system changes gradually between successive quenches and is more amenable to directed evolution by quenches.

One hallmark of MBL phase is the logarithmically slow growth of von Neumann entanglement entropy ($S_{\text{ent}}^m=-\rm Tr\rho_m \ln \rho_m$) due to the presence of strong interaction. Notice that $\rho_m$ is the reduced density matrix at quench $m$, which can be obtained by tracing over the complementary part of system with respect to the subsystem of interest. This can be considered as slow dephasing mechanism implying that not all information of initial state survives \cite{MBL_ent1,MBL_ent2,MBL_ent3}. In order to confirm that our system indeed evolves under MBL/thermal dynamics when trained in these two phases, in Fig.\ref{fig:compare models}(c), we track the value of $S_{\text{ent}}^m$ over different quenches. In the MBL phase, $S_{\text{ent}}^m$ shows a quick saturation, while in the thermal phase the entanglement entropy changes significantly from successive quenched steps, a behavior expected from the thermal phase.

\subsection{Pattern recognition}
\label{sec:pattern recognition}

Pattern recognition has been implemented in a variety of analog classical systems ranging from molecular self-assembly to elastic networks \cite{winfree1998algorithmic,qian2010efficient,murugan2015multifarious,zhong2017associative,o2019temporal,stern2020continual,zhong2021machine,stern2022learning}. It is interesting to ask whether quantum systems possesses similar power. In this section, we demonstrate the pattern recognition ability of the MBL hidden Born machine. Here, we take the same toy dataset of MNIST digit patterns as in Fig.\ref{fig:MNIST_0_9}. Each pattern $\xi^{\mu} \in [0,1]^{2^{L_v}}$ is a (normalized) vector in the pixel space, where $L_v$ is the length of the visible units, and $\mu=1,2,..,P$ denotes the pattern index. We encode the patterns into the hidden Born machine by setting $p_{\text{data}} = \sum_{\mu}\xi^{\mu}$ \footnote{While there exists other more sophisticated encoding schemes, here we choose the simplest one for illustration.}. 
Again, we perform the training of the hidden Born machine using the algorithm in Alg.\ref{alg:born} (see first column of Fig.\ref{fig:learning MNIST} for the learned patterns from $p_{\text{model}}$). After training, we obtain the target final state $\ket{\psi_M}$, along with a series of unitaries $\{ \hat{\mathcal{U}}(\bTheta_m) \equiv \hat{\mathcal{U}}_m \}_{m=0}^{M}$ that defines the entire history of intermediate states during successive quenches, $\ket{\psi_m} = \prod_{i=0}^m \hat{\mathcal{U}}_{m-i} \ket{\psi_0}$, which upon tracing out hidden units becomes intermediate model distributions, $p_m = \Tr \Tr_{h} \ket{\psi_{m}}\bra{\psi_{m}}\Pi_Z$. Now given a partially corrupted pattern $\Tilde{\xi}^{\mu}$ and the state $\ket{\Tilde{\psi}^{\mu}}$ that gives rise to this corrupted pattern, $|\Tilde{\psi}^{\mu}(\boldsymbol{z})|^2/\mathcal{N} = \Tilde{\xi}^{\mu}$ (see second column of Fig.\ref{fig:learning MNIST} for examples of corrupted patterns), we can identify the `closest' intermediate state $\ket{\psi_{m^*}}$ where $m^*=\argmax_m \text{MMD}(\Tilde{\xi}^{\mu},p_{m})$. Then we apply unitary time-evolution to the corrupted state $\ket{\Tilde{\psi}^{\mu}}$ using the series of learned unitaries starting from $m^*$ and obtain the `retrieved' state $\ket{\hat{\psi}^{\mu}} \equiv \prod_{i=0}^{M-m^*}\hat{\mathcal{U}}_{M-i} \ket{\Tilde{\psi}^{\mu}}$. We can then compute the corresponding retrieved pattern as $\hat{\xi}^{\mu} \equiv \Tr \Tr_{h} \ket{\hat{\psi}^{\mu}}\bra{\hat{\psi}^{\mu}}\Pi_z$ (see last column of Fig.\ref{fig:learning MNIST} for the retrieved patterns).

\label{sec:MNIST}
\begin{figure}[h!]
\centering
\includegraphics[scale=0.3]{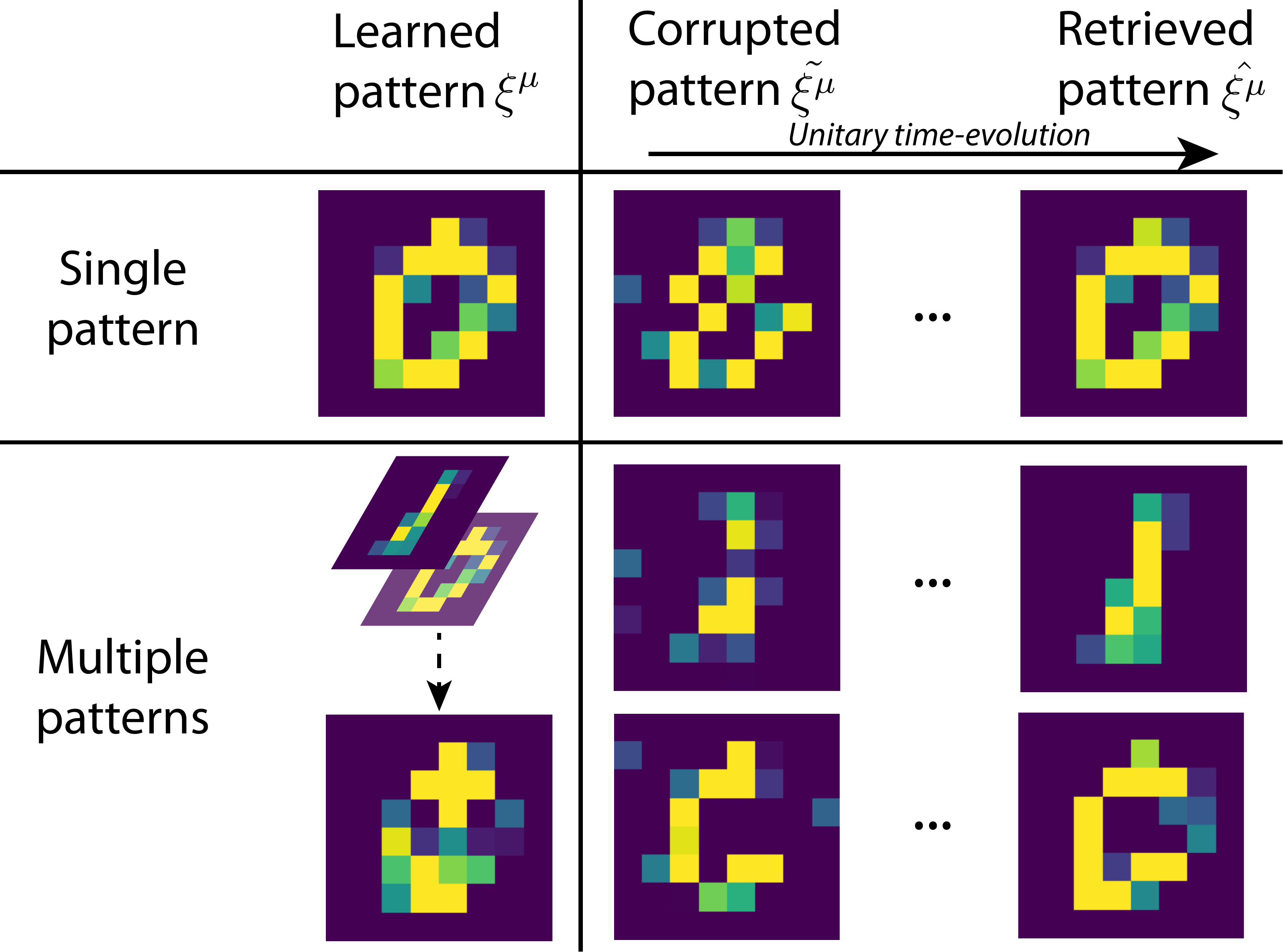}
\caption{Pattern recognition task by the MBL hidden Born machine. Given a corrupted pattern $\Tilde{\xi}^{\mu}$ and its corresponding corrupted state $\ket{\Tilde{\psi}^{\mu}}$, we find the quench layer number $m^*$ such that the intermediate model distribution $p_m^*$ resembles the corrupted pattern most. Then we time-evolve $\ket{\Tilde{\psi}^{\mu}}$ with the series of learned unitaries $\hat{\mathcal{U}}_i$ starting from $m^*$ to obtain the retrieved state $\ket{\hat{\psi}^{\mu}}$, from which we can then obtain the retrieved pattern $\hat{\xi}^{\mu}$. Top row: after learning a single pattern (digit `$0$'), a complete `$0$' can be retrieved from a partially corrupted `$0$'. Bottom row: after learning multiple patterns (superposition of digit `$0$' and `$1$'), complete `$0$' or `$1$' can be selectively retrieved from partially corrupted `$0$' and `$1$', respectively.}
\label{fig:learning MNIST}
\end{figure}

As shown in the top row of Fig.\ref{fig:learning MNIST}, in the case of a single pattern (a digit `0'), the MBL hidden Born machine is able to retrieve a complete pattern from a corrupted pattern (a partially corrupted digit `0'). As shown in the bottom row of Fig.\ref{fig:learning MNIST}, in the case of multiple patterns (a superposition of `0' and `1'), the MBL hidden Born machine is able to selectively retrieve complete patterns (`0' or `1') based on the input corrupted pattern \footnote{However, one should note that just like in classical pattern recognition \cite{hopfield1982neural}, if the input pattern gets too corrupted and does not resemble any of the encoded patterns, this procedure will fail.}.

\subsection{Learning quantum dataset}
\label{sec:quantum_dataset}


\begin{figure*}[ht]
    \centering
    \includegraphics[scale=0.18]{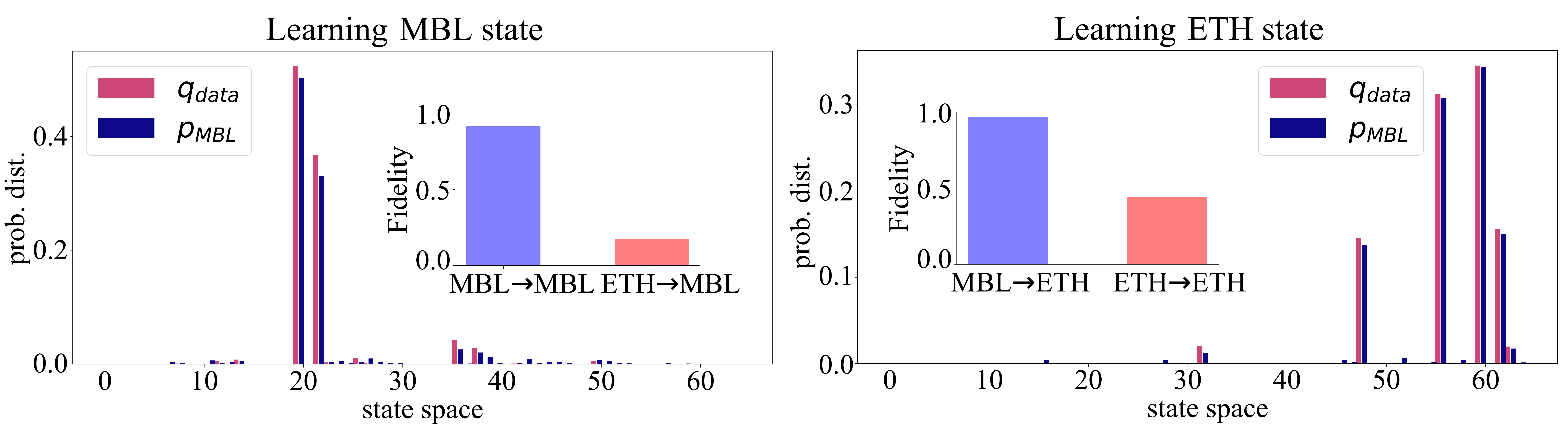} 
    \caption{Learning quantum dataset. Left/right: MBL hidden Born machine trained in MBL phase learns the probability distribution corresponding to an MBL/thermal (denoted as ETH) target state. Insets: classical fidelities between the model and the data distributions. Model trained in the MBL phase has better learning capability than model trained in the thermal phase. }
    \label{fig:learning_quantum_states}
\end{figure*}    


We have demonstrated the power of MBL Born machine in learning classical data of the toy MNIST digit patterns, now we explore the ability of the MBL Born machine in learning data obtained from measurements of quantum states. While quantum state tomography is the standard method for state reconstruction, it becomes a daunting task as the system size increases. In this respect, quantum machine learning has shown great success in learning quantum states from limited amount of data\cite{carrasquilla_reconstructing_2019,torlai_neural_network_2018,tomography_Wang_2020,huang_provably_2021,huang_predicting_2020,randomized_toolbox,Abigail_Enhanced_Born}. In this section, we use the hidden Born machine to learn data obtained from quantum many-body states prepared by Eqn.\eqref{eqn:H_total} subject to a single layer of quench, but with disorder strengths $h_d$ different from the phases that the hidden Born machine is trained in.


In Fig.\ref{fig:learning_quantum_states}, we demonstrate the learning ability of Born machine in the thermal and MBL phase. In Fig.\ref{fig:learning_quantum_states} left/right, we compare the measurement outcome sampled from the exact simulation $q_{\text{data}}$ in MBL/thermal phase (denoted as ETH)  (shown in purple), with those learned via hidden Born machine trained in MBL phase (shown in blue). In the insets we show the classical fidelity between the model distribution $p$ and data distribution $q$, $F(p,q)=\left( \sum_i \sqrt{p_i q_i} \right)^2$. We see that the hidden Born machine trained in MBL phase is able to capture the underlying probability distribution obtained from both the MBL and thermal phases with high fidelity ($\sim 0.98$), while the hidden Born machine trained in thermal phase fails to learn either. Notice that in order to learn the quantum state, one needs to perform measurement in the informationally-complete basis as reported in Ref.\cite{Abigail_Enhanced_Born}.

\subsection{Learning parity dataset}
\label{sec:parity}

\begin{figure}[h!]
\centering
\includegraphics[scale=0.33]{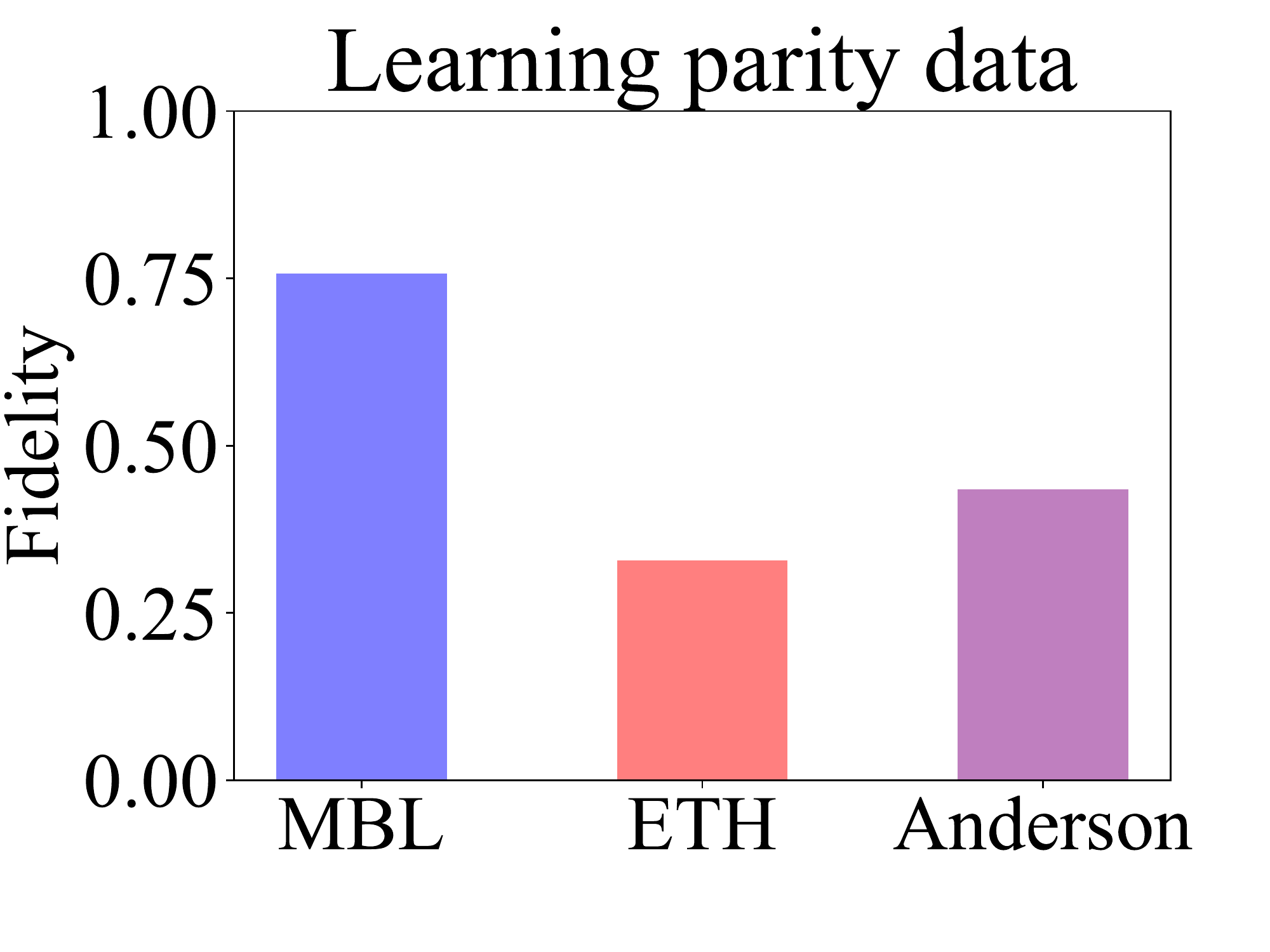}
\caption{Learning parity dataset. Different bars in the horizontal-axis correspond to model trained in the MBL, thermal, and Anderson localized phases, respectively. Vertical-axis shows the classical fidelity of the model. Model trained in the MBL phase exhibits the highest fidelity despite the dataset is highly nonlocal. Comparing model performances in three phases suggest that both disorder and interaction are important for learning.}
\label{fig:learning parity}
\end{figure}

In the previous sections, we have discuss the role of localization and emergent memory in learning various datasets, however, the role played by interaction in the many-body localized phase remains unclear. To shed light into the role of interaction and its interplay with disorder, here, we investigate the power of MBL phase in learning parity dataset and compare it with both thermal and Anderson localized phase which can be obtained by setting $J_{zz}=0$ in Eqn.\eqref{eqn:XXZ}. Here, we consider the even parity dataset, which is defined as set of bit-string $(b_1,b_2,..,b_L)$ of length $L$ with $b_i\in\{0,1\}$, such that the parity function $\Pi(b_1,b_2,..,b_L):=\sum_{i=1}^{N} b_i \mod 2 $ is equal to $0$. While this is a classical bitstring, it appears as measurement outcome of particular quantum observables in certain basis such as measurement outcome of GHZ state in the $x-$basis.

Previous studies has indicate challenging learning on this dataset, in particular training the Born machine based on MPS with gradient descent optimization schemes has encountered failures \cite{Najafi_GHZ_nonlocal}, while quantum inspired optimization schemes such as density matrix learning has shown great success with the caveat in their scaling\cite{Bradley_2020}. Here, we investigate the power of our hidden Born machine across various phases in learning the parity dataset. Our numerical results (Fig.\ref{fig:learning parity}) demonstrates the interesting fact that both the MBL phase and Anderson localized phase show better performance compare to the thermal phase. The better learning performance in these two phases suggest that the emergence of integral of motion and memory plays an important role in learning. We further notice that the MBL phase has a better performance even though the Anderson localized phase is known to have better memory. In the latter the strong localization prevents the transport of information across the system, leading to a lesser learning power. While the value of fidelity around $F_{\text{MBL}}=0.75$ is not too high, reflecting the hardness of learning the parity dataset, our MBL hidden Born machine still shows a better performance compare to MPS Born machine which was reported a fidelity of $F_{\text{MPS}}=0,48$\cite{Najafi_GHZ_nonlocal}. Our numerical results indicate the importance of the presence of both disorder and interaction in the MBL hidden Born machine, and suggests that successive quenches defined by the learning cuts through a path in the Hilbert space that harnesses both local memory and interaction in order to arrive the target state.

\section{MBL phase transition}
\label{app:MBL_check}

In this section, we present the details of the numerical simulation of XXZ model (Eqn.\eqref{eqn:XXZ}) and confirm the thermal to MBL phase transition. We simulate the XXZ model using exact diagonalization methods provided by the QuSpin package \cite{weinberg2017quspin,weinberg2019quspin}. Throughout the paper we use parameters $J_{xy}=J_{zz}=1$. 

One hallmark of the MBL phase is the Poission distribution of level spacings in the eigenspectrum of the Hamiltonian. \cite{nandkishore2014many,alet2018many,Abanin_2019}. The level statistics $Pr(r_{\alpha})$ is defined as the normalized distribution of
\begin{equation}
    r_{\alpha} = \frac{\min (\Delta_{\alpha+1},\Delta_{\alpha})}{\max (\Delta_{\alpha+1},\Delta_{\alpha})},
\end{equation}
where $\Delta_{\alpha} = E_{\alpha+1}-E_{\alpha}$ are the level spacings in the eigenspectrum. In Fig.\ref{fig:level_statistics_L=16}, we show the level statistics in a simulation of $L=16$ spins described by Eqn.\eqref{eqn:H_total} subject to a single quench $M=1$, at two different disorder strengths: $h_d=0.1$ and $h_d=3.9$ (the critical disorder strength is $h_c \sim 3.5$ for $J_{zz}=J_{xy}=1$). We see that indeed the level statistics in the thermal phase obeys Wigner-Dyson statistics, and in the MBL phase obeys Poisson statistics, confirming the existence of thermal-MBL phase transition. 

\begin{figure}[h!]
\centering
\includegraphics[scale=0.26]{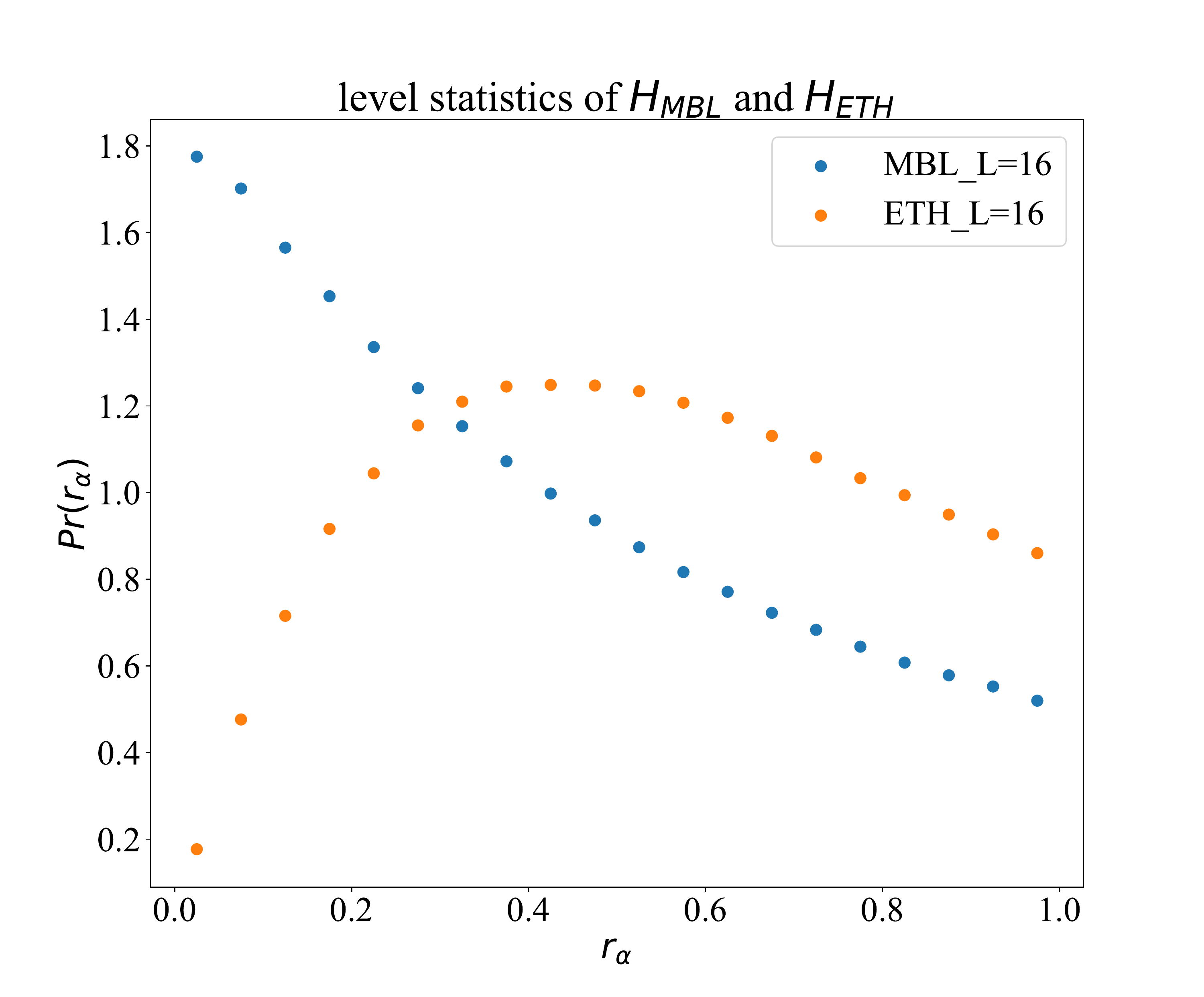}
\caption{Level statistics of $L=16$ XXZ model subject to quenches in the $z$-direction. The thermal phase (denoted as ETH) is simulated with $h_d=0.1$ and the MBL phase is simulated with $h_d=3.9$. Results are averaged over 1000 different realizations.}
\label{fig:level_statistics_L=16}
\end{figure}

Another hallmark of MBL phase is the area law scaling of von Neumann entanglement entropy ($S_{\text{ent}}=-\rm Tr\rho \ln \rho$), compared to the volume law scaling in the thermal phase. We numerically calculate the half-system entanglement entropy in the middle of the spectrum for the Hamiltonian in Eqn.\eqref{eqn:H_total}, and perform a scaling analysis for different $L$ and different disorder strengths $h$ (see Fig.\ref{fig:ent_entropy_scaling}). Our numerical results agrees with those reported in \cite{MBL_edge}.

\begin{figure}[h!]
\centering
\includegraphics[scale=0.26]{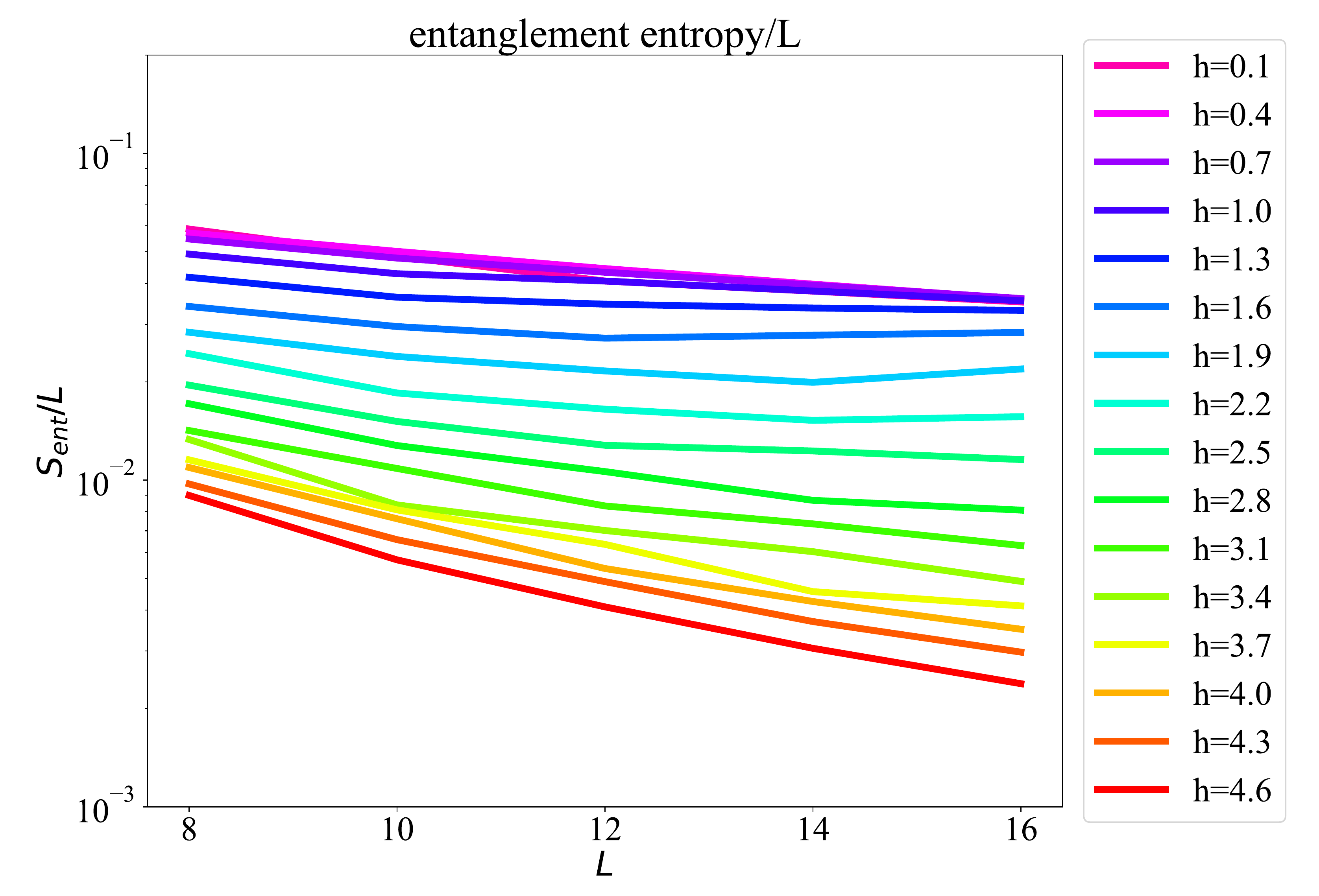}
\caption{Scaling analysis of entanglement entropy. We plot the entanglement entropy per site $S_{ent}/L$ as a function of system size $L$ for different disorder strengths $h$. Volume law scaling in the thermal phase (small $h$) leads to constant $S_{ent}/L$, while area law scaling in the MBL phase (for large $h$) leads to decreasing $S_{ent}/L$. }
\label{fig:ent_entropy_scaling}
\end{figure}

\section{Training MBL hidden Born machine}
\label{app:MMD}
Previously, KL-divergence has been suggested for training MBL Born machine as a generative model \cite{tangpanitanon2020expressibility}. However, KL-divergence does not capture correlations within data, and suffers from infinities outside the support of data distribution. To remedy these situations, the Maximum Mean Discrepancy (MMD) loss has been proposed for training Born machines \cite{liu2018differentiable}. The MMD loss measures the distance between model distribution $p$ and target distribution $q$, by comparing their mean embeddings in the feature space. The (squared) MMD loss can be written as
\begin{align}
    \mathcal{L}_{MMD} &= \norm{\sum_x p(x)\phi(x)  - \sum_x q(x)\phi(x)}^2 \nonumber\\
    &= \mathbb{E}_{x,x'\sim p} k(x,x') + \mathbb{E}_{y,y'\sim q} k(y,y') \\
    &- 2\mathbb{E}_{x\sim p, y\sim q} k(x,y) \nonumber,
\end{align}
where we have employed the kernel trick and write $k(x,y) = \phi(x)^T \phi(x)$. In our model, we use a Gaussian mixture kernel $k(x,y) = \frac{1}{c}\sum_{i=1}^c \exp\left(-\frac{1}{2\sigma_i^2}|x-y|^2\right)$ of four channels $c=4$, with corresponding bandwidths $\sigma_i^2 = [0.1,0.25,4,10]$. The bandwidths are chosen such that our Gaussian kernels are able to capture both the local features and the global features in the target distribution.

In the training of our MBL hidden Born machine, we use $N=6+2$ ($6$ visible spins and $2$ hidden spins), and $M=100$ quenches and search over $N=500$ different disorder realizations.

\section{Data encoding}
\label{app:MNIST}

Here, we describe the detailed data encoding scheme and our toy dataset of MNIST digit patterns in this section. Given a reduced density matrix $\rho_{\text{vis}}$ of $L$ visible spins, we compute the distribution of finding each of the $2^L$ basis states in our computational basis, and interpret the result as pixel values. We then reshape this probability vector into an image of size $2^{L/2} \times 2^{L/2}$.

On the other hand, given an image $\vec{x}_{\mu} \in \mathbb{R}^{n\times n}$, where $n\times n$ is the number of original pixels in the image, we first downsample it to $2^{L/2} \times 2^{L/2}$ pixels, then normalized the pixel values to be within $0$ and $1$. 

Our toy dataset of MNIST digit patterns are constructed as follows: we take all the training images $\vec{x}^{\mu}$ from a digit class, downsample to $2^{L/2} \times 2^{L/2}$ pixels, and compute each pixel as the average value $\bar{x}_i = 1/P \sum_{\mu=1}^P x^{\mu}_i$ across different styles within this digit class, where $i=1,...,2^L$. We then normalized the pixels to $\bar{x}_i \to \bar{x}_i/\sum_i \bar{x}_i$ and interpret the result as $q_{\text{data}}$. We take caution that this is different from learning the MNIST distribution in generative models. The latter refers to learning the joint probability distribution over all pixels in the image, and our toy data set corresponds to taking the mean-field limit of this joint probability distribution, which ignores the complicated correlations among pixels. This is akin to learning a single pattern (the averaged MNIST images shown in Fig.\ref{fig:MNIST_0_9}), and the reason for taking the average pixel value is such that we will be able to perform pattern recognition with imperfect initial states.

\section{Conclusion and outlook}
In this work, we have introduced the hidden MBL Born machine as a powerful quantum inspired generative model. Although parameterized quantum circuit has become one of the focal point in the realm of quantum machine learning, their training scheme poses many challenges as one requires to search in an exponential Hilbert space, which resembles finding a needle in haystack\cite{Jarrod_QNN}. While other variational algorithms such as QAOA offer a different scheme of finding solution in Hilbert space which is akin to adiabatic computing, here, by utilizing unique properties of MBL phase such as localization and memory, we develop a Born machine evolving under MBL dynamics such that by optimizing over values of disorder at each site we can reach a desired target state in the Hilbert space. 

Despite the localized nature of the MBL phase, we show the expressibilty of the MBL Born machine by mapping the 2D XXZ chain into 2D Ising model with proven expressibility advantage. Furthermore, we show that by including hidden units, we obtain expressive power advantage over the basic Born machine. We numerically demonstrate this advantage in learning both classical and quantum data. In this work, we aimed to answer two key questions, namely, whether MBL phase can be used as resource for learning, and what is the underlying mechanism of learning. By performing various numerical experiments in the thermal phase, non-interacting Anderson localized phase, and the MBL phase, we show that successful learning relies on both interaction and localization during training.

Our work opens up a new horizon in utilizing exotic quantum phases of matter as quantum inspired generative models. While we have explored the role of disorder in the MBL phase, an immediate question that follows is whether other disordered quantum phase would be capable of learning, which is left for future work. Furthermore, our quenched Born machine resembles specific adiabatic schedule, and whether we can utilize our model as quantum variational algorithm awaits further investigation. Although we have quantified the learning mechanism during the training by tracking both local and non-local quantities such as Hamming distance and entanglement entropy, more quantitative studies such as the existence of Barren Plateau and over-paramtrization in the context of quantum kernel learning remains an important question for future study\cite{Jarrod_QNN,Taylor_barren,liu2022analytic}.
\chapter{Future directions}
\label{future}

\section{Statistical mechanics of architecture-constrained neural networks}
Realistic machine learning tasks often operate with a number of parameters that far exceeds the number of data samples. In this over-parameterized regime, conventional computer science approaches to estimating generalization error bounds, such as Rademacher complexity and VC dimension, are no longer effective, and alternative methods that characterize average-case performance are required. Statistical mechanics of disordered many-body systems have been found to provide such a toolkit \cite{bahri2020statistical,engel2001statistical}. Furthermore, recent advances in different neural architectures call for a theory that incorporates such structural information. Building on Chapters \ref{disco}, \ref{vae}, and \ref{attractor}, a potential next step would be to establish a correspondence between architecture-constrained neural networks and spin glass models, viewing data as quenched disorder in the energy landscape, network parameters as dynamical degrees of freedom, and network architecture as constraints in the partition function. Using this correspondence, we could calculate the average-case generalization performance in the over-parameterized regime, by applying the replica method. One potential application is to theoretically elucidate the computational consequences of empirical neural architecture designs (such as dropout and skip connections). We could also use this formalism to search for new architecture designs, such as distribution-constrained layers (Chapter \ref{disco}), that are theoretically-principled and functionally-interpretable.

\section{Structure-function relation in neural and biological computation}
Biological functions are often determined by the underlying physical structures, and the physical structures themselves are in turn shaped by their functional purposes across different timescales. The elucidation of such structure-function relations has historically led to many important discoveries in biology from molecular \cite{honey2007network} to systems level \cite{will2011spliceosome}. Recently, breakthroughs in experimental technologies have led to a wealth of structural data such as high-resolution imaging of cells and brain connectomes. These emerging high-quality datasets present a remarkable opportunity for theoretical modeling. Extending Chapters \ref{disco} and \ref{attractor}, we could employ a combination of theoretical and data-driven approaches including but not limited to statistical mechanics, information theory, and machine learning, to investigate the relationship between physical structures and their corresponding information-processing capabilities. In particular, we could focus on the learning and memory aspects in neural systems such as cortical circuits in connectomics and navigation systems in the hippocampus. Our goal is to uncover hitherto unknown functional purposes of the observed structures in these data. Moreover, it would also be interesting to study how connectivity and interaction affect emergent computation capabilities in other biological systems, such as biochemical networks and molecular self-assembly.

\section{Modeling biological data with machine learning and statistical mechanics}
Traditional physical theories aim to use a small number of variables to capture the essence of complex phenomena. Such high-level abstraction offers conceptual simplicity but often has limited predictive power in realistic settings. This is especially the case in the era of high-throughput biological experiments, in which massive amounts of high-dimensional data defy simple description. High-dimensional statistics methods like machine learning can perform effective dimensionality reduction while preserving necessary details of the data, and have demonstrated great potential in computational modeling of various biological systems \cite{eraslan2019deep,richards2019deep}. Moreover, statistical mechanical models of many-body systems provide yet another effective approach to extracting global features of these systems \cite{lezon2006using,schneidman2006weak}.

It would be worthwhile to transfer the methods used in statistical mechanics of many-body systems (Chapters \ref{spinglass}, \ref{born}) and machine learning (Chapter \ref{vae}) to build effective models of complex biological phenomena such as cognitive function. Possible directions include modeling of sensory-motor neural activity, protein pathways in neural circuits, and gene regulatory networks. Our goal is that through scrutinizing effective computational models distilled from data, we will be able to gain insights into the underlying organization principles that govern the structure and function in these systems.

\section{Generative models with many-body dynamics}
\subsection{Classical}
State-of-the-art classical generative models, such as diffusion models, are inspired by non-equilibrium thermodynamics \cite{sohl2015deep}. As next steps, we would like to design new generative models that use many-body dynamics for learning, such as nucleation and self-assembly dynamics \cite{murugan2015multifarious,zhong2017associative}. Other than computer vision tasks, these non-equilibrium dynamical processes can be used to describe complicated distributions that defy simple equilibrium descriptions. This is particularly the case in data that arise in physical sciences, such as chemical reactions, protein interactions, and turbulent flows.

An interesting task would be to design domain-specific generative models to tackle these challenging datasets, with the goal of generating realistic data samples that could accelerate scientific simulation and experimental design. By leveraging many-body dynamics and non-equilibrium processes, these generative models could capture the inherent complexity of the data and provide more accurate representations of the underlying phenomena. Ultimately, this could lead to improved understanding of complex systems and facilitate the development of new technologies and methodologies in a variety of scientific domains.

\subsection{Quantum}
Building on generative models in Chapter \ref{vae}, \ref{born}, and many-body learning in Chapter \ref{spinglass}, we could extend our work into designing hybrid generative models that use natural many-body dynamics for learning. In particular, it would be interesting to chart out an atlas for many-body Born machines \cite{coyle2020born} based on two axes: (i) different Hamiltonians, such as Ising, XXZ, and spin glasses; (ii) different phases, ranging from thermal to many-body localized, and paramagnetic to glassy.

A potential goal is to search for universality in learning that could lead to the discovery of architecture-independent learning principles. Using those principles, one could potentially classify and understand different physical models based on the symmetries in their learned representations. This would provide a comprehensive framework for analyzing various many-body systems and their learning capabilities, leading to a deeper understanding of the relationship between the underlying physics and the learning performance of these systems.

Moreover, by exploring the space of many-body Born machines and identifying the common learning principles across different Hamiltonians and phases, we could potentially develop novel generative models that are better suited for specific tasks or datasets. This, in turn, could help advance the field of generative modeling and improve the efficiency of scientific simulations. 

\begin{singlespace}
\bibliography{main}
\bibliographystyle{plain}
\end{singlespace}

\end{document}